\title{Diversifying conformal selections}
\author[1]{Yash Nair}
\author[2]{Ying Jin}
\author[1]{James Yang}
\author[1,3]{Emmanuel Cand\`es}
\affil[1]{Department of Statistics, Stanford University}
\affil[2]{Harvard Data Science Initiative}
\affil[3]{Department of Mathematics, Stanford University}
\date{}
\newtheorem{theorem}{Theorem}[section]
\newtheorem{proposition}{Proposition}[section]
\newtheorem{definition}{Definition}[section]
\newtheorem{example}{Example}[section]
\newtheorem{corollary}{Corollary}[section] 
\newtheorem{assumption}{Assumption}[section]
\newtheorem{lemma}[theorem]{Lemma}
\newtheorem{remark}{Remark}
\newtheorem*{setup*}{Setup}
\newtheorem*{nullHypothesis*}{Null hypothesis}
\newtheorem*{constraints*}{Resampling constraints}
\newcommand{\running}{\text{run}}
\newcommand{\sharpe}{\textnormal{Sharpe}}
\newcommand{\markowitz}{\textnormal{Markowitz}}
\newcommand{\Above}{\textnormal{calib-above}}
\newcommand{\bx}{\mathbf{x}}
\newcommand{\bS}{\mathbf{S}}
\newcommand{\bv}{\boldsymbol{v}}
\newcommand{\rsc}{\text{RSC}}
\newcommand{\mc}{\text{MC}}
\newcommand{\coarse}{\text{coarse}}
\newcommand{\loo}{i\rightarrow \star}
\newcommand{\test}{\textnormal{test}}
\newcommand{\Sigmatest}{\Sigma^{\test}}
\newcommand{\testbelow}{\textnormal{test-below}}
\newcommand{\bd}{\textbf{d}}
\newcommand{\sub}{\text{sub}}
\newcommand{\RR}{\mathbb{R}}
\newcommand{\clip}{\textnormal{clip}}
\newcommand{\bb}{\mathbf{b}}
\newcommand{\exch}{\textnormal{exch}}
\newcommand{\bW}{\textbf{W}}
\newcommand{\bxi}{\boldsymbol{\xi}}
\newcommand{\underrep}{\textnormal{Underrep}}
\newcommand{\bmu}{\boldsymbol{\mu}}
\newcommand{\by}{\mathbf{y}}
\newcommand{\low}{\text{low}}
\newcommand{\high}{\text{high}}
\newcommand{\bz}{\mathbf{z}}
\newcommand{\bZ}{\mathbf{Z}}
\newcommand{\RN}[1]{%
  \textup{\uppercase\expandafter{\romannumeral#1}}%
}
\newcommand{\proj}{\text{proj}}
\newcommand{\sharpediversityR}{\varphi^{\sharpe}(\bZ_{\mathcal{S}})}
\newcommand{\markowitzdiversityR}{\varphi^{\markowitz}(\bZ_{\mathcal{S}})}
\newcommand{\underrepdiversityR}{\varphi^{\underrep}(\bZ_{\mathcal{S}})}
\newcommand{\bE}{\mathbb{E}}
\DeclareMathOperator*{\argmax}{arg\,max}
\DeclareMathOperator*{\argmin}{arg\,min}
\newcommand{\bchi}{\boldsymbol{\chi}}
\newcommand{\Zsort}{\bZ^{()}}
\newcommand{\be}{\mathbf{e}}
\newcommand{\bh}{\textnormal{BH}}
\newcommand{\sort}{\textnormal{sort}}
\newcommand{\esort}{\varepsilon^{(t)}}
\definecolor{lightblue}{rgb}{0.4,0.6,1}
\newcommand{\fdr}{\textnormal{FDR}}
\newcommand{\bP}{\mathbb{P}}
\newcommand{\bchirelaxedt}{\bchi^{*,\text{relaxed},\tau^*}}
\newcommand{\chirelaxedt}{\chi^{*,\text{relaxed},\tau^*}}
\newcommand{\rrelaxedt}{\mathcal{R}^{*,\text{relaxed}}_{\tau^*}}
\newcommand{\bzero}{\boldsymbol{0}}
\newcommand{\ind}{\mathds{1}}
\renewcommand{\paragraph}{%
  \@startsection{paragraph}{4}%
  {\z@}{1.25ex \@plus 1ex \@minus .2ex}{-1em}%
  {\normalfont\normalsize\bfseries}%
}
\begin{document}

\maketitle

\begin{abstract}
When selecting from a list of potential candidates, it is important to ensure not only that the selected items are of high quality, but also that they are sufficiently dissimilar so as to both avoid redundancy and to capture a broader range of desirable properties. In drug discovery, scientists aim to select potent drugs from a library of unsynthesized candidates, but recognize that it is wasteful to repeatedly synthesize highly similar compounds. In job hiring, recruiters may wish to hire candidates who will perform well on the job, while also considering factors such as socioeconomic background, prior work experience, gender, or race.
We study the problem of using any prediction model to construct a maximally diverse selection set of candidates while controlling the false discovery rate (FDR) in a model-free fashion.
Our method, \emph{diversity-aware conformal selection (DACS)}, achieves this by designing a general optimization procedure to construct a diverse selection set subject to a simple constraint involving conformal e-values which depend on carefully chosen stopping times. The key idea of DACS is to use optimal stopping theory to adaptively choose the set of e-values which (approximately) maximizes the expected diversity measure.
We give an example diversity metric for which our procedure can be run exactly and efficiently. We also develop a number of computational heuristics which greatly improve its running time for generic diversity metrics. We demonstrate the empirical performance of our method both in simulation and on job hiring and drug discovery datasets.
\end{abstract}

\section{Introduction}

\subsection{Motivation}\label{motivation}
We study the problem of selecting from a list of potential candidates a subset whose as-yet-unknown performance is \emph{good} 
while also being \emph{diverse}. 
This problem arises in a number of real-world applications: two particularly salient examples are in drug discovery and job hiring. In the former setting, scientists seek to select, from a large library of unsynthesized compounds, a subset of drugs for further investigation. 
It is important to identify a shortlist that contains a high proportion of potent compounds, so that the researcher does not waste resources further investigating and/or synthesizing (the process of which is extremely expensive) inactive compounds. At the same time, scientists also emphasize the importance of selecting structurally \emph{diverse} candidates~\citep{nakamura2022selecting,maggiora2005practical,glen2006similarity,lambrinidis2021multi,gillet2011diversity,jang2024llmsgeneratediversemolecules}, as screening many highly similar compounds is inefficient given that they will tend to share biological mechanisms. In the job hiring setting, it is important to ensure that the selected candidates are not only of a high caliber, but also diverse as measured by, e.g., undergraduate university, gender, race, etc.~\citep{marra2024addressing, roshanaei2024towards,houser2019can,jora2022role}.

In these settings, practitioners typically have on hand intricate pretrained machine learning (ML) models designed to predict the quality of any candidate given the candidate's features (e.g., a drug's chemical structure or a job applicant's prior work qualifications) \citep{huang2020deeppurpose,askr2023deep,mahmoud2019performance,sharma2021novel}. Using this pretrained \emph{blackbox} ML model, as well as a \emph{reference} dataset of candidates whose performance has already been measured, we aim to deliver a selected set of candidates which (provably) contains mostly high-performing candidates \emph{and} is also diverse. We emphasize once again that focusing on only one of these objectives \emph{alone} may lead to undesirable results: returning a set of candidates which is mostly high quality but is not diverse may be either inefficient (in drug discovery settings) or inequitable (in the case of diversity-aware job hiring) while returning a diverse set of candidates that contains too many low quality candidates is also clearly unacceptable. 

Apart from some recent works \citep{wu2024optimal,huo2024real} that study diverse selection (we discuss these in detail in Section~\ref{related-work}), the literature on candidate selection and multiple hypothesis testing has nearly exclusively been focused on the goal of returning selection sets that are as large as possible subject to various statistical reliability criteria such as false discovery rate (FDR) control \citep[e.g.,][]{lee2024boostingebhconditionalcalibration,xu2023more,bai2024optimizedconformalselectionpowerful,lee2025fullconformalnoveltydetectionpowerful}. It is worth emphasizing that this is different than---and often at odds with---our goal of constructing a \emph{diverse} selection set subject to the same FDR constraint. 
Indeed, large/powerful selection sets---such as the conformal selection (CS) set \citep{jin2023selection}---may exhibit poor diversity due to the inclusion of promising yet similar instances. In such cases, it is desirable to ``prune down'' such selection sets into smaller, more diverse ones while still controlling the FDR. This is a non-trivial task: pruning often requires removing promising candidates to increase diversity, which, when naively done, can compromise error control. 
In general, one must carefully balance diversity with FDR, much like the familiar tradeoff between power and Type-I error in any statistical inference problem. In some cases, it may be appropriate to allow for a more liberal guarantee for the sake of increased diversity, while in others maintaining a low FDR may be more important than diversity. Figure~\ref{fig:init-tradeoff} illustrates this tradeoff curve on a job hiring dataset from \cite{roshan2020campus} considered in Section~\ref{expers:hiring}. We discuss these tradeoffs in more detail in Section~\ref{contribution}.

\begin{figure}
    \centering
    \hspace*{1.5cm}\includegraphics[scale=0.5]{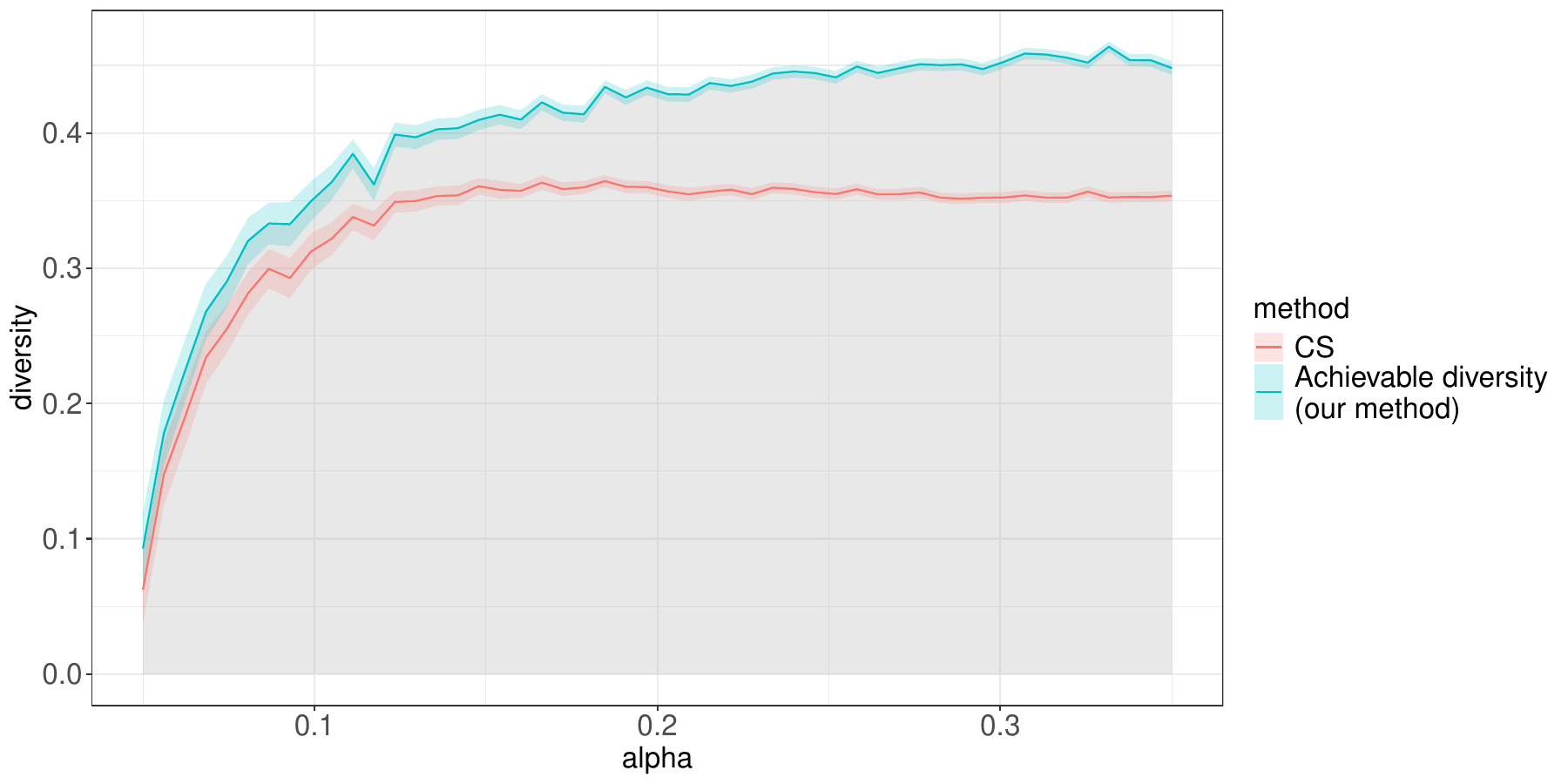}\hspace*{-1.5cm}
    \caption{Tradeoff between diversity and nominal FDR level $\alpha$ on a job hiring dataset discussed in Section~\ref{expers:hiring}. As $\alpha$ increases, the achieved diversity by our method (blue) increases and, in particular, dominates that of the conformal selection (CS) \citep{jin2023selection} method (red) at the same nominal level.}
    \label{fig:init-tradeoff}
\end{figure}

Finally, we note that many of the tools developed in this paper also apply broadly to settings where diversity may not be the desired objective (though diversity will be our main focus throughout this paper). Our core methodology can be used in settings in which a practitioner is interested in optimizing any generic (i.e., non-power-based) criterion as a function of the selection set. Section~\ref{contribution} provides a more detailed discussion.

\subsection{Setup}\label{sec:setup}
We build upon the conformal selection method introduced in \cite{jin2023selection}. Before presenting the methodological framework, we first introduce the basic setup and inferential targets. We assume access to a set of labeled calibration samples $\{(Z_i,X_i,Y_i)\}_{i=1}^n \subseteq \mathcal{Z} \times \mathcal{X} \times \mathbb{R}$ drawn i.i.d.~from some distribution $\bP := P_Z \times P_{X \mid Z} \times P_{Y \mid Z,X}$, a set of unlabeled test samples $\{(Z_{n+i},X_{n+i})\}_{i=1}^m \subseteq \mathcal{Z} \times \mathcal{X}$ drawn i.i.d.\footnote{Our method and theory also easily extend to the case wherein the calibration and test datasets are exchangeable.}~from $P_Z \times P_{X \mid Z}$, and a pretrained ML model $\hat{\mu}: \mathcal{X} \rightarrow \mathbb{R}$ which predicts the quality $Y$ of a candidate given (only) its $X$-covariates; the unobserved test labels are denoted by $Y_{n+1}, \ldots, Y_{n+m}$. 
We distinguish between covariates $Z$ (used to define diversity) and $X$ (used for prediction) to accommodate settings where we wish to diversify on a different set of features than those from which $\hat{\mu}$ bases its quality predictions. For example, in hiring, $Z$ might represent protected demographic attributes for measuring diversity, while $X$ consists of prior job performance metrics used by $\hat{\mu}$ to predict quality (it is typically impermissible to base these predictions off of such demographic characteristics). In other cases---such as drug discovery---$Z$ and $X$ may coincide, e.g., both representing molecular structure.

We will constrain the selected set to not contain too many low-quality candidates. To this end, we define a candidate $i$ as \emph{low-quality} if $Y_i \leq 0$; conversely, candidates with $Y_i > 0$ are deemed \emph{high-quality}.\footnote{More generally, one may set candidate-specific and possibly non-zero thresholds $c_i$ and define candidate $i$ as low- or high-quality if $Y_i \leq c_i$ of $Y_i > c_i$, respectively. Provided that the triples $(X_i, Y_i, c_i)$ are exchangeable across calibration and test datasets, this setting is readily accommodated by simply redefining the response as $Y_i - c_i$ and applying the zero threshold.} We then require that the false discovery rate (FDR) of our selected set $\mathcal{R} \subseteq [m]$ be controlled below a user-specified level $\alpha \in (0,1)$:
\begin{equation}\label{fdr-eq}\fdr := \bE\left[\frac{\sum_{i=1}^m \mathds{1}\{i \in \mathcal{R}, Y_{n+i} \leq 0\}}{1 \vee |\mathcal{R}|}\right] \leq \alpha.\end{equation} 
The FDR measures the average fraction of resources that would be wasted on low-quality instances in the selection set. 
Following \cite{jin2023selection}, it will at times be helpful to view the problem as a \emph{random} multiple testing problem, in the sense that the hypotheses $H_{i}: Y_{n+i} \leq 0, i = 1, \ldots, m$ are random events, as opposed to the standard multiple testing paradigm in which the hypotheses are not random events, but rather describe properties of the joint distribution of $(X,Y)$. 

As previously mentioned, the aim of this paper is to maximize some notion of diversity of the selection set $\mathcal{R}$, subject to finite-sample FDR control. Formally, let $\bZ := (Z_{1}, \ldots, Z_{n+m})$ denote the list of all diversification variables. Given any user-specified function $\varphi(\bZ_{\mathcal{S}})$ which measures the diversity of any subset $\mathcal{S} \subseteq [n+m]$ in terms of its $Z$-variables $\bZ_{\mathcal{S}} := \{Z_i: i \in \mathcal{S}\}$, our goal is to produce a selection set $\mathcal{R}\subseteq [m]$ that (approximately) maximizes $\bE[\varphi(\bZ^{\test}_{\mathcal{R}})]$ while obeying~\eqref{fdr-eq}, where $\bZ^{\test} := (Z_{n+1}, \ldots, Z_{n+m})$ is the list of test diversification variables.
In general, we place no constraints on the diversity metric $\varphi$: the user is free to choose any metric they wish---though in Section~\ref{dacs-heuristics}, we will need to make some assumptions about $\varphi$ for the sake of computational efficiency (but not FDR control). To provide more context, we preview some concrete examples of diversity measures and discuss application settings in which practitioners might use them as a natural objective.

Our first two example objectives, the \emph{Sharpe ratio} and \emph{Markowitz objective} of Examples~\ref{sharpe-ex} and \ref{markowitz-ex} depend on a positive definite matrix $\Sigma \in \mathbb{R}^{(n+m) \times (n+m)}$ which encodes pairwise similarities between $(Z_{1}, \ldots, Z_{n+m})$.
For example, if the $Z$'s take continuous values in $\mathbb{R}^d$ then taking $\Sigma$ to be the radial-basis function (RBF) kernel matrix may be a reasonable choice. Another example is the Tanimoto coefficient matrix \citep{tanimoto1958elementary,bouchard2013proof}, which is commonly used in cheminformatics and drug discovery to measure similarity between molecular fingerprints.

\begin{example}[Sharpe ratio]\label{sharpe-ex}
    The \emph{Sharpe ratio} \citep{sharpe1966mutual}, commonly used in the finance literature, is defined as \begin{equation}\label{sharpe}
        \sharpediversityR := \begin{cases}
            \frac{|\mathcal{S}|}{\sqrt{\mathbf{1}_{\mathcal{S}}^\top \Sigma \mathbf{1}_{\mathcal{S}}}}, & \text{ if } |\mathcal{S}| > 0,\\
            0, & \text{ if } \mathcal{S} = \emptyset.
        \end{cases}
    \end{equation} A selection set $\mathcal{R}$ achieving a high Sharpe ratio effectively trades off between making many selections while ensuring that these selections are, on the whole, not too similar.
\end{example}

\begin{example}[Markowitz objective]\label{markowitz-ex}
    Like the Sharpe ratio, the \emph{Markowitz objective} \citep{markowitz}, is also used in the portfolio optimization and diversification literature, and encourages selection sets which balance between the two competing objectives of size and total diversity. Rather than a ratio, the Markowitz objective achieves this balance by using a weighted combination of these objectives: \begin{equation}\label{markowitz}
        \markowitzdiversityR :=
            |\mathcal{S}| - \frac{\gamma}{2} \cdot \mathbf{1}_{\mathcal{S}}^\top \Sigma \mathbf{1}_{\mathcal{S}},
        \end{equation} where $\gamma > 0$ is a tuning parameter controlling the trade off between diversity and set-size (small values of $\gamma$ correspond to large but non-diverse sets being preferable while large values of $\gamma$ encode desirability of small but diverse sets).
\end{example}

Our last example diversity measure, the \emph{underrepresentation index}, is applicable in settings where the diversification variable $Z$ is categorical and one would like to construct a selection set of high-performing candidates without underrepresenting any category. This diversity metric may be appropriate, for example, in the job hiring or admissions settings discussed in Section~\ref{motivation}.

\begin{example}[Underrepresentation index]\label{underrep-ex}
    Suppose that the diversification variable $Z$ takes on at most $C\in\mathbb{N}^+$ categorical values, i.e., $Z_i \in [C]$ for all $i = 1, \ldots, n+m$. The \emph{underrepresentation index} of a set is the proportion of its least-represented category. More formally, for any set $\mathcal{S} \subseteq [n+m]$, define $N_c(\mathcal{S}) := \sum_{i \in \mathcal{S}} \mathds{1}\{Z_{i} = c\}$ to be the number of selections whose $Z$-value belongs to category $c$. We then define the underrepresentation index to be \begin{equation}\label{underrep-def}\underrepdiversityR := \begin{cases}
        \min_{c\in[C]} \frac{N_c(\mathcal{S})}{|\mathcal{S}|}, & \text{ if } |\mathcal{S}| > 0,\\
        -\frac{1}{C}, & \text{ if } \mathcal{S} = \emptyset.
    \end{cases}\end{equation}

    The underrepresentation index is closely related to the Berger-Parker dominance \citep{berger1970diversity}, a popular biodiversity index which measures the proportion of the \emph{most}-represented category (a sample is diverse if the Berger-Parker dominance is small). As we will see in Section~\ref{dacs-underrep}, the specific form of the underrepresentation index enables a fast and exact implementation of our method, though our framework applies generally to other discrete diversity indices such as the Gini-Simpson index \citep{simpson1949measurement} and entropy.
\end{example}

\subsection{Our contributions}\label{contribution}
We introduce \emph{diversity-aware conformal selection} (DACS), a framework for constructing a selection set that (approximately) maximizes a generic diversity metric while controlling the false discovery rate (FDR) at a pre-specified level $\alpha\in(0,1)$ in finite samples.

Our idea is to \emph{prune} a large and non-diverse selection set---in particular the one returned by conformal selection (CS) \citep{jin2023selection}---into a diverse one which still controls the FDR at the same nominal level. This is a non-trivial task since, in general, subsets of an FDR-controlling set will fail to control FDR at the same level.

There are two key ideas underlying DACS: 
\vspace{-0.5em}
\begin{enumerate}[(1)]\setlength\itemsep{0em}
    \item The exchangeability between the calibration and test data enables the construction of a family of e-values \citep{vovk2021values} based on any black-box prediction model $\hat\mu$. Such e-values are indexed by stopping times $\tau$ and, as we show, any choice of stopping time $\tau$ yields valid e-values.
    \item Any selection set that satisfies a simple constraint with respect to a set of valid e-values is guaranteed to control the FDR. Therefore, for any stopping time, we can run an ``e-value optimization program'' to build an FDR-controlling selection set that maximizes diversity subject to the constraint imposed by the e-values at that stopping time. 
\end{enumerate} 
\vspace{-0.5em}
Leveraging these two ideas, DACS---which we introduce in Section~\ref{dacs-main-sec}---uses optimal stopping theory to find an (approximately) optimal choice of the stopping time $\tau^*$ whose associated e-values will lead to a selection set that is diverse and controls FDR. The final selection set is then constructed by running the e-value optimization program with the stopping time $\tau^*$. 
It is worth noting that this core methodology is completely agnostic to the fact that our goal is diversity. Indeed, our method applies much more broadly to settings in which the user is interested in optimizing any general function $\varphi$ of the selected data.

In Section~\ref{dacs-underrep} we develop a fast and exact implementation of our method when $\varphi$ is the underrepresentation index. For a generic diversity metric, though, implementing DACS may be non-trivial as it typically requires solving \emph{many} e-value optimization problems---each an integer program---to compute the optimal stopping time. To address this, we develop, in Section~\ref{dacs-heuristics}, several strategies that substantially reduce runtime:
\begin{itemize}
    \item In Section~\ref{lp-relax}, we introduce a technique to relax the integer e-value optimization programs while still approximately controlling FDR. Without this relaxation, running our method with generic diversity metrics would be infeasible for all but the smallest problems. This relaxation technique may also be of independent interest in standard e-value-based multiple testing contexts.
    \item Section~\ref{warm-start} develops custom projected gradient descent solvers for the (relaxed) Sharpe ratio and Markowitz objectives, along with a warm-start heuristic to further accelerate the optimal stopping computation. We implement efficient versions of these solvers in C++ and they can substantially reduce runtime, sometimes by orders of magnitude. 
    \item We devise, in Section~\ref{skipping}, a heuristic that reduces the total number of time points over which the optimal stopping computation is performed. This gives users nearly complete control over computational cost, allowing them to trade off runtime against quality of the selection set (as measured by diversity).
\end{itemize} 
We also illustrate the effectiveness of our framework via empirical evaluations on datasets from job hiring and drug discovery in Section~\ref{expers} and in simulation in Section~\ref{sims}. All code and data used in these empirical studies is publicly available at \url{https://github.com/Yashnair123/diverseSelect}.

As we discussed briefly in Section~\ref{motivation}, returning a subset of the CS selection set is desirable in situations where repeatedly processing similar instances is wasteful or processing each candidate is infeasible. 
In general, though, if the practitioner wishes to construct a diverse selection set that is \emph{not} a subset of the CS level-$\alpha$ selection set, then they will have to run DACS at a nominal FDR level greater than $\alpha$. 
It is up to the practitioner to decide how to make this tradeoff
by weighing several competing considerations: 
\begin{itemize}  \setlength\itemsep{0em}
    \item Whether they want to potentially make selections that go beyond (i.e., are not contained in) the CS level-$\alpha$ selection set
    \item How important diversity of the selection set is to them
    \item At what nominal level they want their selection set's FDR to be controlled
\end{itemize}
Both the decision to run DACS and the nominal level at which to run it will depend on the relative importance the user places on each of these factors.

\subsection{Related work}\label{related-work}

\paragraph{Conformal selection and outlier detection}
Our work builds on the model-free selection literature \citep{jin2023modelfreeselectiveinferencecovariate,jin2023selection,bai2024optimizedconformalselectionpowerful}, particularly the conformal selection (CS) framework introduced in \citet{jin2023selection}. Recent work \citep{rava2024burdensharedburdenhalved,bashari2024derandomized,mary2022semi,lee2024boostingebhconditionalcalibration} has shown how conformalized multiple testing procedures can be reformulated as stopping-time-based methods in the deterministic multiple hypothesis testing setting. These works, however, have been primarily focused on one or two specific stopping times related to the BH procedure and, from this vantage point, one important methodological contribution of our work is to observe that different stopping times can result in selection sets with varying diversity while still controlling FDR. We choose a stopping time using optimal stopping theory to (approximately) maximize diversity.

\paragraph{FDR control using side information}
This work is conceptually related to the extensive literature on using side information in multiple testing while maintaining (approximate) FDR control \citep{li2019multiple,lei2018adapt,ren2023knockoffs,lei2017star,ignatiadis2021covariate,ignatiadis2016data}. The primary goal in these works is improving power, not diversity---though \citet{lei2017star} also consider structural coherence, guided by domain knowledge, as a motivating objective. In contrast, our work leverages the side information given by the diversification variables $Z_1, \ldots, Z_{n+m}$ specifically to increase selection diversity.
Relatedly, \citet{benjamini2007false,sun2015false,sesia2020multi,gablenz2024catch,katsevich2019multilayer,foygel2015p,spector2024controlled} use spatial side information to perform cluster-level testing, possibly across multiple resolutions. Again, these approaches use the spatial side information to either boost power or enhance interpretability (e.g., enforcing spatial locality of the signal) and aim for various notions of cluster-level error control. Again, our goal is fundamentally different: rather than identifying clusters of candidates, we aim to select a diverse set---ideally one that avoids overrepresenting any single cluster if it exists.

\paragraph{Multiple hypothesis testing with e-values}
Our ability to optimize for diversity hinges on the use of e-values, which enable flexible multiple testing procedures \citep{gablenz2024catch,wang2022false,xu2021unified}. We contribute a novel method that jointly optimizes (1) the stopping rule used to define e-values and (2) the strategy for selecting a diverse set under the resulting e-value constraint. This joint optimization is carried out via optimal stopping theory. Indeed, this idea of (approximately) optimally choosing which e-values to use before applying an e-value-based selection procedure may be of interest in other contexts, like those studied in \citet{gablenz2024catch,wang2022false,xu2021unified}.

\paragraph{Diverse candidate selection} The closest papers to this work are \citet{huo2024real,wu2024optimal}, who also study the problem of selecting a diverse set of candidates subject to various notions of Type-I error control. We mention three key differences between these related works and ours. 

\begin{enumerate}
    \item \textbf{Different settings:} \citet{wu2024optimal} maximize diversity under a fixed selection size budget and control a variant of the FDR asymptotically. In contrast we require no such budget and target exact finite sample FDR control. \citet{huo2024real} operate in an online setting, controlling FDR asymptotically at any stopping time and treating diversity as a constraint (rather than an objective). In comparison our problem setting is offline, and our method \emph{optimizes} for diversity.
    \item \textbf{Assumptions}: Both \citet{huo2024real,wu2024optimal} make strong smoothness and boundedness assumptions on the densities of the conditional distributions $\hat{\mu}(X) \mid Y \leq 0$ and $\hat{\mu}(X) \mid Y > 0$ to achieve asymptotic error control. Even if these strong assumptions hold, their methods may still be anti-conservative in finite samples. In contrast, our method makes no assumptions on the conditional distributions of the ML predictions and is exactly valid in finite samples. While we introduce heuristics in Section~\ref{dacs-heuristics} that may inflate FDR by a known factor, this can be corrected without assumptions---though we observe no inflation in practice and recommend against making this adjustment.
    \item \textbf{Diversity metrics:}  \citet{huo2024real,wu2024optimal}’s methods apply to a limited class of diversity metrics, namely pairwise similarity measures such as the kernel similarity used in the Sharpe ratio and Markowitz objective. These metrics cannot capture more holistic notions of diversity (like our underrepresentation index does, for instance) that consider the selection set as a whole and cannot be written in terms of pairwise similarities alone. Our method, on the other hand, accommodates any user-specified diversity metric, including those not expressible through pairwise relationships.
\end{enumerate}

\section{Diversity-aware conformal selection}\label{dacs-main-sec}

\subsection{Reinterpreting conformal selection}\label{method-review}

We begin with a recap of~\cite{jin2023selection}'s conformal selection (CS) method. We then discuss how the CS procedure can be written as the output of the eBH filter \citep{wang2022false} applied to certain stopping-time-based e-values. The flexible structure in these e-values will be crucial in building our method.

Using the pre-trained machine-learning algorithm $\hat{\mu}\colon \mathcal{X} \to \mathbb{R}$, the CS procedure defines a \emph{score} function $V: \mathcal{X} \times \mathbb{R} \rightarrow \mathbb{R}$ that is \emph{monotone} in its second argument, i.e., $V(x,y)\leq V(x,y')$ for any $x\in \mathcal{X}$ and any $y\leq y'$, $y,y'\in \mathbb{R}$. 
As shown in \cite{jin2023selection}, a particularly powerful choice is the ``clipped'' score function:\footnote{Any monotone score function $V(x,y)$ can be ``clipped'' to $\infty \ind\{y>0\}+V(x,0)\ind\{y\leq 0\}$, for which all techniques in this paper continue to apply; see~\cite{jin2023selection,bai2024optimizedconformalselectionpowerful} for related discussion.} \[V^{\clip}(x,y) = \begin{cases}
    \infty, & \text{ if } y > 0,\\
    -\hat{\mu}(x), & \text{ if } y \leq 0.
\end{cases}\] Throughout this paper, we will focus solely on this clipped score function and require only that the analyst use a fitted quality-prediction model $\hat{\mu}$ for which the values $\hat{\mu}(X_1), \ldots, \hat{\mu}(X_{n+m})$ are almost surely distinct (this can always be achieved by, e.g., adding exogenous noise to $\hat{\mu}$'s outputs).

Given the score function $V^{\clip}$, CS computes calibration scores $V_i := V^\clip(X_i, Y_i)$ for each $i = 1,\ldots, n$ as well as imputed test scores $\widehat{V}_i := V^\clip(X_{n+i}, 0)$ for each $i = 1, \ldots ,m$ and constructs p-values \begin{equation}\label{conformal-pval}p_i := \frac{1 + \sum_{j=1}^n \mathds{1}\{\widehat{V}_i \geq V_j\}}{n+1} \quad\text{ for } i = 1, \ldots, m.\end{equation} The selection set returned is then the output of the Benjamini-Hochberg (BH) filter \citep{benjamini1995controlling} applied to the p-values $p_1, \ldots, p_m$ defined in~\eqref{conformal-pval}.\footnote{Because the hypotheses in our setting are \emph{random}, the variables $p_1, \ldots, p_m$ are not p-values in the standard sense. Rather, as is shown in \cite{jin2023selection}, they satisfy the condition that $\bP(p_i \leq \alpha, Y_{n+i} \leq 0) \leq \alpha$ for all $i = 1, \ldots, m$ and $\alpha \in (0,1)$.}

The first key observation \citep{lee2024boostingebhconditionalcalibration} is that the CS selection set can be written as the output of the eBH filter \citep{wang2022false} applied to certain e-values.  A similar viewpoint has also been discussed in \cite{bashari2024derandomized} for the related (deterministic) multiple testing problem of outlier detection. This perspective will allow us to exploit the martingale structure underlying a \emph{general} class of e-values---in particular, not just those relating to CS---to optimize for diversity.
Formally, let $\bW := (W_1, \ldots, W_{n+m}) := (V_1, \ldots, V_n, \widehat{V}_1, \ldots, \widehat{V}_m)$ denote the complete list of calibration and test scores. We let $W_{(1)} \leq \cdots \leq W_{(n+m)}$ denote their sorted values (ties among scores equal to infinity may be broken arbitrarily), with $W_{(0)} := -\infty$. Let $\pi_{\sort}$ be the permutation for which $(W_{\pi_{\sort}(1)}, \ldots, W_{\pi_{\sort}(n+m)}) = (W_{(1)}, \ldots, W_{(n+m)})$ and define $B_s := \mathds{1}\{\pi_{\sort}(s) \leq n\}$ to be the indicator that the $s^{\text{th}}$ smallest score belongs to the calibration set. Then, for each $t \in [n+m]$, let $N^{\Above}_t$ be the number of calibration points strictly above $W_{(t)}$: $N^{\Above}_t := \sum_{s=t+1}^{n+m} B_s.$ Finally, for each $t \in [n+m]$, define \begin{equation}\label{stopping-time-based-evals}e^{(t)}_i := \frac{(n+1) \mathds{1}\{\widehat{V}_i \leq W_{(t)}\}}{1 + n - N^{\Above}_t} \quad \text{ for all } i = 1, \ldots, m \text{ and }t = 1, \ldots, n+m.\end{equation}

\begin{figure}
    \centering
    \includegraphics[scale=0.5]{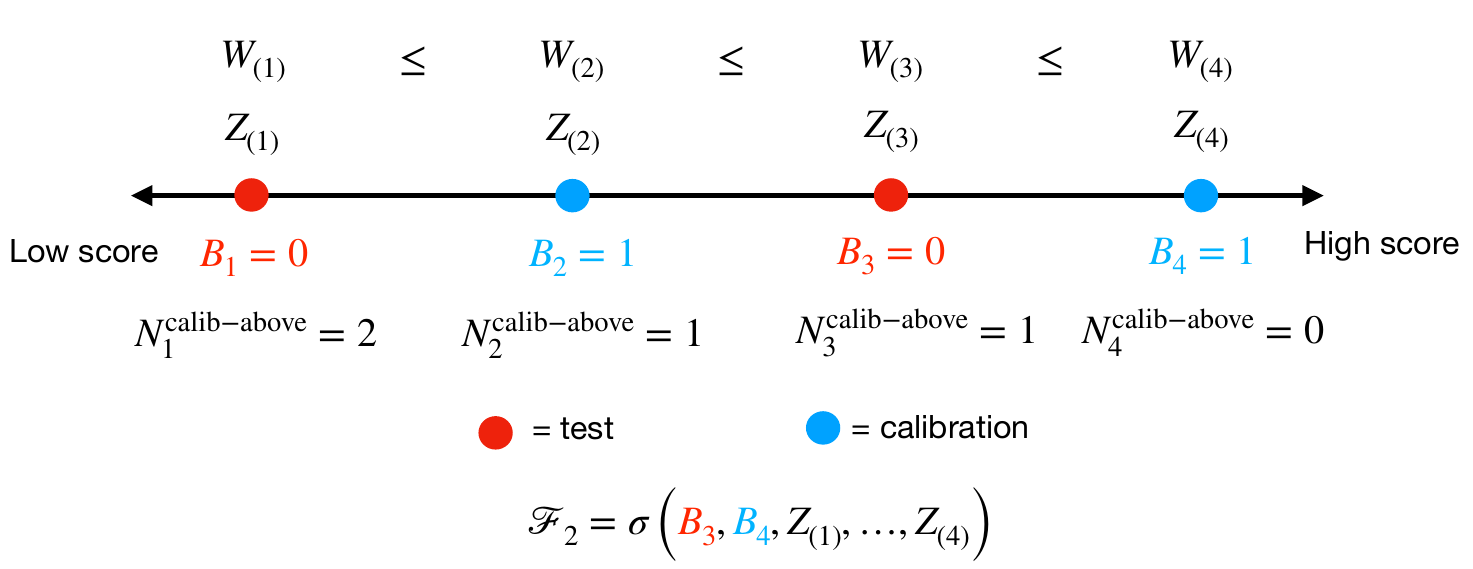}
    \caption{Illustration of the variables involved in defining the filtration as well as variables defining filtration at time $2$, shown for an example with $n=2$ calibration points (blue) and $m=2$ test points (red).}
    \label{fig:key-notation}
\end{figure}

We then define a backwards filtration, which at time $t \in [0:n+m]$ is given by \[\mathcal{F}_t := \sigma\left(B_{t+1}, \ldots, B_{n+m}, \Zsort\right),\] the $\sigma$-algebra generated by the indicators of the $(t+1)^{\text{th}}$ through $(n+m)^{\text{th}}$ smallest scores belonging to the calibration set as well as $\Zsort$, where $\Zsort := (Z_{(1)}, \ldots, Z_{(n+m)})$ is the list of $\bW$-sorted diversification variables $Z_i, i \in [n+m]$ (i.e., $Z_{(t)} := Z_{\pi_{\sort}(t)}$). In a nutshell, as $t$ decreases, the filtration gradually reveals whether the $t^{\text{th}}$ sorted score belongs to the calibration or test set; Figure~\ref{fig:key-notation} provides a visual illustration of the notation used to define the filtration. As is shown in \citet[][Proposition 8, part (2)]{lee2024boostingebhconditionalcalibration}, the selection set returned by CS is identical to the output of the eBH procedure applied to the variables $ (e^{(\tau_{\bh})}_1, \dots, e^{(\tau_{\bh})}_m )$ for a particular stopping time $\tau_{\bh}$ which we call the BH stopping time. In Appendix~\ref{appendix:ebh-evalues} we both give a formal definition of the BH stopping time and establish that the variables $(e^{(\tau_{\bh})}_1, \dots, e^{(\tau_{\bh})}_m )$ are e-values (in the sense of equation~\eqref{null-eval} below).

\subsection{High-level idea: a layered optimization framework}\label{sec:layered}

There are two key observations on which DACS is built:
\begin{enumerate}
    \item Choosing any stopping time $\tau$ different than $\tau_{\bh}$ still produces valid e-values $e^{(\tau)}_1, \dots, e^{(\tau)}_m$. The following theorem establishes this result formally.

    \begin{theorem}\label{stopping-time-thm}
    Let $\tau$ be any stopping time with respect to the reverse filtration $\left(\mathcal{F}_t\right)_{t=0}^{n+m}$. Then the variables $e^{(\tau)}_1, \dots, e^{(\tau)}_m$ are e-values in the sense that they satisfy \begin{equation}\label{null-eval}
        \bE[e^{(\tau)}_i \mathds{1}\{Y_{n+i} \leq 0\}] \leq 1 \text{ for all } i = 1, \ldots, m.
    \end{equation}
\end{theorem}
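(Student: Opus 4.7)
The plan is to establish the theorem via a reverse-time (super)martingale argument combined with optional stopping. Fix $i$ and let $M_t := e^{(t)}_i \mathds{1}\{Y_{n+i}\leq 0\}$. Since $M_t$ itself is not $\mathcal{F}_t$-measurable (the null indicator is not in $\mathcal{F}_t$), I would instead work with the adapted process $h_t := \mathbb{E}[M_t\mid\mathcal{F}_t]$. The tower property, together with $\{\tau=t\}\in\mathcal{F}_t$, gives
\[
\mathbb{E}\bigl[e^{(\tau)}_i\,\mathds{1}\{Y_{n+i}\leq 0\}\bigr] \;=\; \mathbb{E}[M_\tau] \;=\; \mathbb{E}[h_\tau],
\]
so it suffices to show (a) $h_{n+m}\leq 1$ and (b) $\mathbb{E}[h_{t-1}\mid\mathcal{F}_t]\leq h_t$ for each $t\geq 1$. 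Time-reversing via $\widetilde{\mathcal{F}}_s := \mathcal{F}_{n+m-s}$ and $\sigma := n+m-\tau$ turns $(h_t)$ into a bounded forward supermartingale to which the finite-horizon optional stopping theorem applies, delivering $\mathbb{E}[h_\tau]\leq \mathbb{E}[h_{n+m}]\leq 1$.

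Part (a) is elementary: at $t=n+m$, $W_{(n+m)}$ is the largest score, so $\mathds{1}\{\widehat{V}_i\leq W_{(n+m)}\}=1$ and $N^{\Above}_{n+m}=0$, giving $e^{(n+m)}_i=1$ and hence $h_{n+m}=\mathbb{P}(Y_{n+i}\leq 0\mid\mathcal{F}_{n+m})\leq 1$. Uniform boundedness (needed for optional stopping) is also immediate since $1+n-N^{\Above}_t\geq 1$ implies $e^{(t)}_i\leq n+1$. For part (b), one more application of tower reduces the claim to
\[
\mathbb{E}\bigl[e^{(t-1)}_i\,\mathds{1}\{Y_{n+i}\leq 0\}\bigm|\mathcal{F}_t\bigr] \;\leq\; \mathbb{E}\bigl[e^{(t)}_i\,\mathds{1}\{Y_{n+i}\leq 0\}\bigm|\mathcal{F}_t\bigr],
\]
which I would argue by a case split on the position of $\widehat{V}_i$ relative to $W_{(t)}$: if $\widehat{V}_i>W_{(t)}$ both sides vanish; if $\widehat{V}_i=W_{(t)}$ the left side is zero while the right side is nonnegative; and in the remaining case $\widehat{V}_i<W_{(t)}$, using $N^{\Above}_{t-1}=N^{\Above}_t+B_t$ yields $e^{(t-1)}_i = (n+1)/(n+1-N^{\Above}_t - B_t)$, so the inequality reduces to a direct computation of the joint conditional law of $(B_t,\mathds{1}\{Y_{n+i}\leq 0\})$ given $\mathcal{F}_t$.

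This last conditional-law computation is the crux, and is driven by i.i.d.\ exchangeability. On the null event $\{Y_{n+i}\leq 0\}$ one has $\widehat{V}_i = V^{\clip}(X_{n+i}, Y_{n+i})$, so $(Z_{n+i}, X_{n+i}, Y_{n+i})$ plays the role of an additional calibration sample and the augmented tuple $\{(Z_j, V_j)\}_{j=1}^n \cup \{(Z_{n+i}, \widehat{V}_i)\}$ is exchangeable in its $n+1$ slots. Because $Z$ sits inside this exchangeable tuple, conditioning on $\Zsort$ (part of $\mathcal{F}_t$) does not destroy the exchangeability of labels among the ``calibration-or-null-test-$i$'' slots; once the other (non-$i$) test positions are factored out, a direct counting argument yields the conditional probability that position $t$ is occupied by a calibration index, and substituting into the display above produces the desired inequality. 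The main obstacle is precisely this exchangeability bookkeeping: one must verify that conditioning on $\Zsort$, on the top-$(n+m-t)$ cal/test indicators, and on the multiset of the other test scores all remain compatible with the core exchangeability of the $n+1$ calibration-like slots. Once that is in place, the remainder is routine algebra and a standard invocation of optional stopping.
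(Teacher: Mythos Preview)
Your high-level strategy—project $M_t := e^{(t)}_i\mathds{1}\{Y_{n+i}\le 0\}$ onto $\mathcal{F}_t$ and show the resulting $h_t$ is a reverse supermartingale—is sound, and your reduction $\mathbb{E}[M_\tau]=\mathbb{E}[h_\tau]$ is correct. But the execution has a gap at exactly the place you flag as the crux.

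First, a technical point: your case split on the position of $\widehat V_i$ relative to $W_{(t)}$ is not $\mathcal{F}_t$-measurable. The filtration $\mathcal{F}_t$ records which of the top $n+m-t$ sorted positions are calibration versus test, but not \emph{which} test index sits at each such position (unless the $Z$-values happen to identify it uniquely). So you cannot argue ``if $\widehat V_i>W_{(t)}$ both sides vanish'' as a case given $\mathcal{F}_t$; the indicator $\mathds{1}\{\widehat V_i\le W_{(t)}\}$ remains random under $\mathcal{F}_t$, and what you actually need is the joint conditional law of $\bigl(\mathds{1}\{\widehat V_i\le W_{(t-1)}\},\,\mathds{1}\{\widehat V_i=W_{(t)}\},\,B_t,\,\mathds{1}\{Y_{n+i}\le 0\}\bigr)$ given $\mathcal{F}_t$.

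Second, and more seriously, this joint law is not governed by any clean exchangeability. The filtration is built from the \emph{original} score vector $\bW$, which mixes $n$ calibration scores with $m$ \emph{imputed} test scores. On the null event, data point $n+i$ is indeed exchangeable with the calibration points, but conditioning on $\Zsort$ and $B_{t+1},\dots,B_{n+m}$—all computed from the full $\bW$—entangles the other (non-exchangeable) test indices, and the indicator $\mathds{1}\{Y_{n+i}\le 0\}$ carries information about $Y_{n+i}$ that is not encoded in the scores at all. Your phrase ``factor out the non-$i$ test positions'' is precisely where the real work lies, and your sketch does not supply an argument.

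The paper resolves this via a leave-one-out (``oracle'') construction: replace $\widehat V_i$ by the oracle score $V^\star_i=V^{\clip}(X_{n+i},Y_{n+i})$ throughout, obtaining oracle versions $\bW(i\!\to\!\star)$, $\mathcal{F}_t(i\!\to\!\star)$, and—crucially—building from $\tau$ an explicit stopping time $\tau(i\!\to\!\star)$ with respect to the oracle filtration. The clipped score guarantees $\widehat V_i=V^\star_i$ on $\{Y_{n+i}\le 0\}$, so every oracle quantity coincides with its original on that event; one can then \emph{drop} the null indicator (as an upper bound) and run the supermartingale argument entirely in the oracle world, where $(V_1,\dots,V_n,V^\star_i)$ are genuinely exchangeable and the analysis reduces to the known outlier-detection supermartingale. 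The construction of $\tau(i\!\to\!\star)$ and the verification that it is a stopping time for the oracle filtration are the substantive steps missing from your proposal; without them, the conditional-law computation you need does not go through.
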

    
    Theorem~\ref{stopping-time-thm} builds on results of \cite{bashari2024derandomized,lee2024boostingebhconditionalcalibration}, who show the validity of conformal e-values in the (deterministic) outlier detection setting, and extends them to our selection problem. The proof of Theorem~\ref{stopping-time-thm} is given in Appendix~\ref{appendix:main-thm-proof}. Though part of the proof applies the same supermartingale given in \citet{lee2024boostingebhconditionalcalibration}, considerable care must be taken to account for the fact that our problem is a random multiple hypothesis testing problem. This necessitates a careful leave-one-out-type analysis before the supermartingale argument can be invoked; indeed, this leave-one-out analysis is why we restrict attention to \emph{only} the clipped score function, as mentioned at the outset of this section.

    \item A sufficient condition for any multiple testing procedure to enjoy FDR control in our random multiple testing problem is that it satisfies e-value \emph{self-consistency} \citep{wang2022false}.

    \begin{definition}
        A selection set $\mathcal{R}$ is self-consistent with respect to the variables $e_1, \ldots, e_m$ if \begin{equation}\label{sc}
        e_i \geq \frac{m}{\alpha|\mathcal{R}|} \text{ for all } i \in \mathcal{R}.
    \end{equation}
    \end{definition}
    
    \begin{theorem}\label{thm:self-consistency-validity}
        Let $e_1, \ldots, e_m$ be e-values in the sense that $e_i\geq 0$ and  $\bE[e_i\mathds{1}\{Y_{n+i} \leq 0\}] \leq 1$ for all $i \in[m]$. 
        Then any selection set that is self-consistent with respect to $e_1, \ldots, e_m$ controls FDR at level $\alpha$.
    \end{theorem}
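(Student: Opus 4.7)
The plan is to reduce to the standard e-BH FDR argument of \citet{wang2022false}, adapted to the fact that the nulls $\mathds{1}\{Y_{n+i}\leq 0\}$ are random events rather than fixed hypotheses. I would proceed in three short steps.

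First, I rewrite the FDR integrand as $\sum_{i=1}^m \mathds{1}\{i\in \mathcal{R},\,Y_{n+i}\leq 0\}/(1\vee |\mathcal{R}|)$. Observe that whenever $\mathcal{R}=\emptyset$, the whole sum is zero, so I may restrict attention to the event $\{|\mathcal{R}|\geq 1\}$ and replace $1\vee|\mathcal{R}|$ by $|\mathcal{R}|$ there.

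Second, I apply the self-consistency condition~\eqref{sc} termwise: for each $i\in\mathcal{R}$ one has $|\mathcal{R}|\geq m/(\alpha e_i)$, equivalently $1/|\mathcal{R}|\leq \alpha e_i/m$. Multiplying the indicator $\mathds{1}\{i\in\mathcal{R}\}$ through and then loosening $\mathds{1}\{i\in\mathcal{R}\}\leq 1$ yields the pointwise bound
\begin{equation*}
\frac{\mathds{1}\{i\in \mathcal{R},\,Y_{n+i}\leq 0\}}{1\vee |\mathcal{R}|}\;\leq\;\frac{\alpha}{m}\,e_i\,\mathds{1}\{Y_{n+i}\leq 0\}.
\end{equation*}
The one subtle point worth flagging is that this step implicitly uses $e_i\geq 0$ (so that multiplying by $e_i$ preserves the inequality); the hypothesis of the theorem supplies this.

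Third, I sum over $i\in[m]$, take expectations, and invoke the e-value property $\mathbb{E}[e_i\mathds{1}\{Y_{n+i}\leq 0\}]\leq 1$ from~\eqref{null-eval} term by term:
\begin{equation*}
\fdr \;\leq\; \frac{\alpha}{m}\sum_{i=1}^{m}\mathbb{E}\bigl[e_i\,\mathds{1}\{Y_{n+i}\leq 0\}\bigr]\;\leq\;\frac{\alpha}{m}\cdot m \;=\;\alpha.
\end{equation*}

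There is really no hard step here: the argument is a direct transcription of the \citet{wang2022false} e-BH FDR proof. The only thing to be mindful of is the random-hypothesis framing — one must keep $\mathds{1}\{Y_{n+i}\leq 0\}$ \emph{inside} the expectation rather than trying to factor it out (as one might be tempted to do by analogy with the fixed-null setting), and the e-value condition~\eqref{null-eval} is precisely designed to accommodate this. No exchangeability or further structural assumption on $\mathcal{R}$ beyond self-consistency is needed.
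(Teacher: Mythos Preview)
Your proof is correct and follows exactly the same approach as the paper's own proof in Appendix~\ref{appendix:self-consistency}: apply self-consistency to bound $\mathds{1}\{i\in\mathcal{R}\}/(1\vee|\mathcal{R}|)\leq \alpha e_i/m$, drop the indicator $\mathds{1}\{i\in\mathcal{R}\}$, then sum and use the e-value property. The paper likewise attributes the argument to \citet[][Proposition~2]{wang2022false}.
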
 Both the statement and proof of Theorem~\ref{thm:self-consistency-validity} are essentially identical to \citet[][Proposition 2]{wang2022false}, which establishes the same FDR control statement for self-consistent selection sets in the standard deterministic multiple testing setting. Theorem~\ref{thm:self-consistency-validity} says that FDR control continues to hold in our random multiple hypothesis testing setting; see Appendix~\ref{appendix:self-consistency} for a proof.
    
\end{enumerate}
Using these two observations, our method can be summarized at a high level as approximately solving the following layered optimization problem:
\begin{itemize}
    \item \emph{Inner optimization: E-value-based optimization.} For any given time $t$, we can maximize the diversity measure $\varphi(\bZ^{\test}_{\mathcal{R}})$ among all selection sets $\mathcal{R}$ that are self-consistent with respect to the time $t$ e-values $e^{(t)}_1, \dots, e^{({t})}_m$. The resulting maximum diversity value is denoted as $O_t$. 
    \item \emph{Outer optimization: Optimal stopping.} We search for the optimal stopping time $\tau^*$ that (approximately) maximizes $\bE[O_{\tau}]$, the expected diversity of the maximally diverse self-consistent selection set at time $\tau$, among all stopping times $\tau$. We achieve this by applying tools from optimal stopping theory.
\end{itemize}

Once the optimal stopping time $\tau^*$ has been computed, the final step of our procedure is to run an optimization procedure which maximizes diversity (again, as measured by $\varphi$) subject to constraint~\eqref{sc} to determine the optimally diverse selection set that is self-consistent with respect to $e^{(\tau^*)}_1, \dots, e^{(\tau^*)}_m$.

\subsection{Making diverse selections with DACS}\label{precise-method}
Throughout this section and for the remainder of the paper, we adopt a convention of referring to time in decreasing index order, reflecting the fact that we work with a \emph{backwards} filtration $\left(\mathcal{F}_t\right)_{t=1}^{n+m}$: specifically, we will say that ``time $t$ is after time $s$'' if $t \leq s$.

\subsubsection{Finding the optimal stopping time}

Because DACS prunes the CS selection set into a diverse subset, we restrict our search for the optimal stopping time $\tau^*$ to only those stopping times which stop \emph{after} the BH stopping time $\tau_{\bh}$. This is the optimal stopping problem \citep{chowgreat} of maximizing $\mathbb{E}[O_{\tau}]$ across all stopping times $\tau \leq \tau_{\bh}$. 
Should the distribution of data (more precisely, the joint distribution of the indicator variables $(B_1, \ldots, B_{\tau_{\bh}})$ conditional on the stopped filtration $\mathcal{F}_{\tau_{\bh}}$) be known, the optimal stopping time $\tau^*$ that maximizes $\mathbb{E}[O_{\tau}]$ can be found in two steps: (1) computing  $\mathbb{E}[O_t\,|\,\mathcal{F}_t]$ for each $t\leq \tau_{\bh}$ and (2) constructing a sequence of random variables known as the \emph{Snell envelope} \citep{snell1952applications} via  dynamic programming.

However, the conditional distribution $(B_1, \ldots, B_{\tau_{\bh}}) \mid \mathcal{F}_{\tau_{\bh}}$ is not known in general, and this in turn impedes our ability to compute certain conditional expectations (such as $\mathbb{E}[O_t\,|\,\mathcal{F}_t]$) which are necessary to perform steps (1) and (2) above. This is because the observed calibration scores use the true outcomes $V_i=V^{\text{clip}}(X_i,Y_i)$ whereas the test scores are imputed: $\widehat{V}_i=V^{\text{clip}}(X_i,0)$. While the imputation is crucial for FDR control, it renders the calibration and test scores non-exchangeable, thus making the conditional distribution $(B_1, \ldots, B_{\tau_{\bh}}) \mid \mathcal{F}_{\tau_{\bh}}$ unknown. 
To address this issue, we will perform our optimal stopping calculations under a distribution which approximates the true distribution $\bP$. In particular, we approximate the true data-generating distribution $\bP$ by (any) distribution, denoted $\bP_{\exch}$, for which \[\bP_{\exch}\left(B_1 = b_1, \ldots, B_t=b_t, t \leq \tau_{\bh} \mid \mathcal{F}_{t}\right) \overset{a.s.}{=} \bP_{\exch}\left(B_1 = b_{\pi(1)}, \ldots, B_t=b_{\pi(t)}, t \leq \tau_{\bh} \mid \mathcal{F}_t\right),\] for any permutation $\pi$ of $[t]$ and any $t \in [n+m]$. In words, $\bP_{\exch}$ makes all indicators past the BH stopping time conditionally exchangeable.
While this conditional exchangeability does not generally hold under the true data-generating distribution, it will be approximately true in realistic situations, for instance when $\hat{\mu}(X) \overset{d}{\approx} \hat{\mu}(X) \mid Y \leq 0$, since then the non-infinite calibration scores (which are the only calibration scores that can appear past the BH stopping time) will come from approximately the same distribution as the test scores. Such an approximate equality in distribution occurs if $\bP(Y \leq 0) \approx 1$. This corresponds to the realistic and challenging situation in which most candidates are nulls, which is likely to be the case in the drug discovery and hiring applications that we consider.
We find in our empirical studies in Sections~\ref{expers} and \ref{sims} that $\bP_{\exch}$ serves as a good approximation in many other scenarios as well; in particular, DACS continues to produce selection sets with substantial diversity even when $\bP(Y \leq 0)$ is not close to one. Finally, while our optimal stopping computations are performed under a distribution that differs from the true one, this discrepancy impacts only the optimality (i.e., the achieved diversity) of the resulting selection set. Crucially, as we will soon see, it \emph{does not} compromise FDR control.

We now explain, at a high level, how to find $\tau^*$ by performing the two steps discussed in the beginning of this section with respect to the exchangeable measure $\bP_{\exch}$. 
We will begin at the BH stopping time $\tau_{\bh}$ and condition on all information contained in the filtration at that time, i.e. $\mathcal{F}_{\tau_{\bh}}$ (recall that we enforce $\tau^* \leq \tau_{\bh}$ and so conditioning on $\mathcal{F}_{\tau_{\bh}}$ amounts to conditioning on all available information; see Remark~\ref{global-conditioning} below for details). Operating conditionally on $\mathcal{F}_{\tau_{\bh}}$, we then follow a standard approach to finding the optimal stopping time \citep[][Chapters 1 and 2]{ferguson2006optimal} which we recapitulate here.

Letting $\bE_{\exch}$ denote expectation with respect to the exchangeable distribution $\bP_{\exch}$, the first step to compute $\tau^*$ is to define \emph{reward} variables that are adapted to the reverse filtration $\left(\mathcal{F}_t\right)_{t=1}^{n+m}$ \begin{equation}\label{reward-construction}
    R_t := \bE_{\exch}[O_t \mid \mathcal{F}_t] \text{ for } t = 1, \ldots, \tau_{\bh}.
\end{equation}

Using the rewards $R_t$, we then inductively define: \begin{equation}\label{snell}
    E_1 := R_1 \text{ and } E_t := \max\left(R_t, \bE_{\exch}[E_{t-1} \mid \mathcal{F}_t]\right) \text{ for } t= 2, \ldots, \tau_{\bh}.
\end{equation}
The sequence $(E_1, \ldots, E_{\tau_{\bh}})$ is called the Snell envelope.
The optimal stopping time is then given by \begin{equation}\label{opt-stopping-time-def}
    \tau^* := \max\{t \in [\tau_{\bh}]: R_t \geq E_t\},
\end{equation} the first time $t$ after the BH stopping time at which $R_t$ is at least $\bE_{\exch}[E_{t-1} \mid \mathcal{F}_t]$, the optimal value (in expectation) if one were to continue on past time $t$. After $\tau^*$ has been computed, we solve the e-value optimization problem of maximizing diversity subject to self-consistency with respect to the e-values $e_1^{(\tau^*)}, \ldots, e_m^{(\tau^*)}$ to obtain the selection set $\mathcal{R}^*_{\tau^*}$. As shown in \cite{gablenz2024catch}, this optimization program can be written as an integer program with linear constraints; in Section~\ref{dacs-underrep} we show how to solve it efficiently for the underrepresentation index and in Section~\ref{dacs-heuristics} we show how it can be relaxed for a generic diversity metric. Figure~\ref{fig:overview} gives a high-level overview of our procedure and the following corollary to Theorems~\ref{stopping-time-thm} and \ref{thm:self-consistency-validity} establishes its validity:

\begin{corollary}\label{corollary:dacs-validity}
    The selection set $\mathcal{R}^*_{\tau^*}$ returned by DACS enjoys FDR control at level $\alpha$.
\end{corollary}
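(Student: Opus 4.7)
The plan is to derive this as an immediate corollary of Theorems~\ref{stopping-time-thm} and~\ref{thm:self-consistency-validity}. Only two ingredients need to be checked: (i) that $\tau^*$ is a valid stopping time with respect to the reverse filtration $(\mathcal{F}_t)_{t=0}^{n+m}$, and (ii) that $\mathcal{R}^*_{\tau^*}$ is self-consistent with respect to $e^{(\tau^*)}_1, \ldots, e^{(\tau^*)}_m$ in the sense of~\eqref{sc}.

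First I would verify (i). Although the rewards $R_t$ in~\eqref{reward-construction} and the Snell envelope $E_t$ in~\eqref{snell} are constructed using conditional expectations under the approximating measure $\bP_{\exch}$ rather than the true distribution $\bP$, this substitution is irrelevant for measurability: the conditional expectation $\bE_{\exch}[\,\cdot\,\mid\mathcal{F}_t]$ still returns an $\mathcal{F}_t$-measurable random variable. Thus $R_t$ and $E_t$ are $\mathcal{F}_t$-adapted. Combined with the fact that $\tau_{\bh}$ is itself a stopping time for $(\mathcal{F}_t)$ (as established in Appendix~\ref{appendix:ebh-evalues}), the event $\{\tau^* = t\}$ lies in $\mathcal{F}_t$ for every $t$, so $\tau^*$ is a bona fide stopping time for the reverse filtration. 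Invoking Theorem~\ref{stopping-time-thm} then yields that $e^{(\tau^*)}_1, \ldots, e^{(\tau^*)}_m$ are e-values satisfying~\eqref{null-eval}.

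Next I would verify (ii) by appealing directly to construction. The last step of DACS solves the diversity-maximization integer program subject to the self-consistency constraint~\eqref{sc} imposed by the e-values $e^{(\tau^*)}_1, \ldots, e^{(\tau^*)}_m$. The feasible set is non-empty (the empty set satisfies~\eqref{sc} vacuously), so any optimizer, and in particular $\mathcal{R}^*_{\tau^*}$, automatically obeys~\eqref{sc}. Combining (i) and (ii), Theorem~\ref{thm:self-consistency-validity} immediately delivers $\fdr \leq \alpha$ for $\mathcal{R}^*_{\tau^*}$.

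There is no real technical obstacle; the entire argument is a clean citation of the two preceding theorems. The one conceptual point worth underscoring — rather than a true difficulty — is the separation between \emph{optimality} and \emph{validity}. Using $\bP_{\exch}$ in place of $\bP$ in the optimal-stopping computation may shift how close $\tau^*$ comes to genuinely maximizing $\bE[O_\tau]$, and hence affect the diversity achieved by the procedure, but it cannot compromise FDR control: the argument above uses only the adaptedness of $R_t$ and $E_t$ and the self-consistency of the optimizer, and at no point requires any distributional property of $\bP_{\exch}$.
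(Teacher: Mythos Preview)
Your proposal is correct and follows precisely the route the paper intends: the corollary is stated as an immediate consequence of Theorems~\ref{stopping-time-thm} and~\ref{thm:self-consistency-validity}, and your two checks (that $\tau^*$ is a genuine stopping time because $R_t,E_t$ are $\mathcal{F}_t$-adapted regardless of the measure used to form the conditional expectations, and that $\mathcal{R}^*_{\tau^*}$ is self-consistent by construction) are exactly what is needed. Your closing remark on the separation between optimality and validity matches the paper's own emphasis.
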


\begin{figure}
    \centering
    \includegraphics[scale=0.9]{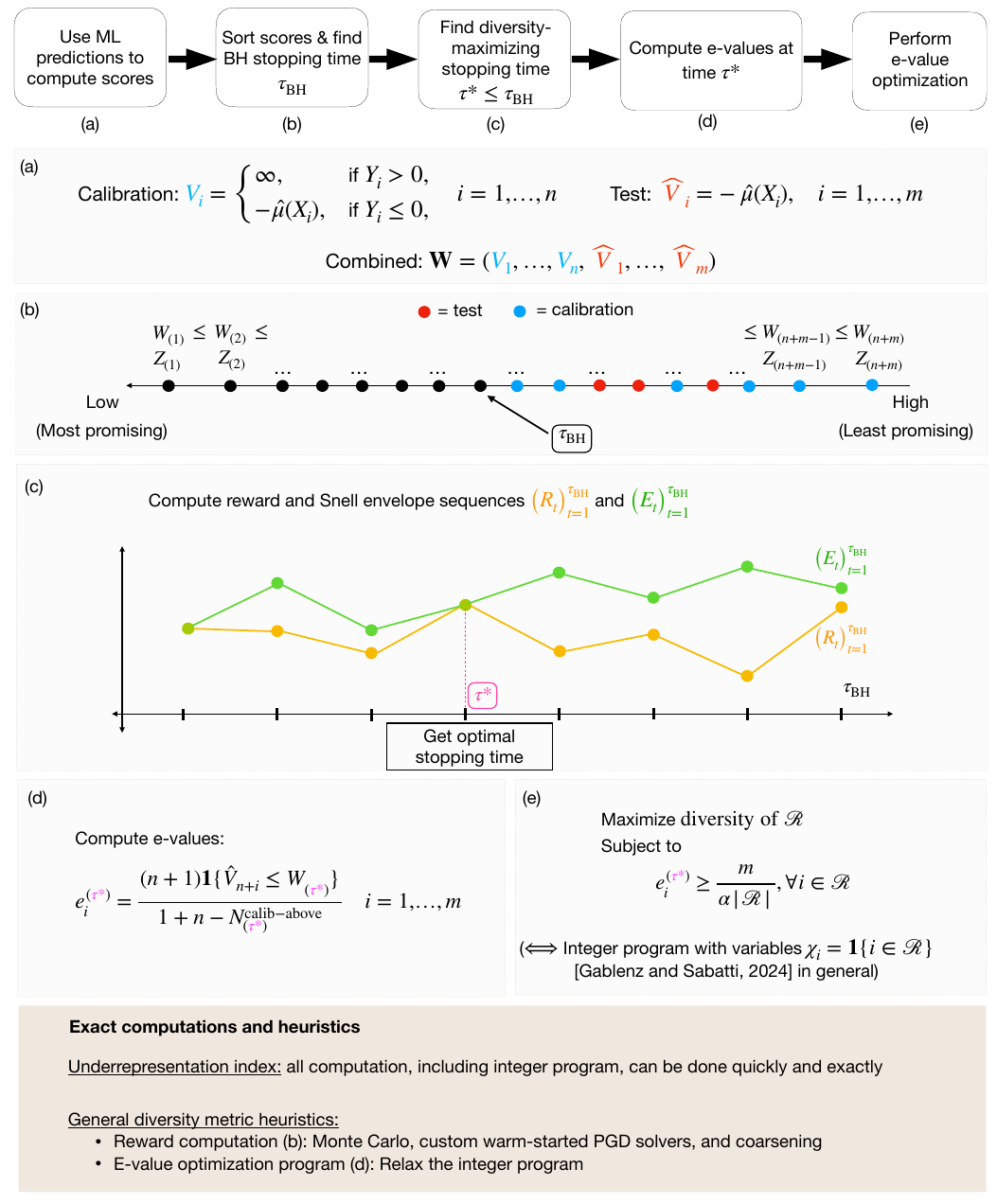}
    \caption{High-level overview of DACS procedure. Flowchart illustrates the five main steps of our method, with panels (a)--(e) detailing each step. The most computationally intensive parts are the reward computation (panel c) and e-value optimization (panel e). As shown in the brown panel, both steps can be performed exactly and efficiently for the underrepresentation index; for general diversity metrics, we introduce heuristics to accelerate computation.}
    \label{fig:overview}
\end{figure}

\subsubsection{Reward and Snell envelope computation under the exchangeable distribution}\label{reward-snell-section}
In this section, we explain in detail how to compute both the reward and Snell envelope sequences involved in panel (c) of Figure~\ref{fig:overview}. We begin with a remark regarding conditioning.

\begin{remark}\label{global-conditioning}
The conditional expectations in displays~\eqref{reward-construction} and \eqref{snell} are all with respect to $\sigma$-algebras that contain $\mathcal{F}_{\tau_{\bh}}$. This is owed to the fact that the optimal stopping time defined by equation~\eqref{opt-stopping-time-def} occurs no earlier than $\tau_{\bh}$, and thus all calculations required to find it may treat the information contained in $\mathcal{F}_{\tau_{\bh}}$ as fixed.
Importantly, this allows us to condition on the sequence $\Zsort$ and treat it as fixed.
\end{remark}

Notice that both $R_t$ and $E_t$ are $\mathcal{F}_t$-measurable. Because we view $\Zsort$ as fixed, per Remark~\ref{global-conditioning}, this means that the only randomness in $R_t$ and $E_t$ is through the randomness in $B_{t+1}, \ldots, B_{n+m}$. The exchangeability guaranteed under $\bP_{\exch}$ ensures that, for $t \leq \tau_{\bh}$, $R_t$ and $E_t$ do not depend on the ordering of $B_{t+1}, \ldots, B_{n+m}$ and hence depend \emph{only} on $N_t^{\Above} = B_{t+1} + \cdots + B_{n+m}$. To make this dependence explicit, we adopt the function notation $R_t := R_t(N_t^{\Above})$ and $E_t := E_t(N_t^{\Above})$. The Snell envelope update in equation~\eqref{snell} is then performed pointwise over the domain of these functions---that is, it requires evaluating $R_t(s_t)$ and $E_t(s_t)$ for each $s_t$ in the support of the random variable $N_t^{\Above}$. 
As such, it is important to characterize the support of $N_t^{\Above}$.

\begin{proposition}\label{prop:support}
    For each $t = 1, \ldots, \tau_{\bh}$, define the sets
    \[\Omega_t := \left[\max\big(N^{\Above}_{\tau_{\bh}}, n-t\big) : \min\big(n, \tau_{\bh}-t+N_{\tau_{\bh}}^{\Above}\big)\right].\]

    Then $\Omega_t$ is precisely the support of $N_t^{\Above}$ conditional on $\mathcal{F}_{\tau_{\bh}}$ under $\bP_{\exch}$.
\end{proposition}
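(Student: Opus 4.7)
The plan is to show two inclusions: (i) $N_t^{\Above}$ cannot take values outside $\Omega_t$, and (ii) every integer in $\Omega_t$ is attained with positive probability under $\bP_{\exch}(\,\cdot\mid\mathcal{F}_{\tau_{\bh}})$. The key observation to reduce everything to a short combinatorial exercise is the decomposition
\[
N_t^{\Above} \;=\; \sum_{s=t+1}^{\tau_{\bh}} B_s \;+\; N_{\tau_{\bh}}^{\Above},
\]
together with the identity $\sum_{s=1}^{\tau_{\bh}} B_s = n - N_{\tau_{\bh}}^{\Above}$ (since there are exactly $n$ calibration indices in total, and everything past $\tau_{\bh}$ is $\mathcal{F}_{\tau_{\bh}}$-measurable). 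Thus all randomness in $N_t^{\Above}$ given $\mathcal{F}_{\tau_{\bh}}$ comes from how the $n - N_{\tau_{\bh}}^{\Above}$ ones are distributed across the $\tau_{\bh}$ positions $\{1,\ldots,\tau_{\bh}\}$.

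For (i), I would derive four deterministic bounds directly from the decomposition: the partial sum $\sum_{s=t+1}^{\tau_{\bh}} B_s$ is at least $0$ and at most $\tau_{\bh}-t$ (by summing $0$'s and $1$'s), and it is also at least $(n - N_{\tau_{\bh}}^{\Above}) - t$ (since at most $t$ ones can fit in positions $\{1,\ldots,t\}$) and at most $n - N_{\tau_{\bh}}^{\Above}$ (since the total number of ones in $\{1,\ldots,\tau_{\bh}\}$ is $n - N_{\tau_{\bh}}^{\Above}$). Adding $N_{\tau_{\bh}}^{\Above}$ to each bound and simplifying yields
\[
\max\bigl(N_{\tau_{\bh}}^{\Above},\, n-t\bigr) \;\leq\; N_t^{\Above} \;\leq\; \min\bigl(n,\, \tau_{\bh}-t + N_{\tau_{\bh}}^{\Above}\bigr),
\]
which is precisely membership in $\Omega_t$.

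For (ii), I would appeal to the defining exchangeability property of $\bP_{\exch}$: conditional on $\mathcal{F}_{\tau_{\bh}}$, the vector $(B_1,\ldots,B_{\tau_{\bh}})$ is uniformly distributed over all binary strings with a fixed number $n - N_{\tau_{\bh}}^{\Above}$ of ones. Hence every configuration of ones and zeros in $\{1,\ldots,\tau_{\bh}\}$ consistent with that total count has positive probability. For any target value $k \in \Omega_t$, set $j := k - N_{\tau_{\bh}}^{\Above}$. The bounds defining $\Omega_t$ translate to $0 \leq j \leq \tau_{\bh}-t$ and $0 \leq (n - N_{\tau_{\bh}}^{\Above}) - j \leq t$, so one can place $j$ ones in $\{t+1,\ldots,\tau_{\bh}\}$ and the remaining $(n-N_{\tau_{\bh}}^{\Above}) - j$ ones in $\{1,\ldots,t\}$; this configuration yields $N_t^{\Above}=k$ and has positive probability under $\bP_{\exch}(\,\cdot\mid\mathcal{F}_{\tau_{\bh}})$.

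I do not anticipate a substantive obstacle: the statement is essentially bookkeeping around a hypergeometric-type support, and the only subtlety is being careful that the constraint $\sum_{s=1}^{\tau_{\bh}} B_s = n - N_{\tau_{\bh}}^{\Above}$ — which is a deterministic consequence of $\mathcal{F}_{\tau_{\bh}}$ and the global constraint $\sum_s B_s = n$ — must be combined with exchangeability to get both the upper and lower bounds simultaneously. Once this is spelled out, both inclusions follow line by line.
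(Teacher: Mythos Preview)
Your proposal is correct and follows essentially the same approach as the paper's proof: both arguments establish the four bounds via the decomposition $N_t^{\Above} = N_{\tau_{\bh}}^{\Above} + \sum_{s=t+1}^{\tau_{\bh}} B_s$ together with the constraint $\sum_{s=1}^{\tau_{\bh}} B_s = n - N_{\tau_{\bh}}^{\Above}$, and both exhibit an explicit binary configuration (placing $k - N_{\tau_{\bh}}^{\Above}$ ones in $\{t+1,\ldots,\tau_{\bh}\}$ and the remainder in $\{1,\ldots,t\}$) to show each $k\in\Omega_t$ is attained with positive probability under exchangeability.
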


According to Proposition~\ref{prop:support}, whose proof is given in Appendix~\ref{appendix:support}, it suffices to compute only the values $R_t(s_t)$ and $E_t(s_t)$ for $s_t\in \Omega_t$. We focus on the $\mathcal{F}_{\tau_{\bh}}$-\emph{conditional} supports of the variables $N_t^{\Above}$ since Remark~\ref{global-conditioning} permits us to condition on $\mathcal{F}_{\tau_{\bh}}$.

Turning first to reward computation, the following proposition provides an explicit formula to compute  $R_t(s_t)$ under $\bP_{\exch}$; see Appendix~\ref{appendix:reward-prob} for a proof.

\begin{proposition}\label{global-null-prop}
\begin{enumerate}[(i)]
    \item For any $\bb \in \{0,1\}^t$, define 
    $\esort_i(\bb) := \frac{(1-b_{i})(n+1)}{1+b_1+\cdots+b_t}$ for each $i \in [t]$. In other words, if $\bb \in \{0,1\}^t$ denotes the indicator vector of which, among the smallest $t$ scores, belong to the calibration set, then 
    $\esort_i(\bb)$ is the time $t$ e-value corresponding to the $i^{\text{th}}$ sorted score.

Viewing $\bZ^{()}$ as fixed, per Remark~\ref{global-conditioning}, the optimal value $O_t$ depends only on $(B_1, \ldots, B_t)$. Formally, for any $\bb\in \{0,1\}^t$, when $(B_1,\dots,B_t)=\bb$, we have $O_t=O_t(\bb)$ where \begin{equation}\label{go-opt-def}O_t(\bb) = \max\left\{\varphi\big(\Zsort_{\mathcal{R}}\big): \mathcal{R} \subseteq [t] \text{ is self-consistent w.r.t.~}\esort_1(\bb), \ldots, \esort_t(\bb)\right\}.\end{equation}

 \item For each $t \in [\tau_{\bh}]$ and $s \in [n-t:n]$ define $\mathcal{B}(t,s) := \{\bb \in \{0,1\}^t: b_1 + \cdots + b_t = n-s\}$ to be the set of Boolean vectors of length $t$ summing to $n-s$ (this is precisely the set of possible values of $(B_1, \ldots, B_t)$ given that $N_t^{\Above} = s$) and let $C(t,s) := \big|\mathcal{B}(t,s)\big| = \binom{t}{n-s}$ be the cardinality of $\mathcal{B}(t,s)$. The reward functions $R_t(s_t)$ are given by $R_1(s_1) = O_1(n-s_1)$ and \begin{equation}\label{mble-rewards}R_t(s_t) = C(t,s_t)^{-1}\sum_{\bb \in \mathcal{B}(t,s_t)} O_t(\bb) \text{ for }s_t \in \Omega_t \text{ and }t = 2, \ldots, \tau_{\bh}.\end{equation}
\end{enumerate}

\end{proposition}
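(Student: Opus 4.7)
The plan is to establish both claims by translating the original e-value optimization, which is indexed by test points $i\in[m]$, into an equivalent program indexed by sorted positions $s\in[t]$, and then, for part (ii), to exploit the conditional exchangeability of $(B_1,\ldots,B_t)$ guaranteed by $\bP_{\exch}$.

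For part (i), I would first re-index the time-$t$ e-values from equation~\eqref{stopping-time-based-evals} via the sort permutation $\pi_{\sort}$. Specifically, define a ``sorted e-value'' at position $s\in[n+m]$ to be $e^{(t)}_i$ whenever $\pi_{\sort}(s)=n+i$ (a test point) and zero whenever $s$ is a calibration position. Three observations then identify these sorted e-values with $\esort_s(\bb)$ on the event $(B_1,\ldots,B_t)=\bb$: (a) since $B_1+\cdots+B_{n+m}=n$, we have $N^{\Above}_t = n-(B_1+\cdots+B_t)$, so the denominator in $e^{(t)}_i$ equals $1+b_1+\cdots+b_t$; (b) for positions $s\leq t$, the imputed test score (if $s$ is a test position) is automatically at most $W_{(t)}$, making the indicator in~\eqref{stopping-time-based-evals} equal to $1$; (c) the factor $(1-b_s)$ correctly encodes whether position $s$ is a test or calibration position. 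Positions $s>t$ have zero sorted e-value so they cannot appear in any self-consistent set, and calibration positions among $[t]$ are excluded for the same reason. Because $\varphi$ depends only on the $Z$-values of the selected indices (so $\varphi(\bZ^{\test}_{\mathcal{R}})=\varphi(\Zsort_{\mathcal{R}'})$ under the natural correspondence between $\mathcal{R}\subseteq[m]$ and $\mathcal{R}'\subseteq[t]$), and self-consistency~\eqref{sc} depends only on the e-values, the optimization reduces to~\eqref{go-opt-def}.

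For part (ii), I would invoke Remark~\ref{global-conditioning} to treat $\Zsort$ as fixed so that, by part (i), $O_t$ is a deterministic function of $(B_1,\ldots,B_t)$. By definition, $R_t=\bE_{\exch}[O_t\,|\,\mathcal{F}_t]$, and $N_t^{\Above}$ is $\mathcal{F}_t$-measurable. It therefore suffices to describe the conditional law of $(B_1,\ldots,B_t)$ given $\mathcal{F}_t$ with $N_t^{\Above}=s_t$ under $\bP_{\exch}$: the defining exchangeability of $\bP_{\exch}$, combined with the fact that $B_1+\cdots+B_t = n-N_t^{\Above}$ is $\mathcal{F}_t$-measurable, makes this vector uniformly distributed over all Boolean vectors of length $t$ with sum $n-s_t$, which is precisely $\mathcal{B}(t,s_t)$. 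Averaging $O_t(\bb)$ over this uniform distribution yields~\eqref{mble-rewards}; the base case $t=1$ is immediate since $\mathcal{B}(1,s_1)$ is a singleton.

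The main obstacle is the careful bookkeeping in part (i) that verifies that restricting attention to $\mathcal{R}\subseteq[t]$ (rather than over all of $[m]$ in the original indexing) incurs no loss. This ultimately follows from the fact that any selection self-consistent with respect to $(e^{(t)}_1,\ldots,e^{(t)}_m)$ must exclude every test index $i$ for which $\widehat{V}_i>W_{(t)}$, whose e-value vanishes; once this is established, the identification of sorted e-values with $\esort_s(\bb)$ and of the two diversity expressions is routine.
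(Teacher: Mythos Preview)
Your proposal is correct and follows essentially the same approach as the paper. The paper's proof is quite terse: for (i) it simply notes that $\esort_i(\bb)$ are the e-values for $\pi_{\sort}(1),\ldots,\pi_{\sort}(t)$ and that all other e-values vanish, so the two optimizations coincide; for (ii) it invokes the exchangeability under $\bP_{\exch}$ to conclude that $(B_1,\ldots,B_t)\mid\mathcal{F}_t$ is uniform on $\mathcal{B}(t,s_t)$. Your plan spells out the same ideas with more explicit bookkeeping (your observations (a)--(c) and the care about excluding zero-e-value positions), which is entirely consistent with the paper's argument.
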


We now show the computation required to compute the Snell envelope; see Appendix~\ref{appendix:snell-construction} for a proof of Proposition~\ref{snell-construction-prop}.

\begin{proposition}\label{snell-construction-prop}
    The function $E_1(\cdot)$ is given by $E_1(s_1) = R_1(s_1)$ for each $s_1 \in \Omega_1$. Furthermore, for each $t = 2, \ldots, \tau_{\bh}$ the functions $E_t(\cdot)$ are inductively given by \begin{equation}\label{envelope-construction}
        E_t(s_t) = \max\left\{ R_t(s_t), \frac{t-n+s_t}{t}\cdot E_{t-1}(s_t) + \frac{n-s_t}{t} \cdot E_{t-1}(s_t+1)\right\} \quad \text{for each } s_t \in \Omega_t.\footnote{If $s_t = n - t$ or $s_t = n$, then $s_t \notin \Omega_{t-1}$ or $s_t + 1 \notin \Omega_{t-1}$, respectively. Consequently, either $E_{t-1}(s_t)$ or $E_{t-1}(s_t + 1)$ is not well-defined in these cases. However, this does not pose an issue, as the probability weights associated with these terms will be zero in either case.}
    \end{equation}
\end{proposition}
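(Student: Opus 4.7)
The plan is to directly unfold the Snell recursion in equation~\eqref{snell} and explicitly compute the conditional expectation $\bE_{\exch}[E_{t-1}\mid \mathcal{F}_t]$ under the exchangeable distribution. The base case $E_1(s_1) = R_1(s_1)$ follows immediately from the definition $E_1 := R_1$, so the entire content of the proposition lies in the inductive step.

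For the inductive step, I would first record the fact (justified by the exchangeability of $\bP_{\exch}$ together with Remark~\ref{global-conditioning}, which lets us treat $\Zsort$ as fixed) that the $\mathcal{F}_{t-1}$-measurable random variable $E_{t-1}$ depends on $\mathcal{F}_{t-1}$ only through $N_{t-1}^{\Above}$, so that writing $E_{t-1} = E_{t-1}(N_{t-1}^{\Above})$ is well-defined. Since $N_{t-1}^{\Above} = N_t^{\Above} + B_t$, computing $\bE_{\exch}[E_{t-1}(N_{t-1}^{\Above})\mid \mathcal{F}_t]$ reduces to identifying the conditional distribution of the single indicator $B_t$ given $\mathcal{F}_t$. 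Here I would invoke the defining property of $\bP_{\exch}$: on the event $\{t\leq \tau_{\bh}\}$, the vector $(B_1,\ldots,B_t)$ is conditionally exchangeable given $\mathcal{F}_t$, and its sum is pinned to $n - N_t^{\Above}$ (since exactly $n$ of the $n+m$ indicators are calibration points). Hence when $N_t^{\Above}=s_t$, each $B_i$ with $i\in[t]$ is marginally Bernoulli with success probability $(n-s_t)/t$; in particular
$$\bP_{\exch}\!\left(B_t = 1 \,\middle|\, \mathcal{F}_t, N_t^{\Above}=s_t\right) = \frac{n-s_t}{t}, \qquad \bP_{\exch}\!\left(B_t = 0 \,\middle|\, \mathcal{F}_t, N_t^{\Above}=s_t\right) = \frac{t-n+s_t}{t}.$$

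Splitting on the two cases $B_t=0$ (giving $N_{t-1}^{\Above}=s_t$) and $B_t=1$ (giving $N_{t-1}^{\Above}=s_t+1$) then yields
$$\bE_{\exch}\!\left[E_{t-1}(N_{t-1}^{\Above})\,\middle|\,\mathcal{F}_t, N_t^{\Above}=s_t\right] \;=\; \frac{t-n+s_t}{t}\,E_{t-1}(s_t) \;+\; \frac{n-s_t}{t}\,E_{t-1}(s_t+1),$$
and substituting this into the recursion $E_t = \max(R_t, \bE_{\exch}[E_{t-1}\mid \mathcal{F}_t])$ from equation~\eqref{snell} gives precisely the claimed formula. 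The footnote concern is handled automatically: when $s_t = n-t$ we have $\bP_{\exch}(B_t=0\mid\mathcal{F}_t,N_t^{\Above}=s_t)=0$, so the undefined term $E_{t-1}(s_t)$ carries coefficient zero, and symmetrically at $s_t=n$. The only genuinely delicate step is the reduction of $\mathcal{F}_t$-conditional expectations to functions of $N_t^{\Above}$ alone---this rests on the $\bP_{\exch}$-exchangeability and the conditioning convention of Remark~\ref{global-conditioning}, but once established the remainder is routine conditional-probability bookkeeping.
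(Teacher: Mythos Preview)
Your proposal is correct and follows essentially the same approach as the paper: both reduce the Snell recursion to computing $\bP_{\exch}(B_t=1\mid\mathcal{F}_t)=(n-N_t^{\Above})/t$ via the conditional exchangeability of $(B_1,\ldots,B_t)$ with fixed sum $n-N_t^{\Above}$. The only cosmetic difference is that the paper invokes a cited lemma (Lemma~E.2 of \cite{mary2022semi}) for this Bernoulli calculation, whereas you argue it directly from exchangeability---which is precisely what that lemma asserts.
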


Equation~\eqref{envelope-construction} is precisely the inductive Snell envelope update given in~\eqref{snell}, but written for each element of the conditional support of $N_t^{\Above}$ given $\mathcal{F}_{\tau_{\bh}}$. Performing this update is computationally fast. It is worth noting that this is in quite sharp contrast with most optimal stopping problems, in which it is generally computationally infeasible to construct the Snell envelope since the supports of the variables $E_t$ are typically either infinite or grow exponentially with $t$ (see, e.g., \cite{del2011robustness}).

\begin{algorithm}[!ht]
  
  \KwInput{BH stopping time $\tau_{\bh}$, number of calibration points above BH stopping time $N_{\tau_{\bh}}^{\Above}$}
 \For{$t = 1, \ldots, \tau_{\bh}$}{
  Compute $R_t(s_t)$ for each pair $s_t \in \Omega_t$ using equation~\eqref{mble-rewards}
 }
  Set $E_1(s_1) \gets R_1(s_1)$ for each pair $s_1 \in \Omega_1$\\
  \For{$t = 2, \ldots, \tau_{\bh}$}{
  Compute $E_t(s_t)$ for each $s_t \in \Omega_t$ using equation~\eqref{envelope-construction}
  }
  Set $t \gets \tau_{\bh}$\\
  \While{$R_t(N_t^{\Above}) < E_t(N_t^{\Above})$}{
  Set $t \gets t-1$
  }
  Set $\tau^* \gets t$\\
  Compute e-values $e^{(\tau^*)}_1, \ldots, e^{(\tau^*)}_m$ and maximize $\varphi(\bZ^{\test}_{\mathcal{R}})$ across all selection sets $\mathcal{R} \subseteq [m]$ for which \[e_i^{(\tau^*)} \geq \frac{m}{\alpha |\mathcal{R}|} \text{ for all }i \in \mathcal{R},\] letting $\mathcal{R}^*_{\tau^*}$ denote the optimal feasible set \\
  \KwOutput{ Selected set $\mathcal{R}^*_{\tau^*}$} 

\caption{Diversity-aware conformal selection}
\label{dacs-algo}
\end{algorithm}

We give pseudocode for our entire procedure in Algorithm~\ref{dacs-algo} and end this section by simply stating its optimality under the exchangeable distribution, which is a direct consequence of the optimality of the Snell-envelope-based stopping time, e.g., \citet[][Theorem 3.2]{chowgreat}.
\begin{theorem}\label{optimality}
    The rejection set $\mathcal{R}^*_{\tau^*}$ returned by DACS is optimally diverse in expectation under $\bP_{\exch}$ among stopping-time-based rejection sets. That is, if $\tau$ is any $\left(\mathcal{F}_t\right)_{t=0}^{n+m}$-stopping time and $\mathcal{R}_{\tau}$ is any self-consistent selection set with respect to the e-values $e^{(\tau)}_1, \ldots, e^{(\tau)}_m$, then $\bE_{\exch}[\varphi(\bZ^{\test}_{\mathcal{R}^*_{\tau^*}})] \geq \bE_{\exch}[\varphi(\bZ^{\test}_{\mathcal{R}_{\tau}})].$
\end{theorem}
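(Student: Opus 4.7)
My plan is to factor the theorem into an \emph{inner} optimization (the e-value-based integer program at a single time) and an \emph{outer} optimization (the choice of stopping time), so that the claim collapses to the standard optimality property of the Snell envelope. Concretely, the target chain of (in)equalities is
\[
\bE_{\exch}[\varphi(\bZ^{\test}_{\mathcal{R}_\tau})] \leq \bE_{\exch}[O_\tau] = \bE_{\exch}[R_\tau] \leq \bE_{\exch}[R_{\tau^*}] = \bE_{\exch}[O_{\tau^*}] = \bE_{\exch}[\varphi(\bZ^{\test}_{\mathcal{R}^*_{\tau^*}})].
\]

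For the inner step, I would use the definition of $O_t$ in \eqref{go-opt-def} together with the fact that the last line of Algorithm~\ref{dacs-algo} computes the exact maximizer of that program at $\tau^*$; this gives $\varphi(\bZ^{\test}_{\mathcal{R}^*_{\tau^*}}) = O_{\tau^*}$ and, for any stopping time $\tau$ and any $\mathcal{R}_\tau$ that is self-consistent with respect to $e^{(\tau)}_1,\ldots,e^{(\tau)}_m$, $\varphi(\bZ^{\test}_{\mathcal{R}_\tau}) \leq O_\tau$. A small bookkeeping point is the reconciliation of indices: $O_t$ is written over sorted positions $\mathcal{R}\subseteq[t]$, whereas $\mathcal{R}_\tau \subseteq [m]$. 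The two optimizations coincide because $e^{(t)}_i = 0$ whenever $\widehat{V}_i > W_{(t)}$, so any self-consistent $\mathcal{R}_\tau$ must consist of test points whose imputed scores lie among the $t$ smallest, and such test points are in one-to-one correspondence with a subset of sorted positions in $[t]$. Taking expectations reduces the theorem to showing $\bE_{\exch}[O_{\tau^*}] \geq \bE_{\exch}[O_\tau]$ for every $(\mathcal{F}_t)$-stopping time $\tau \leq \tau_{\bh}$.

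For the outer step, the tower rule gives $\bE_{\exch}[O_\tau] = \bE_{\exch}[\bE_{\exch}[O_\tau \mid \mathcal{F}_\tau]] = \bE_{\exch}[R_\tau]$ (and likewise at $\tau^*$), using the definition $R_t = \bE_{\exch}[O_t \mid \mathcal{F}_t]$ and the optional-sampling form of the tower rule for bounded stopping times. The process $(R_t)_{t=1}^{\tau_{\bh}}$ is $(\mathcal{F}_t)$-adapted by construction, and $(E_t)_{t=1}^{\tau_{\bh}}$ defined in \eqref{snell} is its Snell envelope over the reverse filtration. Invoking the finite-horizon optimal stopping theorem (e.g., Theorem 3.2 of Chow et al.) then tells us that $\tau^* = \max\{t \in [\tau_{\bh}] : R_t \geq E_t\}$ attains $\sup_\tau \bE_{\exch}[R_\tau]$ over $(\mathcal{F}_t)$-stopping times $\tau \leq \tau_{\bh}$, which closes the proof. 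The only mild subtlety is that $O_t$ itself is not $\mathcal{F}_t$-measurable---it depends on $B_1,\ldots,B_t$, which are revealed only in the ``future'' of the reverse filtration---and this is precisely why we must first pass to the conditional expectation $R_t$ before invoking Snell-envelope optimality; once this is noticed, the proof is routine.
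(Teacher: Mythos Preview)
Your proposal is correct and matches the paper's approach: the paper simply states the result as ``a direct consequence of the optimality of the Snell-envelope-based stopping time, e.g., \citet[][Theorem 3.2]{chowgreat},'' and your argument is precisely the standard decomposition into inner (pointwise) maximization plus outer Snell-envelope optimality that makes that citation go through. Your handling of the two bookkeeping subtleties---the reindexing between sorted positions and test indices, and the need to pass from $O_t$ to the adapted reward $R_t$ via the tower rule before invoking optimal stopping---is exactly right; note only that, like the paper's construction, your argument establishes optimality over stopping times $\tau \leq \tau_{\bh}$, which is the intended scope despite the theorem's phrasing.
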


\section{Implementation of DACS: exact computation and heuristics}
\label{sec:implement}

The main computational challenges of DACS are in panels (c) and (e) of Figure~\ref{fig:overview}, specifically: (1) computing the rewards via equation~\eqref{mble-rewards} (i.e., lines 1--2 of Algorithm~\ref{dacs-algo}), and (2) performing the e-value optimization based on the e-values $e^{(\tau^*)}_1, \ldots, e^{(\tau^*)}_m$ (i.e., line 10 of Algorithm~\ref{dacs-algo}).
Accordingly, our focus in this section is restricted to the computational aspects of items (1) and (2) above. Section~\ref{dacs-underrep} presents an efficient computation strategy for the underrepresentation index, while Section~\ref{dacs-heuristics} introduces heuristics to accelerate computation for general diversity metrics, like the Sharpe ratio and Markowitz objective.

\subsection{Exact computation for the underrepresentation index}\label{dacs-underrep}

\subsubsection{Computing the rewards}\label{underrep-reward}
The first step towards efficient computation of the rewards (panel (c) of Figure~\ref{fig:overview}) via equation~\eqref{mble-rewards} is to observe that the optimal values $O_t$ for the underrepresentation index admit a closed form solution. This is formalized in the following proposition, whose proof can be found in Appendix~\ref{app:underrep-opt-value-closed-form}.

\begin{proposition}\label{underrep-opt-value-closed-form}
    For each $c \in [C]$, define $N_t^{c, \testbelow} := \sum_{s=1}^t \mathds{1}\left\{B_s = 0, Z_s = c\right\}$ to be the number of test points past time $t$ belonging to category $c$. Then \begin{equation}\label{eq:underrep-reward-closed-form}
        O_t = \begin{cases}
            \frac{1}{C}, & \text{ if } \mathop{\min}_{c\in [C]}  N_t^{c, \testbelow} \geq \frac{K_t}{C},\\
             -\frac{1}{C}, & \text{ if } n-N_t^{\Above} > \rho_t,\\
             \frac{\mathop{\min}_{c\in [C]} N_t^{c, \testbelow}}{K_t}, & \text{ otherwise},
        \end{cases}
    \end{equation} where $K_t := \left\lceil \frac{m(1+n-N_t^{\Above})}{\alpha (n+1)} \right\rceil$ is $\mathcal{F}_t$-measurable and $\rho_t := \frac{\alpha t(n+1) - m}{\alpha(n+1)+m}$ is deterministic.
\end{proposition}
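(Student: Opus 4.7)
My plan is to translate the defining optimization in~\eqref{go-opt-def} into a small combinatorial program over per-category counts, and then resolve it via a three-way case analysis. First I would analyze the self-consistency constraint $\esort_i(\bb) \geq m/(\alpha |\mathcal{R}|)$. Since $\esort_i(\bb) = 0$ whenever $b_i = 1$, any nonempty self-consistent $\mathcal{R}$ must contain only indices $i \in [t]$ with $b_i = 0$; for each such index $\esort_i(\bb) = (n+1)/(1 + n - N_t^{\Above})$, so the constraint collapses to $|\mathcal{R}| \geq K_t$. Hence either $\mathcal{R} = \emptyset$ (contributing $-1/C$ to the objective) or $\mathcal{R}$ is a subset of the $t - n + N_t^{\Above}$ test indices in $[t]$ of size at least $K_t$. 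Encoding any such $\mathcal{R}$ by its category count vector $(n_1, \ldots, n_C)$, where $n_c$ is the number of points of $\mathcal{R}$ belonging to category $c$, the problem becomes
\[
\max\Bigl\{ \tfrac{1}{K}\min_{c} n_c : \ 0 \leq n_c \leq N_t^{c, \testbelow},\ K := \textstyle\sum_{c} n_c \geq K_t \Bigr\}.
\]

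Next I would execute the case analysis. A nonempty feasible selection exists iff enough test points are available, namely $\sum_c N_t^{c, \testbelow} = t - n + N_t^{\Above} \geq K_t$, which algebra shows is equivalent to $n - N_t^{\Above} \leq \rho_t$; when this fails we are forced to take $\mathcal{R} = \emptyset$, giving $O_t = -1/C$ and matching the second case. Otherwise the trivial bound $\min_c n_c \leq K/C$ yields the universal ceiling $O_t \leq 1/C$. If additionally $\min_c N_t^{c, \testbelow} \geq K_t/C$, then because each $N_t^{c, \testbelow}$ is an integer we actually have $N_t^{c, \testbelow} \geq \lceil K_t/C \rceil$ for every $c$, and the uniform allocation $n_c = \lceil K_t/C \rceil$ achieves $K = C\lceil K_t/C \rceil \geq K_t$ while attaining the bound $1/C$. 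Otherwise let $M := \min_c N_t^{c, \testbelow}$; any feasible selection obeys $\min_c n_c \leq n_{c^*} \leq M$ at the minimizing category $c^*$, hence $\min_c n_c / K \leq M/K \leq M/K_t$. This bound is attained by taking $M$ points from every category and filling the remaining $K_t - CM \geq 0$ slots from any categories with surplus, which is possible because the total test-point capacity exceeds $K_t$ under feasibility.

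The main obstacle is the promotion from the real-valued hypothesis $\min_c N_t^{c, \testbelow} \geq K_t/C$ to the integer condition needed for the uniform allocation to be realizable, which rests on the integrality of the counts $N_t^{c, \testbelow}$. A subtler point requiring verification is that the three stated cases are both mutually exclusive and exhaustive: the first case precludes the second because $\min_c N_t^{c, \testbelow} \geq K_t/C$ immediately forces $\sum_c N_t^{c, \testbelow} \geq K_t$, which is exactly the feasibility condition $n - N_t^{\Above} \leq \rho_t$ established above.
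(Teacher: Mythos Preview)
Your proposal is correct and follows essentially the same approach as the paper's proof: translate self-consistency into the size constraint $|\mathcal{R}|\geq K_t$ (plus the restriction to test indices), then carry out the same three-case analysis, constructing the uniform allocation in Case~1, arguing infeasibility in Case~2 via the algebraic equivalence $t-n+N_t^{\Above}\geq K_t \iff n-N_t^{\Above}\leq \rho_t$, and in Case~3 bounding by $M/K_t$ and exhibiting an allocation that attains it. Your explicit reduction to the count-vector program and your remark on mutual exclusivity of the cases are slightly cleaner touches, but the substance is identical.
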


By the exchangeability guaranteed under $\bP_{\exch}$, the random variables $(N_t^{c, \testbelow})_{c \in [C]}$ are, conditional on $\mathcal{F}_t$, a sample from a multivariate hypergeometric distribution for each $t \leq \tau_{\bh}$. Since equation~\eqref{eq:underrep-reward-closed-form} depends only on the minimum of these variables, it suffices to study the distribution of the minimum of a multivariate hypergeometric sample in order to compute the reward functions $R_t(\cdot)$. As shown in~\cite{lebrun2013efficient}, the survival function of this minimum can be computed exactly and efficiently using a fast Fourier transform-based approach; as we show in Appendix~\ref{appendix:underrep-additional-reward}, this enables efficient calculation of the rewards.

\subsubsection{E-value optimization}\label{underrep-evalue}
To efficiently implement DACS for the underrepresentation index, it remains to efficiently perform the e-value optimization in line 10 of Algorithm~\ref{dacs-algo}. The key idea is that the self-consistency constraint~\eqref{sc} for the e-values $e^{(\tau^*)}_i$, $i = 1,\ldots, m$, is equivalent to two simpler conditions: (1) the selection set $\mathcal{R}^*_{\tau^*}$ must only include candidates whose sorted score occurs after time $\tau^*$, and (2) the size of the selection set must be at least a certain budget. Specifically, a direct calculation shows that $\mathcal{R} \subseteq [m]$ is self-consistent with respect to the e-values $\big(e^{(\tau^*)}_i\big)_{i \in [m]}$ if and only if
\begin{equation}\label{sc'}
    \widehat{V}_i \leq W_{(\tau^*)} \quad \text{for all } i \in \mathcal{R} \quad \text{and} \quad |\mathcal{R}| \geq K_{\tau^*},
\end{equation}
where $K_t$ is defined in Proposition~\ref{underrep-opt-value-closed-form}.
Thus, the e-value optimization reduces to finding a subset $\mathcal{R}$ of test candidates with scores below $V_{(\tau^*)}$, of size at least $K_{\tau^*}$, that maximizes the underrepresentation index.

Algorithm~\ref{underrep-algo} in Appendix~\ref{app:underrep-evalue-opt} gives pseudocode for an efficient procedure to compute $\mathcal{R}_{\tau^*}^*$ and Proposition~\ref{appendix:e-value-underrep-optimality} in the same Appendix section establishes its correctness. We summarize the main idea of the algorithm here. 
As notation, define for each $c \in [C]$ the set $\mathcal{S}_{\tau^*}^{c, \testbelow} := \big\{i \in [m]: \widehat{V}_i \leq W_{(\tau^*)} \text{ and } Z_{n+i} = c\big\}$ of test candidates, past time $\tau^*$, which belong to category $c$. Furthermore, let $\mathcal{S}_{\tau^*}^{(1), \testbelow},$ $ \ldots,$ $ \mathcal{S}_{\tau^*}^{(C), \testbelow}$ denote this list of sets sorted by increasing size.

 \paragraph{Greedy algorithm} The algorithm proceeds as follows:
\begin{itemize}
    \item If $n - N_{\tau^*}^{\Above}$ is sufficiently large, then no non-empty set is self-consistent (see the proof of Proposition~\ref{underrep-opt-value-closed-form}), and the algorithm returns the empty set.
    \item Otherwise, it attempts to select $|\mathcal{S}_{\tau^*}^{(1), \testbelow}|$ candidates from each $\mathcal{S}_{\tau^*}^{(c), \testbelow}$ for $c \in [C]$, provided doing so satisfies the budget constraint $|\mathcal{R}| \geq K_{\tau^*}$.
    \item If this is not possible, the algorithm grows the selection set $\mathcal{R}_{\tau^*}^*$ sequentially. At each step $c=1,\dots,C$, it checks whether selecting enough candidates from the remaining sets $\mathcal{S}_{\tau^*}^{(c), \testbelow}, \ldots, \mathcal{S}_{\tau^*}^{(C), \testbelow}$ can meet the budget lower bound with equality and without changing the least-represented category. If so, it adds the selections to $\mathcal{R}_{\tau^*}^*$ and terminates. If not, it adds all candidates from $\mathcal{S}_{\tau^*}^{(c), \testbelow}$ to $\mathcal{R}_{\tau^*}^*$ and proceeds to the next step.
\end{itemize}
Further details can be found in Appendix~\ref{app:underrep-evalue-opt}.

\subsection{Monte Carlo and heuristics: relaxation, warm starting, and coarsening}\label{dacs-heuristics}
For a generic user-specified diversity metric $\varphi$, it is nontrivial to efficiently compute the reward functions $R_t(\cdot)$ and to perform the e-value optimization involved in panels (c) and (e) of Figure~\ref{fig:overview}. In general, to compute the average defining the reward $R_t(\cdot)$ in equation~\eqref{mble-rewards}, one will need to compute the optimal value function $O_t(\bb)$ for combinatorially many values of $\bb$. In turn, computing even a single value $O_t(\bb)$ typically involves solving an e-value optimization program similar to the final one in line 10 of Algorithm~\ref{dacs-algo}.

Following \cite{gablenz2024catch}, we first rewrite the e-value optimization program as a linearly constrained integer program (LCIP). More precisely, recall that the optimization problem in line 10 of Algorithm~\ref{dacs-algo} is to maximize $\varphi(\bZ^{\test}_{\mathcal{R}})$ across all $\big(e_i^{(\tau^*)}\big)_{i \in [m]}$-self-consistent selection sets $\mathcal{R}$. \cite{gablenz2024catch} show that the optimal solution can be written as $\mathcal{R}^*_{\tau^*} = \{i \in [m]: \chi^*_i = 1\}$ where $\bchi^*$ is the solution to the following LCIP:
\begin{equation}\label{program:e-val}
\begin{split}
    \text{maximize} \quad & \varphi(\bchi; \bZ^{\test}) \\
    \text{s.t.} \quad & \chi_i \leq \frac{\alpha e_i^{(\tau^*)}}{m}\sum_{j=1}^m\chi_j, \quad i = 1, \dots, m \\
    & \chi_i \in \{0,1\}, \quad i = 1, \dots, m
\end{split}
\end{equation} 
 where we overload the notation $\varphi(\bchi; \bZ^{\test})$ to mean $\varphi\big(\bZ^{\test}_{\left\{i: \chi_i = 1\right\}}\big)$. In Algorithm~\ref{dacs-algo}, the program~\eqref{program:e-val} must be solved to return the final output in line 10 and, in general, similar programs must be solved to calculate the optimal values $O_t(\bb)$ used for the reward computation in line 2. As such, using DACS ``out-of-the-box'' for a generic diversity metric $\varphi$ will require solving combinatorially many LCIPs which is, of course, computationally prohibitive. 

In this part, we introduce several strategies to lighten the computational load, including (i) replacing the average in equation~\eqref{mble-rewards} by a Monte Carlo (MC) approximation, (ii) relaxing the integer constraints in~\eqref{program:e-val},  (iii) implementing custom warm-started projected gradient descent (PGD) solvers to further speed up MC approximation, and (iv) coarsening the dynamic programming to allow for fewer reward computations. These strategies complete the design of DACS and we use them in all of our numerical experiments involving the Sharpe ratio and Markowitz objective.

 \subsubsection{Monte Carlo approximation}\label{mc-approx}
 Our first strategy is to simply approximate the averages which define $R_t(\cdot)$ in equation~\eqref{mble-rewards} via MC, eliminating the need to solve combinatorially many LCIPs. For any $s_t$, we approximate $R_t(s_t)$ by 
 $$
 R^{\mc}_t(s_t) := L^{-1} \sum_{\ell=1}^{L} O_t(\bb^{(\ell)}),\quad  \bb^{(1)}, \ldots, \bb^{(L)} \overset{\text{i.i.d.}}{\sim}\text{Unif}\left(\mathcal{B}(t, s_t)\right),
 $$
 where, as defined in Proposition~\ref{global-null-prop}, $\mathcal{B}(t,s_t)$ is the entire set of values that $(B_1,\dots,B_t)$ can possibly take on given that $N_t^{\Above}=s_t$. 
 We then use the MC-approximated reward functions to compute the Snell envelope sequence $(E^{\mc}_t(\cdot))_{t \in [\tau_{\bh}]}$ just as in equation~\eqref{envelope-construction}, and the stopping time is computed as usual: 
$\tau^{*} := \max\left\{t \in [\tau_{\bh}]: R^{\mc}_t \geq E^{\mc}_t\right\}.$
Since $\tau^{*}$ remains a stopping time (for a fixed MC random seed), the resulting e-values $e_1^{(\tau^{*})}, \ldots, e_m^{(\tau^{*})}$ are still valid by Theorem~\ref{stopping-time-thm}, preserving FDR control. The approximate optimality guarantees of our method, of course, will depend on the accuracy of the MC approximation, and in particular on the size of $L$.
 
 While MC approximation reduces the number of LCIPs that need to be solved, it is still important to be able to solve each LCIP efficiently, since a modest number of MC samples will be needed to ensure that our approximation is of reasonable quality. The remainder of this section is devoted to heuristics which greatly speed up this computation.

 \subsubsection{Relaxation and randomized rejection}\label{lp-relax}
Solving an LCIP, let alone an integer linear program (ILP), is in general NP complete \citep{schrijver1998theory}. 
We utilize a commonly used heuristic, which is to relax the integer constraint to an interval constraint (i.e., two inequality constraints) \citep{boyd2004convex,schrijver1998theory}. 
Because the solution to this relaxed program may no longer be a binary vector indicating which candidates to and not to select, we develop a randomized selection approach, based on the non-integer solution, to obtain a final selection set which enjoys approximate FDR control.

Allowing for non-integer solutions to the program~\eqref{program:e-val} necessitates that the function $\bchi \mapsto \varphi(\bchi; \bZ^{\test})$, defined only for $\bchi \in \{0,1\}^m$, be extendable to any $\bchi \in [0,1]^m$.
Below, we concretely show how this extension works for the Sharpe ratio and Markowitz objective. Throughout, it will be helpful to define $\Sigma^{\test}$ to be the bottom right $m \times m$ submatrix of the full similarity matrix $\Sigma$, denoting just the pairwise similarities between the test diversification variables $
\bZ^{\test}$.

\begin{example}[Extending Sharpe ratio and Markowitz diversity objectives]
    The Sharpe ratio and Markowitz diversity objectives are given by $\varphi^{\sharpe}\left(\bchi; \bZ^{\test}\right) = \begin{cases}
            \frac{\bchi^\top \mathbf{1}}{\sqrt{\bchi^\top \Sigmatest\bchi}}, & \text{ if } \bchi \neq \bzero,\\
            0, & \text{ if } \bchi = \bzero,
        \end{cases}$ and $\varphi^{\markowitz}\left(\bchi; \bZ^{\test}\right) = \bchi^\top \mathbf{1} - \frac{\gamma}{2} \cdot \bchi^\top \Sigmatest \bchi$ for $\bchi \in \{0,1\}^m$. These functions are immediately extendable to $[0,1]^m$ as the equations defining them continue to be well-defined for any $\bchi \in [0,1]^m$.
\end{example}

For the remainder of this section, when we write $\varphi$ we refer to the extension of the diversity metric to the unit cube $[0,1]^m$. Using this extension, the relaxation of program~\eqref{program:e-val} is a linearly constrained optimization problem:
\begin{align}
    \label{relaxedsc0} \text{maximize} \quad & \varphi(\bchi; \bZ^{\test})\\
    \label{relaxedsc1} \text{s.t.} \quad & \chi_i \leq \frac{\alpha e^{(\tau^*)}_i}{m}\sum_{j=1}^m\chi_i, \quad i = 1, \dots, m \\
    &\label{relaxedsc2} 0 \leq \chi_i \leq 1, \quad i = 1, \dots, m
\end{align}

We call a vector satisfying constraints~\eqref{relaxedsc1}--\eqref{relaxedsc2} \emph{relaxed self-consistent} (RSC) and refer to the entire program~\eqref{relaxedsc0}--\eqref{relaxedsc2} as a \emph{relaxed} e-value optimization program. The solution $\bchirelaxedt$ to this program will not, in general, be integer-valued (indeed, in our simulations involving the Sharpe ratio and Markowitz objective, sometimes even $90\%$ or more of its elements are non-binary) and therefore, to construct the final selection set $\rrelaxedt$ we randomly select with independent Bernoulli probability as specified by $\bchirelaxedt$. That is, $\rrelaxedt := \{i: \xi_i = 1\}$ where $\xi_i \overset{\text{ind}}{\sim} \text{Bern} (\chirelaxedt_i )$, for $i = 1, \ldots, m$. The resulting selection set  $\rrelaxedt$ enjoys approximate FDR control:

\begin{proposition}\label{relaxed-fdr}
    The FDR of $\rrelaxedt$ is at most $1.3\alpha$.
\end{proposition}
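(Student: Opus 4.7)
The plan is to upper-bound the FDR index-by-index and show that each per-index contribution is at most $(1.3\alpha/m)\,\bE[e_i^{(\tau^*)}\mathds{1}\{Y_{n+i}\leq 0\}]$; Theorem~\ref{stopping-time-thm} then delivers the stated bound. First I would expand
\begin{equation*}
\fdr = \sum_{i=1}^m \bE\!\left[\frac{\xi_i \,\mathds{1}\{Y_{n+i}\leq 0\}}{1\vee \sum_{j=1}^m \xi_j}\right]
\end{equation*}
and condition on everything except the test responses. Both $\tau^*$ and $\bchirelaxedt$ are measurable with respect to this $\sigma$-algebra, and the $\xi_j$'s are conditionally independent Bernoullis drawn from exogenous randomness, hence independent of $Y_{n+i}$ given $\bchirelaxedt$. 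On the event $\{\xi_i=1\}$ we have $1\vee \sum_j\xi_j = 1 + S_{-i}$ with $S_{-i}:= \sum_{j\neq i}\xi_j$, so the conditional expectation collapses to $\chirelaxedt_i\cdot \bE[1/(1+S_{-i})\mid \bchirelaxedt]$.

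Next I would apply the RSC constraint $\chirelaxedt_i \leq (\alpha e_i^{(\tau^*)}/m)\,T^*$ with $T^*:=\sum_j\chirelaxedt_j$, combined with the trivial inequality $T^* \leq 1 + \mu_i$ (since $\chirelaxedt_i\leq 1$), where $\mu_i := T^*-\chirelaxedt_i = \bE[S_{-i}\mid \bchirelaxedt]$. The core tool is the generating-function identity $\bE[1/(1+S_{-i})\mid\bchirelaxedt] = \int_0^1 \bE[x^{S_{-i}}\mid\bchirelaxedt]\,dx$ combined with the elementary inequality $\prod_{j\neq i}(1-\chirelaxedt_j(1-x)) \leq \exp(-\mu_i(1-x))$, which together yield $\bE[1/(1+S_{-i}) \mid \bchirelaxedt] \leq (1-e^{-\mu_i})/\mu_i$. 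Putting the pieces together,
\begin{equation*}
\chirelaxedt_i\cdot \bE\!\left[\frac{1}{1+S_{-i}}\,\Big|\,\bchirelaxedt\right] \;\leq\; \frac{\alpha e_i^{(\tau^*)}}{m}\cdot \frac{(1+\mu_i)(1-e^{-\mu_i})}{\mu_i} \;=:\; \frac{\alpha e_i^{(\tau^*)}}{m}\cdot f(\mu_i).
\end{equation*}

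The one genuine technicality I expect to wrestle with is verifying $\sup_{\mu>0} f(\mu) \leq 1.3$. Differentiating $\log f$ produces the stationarity equation $e^{\mu}=\mu^2+\mu+1$, whose unique positive root $\mu^*$ lies near $1.793$, at which $f(\mu^*)\approx 1.299$; the boundary behavior $f(0^+)=\lim_{\mu\to\infty} f(\mu) = 1$ confirms this is a global max. The small slack between this maximum and the stated constant $1.3$ absorbs any numerical imprecision. Once this scalar bound is in hand, the rest is mechanical: taking expectations, summing over $i$, and invoking the per-index e-value guarantee $\bE[e_i^{(\tau^*)}\mathds{1}\{Y_{n+i}\leq 0\}] \leq 1$ from Theorem~\ref{stopping-time-thm} gives
\begin{equation*}
\fdr \;\leq\; \frac{1.3\alpha}{m}\sum_{i=1}^m \bE\!\left[e_i^{(\tau^*)}\mathds{1}\{Y_{n+i}\leq 0\}\right] \;\leq\; 1.3\,\alpha.
\end{equation*}
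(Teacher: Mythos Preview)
Your proposal is correct and follows essentially the same route as the paper's proof: the generating-function identity $\bE[1/(1+S_{-i})]=\int_0^1\prod_{j\neq i}(1-\chi_j(1-x))\,dx$ together with $1+u\le e^u$, the RSC constraint combined with $\chi_i\le 1$ to produce the factor $(1+\mu_i)(1-e^{-\mu_i})/\mu_i$, and the scalar bound $\sup_{x>0}(1+1/x)(1-e^{-x})\le 1.3$ (the paper's Lemma~\ref{functional-bound} establishes exactly this, with the same stationarity equation $e^x=x^2+x+1$ and root near $1.793$). Your conditioning on ``everything except the test responses'' is a clean replacement for the paper's $\epsilon\downarrow 0$ limiting argument and its separate case analysis for $\sum_{j\neq i}\chi_j=0$; just be explicit that when $\mu_i=0$ you interpret $f(0)$ by continuity as $1$, which your boundary remark already covers.
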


Proposition~\ref{relaxed-fdr} is a special case of Proposition~\ref{appendix:rsc-prop} in Appendix~\ref{appendix:rsc} which establishes the same approximate FDR control guarantee for any selection set constructed via randomized Bernoulli sampling according to any vector $\bchi \in [0,1]^m$ that is RSC with respect to any set of valid e-values. Furthermore, the same FDR bound holds true for standard deterministic e-value-based multiple hypothesis testing; see Appendix~\ref{appendix:rsc-det} for a precise statement and proof. We note that the upper bound $1.3\alpha$ is often conservative. 
Indeed, in our empirical studies in Sections~\ref{expers} and \ref{sims}, the FDR of $\rrelaxedt$ \emph{never} exceeds the nominal level $\alpha$.

Proposition~\ref{relaxed-fdr} allows us to replace the final LCIP in line 10 of Algorithm~\ref{dacs-algo} (panel (e) of Figure~\ref{fig:overview}) by the relaxed e-value optimization program, yielding the selection set $\rrelaxedt$. Because we construct the final selection set $\rrelaxedt$ via relaxed e-value optimization, it no longer makes sense to define the optimal value function $O_t(\cdot)$ in equation~\eqref{go-opt-def} as the solution to a (non-relaxed) e-value optimization problem (which, as discussed the beginning of this subsection, is difficult to compute). Instead, we work with a \emph{relaxed} optimal value function $\widehat{O}_t(\cdot)$, which represents the expected diversity of the randomized selection set based on the relaxed e-value optimization problem at time $t$. In particular, recalling that $\varepsilon_i^{(t)}(\bb)$ is the e-value of the $i^{\text{th}}$ sorted score when the membership of the sorted scores in the calibration fold is given by the indicator vector $\bb$, we define: \begin{equation}\label{relaxed-go-def}\widehat{O}_t(\bb) := \bE_{\xi_i \overset{\text{ind}}{\sim} \text{Bern}(\chi^*_i),\\ i \in [t]}\left[\varphi\big(\boldsymbol{\xi}; \Zsort_{1:t}\big)\right]\text{ where } \bchi^* = \argmax_{\bchi \in [0,1]^t: \, \bchi \text{ is RSC w.r.t.~} (\esort_i(\bb))_{i=1}^t} \varphi\big(\bchi; \Zsort_{1:t}\big)
.\end{equation} This definition mirrors that of $O_t(\bb)$ in equation~\eqref{go-opt-def}, but with the selection vector $\bchi$ now only required to satisfy \emph{relaxed} self-consistency. 
In general, the expectation in~\eqref{relaxed-go-def} can be approximated using MC (analogously to Section~\ref{mc-approx}, doing so will not affect the FDR control guarantee of Proposition~\ref{relaxed-fdr})---this is what we do for the Sharpe ratio---though in some cases an analytical formula exists, as is the case for the Markowitz objective.

The final ingredient is to ensure that the extension $\varphi$ is concave, or at least that the program~\eqref{relaxedsc0}--\eqref{relaxedsc2} is equivalent (e.g., up to a change-of-variables) to a convex optimization program. Below, we show that this is indeed true for the Sharpe ratio and Markowitz objective.

\begin{example}[Sharpe ratio and Markowitz objective, continued]\label{ex:relaxed-programs-sharpe-markowitz}
    The Markowitz objective is concave and hence the corresponding relaxed e-value optimization program~\eqref{relaxedsc0}--\eqref{relaxedsc2} is a quadratic program. As for the Sharpe ratio, a standard trick---see, e.g., \citet[][pp.~101--103]{cornuejols2018optimization}---shows that any solution to the quadratic program
\begin{align}
     \label{sharpe0}\text{minimize} \quad & \bchi^\top \Sigma^{\test}\bchi\\
     \label{sharpe1}\text{s.t.} \quad & \chi_i \leq \frac{\alpha e_i^{(\tau^*)}}{m}\sum_{j=1}^m\chi_i, \quad i = 1, \dots, m \\
      \label{sharpe2}  & \bchi^\top \mathbf{1} = 1 \\
     \label{sharpe3}   & 0 \leq \chi_i \leq 1, \quad i = 1, \dots, m
\end{align} is also a solution to the relaxed e-value optimization program~\eqref{relaxedsc0}--\eqref{relaxedsc2} for the Sharpe ratio.\footnote{If the program~\eqref{sharpe0}--\eqref{sharpe3} is infeasible, we simply return $\bchi^* = \boldsymbol{0}$. Additionally, there will generally be infinitely many solutions to the Sharpe ratio e-value optimization program. When constructing the final selection set in line 10 of Algorithm~\ref{dacs-algo}, we return the solution which yields the largest (expected) number of selections by
replacing any solution $\bchi^* \neq \mathbf{0}$ to~\eqref{sharpe0}--\eqref{sharpe3} with $\frac{\bchi^*}{\|\bchi^*\|_\infty}$ which is also an optimal solution,
but yields a greater expected number of selections.} Hence, solving the relaxed e-value optimization programs for both the Sharpe ratio and Markowitz objectives requires only to solve quadratic programs.
\end{example}

  Using this relaxed e-value optimization heuristic, computing all the reward function values in panel (c) of Figure~\ref{fig:overview} requires solving $L$ convex optimization programs for each $s_t \in \Omega_t, t \in [\tau_{\bh}]$; solving the final relaxed e-value optimization program in panel (e) of Figure~\ref{fig:overview} requires solving only one more convex optimization problem.

 \subsubsection{Warm-started PGD solvers using coupled Monte Carlo samples}\label{warm-start}
The MC approximation and relaxation of the integer constraint given in Sections~\ref{mc-approx} and \ref{lp-relax} enable us to compute the \emph{relaxed MC} approximate reward function $\widehat{R}^{\mc}_t(\cdot)$ as an MC average of the relaxed-self-consistent optimal values: $\widehat{R}^{\mc}_t(s_t) := L^{-1} \sum_{\ell=1}^{L} \widehat{O}_t(\bb^{(\ell)})$. This requires computing, for each  $s_t \in \Omega_t, t \in [\tau_{\bh}],$ and $\ell \in [L]$, the value $\widehat{O}_t(\bb^{(\ell)})$ for the random sample $\bb^{(\ell)}$. Naively, this involves \emph{independently} solving 
$L\sum_{t=1}^{\tau_{\bh}} |\Omega_t |$ 
many relaxed e-value optimization programs, which can still be expensive in large-scale problems even though each single problem can now be  solved efficiently. To address this, we develop an approach that exploits the structural connections between these problems to accelerate computation through warm-starting.

\paragraph{Warm-starting} The relaxed e-value optimization programs given in equation~\eqref{relaxed-go-def} are highly related. To fix ideas, let us consider the computation required for the optimal values $\widehat{O}_{t}(\bb )$ and $\widehat{O}_{t+1}(\bb' )$ for any two indicator vectors $\bb  \in \mathcal{B}(t,s_{t+1}+1)$ and $ \bb'  \in \mathcal{B}(t+1, s_{t+1})$ for which $s_{t+1} \in \Omega_{t+1}, s_{t+1}+1 \in \Omega_t$. 
As we show in Appendix~\ref{appendix:warm-start}, the feasible set of the relaxed e-value optimization program defining $\widehat{O}_{t+1}(\bb' )$ is contained in the feasible set of that defining $\widehat{O}_{t}(\bb )$.\footnote{Technically, this requires some (very natural) assumptions on the extension of the diversity metric (see Assumptions~\ref{perm-invar} and \ref{zero-doesnt-affect} in Appendix~\ref{appendix:warm-start}), which are satisfied by both the Sharpe ratio and Markowitz objective.} This guarantees that the optimal  solution to the former is feasible for the latter. Furthermore, Proposition~\ref{warm-start-prop} in the same Appendix section shows that the objective functions of these programs will be similar if $\bb$ and $\bb'$ agree on most of their first $t$ coordinates. Hence, if $\bb$ and $\bb'$ are close, the solution defining $\widehat{O}_{t+1}(\bb')$ may also be close to that defining $\widehat{O}_{t}(\bb)$. Based on this observation, we implement a projected gradient descent (PGD)\footnote{The PGD solver requires projection onto the constraint set of the relaxed problem. For the Sharpe ratio constraint set~\eqref{sharpe1}--\eqref{sharpe3}, efficient projection algorithms are known~\citep[e.g.,][]{kiwiel2008breakpoint,helgason1980polynomially,held1974validation}. For the more general constraint set~\eqref{relaxedsc1}--\eqref{relaxedsc2} (which is the one that arises in the relaxed Markowitz problem) we develop, in Appendix~\ref{app:warm-start-implement}, an efficient projection algorithm.} solver with warm starts: after solving the relaxed e-value optimization problem for $\widehat{O}_{t+1}(\bb')$, we initialize the solver for $\widehat{O}_{t}(\bb)$ using this solution. See Appendix~\ref{app:warm-start-implement} for implementation details.

\paragraph{Ensuring similarity by coupling} Stepping back to reward computation, calculating the values $\widehat{R^{\mc}_t}(s_{t+1}+1)$ and $\widehat{R}^{\mc}_{t+1}(s_{t+1})$, requires solving the optimization programs defining $\widehat{O}_t(\bb^{(\ell)})$ and $\widehat{O}_{t+1}(\bb'{}^{(\ell)})$, respectively, for $\bb^{(1)}, \ldots, \bb^{(L)} \overset{\text{i.i.d.}}{\sim} \text{Unif}\left(\mathcal{B}(t,s_{t+1}+1)\right)$ and $\bb'{}^{(1)}, \ldots, \bb'{}^{(L)} \overset{\text{i.i.d.}}{\sim} \text{Unif}\left(\mathcal{B}(t+1,s_{t+1})\right)$. As mentioned, our warm-starting procedure becomes effective if $\bb^{(\ell)}$ and $\bb'{}^{(\ell)}$ are similar. To achieve this, we couple the MC samples. We start at time $t+1$ and in particular, with the binary vector $\bb'{}^{(\ell)}$. Given this vector, first deterministically set $\bb^{(\ell)} := \bb'{}^{(\ell)}_{1:t}$ and then, if and only if $b'^{(\ell)}_{t+1} = 0$, sample an index $J \sim \text{Unif}\left(\{j \in [t]: b'_j = 1\}\right)$ and flip $\bb^{(\ell)}$'s $J^{\text{th}}$ coordinate by setting $b^{(\ell)}_J = 0$. This scheme ensures that $\bb^{(\ell)}$ and $\bb'^{(\ell)}_{1:t}$ agree on either all or all but one of their entries, thus ensuring the similarity of $\bb^{(\ell)}$ and $\bb'^{(\ell)}$ needed to make the warm-starting procedure efficient. Furthermore, if $\bb'{}^{(\ell)} \sim \text{Unif}\left(\mathcal{B}(t+1,s_{t+1})\right)$, then $\bb^{(\ell)} \sim \text{Unif}\left(\mathcal{B}(t,s_{t+1}+1)\right)$ and hence the resulting MC estimates of the reward values $\widehat{R}^{\mc}_t(s_{t+1}+1)$ and $\widehat{R}^{\mc}_{t+1}(s_{t+1})$ are still unbiased and independent across $\ell = 1, \ldots, L$ (though they are no longer independent across $(s_t,t)$ pairs). Figure~\ref{fig:warm-start} gives a visual example of the paths of $(s_t,t)$ pairs along which we apply the warm-starting heuristic. In Appendix~\ref{app:warm-start-ablation} we empirically demonstrate that both this warm-starting procedure and our custom PGD solvers can significantly improve computation time. For example, in our simulations involving the Sharpe ratio in Section~\ref{sims:sharpe-markowitz}, the PGD solver on its own (i.e., without using warm starts) yields speedups that range (depending on the simulation setting, $\alpha$, and diversity metric) from $13\times$ to over $100\times$  faster than when using the MOSEK solver \citep{aps2022mosek}, and the use of warm starts can, in some cases, yield an additional $20\%$ reduction in computation time.

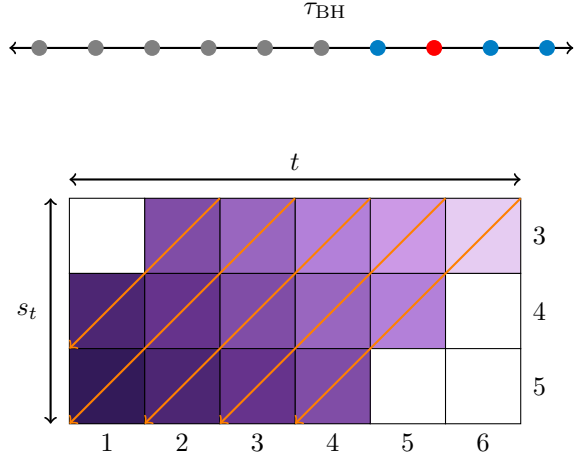
\begin{figure}
    \centering

\begin{tikzpicture}

\begin{scope}[shift={(-4,10)}]
    \draw[<->, thick] (0,0) -- (7.5,0);

    \foreach \i in {0,...,9} {
      \pgfmathsetmacro{\x}{0.4 + 0.75*\i}

      \ifnum\i<6
        \fill[gray] (\x,0) circle (3pt);
      \else
        \ifnum\i<7
          \fill[cyan!60!blue] (\x,0) circle (3pt);
        \else
          \ifnum\i=7
            \fill[red] (\x,0) circle (3pt);
          \else
            \fill[cyan!60!blue] (\x,0) circle (3pt);
          \fi
        \fi
      \fi
    }
\end{scope}


\definecolor{p0}{rgb}{0.9,0.8,0.95}
\definecolor{p1}{rgb}{0.8,0.6,0.9}
\definecolor{p2}{rgb}{0.7,0.5,0.85}
\definecolor{p3}{rgb}{0.6,0.4,0.75}
\definecolor{p4}{rgb}{0.5,0.3,0.65}
\definecolor{p5}{rgb}{0.4,0.2,0.55}
\definecolor{p6}{rgb}{0.3,0.15,0.45}
\definecolor{p7}{rgb}{0.2,0.1,0.35}
\definecolor{p8}{rgb}{0.1,0.05,0.25}

\pgfkeys{
  /colorbydiag/0/.initial=p0,  
  /colorbydiag/1/.initial=p1,
  /colorbydiag/2/.initial=p2,
  /colorbydiag/3/.initial=p3,
  /colorbydiag/4/.initial=p4,
  /colorbydiag/5/.initial=p5,
  /colorbydiag/6/.initial=p6,
  /colorbydiag/7/.initial=p7,
  /colorbydiag/8/.initial=p8   
}

\begin{scope}[shift={(6.8,10)}]
\begin{scope}[xscale=-1, shift={(4,0)}]
\begin{scope}[rotate=270]
\foreach \i in {0,...,2} {  
  \foreach \j in {0,...,5} {  
    \pgfmathtruncatemacro{\row}{6 - \j}  
    \pgfmathtruncatemacro{\col}{\i + 1}

    \pgfmathsetmacro{\iswhite}{
    (\row == 1 && (\col==1)) ||
      (\row == 4 && ( \col==4)) ||
      (\row==5 && (\col==3 || \col==4)) ||
      (\row==6 && (\col==2 || \col==3 || \col==4)) 
    }

    \ifdim \iswhite pt=1pt
      \fill[white] (\i+2,\j) rectangle ++(1,1);
    \else
      \pgfmathtruncatemacro{\diag}{\i + \j}
      \pgfkeysgetvalue{/colorbydiag/\diag}{\colorname}
      \fill[\colorname] (\i+2,\j) rectangle ++(1,1);
    \fi

    \draw (\i+2,\j) rectangle ++(1,1);
  }
}

\draw[<->, thick] (1.75,6) -- (1.75,0) node[midway, above] {\( t \)};

\draw[<->, thick] (2,6.25) -- (5,6.25) node[midway, left=1pt] {\( s_t \)};  

\foreach \i in {6,...,1} {
  \node at (5.25,6.5-\i) {\( \i \)};
}

\foreach \i in {3,...,5} {
  \node at (1.5+\i-3+1,-0.25) {\( \i \)};
}


\node at (-0.5, 2.6) {\( \tau_{\text{BH}} \)};


\draw[->,orange, thick] (2, 4) -- (4, 6);
\draw[->,orange, thick] (2, 3) -- (5, 6);
\draw[->,orange, thick] (2, 2) -- (5, 5);
\draw[->,orange, thick] (2, 1) -- (5, 4);
\draw[->,orange, thick] (2, 0) -- (5, 3);

\end{scope}
\end{scope}
\end{scope}
\end{tikzpicture}

\caption{Example paths for warm starting in a problem with $n=5$ calibration points and $m=5$ test points. Upper panel illustrates which scores, after sorting, are known to belong to the calibration set (blue), the test set (red), or are unknown (grey) at the BH stopping time $\tau_{\bh}$. The lower panel visualizes the table of $R_t(s_t)$ values for each $s_t \in \Omega_t, t \in [\tau_{\bh}]$. White squares indicate $(s_t,t)$ pairs for which $s_t \not\in \Omega_t$ and hence the value need not be computed. Otherwise, our warm starting heuristic fills out cells by following the orange arrows along the purple gradients.}
\label{fig:warm-start}
\end{figure}

\subsubsection{Reward computations on a coarsened grid of time-points}\label{skipping}
Our final heuristic is to ``coarsen'' the grid of timesteps on which we need to compute the reward function. More specifically, the analyst may choose a grid of times $1 = t_1 < \cdots < t_Q = \tau_{\bh}$ (both the size of and the elements in the grid are allowed to depend on the information in $\mathcal{F}_{\tau_{\bh}}$) and compute \emph{only} the reward functions $R_{t_q}(\cdot), q \in [Q]$ at these coarsened grid points as opposed to all time steps $t\in [\tau_{\bh}]$. The Snell envelope on this coarsened grid is then computed by the updates \begin{equation}\label{coarse-envelope-construction}
        E^{\coarse}_{t_q}(s) := \max \bigg\{ R_{t_q}(s), \sum_{s' \in \Omega_{t_{q-1}}}E_{t_{q-1}}(s')p_{t_q, n-s, t_q-t_{q-1}}(s'-s) \bigg\}
    \end{equation} for each $q \in [Q]$ where $p_{T,S,N}(\cdot)$ is the PMF of a $\text{Hypergeom}(T,S,N)$ random variable.\footnote{We parametrize  $\text{Hypergeom}(T,S,N)$ to be the distribution of number of successes when drawing, uniformly without replacement, $N$ items out of a total population of size $T$ containing $S$ many successfully labeled items.} The intuition behind equation~\eqref{coarse-envelope-construction} is that it is a generalization of the original Snell envelope update~\eqref{envelope-construction} to the setting wherein one only has access to the coarsened process $N_{t_Q}^{\Above}, N_{t_{Q-1}}^{\Above}, \ldots, N_{t_1}^{\Above}$. In the original setting, $N_{t+1}^{\Above} - N_t^{\Above} \mid N_t^{\Above}$ follows a Bernoulli distribution under the exchangeable distribution, and hence the Snell envelope update required a weighted average of two terms corresponding to a simple Bernoulli expectation. Analogously, the multi-term weighted average in equation~\eqref{coarse-envelope-construction} corresponds to an expectation of a hypergeometric distribution which is precisely the distribution of $N_{t_{q+1}}^{\Above} - N_{t_q}^{\Above} \mid (N_{t_q}^{\Above}, \mathcal{F}_{\tau_{\bh}})$ under $\bP_{\exch}$. Just as with the original update~\eqref{envelope-construction}, equation~\eqref{coarse-envelope-construction} is still straightforward from a computational viewpoint.

    Having defined the coarsened reward and Snell envelope processes, the coarsened optimal stopping time is given by the first time in the grid that the former dominates the latter: \[\tau^{*,\coarse} := \max\big\{t \in \{t_1, \ldots, t_Q\}: R_{t}(N_t^{\Above}) \geq E^{\coarse}_{t}(N_t^{\Above}) \big\}.\] Because $\tau^{*,\coarse}$ is still a stopping time, Theorem~\ref{stopping-time-thm} implies we can construct the final selection set in line 10 of Algorithm~\ref{dacs-algo} with respect to its e-values while leaving FDR control intact.

The key upshot of the coarsening heuristic is that we \emph{only} need to compute the reward functions $R_{t_q}(\cdot)$ for each $t_q$ in the grid. Because the grid is user-specified, the analyst is free to choose $Q$ as small as they would like, thereby substantially reducing the number of reward function evaluations and hence the number of optimization programs that must be solved. Of course, this comes at the (statistical) cost of degrading the quality of the stopping $\tau^{*, \coarse}$ as compared to $\tau^{*}$. This heuristic can be used in combination with the MC approximation and relaxation heuristic of Sections~\ref{mc-approx} and \ref{lp-relax} as well as with a generalized version of the warm-start heuristic of Section~\ref{warm-start}. For this generalized warm-start heuristic, we use the same path structure as in Figure~\ref{fig:warm-start}, except paths now skip timesteps (columns) not in the grid $\{t_1, \ldots, t_Q\}$; coupled MC sampling along these paths is performed similarly to the approach described in Section~\ref{warm-start}.

 \section{Real data experiments}\label{expers}

 In this section we apply DACS on real datasets using the underrepresentation index as well as the Sharpe ratio and Markowitz objective. In these experiments, we compare the selection set returned by DACS to that of CS \citep{jin2023modelfreeselectiveinferencecovariate} (which does not optimize for diversity).

 \subsection{Hiring dataset: underrepresentation index}\label{expers:hiring}
We first illustrate our method on a Kaggle job hiring dataset \citep{roshan2020campus}. This is the same dataset on which \cite{jin2023modelfreeselectiveinferencecovariate} apply their conformal selection method and we use it because recruitment datasets from companies are often proprietary. The dataset consists of $215$ candidates, each described by covariates such as gender, final exam results, specialization, prior work experience, etc.; the response is the binary indicator that the applicant is hired. We apply DACS with the underrepresentation index to diversify on gender. The machine learning algorithm we use is a gradient boosting model\footnote{To avoid ties among $\hat{\mu}(X)$'s we add small independent Gaussian noise to its outputs.} with hyperparameters that were found to have performed well on the dataset \citep{kaushik2023recruitment}. We allow it access to all covariates in the dataset \emph{except} gender (i.e., so in this case the covariates $X$ do not contain the diversification variable $Z$). We consider two different problem settings: in Setting 1, we take $n=125$ calibration points and $m=45$ test points (e.g., corresponding to situation in which a moderately-sized company is selecting applicants from a small pool) and in Setting 2 we take $n=45$ and $m = 125$ (e.g., corresponding to the situation in which the company is new and is selecting from a large list of potential applicants).  Our results are averaged over $1000$ random train/calibration/test splits and error bars denote $\pm 1.96$ standard errors.

We plot, in Figure~\ref{hiring:main-result}, the average proportions of clusters selected by DACS as compared to CS conditional on the selection set being non-empty (left) as well as the number of selections made by each method (right). More specifically, recalling that $N_c(\mathcal{R})$ denotes the number of candidates in the selection set $\mathcal{R}$ belonging to category $c$, the blue and red bars in any given cell in the left panel of Figure~\ref{hiring:main-result} represent (approximately) the vector of expected values $\left(\bE\left[\frac{N_c(\mathcal{R})}{|\mathcal{R}|} \mid \mathcal{R} \neq \emptyset\right]\right)_{c=1}^C$ for the DACS and CS selection sets, respectively, for that setting and nominal level $\alpha$. Each cell in the left panel additionally contains blue and red horizontal lines at the (approximate) value $\bP(\mathcal{R} = \emptyset)$, the probability that the selection set is empty (for larger values of $\alpha$, these lines are almost barely visible because they overlap and are both equal to $0$).

\begin{figure}
    \centering
    \includegraphics[scale=0.43]{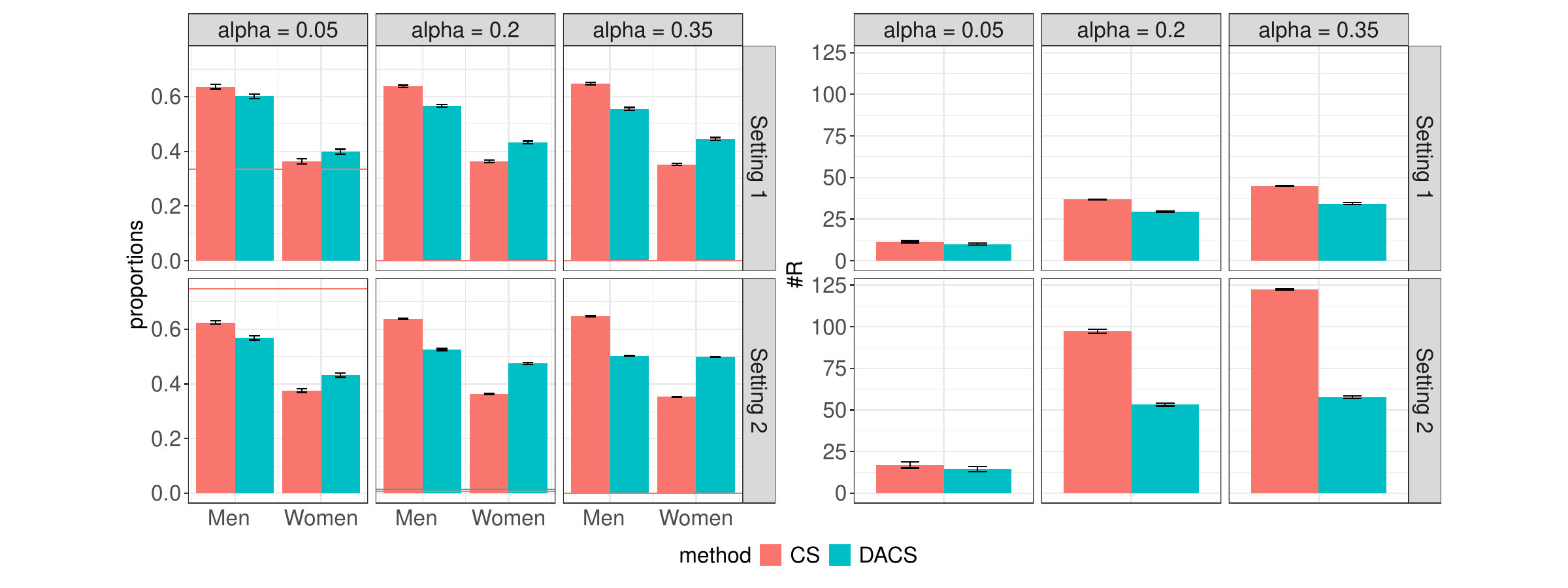}
    \caption{Left: Conditional average cluster proportions given selection set is non-empty for DACS selection set (blue) compared to CS selection set (red) for job hiring dataset at various nominal levels $\alpha$ and different settings (i.e., values of $n$ and $m$); horizontal lines, when visible, denote the average proportion of selection sets that are empty. Right: average number of rejections made by each method for various levels $\alpha$ and different settings of $n$ and $m$.}
    \label{hiring:main-result}
\end{figure}

Figure~\ref{hiring:main-result} shows that DACS is able to make more diverse selections in both settings across all levels of $\alpha$. The improvement in diversity is greatest for larger values of $\alpha$ in both settings. Additionally, the increase in diversity achieved by DACS involves, in some cases, substantially shrinking the selection set size as is demonstrated in the right-hand panel of Figure~\ref{hiring:main-result}. We report further results regarding FDR and power, as well as further details about the data in Appendix~\ref{app:exper-job-hire}.

 \subsection{Diverse drug discovery: Sharpe ratio and Markowitz objective}\label{expers:drug}
 We apply DACS to a G protein-coupled receptor (GPCR) drug-target interaction (DTI) dataset \citep{drugdata}. The dataset consists of $2874$ drug-target pairs along with a measured binding affinity of each pair. Our results are aggregated over $250$ different training/calibration/test splits: we set aside $2156$ examples for training our machine learning model and use $n = 503$ calibration samples and $m = 215$ test samples. We encode the ligands and targets via CNN and Transformer architectures, respectively, using the DeepPurpose package \citep{huang2020deeppurpose}. The machine learning model we use to train $\hat{\mu}$ is a 3-layer MLP trained for $10$ epochs, using the same package. We take the diversification variable $Z$ to be the (encoded) ligand and construct $\Sigma$ as the similarity matrix of Tanimoto coefficients between each pair of ligands \citep{bajusz2015tanimoto}. Following \cite{jin2023modelfreeselectiveinferencecovariate}, we label the $j^{\text{th}}$ drug-target candidate as ``high-quality'' if its binding affinity exceeds a threshold $c_j$, defined as the median binding affinity among training examples with the same target protein.\footnote{If a target does not appear in the training set,  we take $c_j$ to be the median binding affinity across the \emph{entire} training set.} We report results for $\alpha = 0.5$ (CS makes very few selections at smaller nominal levels), utilizing all heuristics from Section~\ref{dacs-heuristics}, in particular using $L = 500$ MC samples and a grid of $Q \approx 100$ time points.\footnote{Technically, the size of our grid is data-dependent and depends on certain divisibility properties to ensure that timestep $1$ is always included. In particular, we set $Q = 100$ if $\frac{\tau_{\bh}-1}{99} = \lceil \frac{\tau_{\bh}}{100}\rceil$ and to $101$ otherwise. Also, if $\tau_{\bh} \leq 100$, we take $Q = \tau_{\bh}$.}

Since the possible ranges of the Sharpe ratio and Markowitz objective vary significantly across problems, their raw values are difficult to interpret. Instead, we evaluate DACS and CS relative to a baseline that reflects the range of achievable diversity values.\footnote{Even relative error could be misleading: if the objective values lie in a narrow interval $[A,B]$ with large $A$ and small $B - A$, then the relative error between DACS and CS will appear small, even if the actual difference is meaningful when compared to the range of realizable selection set diversities. Our baseline accounts for this by incorporating only achievable diversity values.} To compute this baseline, we consider all of the possible diversity-optimized selection sets that could have been achieved had we stopped at a different stopping time. 
In particular, for the given diversity metric $\varphi$ (i.e., either the Sharpe ratio or the Markowitz objective for some value of $\gamma$), we do the following: for each $t \leq \tau_{\bh}$, perform the randomized relaxation of Section~\ref{lp-relax} some $D$ number of times (we use $D = 25$ in our experiments and simulations) with respect to the optimal solution to the relaxed e-value optimization program at time $t$ (i.e., we use the same deterministic $\bchi^{*,\text{relaxed},t}$ but consider $D$ different Bernoulli-randomized selection sets based on it) to obtain selection sets $\mathcal{R}_{t}^{(1)}, \ldots, \mathcal{R}_{t}^{(D)}$. The baseline distribution with respect to which we will make all diversity comparisons is then $\hat{F}^{\varphi}_{\text{baseline}}$, defined to be the empirical CDF of the diversity values $\left\{\varphi\left(\bZ^{\test}_{\mathcal{R}_{t}^{(i)}}\right)\right\}_{i=1,t=1}^{D,\tau_{\bh}}$. In particular, we compare the CDF values, under $\hat{F}^{\varphi}_{\text{baseline}}$, of DACS' diversity score to CS'. These ``normalized diversities'' have the advantage that they are in $[0,1]$, allowing them to be more easily compared and interpreted. 
Furthermore, because the baseline accounts for all possible stopping times, it is inherently difficult to outperform---that is, achieving normalized CDF values near $1$ is challenging---while still maintaining FDR control (even approximate control, as ensured by Proposition~\ref{relaxed-fdr}). Indeed, a procedure that selects the most diverse stopping-time-based selection set in a post-hoc manner may fail to control FDR.
Since the DACS and CS selection sets are based on two possible times at which we could have stopped, it makes sense to compare them relative to the distribution of diversity-optimized selection sets across all valid stopping times.

  Finally, we note that the $\gamma$ parameters for which we present the Markowitz objective results were chosen to be near the average value $2/\lambda_{\max}(\Sigma)$ across $1000$ splits as we found that larger values tended to result in the objective favoring the empty selection set while smaller values often resulted in the CS selection set being optimal---both expected outcomes given the structure of the Markowitz objective. In practice, the analyst is free to choose $\gamma$ by inspecting $\lambda_{\max}(\Sigma)$ (or, indeed, any permutation-invariant function of $\Sigma$), without breaking FDR control because it leaves the exchangeability between calibration and test samples intact.

 \begin{figure}
    \centering
    \includegraphics[scale=0.4]{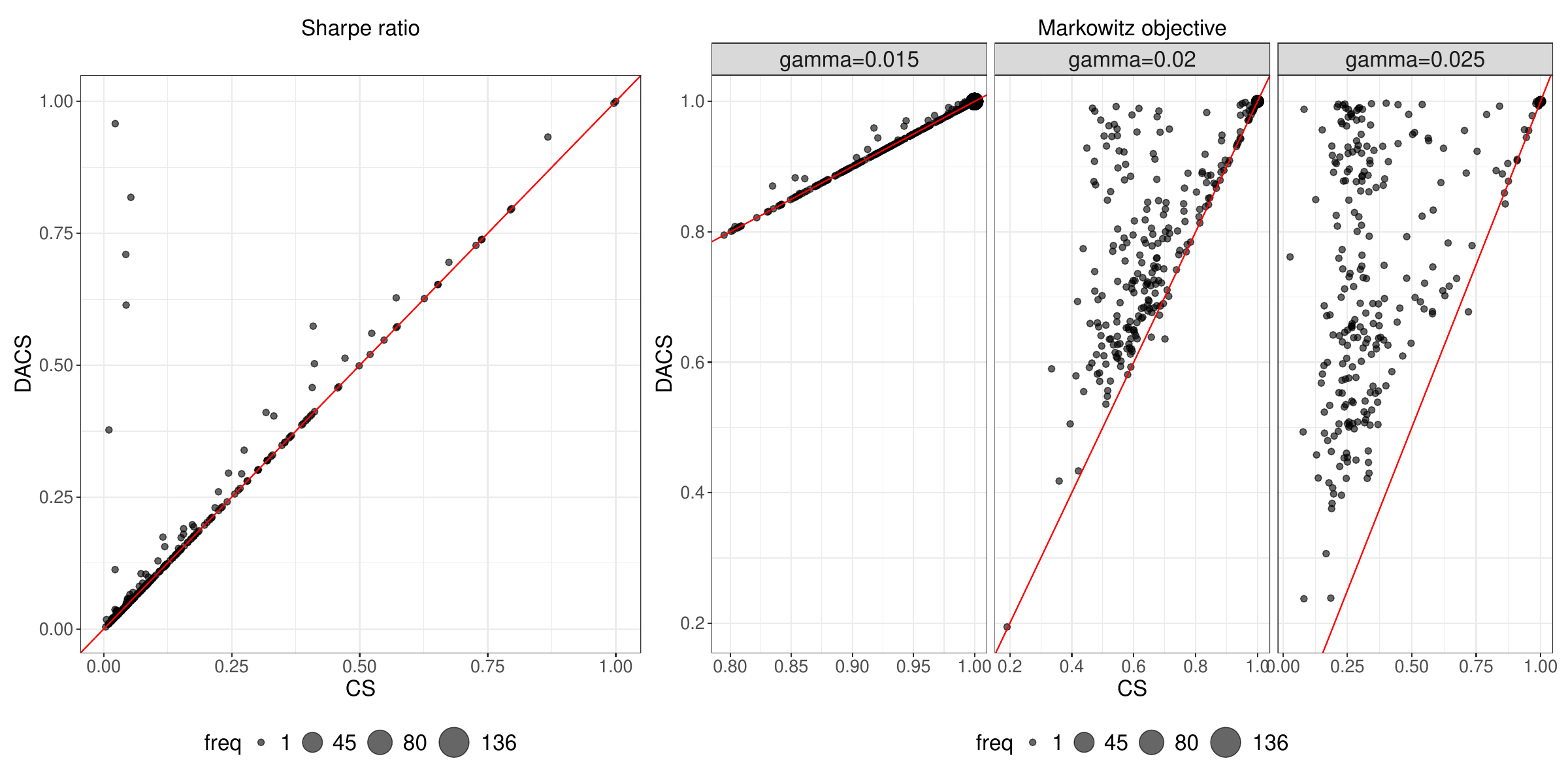}
    \caption{DACS' normalized diversity under $\hat{F}^{\varphi}_{\text{baseline}}$ compared to CS' for both the Sharpe ratio and Markowitz objective for GPCR dataset. Each scatter point represents a different data split. Red lines are $y=x$.}
    \label{drug-exper:main-result}
\end{figure}

Figure~\ref{drug-exper:main-result} presents the main results on the GPCR dataset using both the Sharpe ratio and the Markowitz objective. Overall, DACS achieves marginally higher normalized Sharpe ratio diversity values than CS, though the two methods frequently make selections exhibiting the same diversity. In this setting, both methods tend to produce selection sets with low normalized diversity values. For the Markowitz objective, DACS' performance relative to CS appears to become superior for larger values of $\gamma$. This makes sense: as $\gamma$ gets larger it becomes more and more desirable to return smaller selection sets (indeed as $\gamma \rightarrow \infty$, the empty set is optimal) and consequently the CS selection set becomes worse. In Appendix~\ref{app:exper-drug-discover}, we report the FDR, power, and number of selections made by each method. Although Proposition~\ref{relaxed-fdr} guarantees FDR control only at level $1.3\alpha$, Figure~\ref{fig:drugexper-fdr-power} in Appendix~\ref{app:exper-drug-discover} shows that it is controlled below the nominal level $\alpha$ for all objectives.

\section{Simulations}\label{sims}
  Just as in Section~\ref{expers}, we compare the selection sets returned by DACS to those returned by CS in a variety of different simulation settings to further investigate their performance. All simulation results are aggregated over $250$ independent trials and, when reported, error bars denote $\pm 1.96$ standard errors.

\subsection{Underrepresentation index}\label{sims:underrep}
We consider $6$ different simulation settings (i.e., data-generating processes) all with $n = 500$ calibration samples and $m = 300$ test samples. We take $\hat{\mu}$ to be a fitted $3$-hidden-layer multi-layer perceptron (MLP), trained on a training set of $1000$ independent $(X_i,Y_i)$ pairs---though we also report additional results when ordinary least-squares (OLS) and support vector machine (SVM) regressions are used to train $\hat{\mu}$ in Appendix~\ref{app:add-sim-results-underrep}. While we defer a precise description of the $6$ different simulation settings to Appendix~\ref{app:add-sim-detail-underrep}, we provide a high-level discussion here. All simulation settings in this section involve a hierarchical Gaussian mixture model in which the diversification variable $Z$ is a categorical variable, denoting the upper-level cluster index. An (unobserved) second-level index label $Z^{\sub}$ is then drawn from another categorical distribution, conditional on $Z$ and then $X$ is sampled from the Normal distribution $\mathcal{N}(\mu_{Z^{\sub}}, 1)$, where the parameters $\bmu$ are setting-dependent. Finally, $Y$ is distributed according to a (setting-dependent) regression function of $X$ plus homoscedastic standard Gaussian noise.

\begin{figure}
\centering
    \includegraphics[scale=0.6]{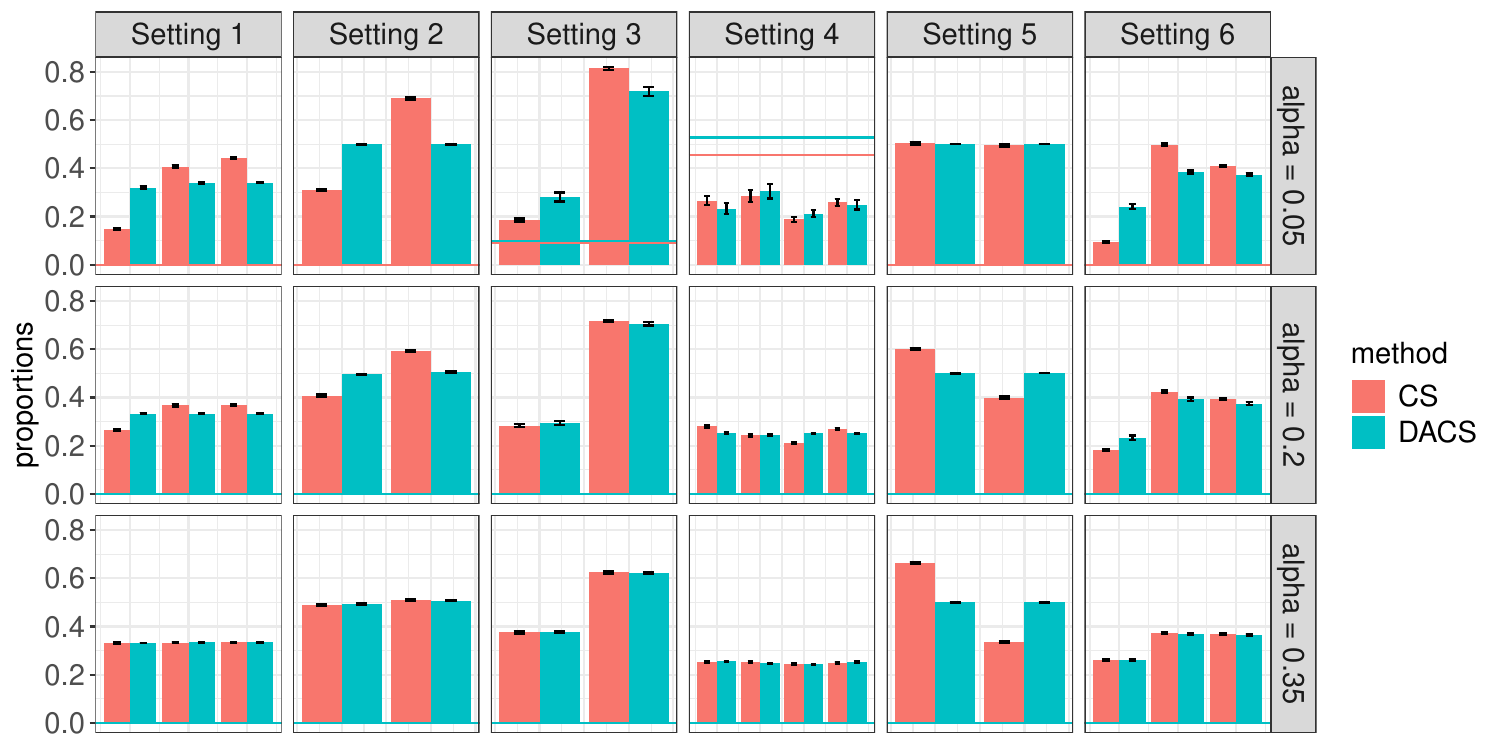}
    \caption{Average cluster proportions conditional on being non-empty for DACS selection set (blue) compared to CS selection set (red) for various simulation settings and nominal levels $\alpha$; $\hat{\mu}$ is a fitted MLP. Within any given cell, bars which are closer to uniform denote a more diverse selection set. Horizontal lines, when visible, denote the average proportion of sets which are empty.}
    \label{fig:underrep-result1}
\end{figure}

\begin{figure}
\centering
    \includegraphics[scale=0.6]{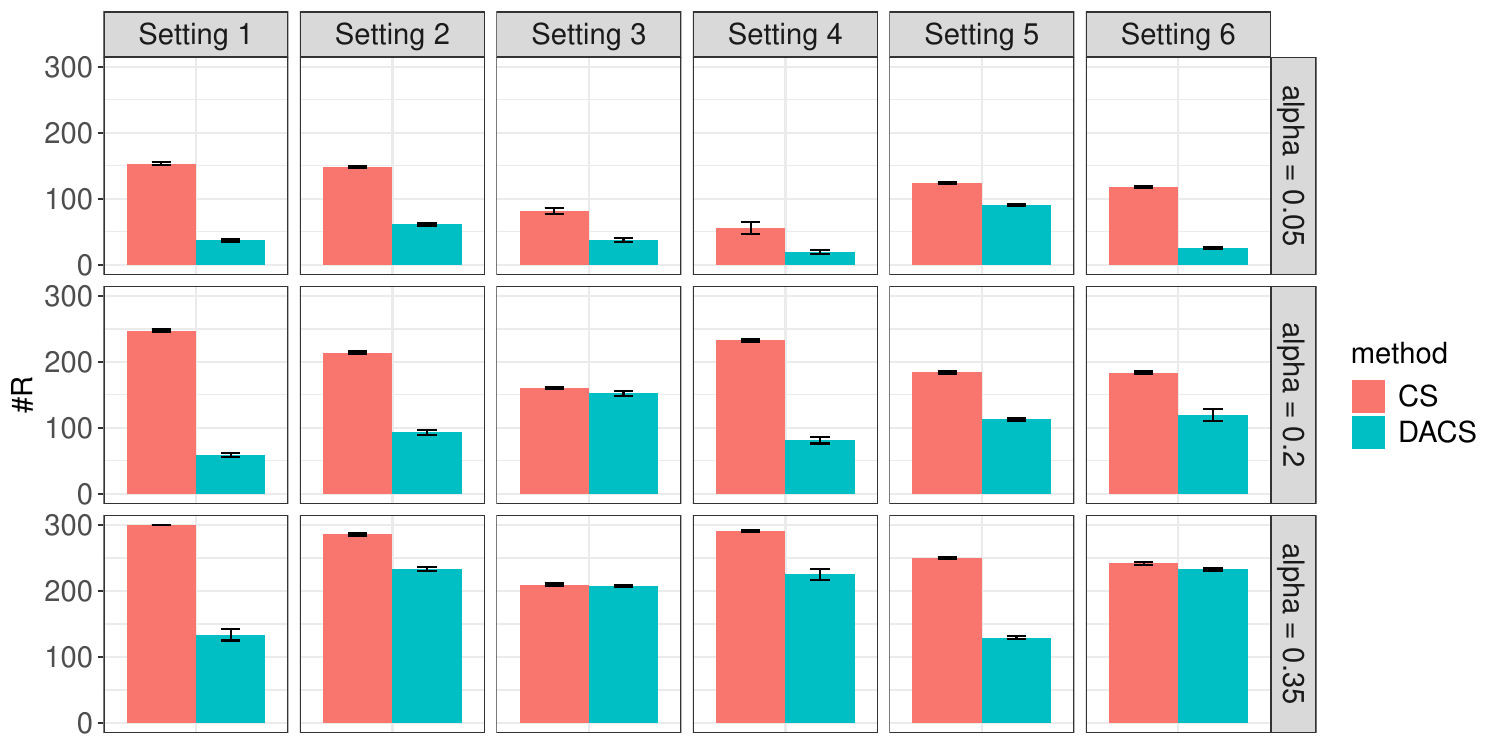}
    \caption{Average numbers of selections made by DACS selection set (blue) compared to CS selection set (red) for various simulation settings and nominal levels $\alpha$; $\hat{\mu}$ is a fitted MLP.}
    \label{fig:underrep-result2}
\end{figure}

Figure~\ref{fig:underrep-result1} shows that, conditional on being non-empty, the selection sets produced by DACS with the underrepresentation index are more diverse, on average, than those produced by CS (though sometimes DACS does return the empty selection set more frequently than CS which results in a lower average underrepresentation index; we discuss this further in the next paragraph). In particular, the figure shows that DACS produces selections which are closer to---and, in many cases, almost exactly---uniform across clusters (thereby indicating greater diversity) than CS. Appendix~\ref{app:add-sim-results-underrep} contains further simulation results for the underrepresentation index. It is also interesting to compare the diversities of DACS and CS for differing values of $\alpha$ since, as alluded to in Section~\ref{contribution}, we may be willing to accept a selection set with a weaker FDR control guarantee for the sake of greater diversity. In Settings 3, 4, and 6, we see that increasing the nominal level $\alpha$ enables DACS to construct more and more diverse selection sets (in the other settings, DACS is already constructing essentially maximally diverse selection sets even at $\alpha = 0.05$). Figure~\ref{fig:underrep-result2}, which displays the average number of selections for the same simulation settings and nominal levels as Figure~\ref{fig:underrep-result1}, also shows that, in each of these settings, DACS at level $\alpha = 0.35$ makes the same or more rejections than CS at level $\alpha = 0.05$. Thus, by increasing the nominal level, DACS can construct more diverse selection sets without sacrificing the selection size.

To investigate scenarios where DACS exhibits less impressive performance, we examine its performance in Setting 4 with $\alpha = 0.05$, where it produces selection sets with lower underrepresentation index than CS (see Figure~\ref{fig:diversity-result-mlp-cluster} in Appendix~\ref{app:avg-div-underrep}). Similar results are seen for OLS and SVM regressors in the same setting and level, and for OLS in Settings 3 and 4 at $\alpha = 0.2$. For OLS, we attribute the drop in performance to poor predictive quality. For the other regressors, Figure~\ref{fig:underrep-result-most_rep} in Appendix~\ref{app:most-rep} shows that the most-represented category in the CS set is often small, making it challenging to prune it into a more diverse set. We observe a similar effect when reducing the number of test points to $m=100$: DACS yields little or no diversity improvement compared to CS in many settings (see Appendix~\ref{app:further-underrep-100}). We conjecture that this may also be due to the smaller size of the CS selection set in such settings, which limits the ability to improve its underrepresentation index.

\subsection{Sharpe ratio and Markowitz objective}\label{sims:sharpe-markowitz}
In this section we study the empirical performance of DACS using the Sharpe ratio and Markowitz objectives in simulation. We consider four simulation settings, all with $n = 500$ calibration samples, $m = 100$ test samples, and we take $\hat{\mu}$ to be a fitted a 2-hidden-layer MLP trained on an independent set of $1000$ $(X_i,Y_i)$ pairs. In this section, we set $Z = X$, so that we are interested in diversifying on the same covariates from which $\hat{\mu}$ bases its quality predictions; the similarity matrix $\Sigma$ is the RBF kernel matrix. Finally, we use the MC approximation of rewards (using $L = 300$ MC samples) as well as all of the heuristics developed in Section~\ref{dacs-heuristics}, including coarsening to a grid of $Q \approx 50$ evenly spaced time points.\footnote{Just as in Section~\ref{expers:drug}, the approximation is due to the fact that $Q = 50$ or $51$ depending on divisibility properties of $\tau_{\bh}$. Again, if $\tau_{\bh} \leq 50$, we take $Q = \tau_{\bh}$.} We defer the details regarding the precise specifications of each simulation setting to Appendix~\ref{app:add-sim-detail-sharpe-markowitz}.

\begin{figure}
    \centering
    \includegraphics[scale=0.4]{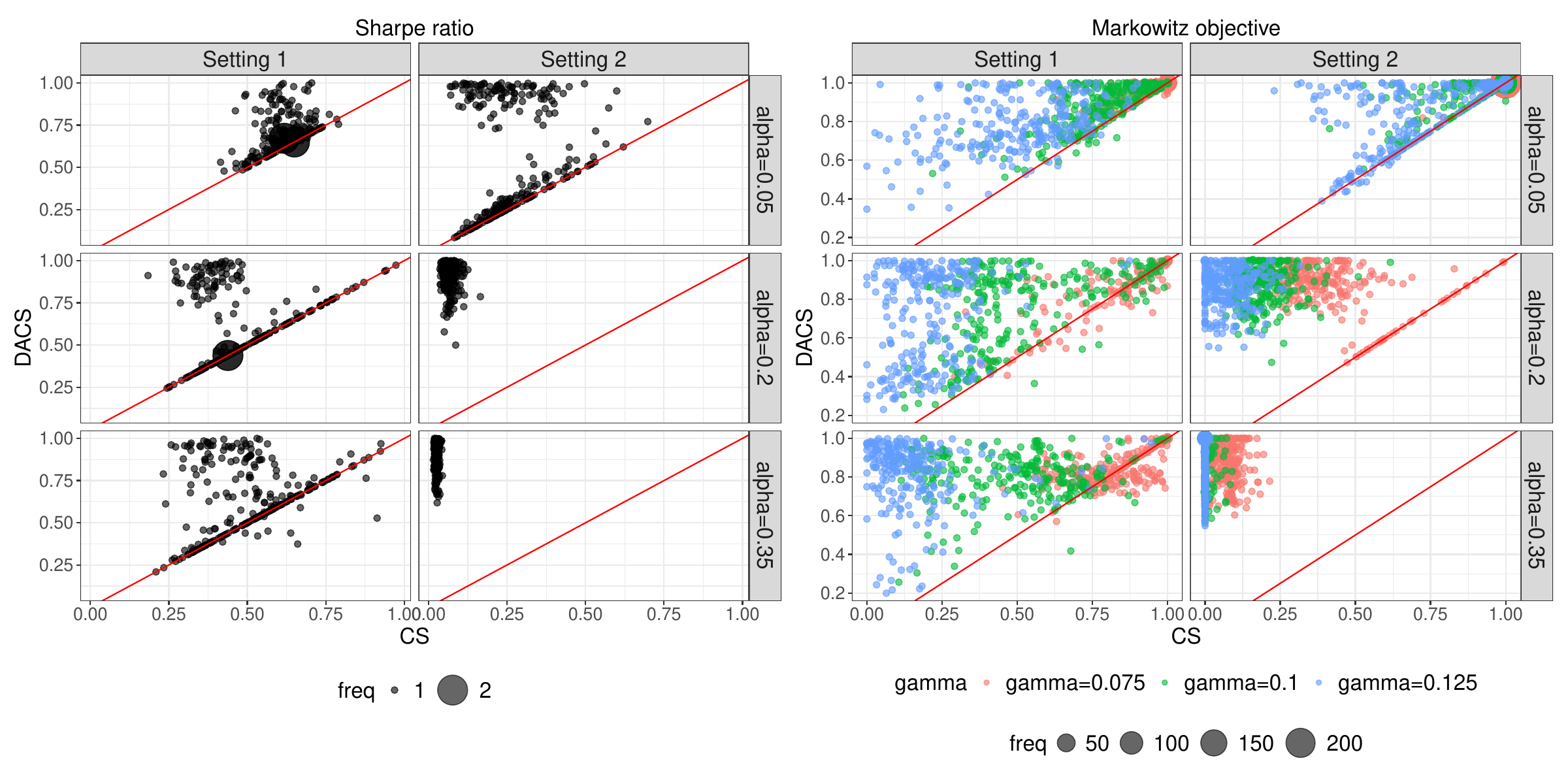}
    \caption{DACS' normalized diversity under $\hat{F}^{\varphi}_{\text{baseline}}$ compared to CS' for both the Sharpe ratio and Markowitz objective for various values of $\gamma$. Each scatter point represents a different simulation replicate; results reported for two simulation settings and three nominal levels $\alpha$. Red line is $y=x$.}
    \label{sharpe-markowitz:main-sim}
\end{figure}

Figure~\ref{sharpe-markowitz:main-sim} shows results, in the first two settings, comparing the diversity---again using the same normalization with respect to a baseline distribution as described in Section~\ref{expers:drug}---of DACS' selection sets compared to CS at the same nominal level $\alpha$ for various simulation settings, values of $\alpha$, and values $\gamma$ (in the case of the Markowitz objective). The high-level takeaway from all of these plots is the same: DACS using both the Sharpe ratio and the Markowitz objective constructs, on the whole, more diverse (as measured by that metric) selection sets than CS in these settings. 
In Appendix~\ref{app:add-sim-results-sharpe-markowitz}, we also report the FDR, power, and number of rejections made by DACS as well as its computation time for these settings. Though DACS is guaranteed to control FDR only at level $1.3\alpha$, Figure~\ref{fig:sharpe-markowitz-fdr-power} in Appendix~\ref{app:add-sim-results-sharpe-markowitz-fdr-power} shows that FDR is controlled below $\alpha$ across both simulation settings, mirroring our results in Section~\ref{expers:drug}. In Appendix~\ref{app:add-sim-results-sharpe-markowitz}, we show that our custom PGD solvers significantly reduce computation time compared to MOSEK\footnote{In all our simulations---as well as in the experiments in Section~\ref{expers:drug}---we use MOSEK to solve the final relaxed e-value optimization program in line 10 of Algorithm~\ref{dacs-algo}, even if the PGD solver is used for reward computation.}: for the Sharpe ratio and Markowitz objectives, the speedup in these settings ranges from $67$--$76\times$ and $13$--$111\times$, respectively. The same appendix also shows that warm-starting further reduces runtime: in Setting 1, by roughly $8$--$13\%$ for the Sharpe ratio and $17$--$21\%$ for the Markowitz objective. In Setting 2, warm-starting and coupling lower solver time for the Sharpe ratio, but reduce overall computation time only slightly---likely due to the overhead of constructing coupled Monte Carlo samples. For the Markowitz objective in Setting 2, warm-starting yields an additional $9$--$21\%$ reduction in runtime.

\begin{figure}
    \centering
    \includegraphics[scale=0.4]{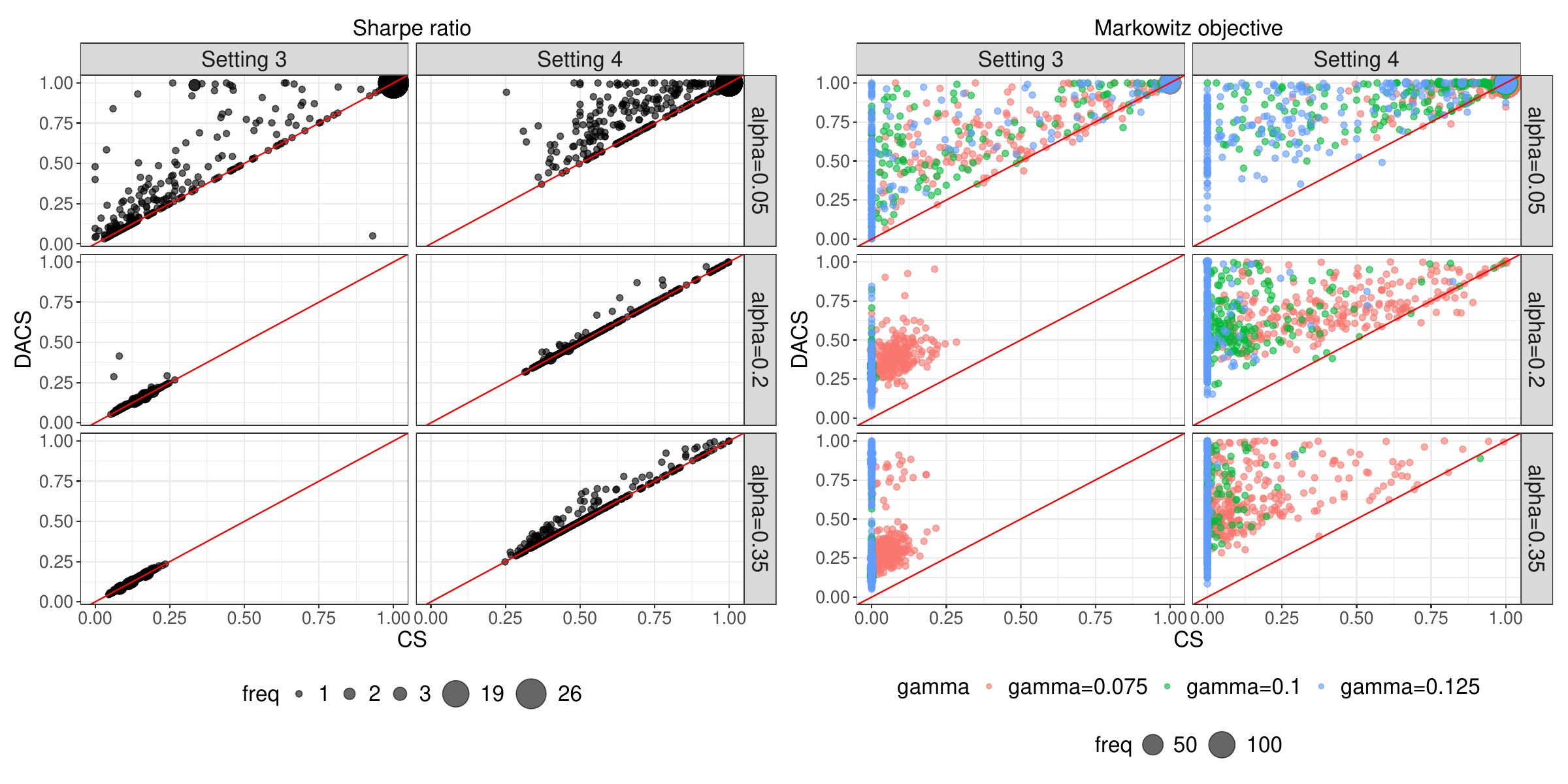}
    \caption{Additional scatter plots of normalized diversities of DACS versus CS for two more simulation settings.}
    \label{sharpe-markowitz:extra-sim}
\end{figure}

Figure~\ref{sharpe-markowitz:extra-sim} shows normalized diversity plots for two additional simulation settings. The left facet highlights challenges DACS sometimes faces in diversifying significantly beyond CS for the Sharpe ratio. While DACS produces selection sets that are no less diverse than CS, on the whole--and indeed, does produce substantially more diverse selection sets for some problem settings---its advantage is sensitive to both the underlying setting and the nominal level $\alpha$. This sensitivity is particularly pronounced in Setting 3 for large $\alpha$, wherein DACS fails to achieve greater diversity than CS and also attains low absolute normalized Sharpe ratio scores. This indicates that more diverse selection sets could have been produced had different stopping times been chosen (though, we again remind the reader that choosing the stopping time in such a post-hoc way may compromise FDR control). For the Markowitz objective, DACS consistently produces more diverse selection sets than CS across all settings. However, again in Setting 3 at larger $\alpha$ values, we observe a similar trend: for some values of $\gamma$, the normalized diversity scores---while clearly improved over CS---are at times somewhat low.

\section*{Acknowledgements} The authors would like to thank John Cherian, Val\'erie Ho, Etaash Katiyar, Ginnie Ma, Emma Pierson, Chiara Sabatti, Henry Smith, Anav Sood, Asher Spector, and Skyler Wu for helpful discussions. Y.N.~acknowledges support by a Graduate Research Fellowship from the National Science Foundation. E.J.C.~was supported by the Office of Naval Research grant N00014-24-1-2305, the National Science Foundation grant DMS-2032014, and the Simons Foundation under award 814641. Y.J.~acknowledges support by the Wojcicki-Troper Postdoctoral Fellowship at Harvard Data Science Initiative. Some of the computing for this project was performed on the Sherlock cluster. We would like to thank Stanford University and the Stanford Research Computing Center for providing computational resources and support that contributed to these research results.

\bibliographystyle{plainnat}
\bibliography{references.bib}

@article{maggiora2005practical,
  title={A practical strategy for directed compound acquisition},
  author={Maggiora, Gerald and Shanmugasundaram, Veerabahu and Lajiness, Michael and Doman, Tom and Schulz, MW},
  journal={Chemoinformatics in Drug Discovery},
  volume={23},
  pages={317--332},
  year={2005},
  publisher={Wiley Online Library}
}

@article{nakamura2022selecting,
  title={Selecting molecules with diverse structures and properties by maximizing submodular functions of descriptors learned with graph neural networks},
  author={Nakamura, Tomohiro and Sakaue, Shinsaku and Fujii, Kaito and Harabuchi, Yu and Maeda, Satoshi and Iwata, Satoru},
  journal={Scientific reports},
  volume={12},
  number={1},
  pages={1124},
  year={2022},
  publisher={Nature Publishing Group UK London}
}

@article{glen2006similarity,
  title={Similarity metrics and descriptor spaces--which combinations to choose?},
  author={Glen, Robert C and Adams, Samuel E},
  journal={QSAR \& Combinatorial Science},
  volume={25},
  number={12},
  pages={1133--1142},
  year={2006},
  publisher={Wiley Online Library}
}

@article{lambrinidis2021multi,
  title={Multi-objective optimization methods in novel drug design},
  author={Lambrinidis, George and Tsantili-Kakoulidou, Anna},
  journal={Expert opinion on drug discovery},
  volume={16},
  number={6},
  pages={647--658},
  year={2021},
  publisher={Taylor \& Francis}
}

@article{gillet2011diversity,
  title={Diversity selection algorithms},
  author={Gillet, Valerie J},
  journal={Wiley Interdisciplinary Reviews: Computational Molecular Science},
  volume={1},
  number={4},
  pages={580--589},
  year={2011},
  publisher={Wiley Online Library}
}

@misc{jang2024llmsgeneratediversemolecules,
      title={Can LLMs Generate Diverse Molecules? Towards Alignment with Structural Diversity}, 
      author={Hyosoon Jang and Yunhui Jang and Jaehyung Kim and Sungsoo Ahn},
      year={2024},
      eprint={2410.03138},
      archivePrefix={arXiv},
      primaryClass={cs.LG},
      url={https://arxiv.org/abs/2410.03138}, 
}

@article{marra2024addressing,
  title={Addressing diversity in hiring procedures: a generative adversarial network approach},
  author={Marra, Tales and Kubiak, Emeric},
  journal={AI and Ethics},
  pages={1--25},
  year={2024},
  publisher={Springer}
}

@article{roshanaei2024towards,
  title={Towards best practices for mitigating artificial intelligence implicit bias in shaping diversity, inclusion and equity in higher education},
  author={Roshanaei, Maryam},
  journal={Education and Information Technologies},
  pages={1--26},
  year={2024},
  publisher={Springer}
}

@article{houser2019can,
  title={Can AI solve the diversity problem in the tech industry: Mitigating noise and bias in employment decision-making},
  author={Houser, Kimberly A},
  journal={Stan. Tech. L. Rev.},
  volume={22},
  pages={290},
  year={2019},
  publisher={HeinOnline}
}

@inproceedings{jora2022role,
  title={Role of artificial intelligence (AI) in meeting diversity, equality and inclusion (DEI) goals},
  author={Jora, Rachna Bansal and Sodhi, Kavneet Kaur and Mittal, Prabhat and Saxena, Parul},
  booktitle={2022 8th international conference on advanced computing and communication systems (ICACCS)},
  volume={1},
  pages={1687--1690},
  year={2022},
  organization={IEEE}
}

@article{jin2023selection,
  title={Selection by prediction with conformal p-values},
  author={Jin, Ying and Cand{\`e}s, Emmanuel J},
  journal={Journal of Machine Learning Research},
  volume={24},
  number={244},
  pages={1--41},
  year={2023}
}

@article{huang2020deeppurpose,
  title={DeepPurpose: a deep learning library for drug--target interaction prediction},
  author={Huang, Kexin and Fu, Tianfan and Glass, Lucas M and Zitnik, Marinka and Xiao, Cao and Sun, Jimeng},
  journal={Bioinformatics},
  volume={36},
  number={22-23},
  pages={5545--5547},
  year={2020},
  publisher={Oxford University Press}
}

@article{askr2023deep,
  title={Deep learning in drug discovery: an integrative review and future challenges},
  author={Askr, Heba and Elgeldawi, Enas and Aboul Ella, Heba and Elshaier, Yaseen AMM and Gomaa, Mamdouh M and Hassanien, Aboul Ella},
  journal={Artificial Intelligence Review},
  volume={56},
  number={7},
  pages={5975--6037},
  year={2023},
  publisher={Springer}
}

@inproceedings{mahmoud2019performance,
  title={Performance predicting in hiring process and performance appraisals using machine learning},
  author={Mahmoud, Ali A and Shawabkeh, Tahani AL and Salameh, Walid A and Al Amro, Ibrahim},
  booktitle={2019 10th international conference on information and communication systems (ICICS)},
  pages={110--115},
  year={2019},
  organization={IEEE}
}

@inproceedings{sharma2021novel,
  title={Novel hiring process using machine learning and natural language processing},
  author={Sharma, Neha and Bhutia, Rigzen and Sardar, Vandana and George, Abraham P and Ahmed, Farhan},
  booktitle={2021 IEEE International Conference on Electronics, Computing and Communication Technologies (CONECCT)},
  pages={1--6},
  year={2021},
  organization={IEEE}
}

@article{mary2022semi,
  title={Semi-supervised multiple testing},
  author={Mary, David and Roquain, Etienne},
  journal={Electronic Journal of Statistics},
  volume={16},
  number={2},
  pages={4926--4981},
  year={2022},
  publisher={The Institute of Mathematical Statistics and the Bernoulli Society}
}

@article{bashari2024derandomized,
  title={Derandomized novelty detection with FDR control via conformal e-values},
  author={Bashari, Meshi and Epstein, Amir and Romano, Yaniv and Sesia, Matteo},
  journal={Advances in Neural Information Processing Systems},
  volume={36},
  year={2024}
}

@article{benjamini1995controlling,
  title={Controlling the false discovery rate: a practical and powerful approach to multiple testing},
  author={Benjamini, Yoav and Hochberg, Yosef},
  journal={Journal of the Royal statistical society: series B (Methodological)},
  volume={57},
  number={1},
  pages={289--300},
  year={1995},
  publisher={Wiley Online Library}
}

@article{wang2022false,
  title={False discovery rate control with e-values},
  author={Wang, Ruodu and Ramdas, Aaditya},
  journal={Journal of the Royal Statistical Society Series B: Statistical Methodology},
  volume={84},
  number={3},
  pages={822--852},
  year={2022},
  publisher={Oxford University Press}
}

@misc{jin2023modelfreeselectiveinferencecovariate,
      title={Model-free selective inference under covariate shift via weighted conformal p-values}, 
      author={Ying Jin and Emmanuel J. Candès},
      year={2023},
      eprint={2307.09291},
      archivePrefix={arXiv},
      primaryClass={stat.ME},
      url={https://arxiv.org/abs/2307.09291}, 
}

@article{lee2024boostingebhconditionalcalibration,
  title={Boosting e-BH via conditional calibration},
  author={Lee, Junu and Ren, Zhimei},
  journal={arXiv preprint arXiv:2404.17562},
  year={2024}
}

@article{gablenz2024catch,
  title={Catch me if you can: signal localization with knockoff e-values},
  author={Gablenz, Paula and Sabatti, Chiara},
  journal={Journal of the Royal Statistical Society Series B: Statistical Methodology},
  pages={qkae042},
  year={2024},
  publisher={Oxford University Press UK}
}

@article{sharpe1966mutual,
  title={Mutual fund performance},
  author={Sharpe, William F},
  journal={The Journal of business},
  volume={39},
  number={1},
  pages={119--138},
  year={1966},
  publisher={JSTOR}
}

@article{tanimoto1958elementary,
  title={Elementary mathematical theory of classification and prediction},
  author={Tanimoto, Taffee T},
  year={1958},
  publisher={International Business Machines Corp.}
}

@article{bouchard2013proof,
  title={A proof for the positive definiteness of the Jaccard index matrix},
  author={Bouchard, Mathieu and Jousselme, Anne-Laure and Dor{\'e}, Pierre-Emmanuel},
  journal={International Journal of Approximate Reasoning},
  volume={54},
  number={5},
  pages={615--626},
  year={2013},
  publisher={Elsevier}
}

@article{markowitz,
 ISSN = {00221082, 15406261},
 author = {Harry Markowitz},
 journal = {The Journal of Finance},
 number = {1},
 pages = {77--91},
 publisher = {[American Finance Association, Wiley]},
 title = {Portfolio Selection},
 volume = {7},
 year = {1952}
}

@article{berger1970diversity,
  title={Diversity of planktonic foraminifera in deep-sea sediments},
  author={Berger, Wolfgang H and Parker, Frances L},
  journal={Science},
  volume={168},
  number={3937},
  pages={1345--1347},
  year={1970},
  publisher={American Association for the Advancement of Science}
}

@article{snell1952applications,
  title={Applications of martingale system theorems},
  author={Snell, James Laurie},
  journal={Transactions of the American Mathematical Society},
  volume={73},
  number={2},
  pages={293--312},
  year={1952},
  publisher={JSTOR}
}

@article{chowgreat,
  title={Great expectations: the theory of optimal stopping},
  author={Chow, YS and Robbins, H and Siegmund, D},
  journal={Boston (Mass.)},
  year={1971}
}

@article{del2011robustness,
  title={On the Robustness of the Snell envelope},
  author={Del Moral, Pierre and Hu, Peng and Oudjane, Nadia and R{\'e}millard, Bruno},
  journal={SIAM Journal on Financial Mathematics},
  volume={2},
  number={1},
  pages={587--626},
  year={2011},
  publisher={SIAM}
}

@misc{ferguson2006optimal,
  title={Optimal stopping and applications},
  author={Ferguson, Thomas S},
  year={2006},
  publisher={UCLA, Los Angeles, CA, USA}
}

@MISC{stackexchange,
    TITLE = {Ratio of two binomial distributions},
    AUTHOR = {Did and Yury and hiratat},
    YEAR = {2022},
    HOWPUBLISHED = {Mathematics Stack Exchange},
    NOTE = {Question by \textit{hiratat} (\url{https://math.stackexchange.com/users/562932/hiratat}), answer by \textit{Did} (\url{https://math.stackexchange.com/users/6179/did}), edited by \textit{Yury} (\url{https://math.stackexchange.com/users/35791/yury}). URL: \url{https://math.stackexchange.com/q/2863492} (version: 2022-10-17)},
    EPRINT = {https://math.stackexchange.com/q/2863492}
}

@article{lebrun2013efficient,
  title={Efficient time/space algorithm to compute rectangular probabilities of multinomial, multivariate hypergeometric and multivariate P{\'o}lya distributions},
  author={Lebrun, R{\'e}gis},
  journal={Statistics and Computing},
  volume={23},
  number={5},
  pages={615--623},
  year={2013},
  publisher={Springer}
}

@book{schrijver1998theory,
  title={Theory of linear and integer programming},
  author={Schrijver, Alexander},
  year={1998},
  publisher={John Wiley \& Sons}
}

@book{boyd2004convex,
  title={Convex optimization},
  author={Boyd, Stephen and Vandenberghe, Lieven},
  year={2004},
  publisher={Cambridge university press}
}

@book{cornuejols2018optimization,
  title={Optimization methods in finance},
  author={Cornuejols, Gerard and Pe{\~n}a, Javier and T{\"u}t{\"u}nc{\"u}, Reha},
  year={2018},
  publisher={Cambridge University Press}
}

@book{billingsley2017probability,
  title={Probability and measure},
  author={Billingsley, Patrick},
  year={2017},
  publisher={John Wiley \& Sons}
}

@misc{roshan2020campus,
  author       = {Ben Roshan},
  title        = {Campus Recruitment},
  year         = {2020},
  howpublished = {\url{https://www.kaggle.com/datasets/benroshan/factors-affecting-campus-placement}}
}

@article{drugdata,
author = {Alexander, Stephen P. H. and Christopoulos, Arthur and Davenport, Anthony P. and Kelly, Eamonn and Mathie, Alistair A. and Peters, John A. and Veale, Emma L. and Armstrong, Jane F. and Faccenda, Elena and Harding, Simon D. and Davies, Jamie A. and Abbracchio, Maria Pia and Abraham, George and Agoulnik, Alexander and Alexander, Wayne and Al-hosaini, Khaled and Bäck, Magnus and Baker, Jillian G. and Barnes, Nicholas M. and Bathgate, Ross and Beaulieu, Jean-Martin and Beck-Sickinger, Annette G. and Behrens, Maik and Bernstein, Kenneth E. and Bettler, Bernhard and Birdsall, Nigel J. M. and Blaho, Victoria and Boulay, Francois and Bousquet, Corinne and Bräuner-Osborne, Hans and Burnstock, Geoffrey and Caló, Girolamo and Castaño, Justo P. and Catt, Kevin J. and Ceruti, Stefania and Chazot, Paul and Chiang, Nan and Chini, Bice and Chun, Jerold and Cianciulli, Antonia and Civelli, Olivier and Clapp, Lucie H. and Couture, Réjean and Cox, Helen M. and Csaba, Zsolt and Dahlgren, Claes and Dent, Gordon and Douglas, Steven D. and Dournaud, Pascal and Eguchi, Satoru and Escher, Emanuel and Filardo, Edward J. and Fong, Tung and Fumagalli, Marta and Gainetdinov, Raul R. and Garelja, Michael L. and de Gasparo, Marc and Gerard, Craig and Gershengorn, Marvin and Gobeil, Fernand and Goodfriend, Theodore L. and Goudet, Cyril and Grätz, Lukas and Gregory, Karen J. and Gundlach, Andrew L. and Hamann, Jörg and Hanson, Julien and Hauger, Richard L. and Hay, Debbie L. and Heinemann, Akos and Herr, Deron and Hollenberg, Morley D. and Holliday, Nicholas D. and Horiuchi, Mastgugu and Hoyer, Daniel and Hunyady, László and Husain, Ahsan and IJzerman, Adriaan P. and Inagami, Tadashi and Jacobson, Kenneth A. and Jensen, Robert T. and Jockers, Ralf and Jonnalagadda, Deepa and Karnik, Sadashiva and Kaupmann, Klemens and Kemp, Jacqueline and Kennedy, Charles and Kihara, Yasuyuki and Kitazawa, Takio and Kozielewicz, Pawel and Kreienkamp, Hans-Jürgen and Kukkonen, Jyrki P. and Langenhan, Tobias and Larhammar, Dan and Leach, Katie and Lecca, Davide and Lee, John D. and Leeman, Susan E. and Leprince, Jérôme and Li, Xaria X. and Lolait, Stephen J. and Lupp, Amelie and Macrae, Robyn and Maguire, Janet and Malfacini, Davide and Mazella, Jean and McArdle, Craig A. and Melmed, Shlomo and Michel, Martin C. and Miller, Laurence J. and Mitolo, Vincenzo and Mouillac, Bernard and Müller, Christa E. and Murphy, Philip M. and Nahon, Jean-Louis and Ngo, Tony and Norel, Xavier and Nyimanu, Duuamene and O’Carroll, Anne-Marie and Offermanns, Stefan and Panaro, Maria Antonietta and Parmentier, Marc and Pertwee, Roger G. and Pin, Jean-Philippe and Prossnitz, Eric R. and Quinn, Mark and Ramachandran, Rithwik and Ray, Manisha and Reinscheid, Rainer K. and Rondard, Philippe and Rovati, G. Enrico and Ruzza, Chiara and Sanger, Gareth J. and Schöneberg, Torsten and Schulte, Gunnar and Schulz, Stefan and Segaloff, Deborah L. and Serhan, Charles N. and Singh, Khuraijam Dhanachandra and Smith, Craig M. and Stoddart, Leigh A. and Sugimoto, Yukihiko and Summers, Roger and Tan, Valerie P. and Thal, David and Thomas, Walter ( Wally) and Timmermans, Pieter B. M. W. M. and Tirupula, Kalyan and Toll, Lawrence and Tulipano, Giovanni and Unal, Hamiyet and Unger, Thomas and Valant, Celine and Vanderheyden, Patrick and Vaudry, David and Vaudry, Hubert and Vilardaga, Jean-Pierre and Walker, Christopher S. and Wang, Ji Ming and Ward, Donald T. and Wester, Hans-Jürgen and Willars, Gary B. and Williams, Tom Lloyd and Woodruff, Trent M. and Yao, Chengcan and Ye, Richard D.},
journal = {British Journal of Pharmacology},
volume = {180},
number = {S2},
pages = {S23-S144},
title = {The Concise Guide to PHARMACOLOGY 2023/24: G protein-coupled receptors},
abstract = {The Concise Guide to PHARMACOLOGY 2023/24 is the sixth in this series of biennial publications. The Concise Guide provides concise overviews, mostly in tabular format, of the key properties of approximately 1800 drug targets, and about 6000 interactions with about 3900 ligands. There is an emphasis on selective pharmacology (where available), plus links to the open access knowledgebase source of drug targets and their ligands (https://www.guidetopharmacology.org), which provides more detailed views of target and ligand properties. Although the Concise Guide constitutes almost 500 pages, the material presented is substantially reduced compared to information and links presented on the website. It provides a permanent, citable, point-in-time record that will survive database updates. The full contents of this section can be found at http://onlinelibrary.wiley.com/doi/bph.16177. G protein-coupled receptors are one of the six major pharmacological targets into which the Guide is divided, with the others being: ion channels, nuclear hormone receptors, catalytic receptors, enzymes and transporters. These are presented with nomenclature guidance and summary information on the best available pharmacological tools, alongside key references and suggestions for further reading. The landscape format of the Concise Guide is designed to facilitate comparison of related targets from material contemporary to mid-2023, and supersedes data presented in the 2021/22, 2019/20, 2017/18, 2015/16 and 2013/14 Concise Guides and previous Guides to Receptors and Channels. It is produced in close conjunction with the Nomenclature and Standards Committee of the International Union of Basic and Clinical Pharmacology (NC-IUPHAR), therefore, providing official IUPHAR classification and nomenclature for human drug targets, where appropriate.},
year = {2023}
}

@misc{kaushik2023recruitment,
  author       = {Shrinjay Kaushik},
  title        = {Recruitment - Acc 0.96, F1 0.98},
  year         = {2023},
  howpublished = {\url{https://www.kaggle.com/code/shrinjaykaushik/recruitment-acc-0-96-f1-0-98}},
  note         = {Accessed: 2025-04-17}
}

@article{li2019multiple,
  title={Multiple testing with the structure-adaptive Benjamini--Hochberg algorithm},
  author={Li, Ang and Barber, Rina Foygel},
  journal={Journal of the Royal Statistical Society Series B: Statistical Methodology},
  volume={81},
  number={1},
  pages={45--74},
  year={2019},
  publisher={Oxford University Press}
}

@article{lei2018adapt,
  title={AdaPT: an interactive procedure for multiple testing with side information},
  author={Lei, Lihua and Fithian, William},
  journal={Journal of the Royal Statistical Society Series B: Statistical Methodology},
  volume={80},
  number={4},
  pages={649--679},
  year={2018},
  publisher={Oxford University Press}
}

@article{ren2023knockoffs,
  title={Knockoffs with side information},
  author={Ren, Zhimei and Cand{\`e}s, Emmanuel},
  journal={The Annals of Applied Statistics},
  volume={17},
  number={2},
  pages={1152--1174},
  year={2023},
  publisher={Institute of Mathematical Statistics}
}

@article{lei2017star,
  title={STAR: A general interactive framework for FDR control under structural constraints},
  author={Lei, Lihua and Ramdas, Aaditya and Fithian, William},
  journal={arXiv preprint arXiv:1710.02776},
  year={2017}
}

@article{ignatiadis2021covariate,
  title={Covariate powered cross-weighted multiple testing},
  author={Ignatiadis, Nikolaos and Huber, Wolfgang},
  journal={Journal of the Royal Statistical Society Series B: Statistical Methodology},
  volume={83},
  number={4},
  pages={720--751},
  year={2021},
  publisher={Oxford University Press}
}

@article{ignatiadis2016data,
  title={Data-driven hypothesis weighting increases detection power in genome-scale multiple testing},
  author={Ignatiadis, Nikolaos and Klaus, Bernd and Zaugg, Judith and Huber, Wolfgang},
  journal={Nature methods},
  volume={13},
  number={7},
  pages={577--580},
  year={2016},
  publisher={Nature Publishing Group US New York}
}

@article{xu2023more,
  title={More powerful multiple testing under dependence via randomization},
  author={Xu, Ziyu and Ramdas, Aaditya},
  journal={arXiv preprint arXiv:2305.11126},
  year={2023}
}

@misc{bai2024optimizedconformalselectionpowerful,
      title={Optimized Conformal Selection: Powerful Selective Inference After Conformity Score Optimization}, 
      author={Tian Bai and Ying Jin},
      year={2024},
      eprint={2411.17983},
      archivePrefix={arXiv},
      primaryClass={stat.ME},
      url={https://arxiv.org/abs/2411.17983}, 
}

@article{lee2025fullconformalnoveltydetectionpowerful,
  title={Full-conformal novelty detection: A powerful and non-random approach},
  author={Lee, Junu and Popov, Ilia and Ren, Zhimei},
  journal={arXiv preprint arXiv:2501.02703},
  year={2025}
}

@article{benjamini2007false,
  title={False discovery rates for spatial signals},
  author={Benjamini, Yoav and Heller, Ruth},
  journal={Journal of the American Statistical Association},
  volume={102},
  number={480},
  pages={1272--1281},
  year={2007},
  publisher={Taylor \& Francis}
}

@article{sesia2020multi,
  title={Multi-resolution localization of causal variants across the genome},
  author={Sesia, Matteo and Katsevich, Eugene and Bates, Stephen and Cand{\`e}s, Emmanuel and Sabatti, Chiara},
  journal={Nature communications},
  volume={11},
  number={1},
  pages={1093},
  year={2020},
  publisher={Nature Publishing Group UK London}
}

@article{katsevich2019multilayer,
  title={Multilayer knockoff filter: Controlled variable selection at multiple resolutions},
  author={Katsevich, Eugene and Sabatti, Chiara},
  journal={The annals of applied statistics},
  volume={13},
  number={1},
  pages={1},
  year={2019}
}

@article{foygel2015p,
  title={The p-filter: multilayer false discovery rate control for grouped hypotheses},
  author={Barber, Rina Foygel and Ramdas, Aaditya},
  journal={Journal of the Royal Statistical Society Series B: Statistical Methodology},
  volume={79},
  number={4},
  pages={1247--1268},
  year={2017},
  publisher={Oxford University Press}
}

@article{spector2024controlled,
  title={Controlled discovery and localization of signals via Bayesian linear programming},
  author={Spector, Asher and Janson, Lucas},
  journal={Journal of the American Statistical Association},
  pages={1--12},
  year={2024},
  publisher={Taylor \& Francis}
}

@article{sun2015false,
  title={False discovery control in large-scale spatial multiple testing},
  author={Sun, Wenguang and Reich, Brian J and Tony Cai, T and Guindani, Michele and Schwartzman, Armin},
  journal={Journal of the Royal Statistical Society Series B: Statistical Methodology},
  volume={77},
  number={1},
  pages={59--83},
  year={2015},
  publisher={Oxford University Press}
}

@article{wu2024optimal,
  title={Optimal subsampling via predictive inference},
  author={Wu, Xiaoyang and Huo, Yuyang and Ren, Haojie and Zou, Changliang},
  journal={Journal of the American Statistical Association},
  volume={119},
  number={548},
  pages={2844--2856},
  year={2024},
  publisher={Taylor \& Francis}
}

@article{huo2024real,
  title={Real-time selection under general constraints via predictive inference},
  author={Huo, Yuyang and Lu, Lin and Ren, Haojie and Zou, Changliang},
  journal={Advances in Neural Information Processing Systems},
  volume={37},
  pages={61267--61305},
  year={2024}
}

@article{rava2024burdensharedburdenhalved,
  title={A burden shared is a burden halved: A fairness-adjusted approach to classification},
  author={Rava, Bradley and Sun, Wenguang and James, Gareth M and Tong, Xin},
  journal={arXiv preprint arXiv:2110.05720},
  year={2021}
}

@article{xu2021unified,
  title={A unified framework for bandit multiple testing},
  author={Xu, Ziyu and Wang, Ruodu and Ramdas, Aaditya},
  journal={Advances in Neural Information Processing Systems},
  volume={34},
  pages={16833--16845},
  year={2021}
}

@article{bajusz2015tanimoto,
  title={Why is Tanimoto index an appropriate choice for fingerprint-based similarity calculations?},
  author={Bajusz, D{\'a}vid and R{\'a}cz, Anita and H{\'e}berger, K{\'a}roly},
  journal={Journal of cheminformatics},
  volume={7},
  pages={1--13},
  year={2015},
  publisher={Springer}
}

@article{simpson1949measurement,
  title={Measurement of diversity},
  author={Simpson, Edward H},
  journal={nature},
  volume={163},
  number={4148},
  pages={688--688},
  year={1949},
  publisher={Nature Publishing Group UK London}
}

@article{o2015adaptive,
  title={Adaptive restart for accelerated gradient schemes},
  author={O’Donoghue, Brendan and Cand{\`e}s, Emmanuel},
  journal={Foundations of computational mathematics},
  volume={15},
  pages={715--732},
  year={2015},
  publisher={Springer}
}

@article{kiwiel2008breakpoint,
  title={Breakpoint searching algorithms for the continuous quadratic knapsack problem},
  author={Kiwiel, Krzysztof C},
  journal={Mathematical Programming},
  volume={112},
  pages={473--491},
  year={2008},
  publisher={Springer}
}

@article{held1974validation,
  title={Validation of subgradient optimization},
  author={Held, Michael and Wolfe, Philip and Crowder, Harlan P},
  journal={Mathematical programming},
  volume={6},
  pages={62--88},
  year={1974},
  publisher={Springer}
}

@article{helgason1980polynomially,
  title={A polynomially bounded algorithm for a singly constrained quadratic program},
  author={Helgason, R and Kennington, J and Lall, H},
  journal={Mathematical Programming},
  volume={18},
  pages={338--343},
  year={1980},
  publisher={Springer}
}

@article{aps2022mosek,
  title={Mosek optimizer API for python},
  author={ApS, Mosek},
  journal={Version},
  volume={9},
  number={17},
  pages={6--4},
  year={2022}
}

@article{vovk2021values,
  title={E-values: Calibration, combination and applications},
  author={Vovk, Vladimir and Wang, Ruodu},
  journal={The Annals of Statistics},
  volume={49},
  number={3},
  pages={1736--1754},
  year={2021},
  publisher={Institute of Mathematical Statistics}
}

\newpage
\appendix
\section{Proofs of main results}
\subsection{Proof of Theorem~\ref{stopping-time-thm}}\label{appendix:main-thm-proof}
We must show that \begin{equation}\label{eq:e-value-equation}\bE\left[\frac{(n+1)\mathds{1}\{\widehat{V}_i \leq W_{(\tau)}\}\mathds{1}\{Y_{n+i} \leq 0\}}{1 + n - N_{\tau}^{\Above}}\right] \leq 1\end{equation} for any $\left(\mathcal{F}_t\right)_{t=0}^{n+m}$-stopping time $\tau$ and each $i \in [m]$. The idea is to relate the above expectation to one involving a stopping time and filtration obtained by replacing the $i^{\text{th}}$ imputed score $\widehat{V}_i$ with the (unobserved) oracle score $V^\star_i := V^{\clip}(X_{n+i}, Y_{n+i})$. On this $i^{\text{th}}$ ``oracle version,'' we show that quantity inside the expectation, viewed as a sequence in $t$, is a backwards supermartingale, and then conclude by the optional stopping theorem.

To facilitate the analysis, we will consider $i^{\text{th}}$ oracle versions of the quantities which define the filtration $(\mathcal{F}_t)$. More precisely, define \[\bW(\loo) := (V_1, \ldots, V_n, \widehat{V}_1, \ldots, \widehat{V}_{i-1}, V^\star_i, \widehat{V}_{i+1}, \ldots, \widehat{V}_m),\] with $W_{(1)}(\loo) \leq \cdots \leq W_{(n+m)}(\loo)$ denoting the sorted values of $\bW(\loo)$ and $W_{(0)}(\loo) := -\infty$. Also, let $\pi_{\sort}^{\loo}$ denote the permutation for which $(W_{\pi_{\sort}^{\loo}(1)}(\loo), \ldots, W_{\pi_{\sort}^{\loo}(n+m)}(\loo)) = (W_{(1)}(\loo), \ldots, W_{(n+m)}(\loo))$. Before proceeding, we pause to discuss precisely how $\pi_{\sort}^{\loo}$ breaks ties among scores equal to infinity. 

\begin{remark}[Ties among $i^{\text{th}}$ oracle scores]\label{rem:pi-sort}
    
    In the main text, ties in the observed score list $\bW$ were allowed to be broken arbitrarily. This is because the assumption of no ties among the $\hat{\mu}(X)$'s guaranteed that only calibration scores could be tied (in particular, only those calibration scores equal to infinity). Turning to the $i^{\text{th}}$ oracle scores, however, there may now be ties between calibration scores equal to infinity and the $i^{\text{th}}$ oracle test score and we must define the sorting permutation $\pi_{\sort}^{\loo}$ more carefully. In particular, we define $\pi_{\sort}^{\loo}$ so that it maintains the same ordering as $\pi_{\sort}$ on the calibration scores (i.e., it breaks ties among calibration scores equal to infinity in the same way that $\pi_{\sort}$ does) but then breaks ties with $V^\star_i$ randomly. More specifically, if $V^\star_i$ along with $k$ calibration scores $V_{\ell_1}, \ldots, V_{\ell_k}$ are all equal to infinity, then we sample $U \sim \text{Unif}\{n+m-k-1, \ldots, n+m\}$ and set $(\pi_{\sort}^{\loo})^{-1}(n+i) = U$ and break ties among $V_{\ell_1}, \ldots, V_{\ell_k}$ according to the rule $(\pi_{\sort}^{\loo})^{-1}(\ell_j) \leq (\pi_{\sort}^{\loo})^{-1}(\ell_{j'}) \iff (\pi_{\sort})^{-1}(\ell_{j}) \leq (\pi_{\sort})^{-1}(\ell_{j'})$ for $j,j' \in [k]$. 
\end{remark}

Having defined the sorted $i^{\text{th}}$ oracle scores, set $B_s(\loo) := \mathds{1}\left\{\pi_{\sort}^{\loo}(s) \leq n\right\}$ and $N_t^{\Above}(\loo) = \sum_{s=t+1}^{n+m}B_s(\loo)$. Finally, we define the $i^{\text{th}}$ oracle filtration at time $t$ as \[\mathcal{F}_t(\loo) := \sigma\left(B_{t+1}(\loo), \ldots, B_{n+m}(\loo), \bZ^{()}(\loo)\right),\] where analogously to the main text, $\bZ^{()}(\loo)$ is the list of $\bW(\loo)$-sorted diversification variables $Z_i, i \in [n+m]$ (i.e., $Z_{(t)}(\loo) = Z_{\pi_{\sort}^{\loo}(t)}(\loo)$).

We will now construct a $\left(\mathcal{F}_t(\loo)\right)_{t=0}^{n+m}$-stopping time $\tau(\loo)$ from the $\left(\mathcal{F}_t\right)_{t=0}^{n+m}$-stopping time $\tau$. To do so, recall that since $\tau$ is a $\left(\mathcal{F}_t\right)_{t=0}^{n+m}$-stopping time, the indicators $\mathds{1}\left\{\tau < s\right\}$ are $\mathcal{F}_s$-measurable for each $s \in [n+m]$. Recalling that \[\mathcal{F}_t := \sigma\left(B_{t+1}, B_{t+2}, \ldots, B_{n+m}, \Zsort\right),\] this implies that there exist measurable functions $g_s: \{0,1\}^{n+m-s} \times \mathcal{Z}^{n+m} \rightarrow \{0,1\}$ such that \[\mathds{1}\left\{\tau < s\right\} = g_s(B_{s+1}, \ldots, B_{n+m}, \Zsort)\] for each $s \in [n+m]$. To construct $\tau(\loo)$, we will simply replace the arguments to $g_s$ with their $i^{\text{th}}$ oracle versions and then take a running minimum. More specifically, defining the functions \[h_s(B_{s+1}(\loo), \ldots, B_{n+m}(\loo), \Zsort(\loo)) := \min_{t=s}^{n+m}g_t(B_{t+1}(\loo), \ldots, B_{n+m}(\loo), \Zsort(\loo)),\] we construct the $i^{\text{th}}$ oracle stopping time $\tau(\loo)$ as \[\tau(\loo) := \max\left\{s \in [n+m]: h_s(B_{s+1}(\loo), \ldots, B_{n+m}(\loo), \Zsort(\loo)) = 0\right\}.\] By construction, we have that \[\mathds{1}\left\{\tau(\loo) < s\right\} = h_s(B_{s+1}(\loo), \ldots, B_{n+m}(\loo), \Zsort(\loo)), \forall s \in [0:n+m].\] Because the right-hand-side is $\mathcal{F}_s(\loo)$-measurable, $\tau(\loo)$ is a $\left(\mathcal{F}_t(\loo)\right)_{t=0}^{n+m}$-stopping time.

We will now relate $\tau$ to $\tau(\loo)$ on the event that $\{Y_{n+i} \leq 0\}$. Because we are using the clipped score, we have that $\widehat{V}_i = V^\star_i$ on this event and hence $\bW(\loo) = \bW$. Consequently, on the event that $\{Y_{n+i} \leq 0\}$, \[(B_1(\loo), \ldots, B_{n+m}(\loo), \Zsort(\loo)) = (B_1, \ldots, B_{n+m}, \Zsort),\] and the random variables which generate $\mathcal{F}_s$ are equal to their $i^{\text{th}}$ oracle counterparts which generate $\mathcal{F}_s(\loo)$, for each $s \in [0:n+m]$. There are two important consequences of this fact on the event that $\{Y_{n+i} \leq 0\}$: 

\begin{subequations}\label{eq:subeqns}
     \begin{align}
      \tau(\loo) =  \max\left\{s \in [n+m]: g_s(B_{s+1}, \ldots, B_{n+m}, \Zsort) = 0\right\} = \tau, \label{eq:subeq1}\\
   \text{and}~~   N_{\tau(\loo)}^{\Above}(\loo) = N_{\tau}^{\Above}. \label{eq:subeq2}
     \end{align}
    \end{subequations}

As a result of equalities~\eqref{eq:subeq1}--\eqref{eq:subeq2}, we obtain the deterministic equality \[\frac{(n+1)\mathds{1}\{\widehat{V}_i \leq W_{(\tau)}\}\mathds{1}\{Y_{n+i} \leq 0\}}{1 + n - N_{\tau}^{\Above}} = \frac{(n+1)\mathds{1}\{V^\star_i \leq W_{(\tau(\loo))}(\loo)\}\mathds{1}\{Y_{n+i} \leq 0\}}{1 + n - N_{\tau(\loo)}^{\Above}(\loo)}.\footnote{Because we are using the clipped score, it is possible that both $ W_{(\tau(\loo))}(\loo) = \infty$  and $V^\star_i = \infty$ (though the second equality is impossible on the event that $Y_{n+i} \leq 0$). As such, the inequality ``$V^\star_i \leq W_{(\tau(\loo))}$'' should be read as saying: $(\pi_{\sort}^{\loo})^{-1}(n+i) \leq \tau(\loo)$. We opt for the former notation at various points throughout this Appendix section for ease of readability.}\]

The right-hand-side is further upper-bounded upon removing the indicator that $Y_{n+i} \leq 0$, and hence it suffices to show that \[\mathbb{E}\left[\frac{(n+1)\mathds{1}\{V^\star_i \leq W_{(\tau(\loo))}(\loo)\}}{1 + n - N_{\tau(\loo)}^{\Above}(\loo)}\right] \leq 1.\]

Because $N_{n+m}^{\Above}(\loo) = 0$ and $\tau(\loo)$ is a $\left(\mathcal{F}_t(\loo)\right)_{t=0}^{n+m}$-stopping time, this will follow immediately by the optional stopping theorem if it can be shown that the sequence \begin{equation}\label{loo-mg}M_t := \frac{(n+1)\mathds{1}\{V^\star_i \leq W_{(t)}(\loo)\}}{1 + n - N_{t}^{\Above}(\loo)}\end{equation} is a backwards supermartingale with respect to a backwards filtration containing $\left(\mathcal{F}_t(\loo)\right)_{t=0}^{n+m}$. But~\eqref{loo-mg} is precisely the backwards supermartingale that arises in the (deterministic) outlier detection problem, and the proof of Proposition 11 of \cite{lee2024boostingebhconditionalcalibration}, applied to the negative of $\bW(\loo)$, establishes that it is a backwards supermartingale.

\subsection{Proof that $e_1^{(\tau_{\bh})}, \ldots, e_m^{(\tau_{\bh})}$ are e-values}\label{appendix:ebh-evalues}
In this section, we define the BH stopping time and establish that the variables $\left(e_1^{(\tau_{\bh})}, \ldots, e_m^{(\tau_{\bh})}\right)$ are e-values. The BH stopping time is the FDP-estimator-based stopping time given by the equation \begin{equation}\label{def:bh-st}\tau_{\bh} := \max\left\{t \in [n+m]: \frac{m}{n+1} \cdot \frac{1 + \sum_{s=1}^t B_s}{1 \vee \sum_{s=1}^t (1-B_s)} \leq \alpha\right\},\end{equation} where $\tau_{\bh} := 0$ if the above set is empty. This stopping time viewpoint of the BH procedure for conformal p-values is from \cite{mary2022semi}.

In view of Theorem~\ref{stopping-time-thm}, it suffices just to show that $\tau_{\bh}$ is a $\left(\mathcal{F}_t\right)_{t=0}^{n+m}$-stopping time.

\begin{corollary}
    The variables $\left(e_1^{(\tau_{\bh})}, \ldots, e_m^{(\tau_{\bh})}\right)$ are e-values.
\end{corollary}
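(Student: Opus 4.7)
The plan is to reduce the claim to a stopping-time verification, as the hint suggests. By Theorem~\ref{stopping-time-thm}, the variables $(e_i^{(\tau_{\bh})})_{i=1}^m$ satisfy the e-value property~\eqref{null-eval} as soon as $\tau_{\bh}$ is a stopping time with respect to the backwards filtration $(\mathcal{F}_t)_{t=0}^{n+m}$. Following the convention used in the proof of Theorem~\ref{stopping-time-thm}, this means I need to show that the indicator $\mathds{1}\{\tau_{\bh} < s\}$ is $\mathcal{F}_s$-measurable for every $s \in [n+m]$.

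The key observation is that although the BH criterion inside~\eqref{def:bh-st} is written in terms of the full prefix $B_1, \ldots, B_t$, it collapses to a function of the single $\mathcal{F}_t$-measurable quantity $N_t^{\Above}$. Indeed, since the $B_s$'s sum to the total number of calibration points $n$, we have $\sum_{s'=1}^{t} B_{s'} = n - N_t^{\Above}$ and hence $\sum_{s'=1}^{t}(1 - B_{s'}) = t - n + N_t^{\Above}$. The BH condition therefore becomes
\[
\frac{m}{n+1}\cdot \frac{1 + n - N_t^{\Above}}{1 \vee (t - n + N_t^{\Above})} \;\leq\; \alpha,
\]
so that the set in~\eqref{def:bh-st} is determined by $(N_t^{\Above})_{t=1}^{n+m}$.

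Now I would unpack the measurability. Since $N_t^{\Above} = \sum_{s'=t+1}^{n+m} B_{s'}$ is a function of $B_{t+1}, \ldots, B_{n+m}$, it is $\mathcal{F}_s$-measurable whenever $t \geq s$, because then $\mathcal{F}_s = \sigma(B_{s+1},\ldots,B_{n+m},\Zsort) \supseteq \sigma(B_{t+1},\ldots,B_{n+m})$. Consequently, for each $t \geq s$ the event
\[
A_t := \left\{\frac{m}{n+1}\cdot \frac{1 + n - N_t^{\Above}}{1 \vee (t - n + N_t^{\Above})} \leq \alpha\right\}
\]
lies in $\mathcal{F}_s$. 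Writing $\{\tau_{\bh} \geq s\} = \bigcup_{t=s}^{n+m} A_t$, we conclude that $\{\tau_{\bh} < s\} \in \mathcal{F}_s$, so $\tau_{\bh}$ is a stopping time and Theorem~\ref{stopping-time-thm} applies.

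There is essentially no obstacle here, as the argument is pure bookkeeping once one notices the identity $\sum_{s'=1}^t B_{s'} = n - N_t^{\Above}$. The only conceptual point worth stressing is that even though $\tau_{\bh}$ is originally defined via an expression involving all of $B_1,\ldots,B_t$, this sum constraint makes the criterion a function of the filtration-adapted statistic $N_t^{\Above}$ alone, which is what enables the stopping-time property to hold in the backwards filtration.
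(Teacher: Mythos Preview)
Your proposal is correct and follows essentially the same approach as the paper: both rewrite the BH criterion $\frac{m}{n+1}\cdot\frac{1+\sum_{s'=1}^t B_{s'}}{1\vee\sum_{s'=1}^t(1-B_{s'})}$ in terms of $N_t^{\Above}$ via the identity $\sum_{s'=1}^t B_{s'}=n-N_t^{\Above}$, observe that the resulting expression is $\mathcal{F}_t$-measurable, conclude that $\tau_{\bh}$ is a $(\mathcal{F}_t)_{t=0}^{n+m}$-stopping time, and then invoke Theorem~\ref{stopping-time-thm}. Your version is slightly more explicit in writing out $\{\tau_{\bh}\geq s\}=\bigcup_{t=s}^{n+m}A_t$, but the substance is identical.
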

\begin{proof}
    As discussed, it suffices to show that $\tau_{\bh}$ is a $\left(\mathcal{F}_{t=0}^{n+m}\right)_t$-stopping time. Recall that \[\tau_{\bh} := \max\left\{t \in [n+m]: \frac{m}{n+1} \cdot \frac{1 + \sum_{s=1}^t B_s}{1 \vee \sum_{s=1}^t (1-B_s)} \leq \alpha\right\}.\] Rewriting \[\frac{m}{n+1} \cdot \frac{1 + \sum_{s=1}^t B_s}{1 \vee \sum_{s=1}^t (1-B_s)} = \frac{m}{n+1} \cdot \frac{1 + n - N_t^{\Above}}{1 \vee (t - n + N_t^{\Above})},\] we find that this quantity is $\left(\mathcal{F}\right)_{t=0}^{n+m}$-measurable and hence $\tau_{\bh}$ is a $\left(\mathcal{F}_t\right)_{t=0}^{n+m}$-stopping time. The result then follows by Theorem~\ref{stopping-time-thm}.
\end{proof}

\subsection{Proof of Theorem~\ref{thm:self-consistency-validity}}\label{appendix:self-consistency}
The proof is essentially identical to that of \citet[][Proposition 2]{wang2022false}. Suppose that the selection set $\mathcal{R}$ is self-consistent. Then its FDR is

\begin{align*}
    \bE\left[\frac{\sum_{i=1}^m \mathds{1}\{i \in \mathcal{R}\}\mathds{1}\{Y_{n+i} \leq 0\}}{1 \vee |\mathcal{R}|}\right] &\leq \bE\left[\frac{\sum_{i=1}^m \alpha e_i\mathds{1}\{i \in \mathcal{R}\}\mathds{1}\{Y_{n+i} \leq 0\}}{m}\right]\\
    &\leq \frac{\alpha}{m}\sum_{i=1}^m \bE[e_i\mathds{1}\{Y_{n+i} \leq 0\}]\\
    &\leq \alpha
\end{align*} where the first inequality is by self-consistency and the last inequality is because $e_1, \ldots, e_m$ are e-values.

\subsection{Proof of Proposition~\ref{prop:support}}\label{appendix:support}
First observe that the set $\Omega_t$ is $\mathcal{F}_{\tau_{\bh}}$-measurable since both $N_{\tau_{\bh}}^{\Above}$ and $\tau_{\bh}$ are. First we show that $\bP_{\exch}(N_t^{\Above} \in \Omega_t \mid \mathcal{F}_{\tau_{\bh}}) = 1$ for which it suffices to show that \[\bP_{\exch}(N_t^{\Above}  = s_t \mid N_{\tau_{\bh}}^{\Above}, \tau_{\bh}) = 0\] for any $s_t$ belonging to one of the four following cases: 1.~$s_t < N_{\tau_{\bh}}^{\Above}$, 2.~$s_t < n-t$, 3.~$s_t > n$, 4.~$s_t > \tau_{\bh} - t + N_{\tau_{\bh}}^{\Above}$. We handle these four cases presently:
\begin{enumerate}
    \item The sequence $N_t$ is monotone non-increasing in $t$ and hence $s_t < N_{\tau_{\bh}}^{\Above}$ is impossible as $t \leq \tau_{\bh}$.
    \item Observe that $N_t^{\Above} + t$ is an upper-bound on $n$, the total number of calibration data, thereby implying that $s_t < n-t$ is impossible.
    \item The number of calibration points above $W_{(t)}$, $N_t^{\Above}$, cannot be more than the total number of calibration points $n$.
    \item For $t \leq \tau_{\bh}$, the largest possible increase from $N_{\tau_{\bh}}^{\Above}$ to $N_{t}^{\Above}$ occurs if $B_{t+1} = \cdots = B_{\tau_{\bh}} = 1$, implying that $N_t^{\Above} \leq N_{\tau_{\bh}}^{\Above} + \tau_{\bh} - t$.
\end{enumerate}

Now, we show that $\bP_{\exch}(N_t^{\Above} = s_t \mid \mathcal{F}_{\tau_{\bh}}) > 0$ for all $s_t \in \Omega_t$. To do we must exhibit a sequence $(b_{1}, \ldots, b_{\tau_{\bh}})$ such that $s_t = N_{\tau_{\bh}}^{\Above} + b_{t+1} + \cdots + b_{\tau_{\bh}}$ and $P(B_1=b_1, \ldots, B_{\tau_{\bh}} =b_{\tau_{\bh}} \mid \mathcal{F}_{\tau_{\bh}}) > 0$. By the exchangeability guaranteed under the exchangeable distribution $\bP_{\exch}$, we have that $(B_1, \ldots, B_{\tau_{\bh}})$ is an exchangeable binary vector containing $n - N_{\tau_{\bh}}^{\Above}$ many $1$'s. Thus, the vector \[(b_1, \ldots, b_{\tau_{\bh}}) = \big(\underbrace{1, \ldots, 1}_{n - s_t \text{ times}}, \underbrace{0, \ldots, 0}_{\tau_{\bh} - n + N_{\tau_{\bh}}^{\Above}}, \underbrace{1, \ldots, 1}_{s_t - N_{\tau_{\bh}}^{\Above} \text{ times}}\big)\] occurs with positive probability. Because $s_t \in \Omega_t$, we have that $b_t = 0$ or $b_t$ is the last $1$ in the first block of $1$'s and consequently $s_t = N_{\tau_{\bh}}^{\Above} + b_{t+1} + \cdots + b_{\tau_{\bh}}$ with $P(B_1=b_1, \ldots, B_{\tau_{\bh}} =b_{\tau_{\bh}} \mid \mathcal{F}_{\tau_{\bh}}) > 0$ as desired.

\subsection{Proof of Proposition~\ref{global-null-prop}}\label{appendix:reward-prob}

\begin{enumerate}[(i)]
    \item The variables $\varepsilon_i^{(t)}(\bb)$ are the e-values for $\pi_{\sort}(1), \ldots, \pi_{\sort}(t)$ if we stop at time $t$ and have $B_1 = b_1, \ldots, B_t = b_t$. Since all other e-values will be equal to zero, per the definition~\eqref{stopping-time-based-evals}, the right-hand-side of equation~\eqref{go-opt-def} is indeed equal to the optimal diversity value among all selection sets which are self-consistent with respect to the e-values $e_1^{(t)}, \ldots, e_m^{(t)}$ when $B_1 = b_1, \ldots, B_t = b_t$. 

    \item By the exchangeability guaranteed by the exchangeable distribution, the vector $(B_1, \ldots, B_t)$ is uniform on the set $\mathcal{B}(t,s_t)$ when $N_t^{\Above} = s_t$. Equation~\eqref{mble-rewards} is then precisely the expectation with respect to this uniform distribution on $(B_1, \ldots, B_t)$.
\end{enumerate}

\subsection{Proof of Proposition~\ref{snell-construction-prop}}\label{appendix:snell-construction}
By the definition of the Snell envelope update given in equation~\eqref{snell}, it suffices to show that $\bP_{\exch}(B_t=1 \mid \mathcal{F}_{t}) = \frac{n-N_t^{\Above}}{t}$ for each $ t \leq \tau_{\bh}$. Because, for each $t \leq \tau_{\bh}$, $(B_1, \ldots, B_{t})$ are exchangeable conditional on $\mathcal{F}_t$ and independent from $\Zsort$ under the exchangeable distribution $\bP_{\exch}$, this follows immediately by applying \citet[][Lemma E.2]{mary2022semi} (which is restated below in Lemma~\ref{lemma:special-case} for the reader's convenience) conditional on $\mathcal{F}_{\tau_{\bh}}$.

\begin{lemma}[Special case of {\citet[][Lemma E.2]{mary2022semi}}]\label{lemma:special-case}
    Let $(u_1, \ldots, u_{n+1}) \in \{0,1\}$ be an exchangeable vector containing $n$ many $1$'s. Let $V = \sum_{s=1}^{t+1} u_s$. Then \[P(u_{t+1} = 1\mid u_{t+2}, \ldots, u_{n+1}) = \frac{V}{t+1}.\]
\end{lemma}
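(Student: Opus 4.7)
The plan is to derive the identity by combining two elementary observations: that conditional exchangeability follows from unconditional exchangeability of the full vector, and that $V$ is a deterministic function of the conditioning variables.

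First I would establish that $(u_1, \ldots, u_{t+1})$ is exchangeable conditional on $(u_{t+2}, \ldots, u_{n+1})$. To see this, for any permutation $\sigma$ of $[t+1]$, define the permutation $\pi$ of $[n+1]$ that acts as $\sigma$ on $\{1,\dots,t+1\}$ and as the identity on $\{t+2,\dots,n+1\}$. Exchangeability of the full vector gives
\[
(u_{\sigma(1)}, \ldots, u_{\sigma(t+1)}, u_{t+2}, \ldots, u_{n+1}) \;\overset{d}{=}\; (u_1, \ldots, u_{n+1}),
\]
and since this matches the full vector on coordinates $t+2, \ldots, n+1$ both almost surely and in distribution, the conditional law of $(u_{\sigma(1)}, \ldots, u_{\sigma(t+1)})$ given $(u_{t+2},\ldots,u_{n+1})$ agrees with that of $(u_1,\ldots,u_{t+1})$ given $(u_{t+2},\ldots,u_{n+1})$. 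As a consequence, $P(u_i = 1 \mid u_{t+2}, \ldots, u_{n+1})$ takes the same value for every $i \in [t+1]$.

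Next I would observe that $V = \sum_{s=1}^{t+1} u_s$ is $\sigma(u_{t+2},\ldots,u_{n+1})$-measurable, since the vector has exactly $n$ ones and therefore
\[
V \;=\; n - \sum_{s=t+2}^{n+1} u_s.
\]
Taking conditional expectation of the identity $\sum_{i=1}^{t+1} u_i = V$ given $(u_{t+2},\ldots,u_{n+1})$ and using linearity together with the common value from the previous step yields
\[
(t+1)\, P(u_{t+1} = 1 \mid u_{t+2}, \ldots, u_{n+1}) \;=\; E\!\left[\sum_{i=1}^{t+1} u_i \;\Big|\; u_{t+2}, \ldots, u_{n+1}\right] \;=\; V,
\]
which gives the claimed formula upon dividing by $t+1$.

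There is no substantial obstacle here; the only point requiring a bit of care is verifying that conditional exchangeability really is inherited from the unconditional exchangeability, which is what the permutation-extension argument handles. Everything else is linearity of expectation combined with the measurability of $V$ with respect to the conditioning $\sigma$-algebra.
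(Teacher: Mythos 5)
Your proof is correct. The paper itself does not prove this lemma---it is stated only as a restatement of Lemma E.2 from \citet{mary2022semi}, cited without proof---so there is no in-paper argument to compare against, but your direct argument is the natural one and it is sound. The three steps you use (the first $t+1$ coordinates remain exchangeable conditional on the last $n-t$, because any permutation of $[t+1]$ extends to a permutation of $[n+1]$ fixing the tail; $V = n - \sum_{s=t+2}^{n+1}u_s$ is measurable with respect to the conditioning $\sigma$-algebra since the total number of ones is fixed; and taking conditional expectation of $\sum_{i=1}^{t+1}u_i = V$ and using the common conditional marginal from exchangeability) are exactly what is needed, and nothing is missing. The only cosmetic point is that the statement writes $(u_1,\ldots,u_{n+1}) \in \{0,1\}$ where $\{0,1\}^{n+1}$ is clearly intended, which you implicitly and correctly read through.
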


\section{Auxiliary results and proofs}
\subsection{Proofs for Section~\ref{dacs-underrep}: underrepresentation index}
In this section, we present proofs for results in the main text regarding the application of DACS to the underrepresentation index as discussed in Section~\ref{dacs-underrep}.
\subsubsection{Proof of Proposition~\ref{underrep-opt-value-closed-form}}\label{app:underrep-opt-value-closed-form}
We break into the three cases on the right-hand-side of equation~\eqref{eq:underrep-reward-closed-form}.
\begin{enumerate}
    \item Case 1: $\mathop{\min}_{c\in[C]}  N_t^{c, \testbelow} \geq K_t/C$. Consider the selection set $\mathcal{R}$ which contains exactly $\lceil K_t/C \rceil$ candidates of each category $c \in [C]$, which is feasible since $\mathop{\min}_{c\in[C]}  N_t^{c, \testbelow} \geq K_t$. Furthermore, it is straightforward to see that it achieves the maximal underrepresentation index value of $1/C$. 
    Therefore, it remains to show that such a set satisfies self-consistency. This follows from the fact that $K_t/C = \big\lceil\frac{m}{e^{(t)}_i\alpha}\big\rceil/C\geq \frac{m}{e^{(t)}_i\alpha C}$ for any $i \in \mathcal{R}$ and hence \[|\mathcal{R}| = C\lceil K_t/C\rceil \geq \frac{m}{\alpha e_i^{(t)}}.\] Rearranging yields the self-consistency inequality.

    \item Case 2: $n - N_t^{\Above} > \rho_t$. To show that $O_t=-1/C$, it suffices to show that the only self-consistent selection set with respect to the e-values $e_1^{(t)}, \ldots, e_m^{(t)}$ is the empty set. Observe that $t - n + N_t^{\Above}$ is equal to the number of test points including and past time $t$. Also notice that $\rho_t \leq t$. We break into two further subcases: (1) $\rho_t < n - N_t^{\Above} = t$ and (2) $\rho_t < n - N_t^{\Above} < t$. In case (1), the number of test points $t - n + N_t^{\Above}$ is equal to zero. Hence, all the e-values are equal to zero and the only self-consistent selection set is indeed the empty set. In case (2), the empty set will be the only self-consistent rejection set if and only if the inequality \begin{equation}\label{app:self-consistency-violate}\frac{n+1}{1+ n-N_t^{\Above}} < \frac{m}{\alpha(t-n+N_t^{\Above})}\end{equation} holds, since $t-n+N_t^{\Above}$ is the largest possible size of a selection set containing non-zero e-values. Rearranging, this inequality is equivalent to $n - N_t^{\Above} > \rho_t$.

    \item Case 3: $\mathop{\min}_{c\in[C]} N_t^{c, \testbelow} < K_t/C$ and $n - N_t^{\Above} \leq \rho_t$. In this case, we first  show that $O_t \geq \mathop{\min}_{c\in[C]} N_t^{c, \testbelow} / K_t$, by finding a selection set that is self-consistent with respect to $e_1^{(t)}, \ldots, e_m^{(t)}$ whose underrepresentation index value is of this value. We have that \begin{equation}\label{app:num-test-below-equality}\sum_{c=1}^C N_t^{c, \testbelow} = t - n + N_t^{\Above}.\end{equation}
    Also, $n - N_t^{\Above} \leq \rho_t$ implies $t - n + N_t^{\Above} \geq K_t$. This implies, by way of equation~\eqref{app:num-test-below-equality} as well as the fact that $\mathop{\min}_{c\in[C]} N_t^{c, \testbelow} < K_t/C$, that there exist values $\{M_c\}_{c\in[C]}$ such that $\mathop{\min}_{c\in [C]}  N_t^{c, \testbelow} \leq M_c \leq N_t^{c, \testbelow}$ and $\sum_{c=1}^C M_c = K_t$. The selection set $\mathcal{R}$ which contains (any arbitrary) $M_c$ candidates belonging to category $c$ whose scores are below $W_{(t)}$ has underrepresentation index value greater than or equal to $\mathop{\min}_{c\in[C]} N_t^{c, \testbelow} / K_t$. Furthermore, such a selection set is self-consistent because $|\mathcal{R}| = K_t \geq \frac{m}{e_i^{(t)}\alpha}$. 

    Now we show that $O_t \leq \mathop{\min}_{c\in [C]} N_t^{c, \testbelow} / K_t$ for which it suffices to show that the underrepresentation index of any $(e_1^{(t)}, \ldots, e_m^{(t)})$-self-consistent selection set is at most $\mathop{\min}_{c\in [C]} N_t^{c, \testbelow} / K_t$. First, observe that any $(e_1^{(t)}, \ldots, e_m^{(t)})$-self-consistent selection set $\mathcal{R}$ must obey $|\mathcal{R}| \geq K_t$ due to the condition $|\mathcal{R}| \geq m/(\alpha e_i^{(t)})$. Second, the number of occurrences of the least-represented category in $\mathcal{R}$ is at most $\mathop{\min}_{c\in [C]} N_t^{c, \testbelow}$. Hence we must have $O_t \leq \mathop{\min}_{c\in [C]} N_t^{c, \testbelow} / K_t$ as desired.
\end{enumerate}

\subsection{Additional technical results for Section~\ref{dacs-underrep}: underrepresentation index}\label{appendix:underrep-additional}
In this section, we present additional technical results for DACS' computation for the underrepresentation index.

\subsubsection{Reward function computation}\label{appendix:underrep-additional-reward}
Proposition~\ref{underrep-opt-value-closed-form} enables us to express the rewards $R_t$ in terms of expectations under multivariate hypergeometric distributions:
\begin{corollary}\label{underrep-reward-closed-form}
    For each $t \in [\tau_{\bh}]$ and $c \in [C]$, define $N_t^c := \sum_{s=1}^t \mathds{1}\left\{Z_s = c\right\}$ to be the $\Zsort$-measurable (hence fixed, by Remark~\ref{global-conditioning}) random variable equal to the number of points past time $t$ belonging to category $c$. Furthermore, for each fixed $s_t\in \Omega_t$, let $(H^{(t,s_t)}_1, \ldots, H^{(t,s_t)}_C) \sim \text{MultiHypergeom}\left(t-n+s_t, (N_t^1, \ldots, N_t^C)\right)$ be a draw from the multivariate hypergeometric distribution.\footnote{The multivariate hypergeometric distribution $\text{MultiHypergeom}(t-n+s_t, (N_t^1, \ldots, N_t^C))$ is the joint distribution of colors of sampled marbles when $t-n+s_t$ marbles are sampled uniformly without replacement from an urn containing $C$ different colors that starts off with $N_t^c$ many marbles of color $c$.} Let \[S_{t,s_t}(\nu) := \bP\big(\mathop{\min}_{c\in[C]} H_c^{(t,s_t)} \geq \nu\big)\] denote the survival function of $\mathop{\min}_{c\in[C]} H_c^{(t,s_t)}$. Then
    \begin{equation}\label{eq:multivar-hypergeom-reward-survival}
        R_t(s_t) = \mathds{1}\{n-s_t\leq\rho_t\}\left(\frac{S_{t,s_t}\left(\lceil K_t(s_t)/C\rceil\right)}{C} + \frac{\sum_{\nu=1}^{\lceil K_t(s_t)/C\rceil-1}S_{t,s_t}(\nu)}{K_t(s_t)}\right) - \frac{\mathds{1}\{n-s_t>\rho_t\}}{C},
    \end{equation} 
    where $K_t(s_t) := \left\lceil\frac{m(1+n-s_t)}{\alpha (n+1)}\right\rceil$ is value of the random variable $K_t$ defined in Proposition~\ref{underrep-opt-value-closed-form} when $N_t^{\Above} = s_t$.
\end{corollary}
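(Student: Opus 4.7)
\textbf{Proof proposal for Corollary~\ref{underrep-reward-closed-form}.} The plan is to start from Proposition~\ref{global-null-prop}(ii), which expresses $R_t(s_t)$ as a uniform average of $O_t(\bb)$ over $\bb\in\mathcal{B}(t,s_t)$, and then rewrite this combinatorial average as a conditional expectation under $\bP_{\exch}$. By Remark~\ref{global-conditioning} we may treat $\Zsort$ as fixed throughout, and the exchangeability guaranteed by $\bP_{\exch}$ implies that, conditional on $\mathcal{F}_{\tau_{\bh}}$ and on the event $\{N_t^{\Above}=s_t\}$, the vector $(B_1,\dots,B_t)$ is uniform on $\mathcal{B}(t,s_t)$. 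Hence $R_t(s_t) = \bE_{\exch}[O_t \mid \mathcal{F}_t,\,N_t^{\Above}=s_t]$.

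Next I would identify the conditional joint law of $(N_t^{1,\testbelow},\dots,N_t^{C,\testbelow})$. Recall that $B_s=0$ indicates that the $s$th sorted score is a test score and $Z_{(s)}$ determines its category; thus $N_t^{c,\testbelow} = \sum_{s=1}^t \mathds{1}\{B_s=0, Z_{(s)}=c\}$. Given the fixed category labels $(Z_{(1)},\dots,Z_{(t)})$, which contain $N_t^c$ positions of color $c$, the event $\{N_t^{\Above}=s_t\}$ amounts to selecting the $t-n+s_t$ test positions uniformly without replacement among the $t$ positions. By the standard sampling-without-replacement argument, the color counts $(N_t^{1,\testbelow},\dots,N_t^{C,\testbelow})$ then follow exactly the $\text{MultiHypergeom}(t-n+s_t,(N_t^1,\dots,N_t^C))$ distribution, so they are equal in law to $(H_1^{(t,s_t)},\dots,H_C^{(t,s_t)})$.

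Now I invoke Proposition~\ref{underrep-opt-value-closed-form}, observing that on $\{N_t^{\Above}=s_t\}$ the quantity $K_t$ becomes the deterministic constant $K_t(s_t)$, and the indicator $\mathds{1}\{n-N_t^{\Above}>\rho_t\}$ reduces to the deterministic $\mathds{1}\{n-s_t>\rho_t\}$. When $n-s_t>\rho_t$, $O_t\equiv -1/C$ on the whole event, yielding the $-\mathds{1}\{n-s_t>\rho_t\}/C$ contribution. When $n-s_t\leq \rho_t$, writing $M := \min_c H_c^{(t,s_t)}$ and $L := \lceil K_t(s_t)/C\rceil$, Proposition~\ref{underrep-opt-value-closed-form} gives
\[
O_t \;\stackrel{d}{=}\; \frac{1}{C}\,\mathds{1}\{M \geq L\} + \frac{M}{K_t(s_t)}\,\mathds{1}\{M < L\},
\]
where I used that for integer $M$, the condition $M\geq K_t(s_t)/C$ is equivalent to $M\geq L$.

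Taking conditional expectation then reduces the problem to computing $\bP(M\geq L)$ and $\bE[M\,\mathds{1}\{M<L\}]$ in terms of the survival function $S_{t,s_t}$. The first is simply $S_{t,s_t}(L)$, which accounts for the $S_{t,s_t}(L)/C$ term. The second is handled via the standard tail-sum identity $\bE[M\,\mathds{1}\{M<L\}] = \sum_{\nu=1}^{L-1} \bP(M\geq\nu,\,M<L)$, which telescopes into a sum of $S_{t,s_t}(\nu)$'s and yields the $\sum_{\nu=1}^{L-1}S_{t,s_t}(\nu)/K_t(s_t)$ term. The only mild subtlety — and the step that warrants care — is keeping track of the boundary correction at $\nu=L$ when converting between $\bP(M=\nu)$ and survival-function increments; this is the main (minor) obstacle, as the two natural tail-sum identities differ by a multiple of $S_{t,s_t}(L)$, and one must pick the representation that yields exactly~\eqref{eq:multivar-hypergeom-reward-survival}. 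Once the correct identity is chosen, combining the two cases gives the claimed expression for $R_t(s_t)$.
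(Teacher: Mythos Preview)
Your approach is exactly the paper's: identify that under $\bP_{\exch}$ the category counts $(N_t^{c,\testbelow})_c$ are multivariate hypergeometric given $N_t^{\Above}=s_t$, plug in Proposition~\ref{underrep-opt-value-closed-form}, and reduce to a survival-function identity for the minimum. The paper's own proof is even terser---it writes down the analogue of your displayed expectation and then simply invokes ``the expectation of a non-negative integer random variable equals the sum of its survival function.''

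That said, the ``mild subtlety'' you flag is not as innocuous as you suggest, and your claim that ``one must pick the representation that yields exactly~\eqref{eq:multivar-hypergeom-reward-survival}'' does not go through. With $L=\lceil K_t(s_t)/C\rceil$ and $M=\min_c H_c^{(t,s_t)}$, your own identity gives
\[
\bE\big[M\,\mathds{1}\{M<L\}\big]=\sum_{\nu=1}^{L-1}\bP(\nu\le M<L)=\sum_{\nu=1}^{L-1}\big[S_{t,s_t}(\nu)-S_{t,s_t}(L)\big]=\sum_{\nu=1}^{L-1}S_{t,s_t}(\nu)\;-\;(L-1)\,S_{t,s_t}(L),
\]
so the full expectation on $\{n-s_t\le\rho_t\}$ is
\[
\frac{S_{t,s_t}(L)}{C}+\frac{1}{K_t(s_t)}\sum_{\nu=1}^{L-1}S_{t,s_t}(\nu)\;-\;\frac{(L-1)\,S_{t,s_t}(L)}{K_t(s_t)},
\]
which differs from~\eqref{eq:multivar-hypergeom-reward-survival} by the last term. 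No alternative tail-sum ``representation'' removes it: for instance, with $C=2$, $K_t(s_t)=4$, $L=2$, one has $R_t(s_t)=\tfrac{1}{2}\bP(M\ge2)+\tfrac{1}{4}\bP(M=1)$, whereas~\eqref{eq:multivar-hypergeom-reward-survival} evaluates to $\tfrac{3}{4}\bP(M\ge2)+\tfrac{1}{4}\bP(M=1)$. The paper's one-line proof glosses over this same point, so the gap is shared; but your proposal should not present it as a mere bookkeeping choice---either the displayed formula is missing the $-(L-1)S_{t,s_t}(L)/K_t(s_t)$ correction, or an additional argument is needed that you have not supplied.
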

\begin{proof}
    Define the events \[\mathcal{E}_1^{(t,s_t)} := \left\{\mathop{\min}_{c\in[C]} H_c^{(t,s_t)} \geq \lceil K_t(s_t)/C\rceil\right\}, \quad \mathcal{E}_2^{(t,s_t)} := \left\{n-s_t > \rho_t\right\}, \quad \mathcal{E}_3^{(t,s_t)} := \left(\mathcal{E}_1^{(t,s_t)} \cup \mathcal{E}_2^{(t,s_t)}\right)^c\] corresponding to the three cases in equation~\eqref{eq:underrep-reward-closed-form}. Then, the exchangeability under $\bP_{\exch}$ implies that $(N_t^{c, \testbelow})_{c=1}^C$ is indeed a sample from $\text{MultiHypergeom}\left(t-n+s_t, (N_t^1, \ldots, N_t^C)\right)$ when $N_t^{\Above} = s_t$ and hence by definition, we have 
    \begin{equation}\label{eq:multivar-hypergeom-reward}
        R_t(s_t)= \mathbb{E}\left[\frac{1}{C} \cdot \left(\mathds{1}\big\{\mathcal{E}_1^{(t,s_t)}\big\} - \mathds{1}\big\{\mathcal{E}_2^{(t,s_t)}\big\}\right)+ \frac{\mathop{\min}_{c\in[C]} H_c^{(t,s_t)}}{\lceil{m(1+n-s_t)}/{\alpha (n+1)}\rceil} \cdot \mathds{1}\big\{\mathcal{E}_3^{(t,s_t)}\big\}\right].
    \end{equation}

    Equation~\eqref{eq:multivar-hypergeom-reward-survival} then follows from the fact that the expectation of a non-negative integer random variable is equal to the infinite sum of its survival function.
\end{proof}

Corollary~\ref{underrep-reward-closed-form} therefore tells us that, to compute the rewards at each $s_t$ value in $\Omega_t$, it suffices to compute the survival function $S_{t,s_t}(\cdot)$ of the the minimum value of a draw from $\text{MultiHypergeom}\Big(t-n+s_t,$$ (N_t^1, $$\ldots, $$N_t^C)\Big)$ (recall, in view of Remark~\ref{global-conditioning}, that DACS conditions on $\mathcal{F}_{\tau_{\bh}}$ and hence the values $(N_t^1, \ldots, N_t^C)$ are fixed as they are a measurable function of $\Zsort$).

\paragraph{Computing $S_{t,s_t}(\cdot)$} We use an approach from \cite{lebrun2013efficient}, who studies computation of certain probabilities for various discrete multivariate distributions. The key observation enabling \cite{lebrun2013efficient}'s computation of the survival function of $\mathop{\min}_{c\in[C]} H_c^{(t,s_t)}$ is that a multivariate hypergeometric distribution is the conditional distribution of \emph{independent} Binomial distributions given their sum. More specifically, \begin{equation}\label{mvh-binom}\left(H^{(t,s_t)}_1, \ldots, H^{(t,s_t)}_C\right) \overset{d}{=} \left(M^{(t)}_1, \ldots, M^{(t)}_C\right) ~\Big|~ \sum_{c=1}^C M^{(t)}_c = t-n+s_t,\end{equation} where $M^{(t)}_c \overset{\text{ind}}{\sim} \text{Binom}(N_t^c, 1/2), c = 1, \ldots, C,$ are independent Binomial random variables. Using the relation~\eqref{mvh-binom}, \cite{lebrun2013efficient} uses Bayes' theorem to find \begin{equation}\label{min-cdf}
S_{t,s_t}(\nu) = \frac{\overbrace{\bP\left(\sum_{c=1}^C M_c^{(t)} = t-n+s_t \mid M_c^{(t)} \geq \nu, \forall c \in [C]\right)}^{(\RN{1})}\overbrace{\prod_{c=1}^C\bP\left(M_c^{(t)} \geq \nu\right)}^{(\RN{2})}}{\bP\left(\sum_{c=1}^C M^{(t)}_c = t-n+s_t\right)}.
\end{equation} 
Term $(\RN{2})$ in the numerator of equation~\eqref{min-cdf} can be readily calculated as it only involves the survival function of a one-dimensional Binomial distribution which existing software is able to compute efficiently. Similarly, since $\sum_{c=1}^C M^{(t)}_c \sim \text{Binom}(t,1/2)$, the denominator is also readily computed as it simply involves the Binomial PMF. Term $(\RN{1})$ in the numerator is the probability that a sum of independent \emph{truncated} Binomial random variables takes the value $t-n+s_t$. \cite{lebrun2013efficient} observes that this probability can be computed by convolving the PMFs of these truncated Binomial distributions and suggests, as one approach, to compute this convolution using the fast Fourier transform (FFT). 

Algorithm~\ref{dacs-algo} requires to compute the reward functions $R_t(s_t)$ in equation~\eqref{eq:multivar-hypergeom-reward-survival} for each $s_t \in \Omega_t$ and $t \in [\tau_{\bh}]$. This, in turn requires that the survival function value $S_{t,s_t}(\nu)$ be computed for all $s_t \in \Omega_t, \nu \in [\lceil K_t(s_t)/C\rceil]$ and $t \in [\tau_{\bh}]$. Note, however, that since $\mathop{\min}_{c\in[C]} H_c^{(t,s_t)} \leq \min_{c\in [C]}N^c_t $, we have that $S_{t,s_t}(\nu) = 0$ for all $\nu > \min_{c\in [C]} N^c_t $ and hence the survival function need only be computed at each $s_t \in \Omega_t, \nu \in \left[\lceil K_t(s_t)/C\rceil \wedge \min_{c\in [C]} N^c_t\right]$ and $t \in [\tau_{\bh}]$. This means, in particular, that term (I) in equation~\eqref{min-cdf} must be computed for all these values of $(s_t,\nu,t)$. 
Fixing $\nu$ and $t$, the FFT convolution calculation to do so will involve the following steps:
\begin{enumerate}
    \item Observing that $[N_c^t]$ contains the support of the truncated Binomial variable $M_c^{(t)} \mid M_c^{(t)} \geq \nu$, compute, for each $c \in [C]$, the PMFs of the truncated Binomial random variables $\bP(M_c^{(t)}=\omega \mid M_c^{(t)} \geq \nu)$ for all $\omega \in [0:N_c^t]$ and zero-pad them to be of length $1 + N_t^1 + \cdots + N_C^t = t+1$.
    \item For each $c \in [C]$, apply the FFT to the zero-padded PMF $\big(\bP(M_c^{(t,s_t)}=\omega \mid M_c^{(t,s_t)} \geq \nu)\big)_{\omega=0}^{t}$ to obtain the (sign-reversed) characteristic function $\varphi_c^{(t)}$ of the truncated Binomial random variable $M_c^{(t)} \mid M_c^{(t)} \geq \nu$.
    \item Take the elementwise products of the (sign-reversed) characteristic functions $\varphi_c^{(t)}$ across $c \in [C]$ to obtain $\varphi^{(t)}$, the (sign-reversed) characteristic function of the sum of the independent truncated Binomial random variables.
    \item Apply the inverse FFT to $\varphi^{(t)}$ to recover the PMF values \[\left(\bP\left(\sum_{c=1}^C M_c^{(t)} = \omega \mid M_c^{(t)} \geq \nu, \forall c \in [C]\right)\right)_{\omega=0}^{t}.\]
\end{enumerate}

The above calculation does not depend on the value $s_t$. This suggests that, to compute $S_{t,s_t}(\nu)$ for all $\nu \in \left[\lceil K_t(s_t)/C\rceil \wedge \min(N^1_t,\ldots,N^C_t)\right], s_t \in \Omega_t, t \in [\tau_{\bh}]$ we should compute term (I) by iterating over $t \in [\tau_{\bh}]$ and then over $\nu$ in some set containing $\left[\lceil K_t(s_t)/C\rceil \wedge \min_{c\in [C]} N^c_t\right]$ for all $s_t \in \Omega_t$, and then extracting $\bP\left(\sum_{c=1}^C M_c^{(t)} = t-n+s_t \mid M_c^{(t)} \geq \nu, \forall c \in [C]\right)$ from the full PMF vector obtained in Step 4. By definition of the set $\Omega_t$, we have that $s_t \geq \max\big\{ N^{\Above}_{\tau_{\bh}}, n-t\big\}$ for all $s_t \in \Omega_t$ and hence we find, for all supported $s_t$, that \begin{align*}
    \lceil K_t(s_t)/C\rceil &= \left\lceil \Big\lceil \frac{m(1+n-s_t)}{\alpha(n+1)} \Big\rceil/C\right\rceil\\
    &\leq \left\lceil \bigg\lceil \frac{m\big(1+\min\{  t,n-N_{\tau_{\bh}}^{\Above}\}\big)}{\alpha(n+1)} \bigg\rceil/C\right\rceil.
\end{align*} Therefore, we will simply iterate over $\nu = 1, \ldots, \left\lceil \Big\lceil \frac{m(1+\min\{  t,n-N_{\tau_{\bh}}^{\Above}\}}{\alpha(n+1)} \Big\rceil/C\right\rceil \wedge \min_{c\in [C]} N^c_t$. To compute the remaining terms in $S_{t,s_t}(\nu)$ in equation~\eqref{eq:multivar-hypergeom-reward-survival}, we simply iterate over $t$ and $s_t$. Pseudocode for the full computation of $S_{t,s_t}(\nu)$ is given in Algorithm~\ref{alg:mvh-min}.

\begin{algorithm}[h]
  
  \KwInput{BH stopping time $\tau_{\bh}$, number of each category $(N_t^1, \ldots, N_t^C)_{t=1}^{\tau_{\bh}}$ at each time $t \in [\tau_{\bh}]$, and number of calibration points $N_{\tau_{\bh}}^{\Above}$ past BH stopping time}
  \For{$t = 2, \ldots, \tau_{\bh}$}{
      \For{$s_t \in \Omega_t$}{
            Compute and store  $\bP\big(\sum_{c=1}^C M_c^{(t)} = t-n+s_t\big)$ using the PMF for $\sum_{c=1}^C M_c^{(t)} \sim \text{Binom}(t,1/2)$
        }
    \For{$\nu = 1, \ldots, \left\lceil \left\lceil \frac{m(1+\min\{  t,n-N_{\tau_{\bh}}^{\Above}\})}{\alpha(n+1)} \right\rceil/C\right\rceil \wedge \min(N^1_t,\ldots,N^C_t)$}{
        Compute $\bP(M_c^{(t)}=\omega \mid M_c^{(t)} \geq \nu)= \frac{\bP(M_c^{(t)}=\omega)\mathds{1}\{\omega \geq \nu\}}{\bP(M_c^{(t)} \geq \nu)}$ for each $\omega \in [0:t]$ and $c \in [C]$\\
        Apply, for each $c \in [C]$, the FFT to $\big(\bP(M_c^{(t)}=\omega \mid M_c^{(t)} \geq \nu)\big)_{\omega =0 }^t$ to obtain the (sign-reversed) characteristic functions $\varphi_c^{(t)}$ of $M_c^{(t)} \mid M_c^{(t)} \geq \nu$, \\
        Compute the elementwise product $\mathbb{C}^{t+1} \ni \varphi^{(t)} := \bigodot_{c=1}^C\varphi_c^{(t)}$\\
        Applying the inverse FFT to $\varphi^{(t)}$, obtain and store the PMF values $\left(\bP\big(\sum_{c=1}^C M_c^{(t)} = \omega \mid M_c^{(t)} \geq \nu, \forall c \in [C]\big)\right)_{\omega=0}^{t}$\\
        Compute and store  $\bP (M_c^{(t)} \geq \nu )$ using the Binomial survival function for each $c \in [C]$ \\
        \For{$s_t \in \Omega_t$}{
         Using the stored values from lines 3, 8, and 9, set \[S_{t,s_t}(\nu) \gets \frac{\bP\left(\sum_{c=1}^C M_c^{(t)} = t-n+s_t \mid M_c^{(t)} \geq \nu, \forall c \in [C]\right)\prod_{c=1}^C\bP (M_c^{(t)} \geq \nu )}{\bP \big(\sum_{c=1}^C M^{(t)}_c = t-n+s_t \big)}\]
        }
    }
  }

  \KwOutput{ Survival function values $S_{t,s_t}(\nu)$ for $\nu \in [\lceil K_t(s_t)/C\rceil], s_t \in \Omega_t, t \in [\tau_{\bh}]$} 

\caption{Survival function $S_{t,s_t}(\cdot)$ computation}
\label{alg:mvh-min}
\end{algorithm}

Once the survival functions have all been computed, the reward functions are calculated, for each $s_t \in \Omega_t, t \in [\tau_{\bh}]$, as given by equation~\eqref{eq:multivar-hypergeom-reward-survival} in Corollary~\ref{underrep-reward-closed-form}. The following proposition characterizes the worst-case computational complexity of our method to compute the reward functions $R_t(\cdot)$.

\begin{proposition}\label{prop:worst-case-complexity}
    The worst-case complexity for computing the values $R_t(s_t)$ for all $s_t \in \Omega_t, t \in [\tau_{\bh}]$ in Algorithm~\ref{alg:mvh-min} is $\widetilde{O} ((n+m)^2m )$,
    where the tilde suppresses factors logarithmic in $n,m,$ or $C$.
\end{proposition}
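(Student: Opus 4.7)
My plan is to bound the total cost of Algorithm~\ref{alg:mvh-min} by summing the work across its three nested loops, using the following basic facts: $\tau_{\bh}\leq n+m$; $|\Omega_t|\leq n+1$ by Proposition~\ref{prop:support}; the elementary identity $\lceil\lceil x\rceil/C\rceil=\lceil x/C\rceil$ for real $x\geq 0$ and positive integer $C$, which lets us write the inner-loop upper limit as
\[
\nu_{\max}(t) := \Big\lceil \tfrac{m(1+\min\{t,\,n-N_{\tau_{\bh}}^{\Above}\})}{\alpha C(n+1)}\Big\rceil \wedge \min_{c\in[C]} N_c^t;
\]
and the harmless assumption $C\leq n+m$, which holds once categories not appearing in the data are removed (they do not affect the algorithm's output).

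The per-$(t,\nu)$ cost of the inner block (lines 5--10) is $O(Ct\log t+|\Omega_t|)$: the $C$ forward FFTs on zero-padded arrays of length $t+1$ cost $O(Ct\log t)$, the pointwise product on line 7, inverse FFT on line 8, and the $C$ marginal survival computations on line 9 add $O(Ct+t\log t+C)$, and the $s_t$ loop on line 10 is $O(|\Omega_t|)$ given the stored FFT outputs. The outer-loop PMF evaluations on line 3 contribute only $O(\tau_{\bh}(n+1))=O((n+m)^2)$, which will be dominated. The key technical ingredient is the bound $\nu_{\max}(t)\leq m(1+t)/(\alpha C(n+1))+1$ for $t\leq n$ and $\nu_{\max}(t)\leq m/(\alpha C)+1$ for $t>n$, obtained by unwinding the ceiling. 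Its crucial feature is that the $C$ from the FFT cost cancels with the $1/C$ in $\nu_{\max}(t)$, leaving only $m$-dependent terms plus an additive $O(1)$ slack per $t$.

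Summing, the FFT contribution satisfies
\[
\sum_{t=1}^{\tau_{\bh}}\nu_{\max}(t)\cdot Ct\log t = \widetilde{O}\Big(\tfrac{m}{n}\sum_{t=1}^n t^2 + m\sum_{t=n+1}^{\tau_{\bh}}t + C\sum_{t=1}^{\tau_{\bh}}t\Big) = \widetilde{O}(m(n+m)^2),
\]
where we have used $\tau_{\bh}\leq n+m$ and $C\leq n+m$; the $s_t$-loop contribution is $\sum_t\nu_{\max}(t)|\Omega_t| = \widetilde{O}(mn(n+m)/C + n(n+m)) = \widetilde{O}(m(n+m)^2)$; and forming each $R_t(s_t)$ via equation~\eqref{eq:multivar-hypergeom-reward-survival} from the stored survival values requires $O(\nu_{\max}(t))$ additions per $s_t$, bounded identically. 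The main obstacle is the bookkeeping needed to ensure that $C$ is absorbed into the $\widetilde{O}$ notation: this requires splitting the sum over $t$ at $t=n$ so that the tighter linear-in-$t$ bound on $\nu_{\max}(t)$ can be exploited for small $t$, together with the bound $C\leq n+m$ to control the additive-constant slack from the ceiling. Once these are in place, the remaining estimates reduce to standard power-sum calculations.
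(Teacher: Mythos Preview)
Your argument is sound in outline and takes a genuinely different route from the paper's. The paper's proof rests on one clean observation you do not invoke: rearranging the BH stopping-time definition~\eqref{def:bh-st} at $t=\tau_{\bh}$ gives $\frac{m(1+n-N_{\tau_{\bh}}^{\Above})}{\alpha(n+1)}\leq \tau_{\bh}-n+N_{\tau_{\bh}}^{\Above}\leq m$, which immediately yields $K_t(s_t)\leq m$ uniformly over all $(t,s_t)$ and hence $\nu_{\max}(t)\leq\lceil m/C\rceil$ for every $t$. The paper then simply multiplies the $O((n+m)\cdot m/C)$ many $(t,\nu)$ pairs by the per-pair cost $\widetilde O(C(n+m))$, and the $C$'s cancel to give a bound with no dependence on $\alpha$ whatsoever. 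Your direct ceiling-unwinding instead produces the $t$-dependent bound $\nu_{\max}(t)\leq m(1+t)/(\alpha C(n+1))+1$, which is more elementary---it uses nothing about $\tau_{\bh}$ beyond $\tau_{\bh}\leq n+m$---but leaves a factor of $1/\alpha$ hidden in the implied constant. Since $\alpha$ is a fixed user-specified level this is harmless for the stated $\widetilde O$ bound, but the paper's argument is sharper in this respect and avoids the split at $t=n$ entirely.

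There is one small arithmetic gap. Your claim that $C\leq n+m$ suffices to absorb the slack term $C\sum_{t=1}^{\tau_{\bh}}t$ into $\widetilde O(m(n+m)^2)$ does not go through: with only $C\leq n+m$ that term can be as large as $(n+m)^3$, which exceeds $m(n+m)^2$ whenever $n\gg m$. What your argument actually needs is $C=O(m)$. The paper's proof glosses over the same point---it writes the number of $(t,\nu)$ pairs as $O((n+m)m/C)$ rather than $O((n+m)\lceil m/C\rceil)$---so both arguments are clean only in the regime $C\leq m$, which is precisely where the underrepresentation index is non-degenerate anyway.
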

\begin{proof}
The main computational bottleneck in computing the survival function values is lines 6--8 of Algorithm~\ref{alg:mvh-min} and hence it suffices to just study the time complexity contribution of these lines (summed across all the values of $t$ and $\nu$ at which they are computed) to obtain a time complexity bound for the survival function computation. First, observe that \[\frac{m(1+n-N_{\tau_{\bh}}^{\Above})}{\alpha(n+1)} \leq t-n + N_t^{\Above} \leq m,\] where the first inequality is by rearranging the definition of the BH stopping time in equation~\eqref{def:bh-st}. Therefore $K_t(s_t) \leq m$ for all $s_t \in \Omega_t, t \in [\tau_{\bh}]$ and the worst-case complexity is at most $O((n+m)m/C)$ times the worst-case time complexity of lines 6--8 among all $(t,\nu)$ pairs. Applying the FFT in line 6 takes $\widetilde{O}(Ct)$ time across all $c \in [C]$, the elementwise multiplication in line 7 requires $O(Ct)$, and the inverse FFT in line 8 takes again $\widetilde{O}(Ct)$ time total across all $c \in [C]$. Thus, the total complexity of lines 6--8 for fixed $(t,\nu)$ is at most $\widetilde{O}(Ct) \leq \widetilde{O}(C(n+m))$; multiplying by the factor $O((n+m)m/C)$ found earlier, the worst-case complexity of computing the survival function values is $\widetilde{O}\left((n+m)^2m\right)$.

Given the survival function values, the computation of the reward functions in equation~\eqref{eq:multivar-hypergeom-reward-survival}, for each $s_t \in \Omega_t, t \in [\tau_{\bh}]$, requires only $O\left(\lceil K_t(s_t)/C\rceil\right) \leq O(m/C)$ time, and hence the total computation time of this step is at most $O\left((n+m)nm\right)$, which is less than the computational complexity of lines 6--8 of Algorithm~\ref{alg:mvh-min}. Hence, the worst-case complexity is $\widetilde{O}\left((n+m)^2m\right)$, as was to be shown.
\end{proof}

Proposition~\ref{prop:worst-case-complexity} tells us that, up to logarithmic factors, DACS' computation for the underrepresentation index is at worst cubic in $n+m$. Perhaps interestingly, apart from a log-factor, there is no dependence on $C$, the number of categories.

\subsubsection{E-value optimization for the underrepresentation index}\label{app:underrep-evalue-opt}
\begin{algorithm}[!ht]
  
  \KwInput{Sets $\mathcal{S}_{t}^{(1), \testbelow},$ $ \ldots,$ $ \mathcal{S}_{t}^{(C), \testbelow}$}
  Compute $\rho_t=\frac{\alpha t(n+1) - m}{\alpha(n+1)+m}$  and initialize $\mathcal{R}^*_{t} \gets \emptyset$\\
 \If{$n - N_t^{\Above}>\rho_t$}{
    Set $\mathcal{R}^*_{t} \gets \emptyset$
 }
 \ElseIf{$C|\mathcal{S}_{t}^{(1), \testbelow}| \geq K_{t}$}{
    Choose (any) subsets $\mathcal{T}_c \subseteq \mathcal{S}_{t}^{(c), \testbelow}$, $c\in[C]$, such that $|\mathcal{T}_1| = \cdots = |\mathcal{T}_C| = |\mathcal{S}_{t}^{(1), \testbelow}|$\\
    Set $\mathcal{R}^*_{t} \gets  \bigcup_{c'=1}^C \mathcal{T}_{c'} $
 }

 \Else{
 Initialize $c=1$\\
 \While{$(C-c+1)|\mathcal{S}_{t}^{(c), \testbelow}| < K_{t}-|\mathcal{R}^*_{t}|$}{
 Update $\mathcal{R}^*_{t} \gets \mathcal{R}^*_{t} \cup \mathcal{S}_{t}^{(c), \testbelow}$, and $c\gets c+1$}
 {Choose (any) subsets $\mathcal{T}_c \subseteq \mathcal{S}_{t}^{(c), \testbelow}, \ldots, \mathcal{T}_C \subseteq \mathcal{S}_{t}^{(C), \testbelow}$ such that (1) $|\mathcal{T}_c| + \cdots +|\mathcal{T}_C| = K_t-|\mathcal{R}^*_{t}|$ and (2) $|\mathcal{T}_c|, \ldots, |\mathcal{T}_C| \geq |\mathcal{S}_{t}^{(c-1), \testbelow}|$\\
  Update $\mathcal{R}^*_{t} \gets \mathcal{R}^*_{t}  \cup \big(\bigcup_{c'=c}^C \mathcal{T}_{c'}\big)$  
  }
 }

  \KwOutput{ Selected set $\mathcal{R}^*_{t}$} 

\caption{E-value optimization for the underrepresentation index}
\label{underrep-algo}
\end{algorithm} Algorithm~\ref{underrep-algo} gives pseudocode for the greedy algorithm that solves the e-value optimization program for the underrepresentation index for the time-$t$ e-values $e_1^{(t)}, \ldots, e_m^{(t)}$. Proposition~\ref{appendix:e-value-underrep-optimality} formally establishes the correctness of this algorithm.

\begin{proposition}\label{appendix:e-value-underrep-optimality}
    The selection set $\mathcal{R}^*_{\tau^*}$ returned by Algorithm~\ref{underrep-algo} at time $\tau^*$ is optimal. In other words, it attains the maximal underrepresentation index value among all $(e_1^{(\tau^*)}, \ldots, e_m^{(\tau^*)})$-self-consistent selection sets.
\end{proposition}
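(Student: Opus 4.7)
The plan is to reduce the e-value optimization problem to a simple combinatorial problem via the characterization of self-consistency in equation~\eqref{sc'}, and then verify that Algorithm~\ref{underrep-algo} attains the closed-form optimal value computed in Proposition~\ref{underrep-opt-value-closed-form}. Specifically, equation~\eqref{sc'} tells us that $\mathcal{R}$ is $(e_1^{(\tau^*)},\ldots,e_m^{(\tau^*)})$-self-consistent if and only if $\mathcal{R}\subseteq \bigcup_{c=1}^C \mathcal{S}_{\tau^*}^{(c),\testbelow}$ and $|\mathcal{R}|\geq K_{\tau^*}$. So the goal becomes: among subsets of $\bigcup_c\mathcal{S}_{\tau^*}^{(c),\testbelow}$ of size at least $K_{\tau^*}$ (or the empty set, if we cannot meet this budget), find one whose underrepresentation index matches the value of $O_{\tau^*}$ given in~\eqref{eq:underrep-reward-closed-form}.

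I then perform a case analysis matching the three branches of Proposition~\ref{underrep-opt-value-closed-form}. First, if $n-N_{\tau^*}^{\Above}>\rho_{\tau^*}$, then (as shown in the proof of Proposition~\ref{underrep-opt-value-closed-form}, Case 2) the self-consistency inequality~\eqref{app:self-consistency-violate} rules out any non-empty selection set; Algorithm~\ref{underrep-algo} returns $\emptyset$, achieving the optimal value $-1/C$. Second, if $C|\mathcal{S}_{\tau^*}^{(1),\testbelow}|\geq K_{\tau^*}$, the algorithm returns the union of $C$ subsets each of size $|\mathcal{S}_{\tau^*}^{(1),\testbelow}|$; the resulting set has size $\geq K_{\tau^*}$ (verifying self-consistency) and every category has the same count, attaining the maximal underrepresentation index $1/C$ stated in~\eqref{eq:underrep-reward-closed-form}.

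The main work is the third case, $Cn_1<K_{\tau^*}$ and $n-N_{\tau^*}^{\Above}\leq\rho_{\tau^*}$, where I write $n_c:=|\mathcal{S}_{\tau^*}^{(c),\testbelow}|$ so that $n_1\leq n_2\leq\cdots\leq n_C$. Here I need to verify two things: (i) the subsets $\mathcal{T}_{c^*},\ldots,\mathcal{T}_C$ prescribed in line~11 of Algorithm~\ref{underrep-algo} always exist, and (ii) the constructed set attains the optimal value $n_1/K_{\tau^*}$. For (i), writing $c^*$ for the terminal value of $c$ when the while loop exits, the loop exit condition gives $(C-c^*+1)n_{c^*}\geq K_{\tau^*}-\sum_{c<c^*}n_c$, providing the upper bound needed for feasibility; meanwhile the loop condition at the previous iteration gave $(C-c^*+2)n_{c^*-1}<K_{\tau^*}-\sum_{c<c^*-1}n_c$, which after rearrangement becomes $(C-c^*+1)n_{c^*-1}<K_{\tau^*}-\sum_{c<c^*}n_c$, providing the matching lower bound (and ensuring the interval of admissible total sizes has nonempty interior so that an integer-valued allocation $|\mathcal{T}_c|$ satisfying $n_{c^*-1}\leq|\mathcal{T}_c|\leq n_c$ and summing to $K_{\tau^*}-\sum_{c<c^*}n_c$ exists, by a greedy fill-in argument). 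For (ii), since the algorithm includes all of $\mathcal{S}_{\tau^*}^{(1),\testbelow}$ (as $c^*\geq 2$ under Case 3) and all $|\mathcal{T}_c|\geq n_{c^*-1}\geq n_1$, the least-represented category in the output has count exactly $n_1$, and by construction $|\mathcal{R}^*_{\tau^*}|=K_{\tau^*}$, yielding $\underrepdiversityR = n_1/K_{\tau^*}=\min_{c\in[C]}N_{\tau^*}^{c,\testbelow}/K_{\tau^*}$, matching the optimal value in~\eqref{eq:underrep-reward-closed-form}.

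The anticipated main obstacle is the bookkeeping around step (i), i.e., carefully extracting from the loop-exit and pre-exit inequalities the precise two-sided bounds needed to guarantee existence of admissible integer-valued subset sizes $|\mathcal{T}_c|$; everything else is a direct case check against Proposition~\ref{underrep-opt-value-closed-form}. No additional machinery beyond equation~\eqref{sc'} and Proposition~\ref{underrep-opt-value-closed-form} should be required.
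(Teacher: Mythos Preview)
Your proposal is correct and follows essentially the same route as the paper: reduce self-consistency to the budget constraint~\eqref{sc'}, then verify in each of the three cases of Proposition~\ref{underrep-opt-value-closed-form} that Algorithm~\ref{underrep-algo}'s output attains the stated optimal value, with the main work being the two-sided inequality extracted from the while-loop exit and pre-exit conditions to guarantee existence of the subsets $\mathcal{T}_c$.

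One step you should add explicitly: you invoke ``the terminal value $c^*$ when the while loop exits'' and its exit condition, but you have not shown the loop actually terminates with $c^*\leq C$. The paper devotes a short argument to this: at $c=C$ the while condition reads $n_C < K_{\tau^*} - \sum_{c<C} n_c$, i.e., $\sum_{c=1}^{C} n_c < K_{\tau^*}$, and this is ruled out because $\sum_c n_c = \tau^* - n + N_{\tau^*}^{\Above} \geq K_{\tau^*}$ whenever $n - N_{\tau^*}^{\Above} \leq \rho_{\tau^*}$ (exactly the Case~3 hypothesis, via the same rearrangement used in Case~2 of Proposition~\ref{underrep-opt-value-closed-form}). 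Without this, your appeal to the ``loop exit condition'' is circular, so be sure to include it.
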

\begin{proof}
    By Proposition~\ref{underrep-opt-value-closed-form}, it suffices to show that \begin{equation}\label{eq:underrep-opt-value-proof}\varphi^{\underrep}(\mathcal{R}^*_{\tau^*}; \bZ) = \begin{cases}
            1/C, & \text{ if } \mathop{\min}_{c\in[C]} N_{\tau^*}^{c, \testbelow} \geq K_{\tau^*}/C,\\
             -1/C, & \text{ if } n-N_{\tau^*}^{\Above} > \rho_{\tau^*},\\
            \mathop{\min}_{c\in[C]} N_{\tau^*}^{c, \testbelow} \mathbin{/} K_{\tau^*}, & \text{ otherwise},
        \end{cases}
    \end{equation} where, as reminder $K_t := \big\lceil \frac{m(1+n-N_t^{\Above})}{\alpha (n+1)} \big\rceil$ and $\rho_t := \frac{\alpha t(n+1) - m}{\alpha(n+1)+m}$. Lines 2--3 and 4--6 of Algorithm~\ref{underrep-algo} occur, respectively, if and only if the second and first cases above occur, and they indeed yield selection sets with underrepresentation index equal to $-1/C$ and $1/C$ respectively. Thus we need only consider the third case, in which Algorithm~\ref{underrep-algo} finds itself executing lines 8--13. 
    For simplicity, we prove the desired result for any $t$ given in the input (i.e., $t=\tau^*$).

    First we prove that the ``while'' statement in line 10 is not true at $c = C$ and hence the while loop will be exited in line 13 at some value $\hat{c} \leq C$ for which $(C-\hat{c}+1)|\mathcal{S}_{t}^{(\hat{c}), \testbelow}| \geq K_{t}-|\mathcal{R}^*_{t}|$. Observe that, at iteration $C$ in the ``while'' loop, the running selection set $\mathcal{R}^*_t$ has cardinality \[\Big|\bigcup_{c=1}^{C-1}\mathcal{S}_t^{(c),\testbelow}\Big| = \sum_{c=1}^{C-1}\Big|\mathcal{S}_t^{(c),\testbelow}\Big|,\] and hence the inequality in the ``while'' statement holds if and only if $\sum_{c=1}^{C} |\mathcal{S}_t^{(c),\testbelow} | < K_t.$ The left hand side is equal to $t-n + N_t^{\Above}$ and hence the above inequality holds if and only if $t-n+N_t^{\Above} = 0$ or (upon rearranging) \[\frac{n+1}{1+ n-N_t^{\Above}} < \frac{m}{\alpha(t-n+N_t^{\Above})}.\] The latter inequality is precisely inequality~\eqref{app:self-consistency-violate} and the proof of case 2 in Section~\ref{app:underrep-opt-value-closed-form} shows that either $t-n+N_t^{\Above} = 0$ or inequality~\eqref{app:self-consistency-violate} occurs if and only if $n-N_t^{\Above} > \rho_t$, a case which we have already eliminated.

    Therefore, the value of $c$ after the ``while'' argument stops holding will take on some value $\hat{c} \in [2:C]$ for which $(C-\hat{c}+1)|\mathcal{S}_{t}^{(\hat{c}), \testbelow}| \geq K_{t}-|\mathcal{R}^*_{t}|$. By construction, the final selection set constructed in lines 12--13 will have cardinality exactly equal to $K_t$ and will have number of elements from the least-represented category equal to \[\Big|\mathcal{S}_t^{(1),\testbelow}\Big| = \mathop{\min}_{c\in[C]} N_{t}^{c, \testbelow}.\] The underrepresentation index value of this selection set is then exactly equal to that given in the third case of the right-hand-side of equation~\eqref{eq:underrep-opt-value-proof}. (It is always possible to find sets $\mathcal{T}_c, \ldots, \mathcal{T}_C$ satisfying conditions (1) and (2) in line 12 because we know that, at iterate $\hat{c}$, $(C-\hat{c}+1)|\mathcal{S}_{t}^{(\hat{c}), \testbelow}| \geq K_{t}-|\mathcal{R}^*_{t}|$ and also that $(C-\hat{c}+1)|\mathcal{S}_{t}^{(\hat{c-1}), \testbelow}| < K_{t}-|\mathcal{R}^*_{t}|$ because the ``while'' loop condition was false at $\hat{c}-1$.)
\end{proof}

\subsection{Additional technical results for Section~\ref{dacs-heuristics}: Monte Carlo and heuristics}

\subsubsection{Approximate FDR control of relaxed self-consistency}\label{appendix:rsc}
In this section, we show that the relaxed self-consistency condition ensures approximate FDR control:

\begin{proposition}\label{appendix:rsc-prop}
    Let $e_1, \ldots, e_m$ be e-values in the sense that $\bE[e_i \mathds{1}\{Y_{n+i} \leq 0\}] \leq 1$ for all $1, \ldots, m$. Suppose that $\bchi \in [0,1]^m$ satisfies relaxed self-consistency with respect to these e-values:
    \begin{align*}
       & \chi_i \leq \frac{\alpha e_i}{m}\sum_{j=1}^m\chi_i, \quad i = 1, \dots, m \\
    &0 \leq \chi_i \leq 1, \quad i = 1, \dots, m
    \end{align*}

    Then the FDR of the selection set $\mathcal{R} := \{i \in [m]: \xi_i = 1\}$ where $\xi_i \overset{\text{ind}}{\sim} \text{Bern}(\chi_i), i = 1,\ldots, m$ is at most $1.3\alpha$.
\end{proposition}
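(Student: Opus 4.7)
The plan is to bound the FDR by linearizing the ratio, leveraging the conditional independence structure of the Bernoulli randomization, invoking relaxed self-consistency, and then controlling a key analytic expression by an explicit calculus-based inequality that produces the constant $1.3$.

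First, I would write $\textnormal{FDR} = \sum_{i=1}^m \bE\big[\xi_i \mathds{1}\{Y_{n+i}\leq 0\} / (1 \vee \sum_j \xi_j)\big]$ and use the deterministic identity $\xi_i / (1\vee (\xi_i + S_{-i})) = \xi_i / (1+S_{-i})$, where $S_{-i} := \sum_{j\neq i}\xi_j$. Conditioning on $\bchi$ and on $(e_j, Y_{n+j})_{j=1}^m$, and using the independence of the $\xi_j$'s across $j$ as well as their independence of $(\bm e,\bm Y)$ given $\bchi$, yields
\begin{equation*}
\textnormal{FDR} \;=\; \sum_{i=1}^m \bE\!\left[\mathds{1}\{Y_{n+i}\leq 0\}\,\chi_i\,\bE[1/(1+S_{-i})\mid \bchi]\right].
\end{equation*}
Next, I would apply the relaxed self-consistency constraint, substituting $\chi_i \leq \alpha e_i S / m$ with $S := \sum_{j}\chi_j$, to obtain
\begin{equation*}
\textnormal{FDR} \;\leq\; \frac{\alpha}{m}\sum_{i=1}^m \bE\!\left[e_i\mathds{1}\{Y_{n+i}\leq 0\}\cdot S\cdot\bE[1/(1+S_{-i})\mid \bchi]\right].
\end{equation*}
Given the e-value property $\bE[e_i\mathds{1}\{Y_{n+i}\leq 0\}]\leq 1$, it then suffices to establish the deterministic bound $S\cdot\bE[1/(1+S_{-i})\mid\bchi]\leq 1.3$ almost surely for each $i$, since that yields $\textnormal{FDR}\leq 1.3\alpha$.

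The core analytic step, which I expect to be the main obstacle, is proving this bound. My approach would be to use the integral representation $\bE[1/(1+S_{-i})\mid\bchi] = \int_0^1 \bE[t^{S_{-i}}\mid\bchi]\,dt = \int_0^1 \prod_{j\neq i}(1-\chi_j(1-t))\,dt$, then substitute $u = 1-t$ and apply the elementary inequality $\prod_{j\neq i}(1-\chi_j u)\leq \exp(-u(S-\chi_i))$. Integrating gives $\bE[1/(1+S_{-i})\mid\bchi]\leq (1-e^{-v})/v$, where $v := S - \chi_i \geq 0$. Therefore
\begin{equation*}
S\cdot\bE[1/(1+S_{-i})\mid\bchi] \;\leq\; (v+\chi_i)\frac{1-e^{-v}}{v} \;=\; (1-e^{-v}) + \chi_i\cdot\frac{1-e^{-v}}{v} \;\leq\; (1-e^{-v})\left(1 + \frac{1}{v}\right),
\end{equation*}
using $\chi_i \in [0,1]$.

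It then remains to show that $\sup_{v \geq 0}(1-e^{-v})(1+1/v)\leq 1.3$. A direct calculus argument: the derivative vanishes at the unique root $v^*$ of $v^2 + v + 1 = e^v$, which one verifies numerically to satisfy $v^*\approx 1.793$, at which point the objective equals $\approx 1.298 < 1.3$. Boundary behavior ($v\to 0^+$ gives $1$, $v\to\infty$ gives $1$) confirms this is the global maximum. (Edge cases $S=0$ or $\chi_i = S$ are trivially handled, the latter because then $S_{-i}=0$ almost surely and $S\cdot\bE[1/(1+S_{-i})\mid\bchi] = \chi_i \leq 1$.) Combining the bound $S\cdot\bE[1/(1+S_{-i})\mid\bchi]\leq 1.3$ with the earlier display yields $\textnormal{FDR}\leq 1.3\alpha$, completing the proof.
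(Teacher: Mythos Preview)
Your proof is correct and follows essentially the same approach as the paper: both use the integral representation $\bE[1/(1+S_{-i})\mid\bchi]=\int_0^1\prod_{j\neq i}(1-\chi_j(1-t))\,dt$, bound the product via $1-x\le e^{-x}$, and reduce to the calculus lemma $\sup_{v>0}(1-e^{-v})(1+1/v)\le 1.3$. Your version is in fact cleaner---by using the identity $\xi_i/(1\vee\sum_j\xi_j)=\xi_i/(1+S_{-i})$ up front and factoring via conditional independence, you avoid the paper's $\epsilon$-splitting argument and its separate handling of the cases $\sum_{j\neq i}\chi_j>0$ versus $=0$, arriving at the same bound more directly.
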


The key technical tool we use to prove Proposition~\ref{appendix:rsc-prop} is the following identity, which comes from a simple modification of an argument from \cite{stackexchange}.

\begin{lemma}\label{lemma:stack-exchange}
    Let $b_i \overset{\text{ind}}{\sim} \text{Bern}(x_i)$ for $i = 1, \ldots, m$ for some fixed sequence of $[0,1]$-valued variables $x_1, \ldots, x_m$. Then \[\mathbb{E}\left[\frac{b_1}{1 \vee \sum_{j=1}^m b_j}\right] = \int_0^1 x_1 \prod_{j=2}^m (x_js + (1-x_j))ds\]
\end{lemma}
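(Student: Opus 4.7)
The plan is to use the elementary integral representation $\frac{1}{k} = \int_0^1 s^{k-1}\,ds$, valid for every positive integer $k$, to linearize the denominator and decouple the Bernoulli variables via independence.

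First I would observe that, since $b_1 \in \{0,1\}$, the factor $b_1$ in the numerator is nonzero only on the event $\{b_1 = 1\}$, on which $\sum_{j=1}^m b_j \geq 1$ and thus $1 \vee \sum_{j=1}^m b_j = \sum_{j=1}^m b_j$. Consequently
\[
\mathbb{E}\Bigl[\frac{b_1}{1 \vee \sum_{j=1}^m b_j}\Bigr] = \mathbb{E}\Bigl[\frac{b_1}{\sum_{j=1}^m b_j}\,\mathbf{1}\{b_1 = 1\}\Bigr].
\]
Applying the integral representation $\frac{1}{k} = \int_0^1 s^{k-1}\,ds$ to $k = \sum_{j=1}^m b_j$ (which is a positive integer on the event $b_1=1$), swapping expectation and integral by Fubini, and writing $\sum_{j=1}^m b_j - 1 = (b_1 - 1) + \sum_{j=2}^m b_j$, the expression becomes
\[
\int_0^1 \mathbb{E}\Bigl[b_1\, s^{b_1 - 1} \prod_{j=2}^m s^{b_j}\Bigr]\, ds.
\]

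Next I would invoke independence of the $b_j$'s to factor the expectation inside the integral. The first factor evaluates to $\mathbb{E}[b_1 s^{b_1 - 1}] = x_1$ (the $b_1 = 0$ summand vanishes because of the $b_1$ prefactor, so no issue arises from the negative exponent), and for each $j \geq 2$ one has $\mathbb{E}[s^{b_j}] = x_j s + (1 - x_j)$. Multiplying these yields
\[
\int_0^1 x_1 \prod_{j=2}^m \bigl(x_j s + (1 - x_j)\bigr)\,ds,
\]
which is exactly the desired identity.

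No step here is genuinely obstructive; the only minor care needed is to confirm that the integral representation of $1/k$ is valid for the values of $k$ that contribute (i.e., $k \geq 1$, guaranteed by the $b_1 = 1$ event), and to check that the $s^{b_1 - 1}$ term, formally problematic at $b_1 = 0$, is rendered harmless by the accompanying $b_1$ factor.
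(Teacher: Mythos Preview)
Your proposal is correct and follows essentially the same approach as the paper: both use the identity $\tfrac{1}{k}=\int_0^1 s^{k-1}\,ds$ on the event $\{b_1=1\}$, swap expectation and integral via Fubini, and factor by independence. Your direct evaluation $\mathbb{E}[b_1 s^{b_1-1}]=x_1$ is in fact slightly more streamlined than the paper's route, which rewrites this factor as $\mathbb{E}\bigl[\tfrac{d}{ds}s^{b_1}\bigr]$ and invokes a derivative--expectation interchange.
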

\begin{proof}
    We use the same argument as in \cite{stackexchange}, first observing that \[\frac{b_1}{1 \vee \sum_{j=1}^m b_j} = \frac{b_1}{\sum_{j=1}^m b_j}\cdot I[b_1 \neq 0] = \int_0^1 b_1s^{\sum_{j=1}^mb_j-1}I[b_1 \neq 0]ds,\] where $\frac{0}{0}\cdot 0$, if it occurs in the middle expression, is defined to be zero.
    Therefore, using Fubini's theorem along with independence, we find that 
    \begin{align*}
        \mathbb{E}\left[\frac{b_1}{1 \vee \sum_{j=1}^m b_j}\right] &= \int_0^1 \mathbb{E}\left[b_1s^{b_1-1}I[b_1 \neq 0]\right]\mathbb{E}\left[s^{\sum_{j=2}^mb_j}\right]ds\\
        &= \int_0^1 \mathbb{E}\left[\frac{d}{ds}s^{b_1}I[b_1 \neq 0]\right]\prod_{j=2}^m (x_js + (1-x_j))ds
    \end{align*}
    The usual domination conditions for exchanging the derivative and expectation (see, e.g., \citet[][Theorem 16.8]{billingsley2017probability}) are met and so we obtain the desired result.
\end{proof}

We are now ready to prove Proposition~\ref{relaxed-fdr}. 
\begin{proof}[Proof of Proposition~\ref{relaxed-fdr}]
Write the FDR as \begin{align*}
        \fdr &= \mathbb{E}\left[\frac{\sum_{i =1}^m\xi_i\mathds{1}\{Y_{n+i} \leq 0\}}{1 \vee \sum_{i=1}^m\xi_i}\right]
    \end{align*}

    Fixing any $\epsilon > 0$, the above is equal to \begin{equation}\label{appendix:intermediate-bound}\mathbb{E}\left[\frac{\sum_{i =1}^m\xi_i\mathds{1}\{Y_{n+i} \leq 0\}}{1 \vee \sum_{i=1}^m\xi_i}\mathds{1}\left\{\sum_{i=1}^m \chi_i \leq \epsilon\right\}\right] + \mathbb{E}\left[\frac{\sum_{i =1}^m\xi_i\mathds{1}\{Y_{n+i} \leq 0\}}{1 \vee \sum_{i=1}^m\xi_i}\mathds{1}\left\{\sum_{i=1}^m \chi_i > \epsilon\right\}\right]\end{equation}
    Using Lemma~\ref{lemma:stack-exchange}, apply the law of iterated expectations as well as linearity to see that the second term above is equal to
    \begin{align*}
    &\mathbb{E}\left[\mathds{1}\left\{\sum_{i =1}^m\chi_i > \epsilon\right\} \cdot \sum_{i=1}^m \chi_i\mathds{1}\{Y_{n+i} \leq 0\}\int_0^1 \prod_{j \in [m]\backslash i}(\chi_js + (1-\chi_j))ds\right]\\
    &\leq \mathbb{E}\left[\mathds{1}\left\{\sum_{i =1}^m\chi_i > \epsilon\right\} \cdot\sum_{i=1}^m \chi_i\mathds{1}\{Y_{n+i} \leq 0\}\int_0^1 \exp\left((s-1)\sum_{j \in [m]\backslash i}\chi_j\right)ds\right],
    \end{align*} where the inequality is by the fact that $1+x \leq e^x, \forall x \in \mathbb{R}$.

   We will now further decompose, within the inner summation, on whether or not $\sum_{j \in [m]\backslash i} \chi_j> 0$. We find that 
    \begin{align*}
        &\mathbb{E}\left[\mathds{1}\left\{\sum_{i =1}^m\chi_i > \epsilon\right\}\cdot\sum_{i=1}^m \chi_i\mathds{1}\{Y_{n+i} \leq 0\}\int_0^1 \exp\left((s-1)\sum_{j \in [m]\backslash i}\chi_j\right)ds\right]\\
        &= \underbrace{\mathbb{E}\left[\mathds{1}\left\{\sum_{i =1}^m\chi_i > \epsilon\right\} \cdot\sum_{i=1}^m \mathds{1}\left\{\sum_{j \in [m]\backslash i} \chi_j > 0\right\} \cdot \chi_i\mathds{1}\{Y_{n+i} \leq 0\}\int_0^1 \exp\left((s-1)\sum_{j \in [m]\backslash i}\chi_j\right)ds\right]}_{A}\\
        &+ \underbrace{\mathbb{E}\left[\mathds{1}\left\{\sum_{i =1}^m\chi_i > \epsilon\right\} \cdot\sum_{i=1}^m \mathds{1}\left\{\sum_{j \in [m]\backslash i} \chi_j = 0\right\} \cdot \chi_i\mathds{1}\{Y_{n+i} \leq 0\}\int_0^1 \exp\left((s-1)\sum_{j \in [m]\backslash i}\chi_j\right)ds\right]}_{B}
    \end{align*}

    Let us first focus on term B. We have that
    \begin{align*}
        &\mathbb{E}\left[\mathds{1}\left\{\sum_{i =1}^m\chi_i > \epsilon\right\} \cdot\sum_{i=1}^m \mathds{1}\left\{\sum_{j \in [m]\backslash i} \chi_j = 0\right\} \cdot \chi_i\mathds{1}\{Y_{n+i} \leq 0\}\int_0^1 \exp\left((s-1)\sum_{j \in [m]\backslash i}\chi_j\right)ds\right]\\
        &= \mathbb{E}\left[\mathds{1}\left\{\sum_{i =1}^m\chi_i > \epsilon\right\} \cdot\sum_{i=1}^m \mathds{1}\left\{\sum_{j \in [m]\backslash i} \chi_j = 0\right\} \cdot \chi_i\mathds{1}\{Y_{n+i} \leq 0\}\right]\\
        &\leq \mathbb{E}\left[\mathds{1}\left\{\sum_{i =1}^m\chi_i > \epsilon\right\} \cdot\sum_{i=1}^m \mathds{1}\left\{\sum_{j \in [m]\backslash i} \chi_j = 0\right\} \cdot \frac{\chi_i}{\sum_{j \in [m]} \chi_j}\mathds{1}\{Y_{n+i} \leq 0\}\right]\\
        &\leq \mathbb{E}\left[\mathds{1}\left\{\sum_{i =1}^m\chi_i > \epsilon\right\} \cdot\sum_{i=1}^m \mathds{1}\left\{\sum_{j \in [m]\backslash i} \chi_j = 0\right\} \cdot \frac{\alpha e_i}{m}\mathds{1}\{Y_{n+i} \leq 0\}\right]
    \end{align*} where the penultimate inequality is because, on the event $ \{\sum_{i \in [m]} \chi_i >\epsilon, \sum_{j \in [m]\backslash i} \chi_j = 0 \}$ we have that $\chi_i \leq 1= \frac{\chi_i}{\sum_{j \in [m]}\chi_j}.$ The last inequality is because $\bchi$ satisfies relaxed self-consistency.

    Now, we turn to analyzing term A. It is equal to
    \begin{align*}
    &\mathbb{E}\left[\mathds{1}\left\{\sum_{i =1}^m\chi_i > \epsilon\right\} \cdot\sum_{i=1}^m \mathds{1}\left\{\sum_{j \in [m]\backslash i} \chi_j > 0\right\} \cdot \chi_i \mathds{1}\{Y_{n+i} \leq 0\} \cdot \frac{1- e^{-\sum_{j \in [m]\backslash i}\chi_j}}{\sum_{j \in [m]\backslash i}\chi_j}\right]\\
    &= \mathbb{E}\left[\mathds{1}\left\{\sum_{i =1}^m\chi_i > \epsilon\right\} \cdot\sum_{i=1}^m \mathds{1}\left\{\sum_{j \in [m]\backslash i} \chi_j > 0\right\} \cdot\frac{\chi_i}{\sum_{j \in [m]}\chi_j} \mathds{1}\{Y_{n+i} \leq 0\} \cdot \left(1+\frac{\chi_i}{\sum_{j \in [m]\backslash i}\chi_j}\right)\left(1- e^{-\sum_{j \in [m]\backslash i}\chi_j}\right)\right]\\
    &\leq \mathbb{E}\left[\mathds{1}\left\{\sum_{i =1}^m\chi_i > \epsilon\right\} \cdot\sum_{i=1}^m \mathds{1}\left\{\sum_{j \in [m]\backslash i} \chi_j > 0\right\} \cdot\frac{\alpha e_i \mathds{1}\{Y_{n+i} \leq 0\}}{m} \cdot \left(1+\frac{\chi_i}{\sum_{j \in [m]\backslash i}\chi_j}\right)\left(1- e^{-\sum_{j \in [m]\backslash i}\chi_j}\right)\right]
    \end{align*} where the last inequality is because $\bchi$ satisfies relaxed self-consistency.

    Combining terms A and B we have shown that the second term in expression~\eqref{appendix:intermediate-bound} is upper-bounded by \begin{equation}\label{appendix:intermediate2}\mathbb{E}\left[\mathds{1}\left\{\sum_{i =1}^m\chi_i > \epsilon\right\} \cdot\sum_{i=1}^m  \frac{\alpha e_i \mathds{1}\{Y_{n+i} \leq 0\}}{m} \cdot \zeta_i\right],\end{equation} where we define \[\zeta_i := \begin{cases}
        \left(1+\frac{\chi_i}{\sum_{j \in [m]\backslash i}\chi_j}\right)\left(1- e^{-\sum_{j \in [m]\backslash i}\chi_j}\right) &\text{ if } \sum_{j \in [m]\backslash i}\chi_j > 0\\
        1 &\text{ otherwise}
    \end{cases}.\] 

    Substituting the upper bound~\eqref{appendix:intermediate2} into expression~\eqref{appendix:intermediate-bound}, we have shown for all $\epsilon>0$ that \begin{equation}\label{appendix:intermediate3}
                \fdr \leq \mathbb{E}\left[\frac{\sum_{i =1}^m\xi_i\mathds{1}\{Y_{n+i} \leq 0\}}{1 \vee \sum_{i=1}^m\xi_i}\mathds{1}\left\{\sum_{i=1}^m \chi_i \leq \epsilon\right\}\right] + \mathbb{E}\left[\mathds{1}\left\{\sum_{i =1}^m\chi_i > \epsilon\right\} \cdot\sum_{i=1}^m  \frac{\alpha e_i \mathds{1}\{Y_{n+i} \leq 0\}}{m} \cdot \zeta_i\right].
        \end{equation} 

    Observe that the second term in the right-hand-side of inequality~\eqref{appendix:intermediate3} is upper-bounded upon removing the indicator $\mathds{1}\left\{\sum_{i =1}^m\chi_i > \epsilon\right\}$. Additionally, notice that $\mathds{1}\left\{\sum_{i=1}^m \chi_i \leq \epsilon\right\} \overset{a.s.}{\rightarrow} \mathds{1}\left\{\sum_{i=1}^m \chi_i= 0\right\}$ as $\epsilon \downarrow 0$. Hence, the dominated convergence theorem yields, taking $\epsilon \downarrow 0$, that \begin{align}
        \fdr &\leq \mathbb{E}\left[\frac{\sum_{i =1}^m\xi_i\mathds{1}\{Y_{n+i} \leq 0\}}{1 \vee \sum_{i=1}^m\xi_i}\mathds{1}\left\{\sum_{i=1}^m \chi_i = 0\right\}\right] + \mathbb{E}\left[\sum_{i=1}^m  \frac{\alpha e_i \mathds{1}\{Y_{n+i} \leq 0\}}{m} \cdot \zeta_i\right]\\
        &= \mathbb{E}\left[\sum_{i=1}^m  \frac{\alpha e_i \mathds{1}\{Y_{n+i} \leq 0\}}{m} \cdot \zeta_i\right]\label{eq:zeta-inflated}
    \end{align} where the equality is because  we must have $\xi_1 = \cdots = \xi_m = 0$ on the event that $\left\{\sum_{i=1}^m\chi_i = 0\right\}$. Finally, observe that \[\zeta_i \overset{a.s.}{\leq} \max\left(1, \sup_{x>0}\left(1 + \frac{1}{x}\right)\left(1 - e^{-x}\right)\right), \text{ for all } i = 1, \ldots, m.\] We show that the right-hand-side is at most $1.3$ in Lemma~\ref{functional-bound} below, and hence the result follows from the fact that $e_1, \ldots, e_m$ are e-values. 
\end{proof}

\begin{lemma}\label{functional-bound}
    Define the function \[f(x) = \left(1 + \frac{1}{x}\right)\left(1 - e^{-x}\right).\] Then $\sup_{x > 0}f(x) \leq 1.3$.
\end{lemma}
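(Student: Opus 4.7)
The plan is to use standard single-variable calculus: because $f$ is continuous on $(0,\infty)$ with $\lim_{x\to 0^+} f(x) = \lim_{x\to\infty} f(x) = 1$ (the first limit follows from $(1-e^{-x})/x \to 1$, the second from $1-e^{-x} \to 1$ and $1+1/x \to 1$), any supremum exceeding $1$ is attained at an interior critical point, so it suffices to bound $f$ at its critical points.

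To set this up, I would compute $f'(x) = -\frac{1-e^{-x}}{x^2} + (1+\tfrac{1}{x})e^{-x}$ and, after multiplying through by $x^2 e^x$, see that $f'(x)=0$ is equivalent to the transcendental equation $e^x = 1 + x + x^2$. At any positive root $x^*$ of this equation, substituting $e^{-x^*} = 1/(1+x^*+(x^*)^2)$ back into $f$ and simplifying gives the clean identity
\[
f(x^*) \;=\; \frac{(1+x^*)^2}{1+x^*+(x^*)^2}.
\]
A direct quadratic manipulation shows that this quantity is at most $1.3$ exactly when $x^* \leq (7-\sqrt{13})/6$ or $x^* \geq (7+\sqrt{13})/6 \approx 1.768$, so the task reduces to locating $x^*$ on the real line.

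Next I would establish uniqueness of $x^*$ and a lower bound of the right size. Set $h(x) := e^x - 1 - x - x^2$; then $h(0)=0$, $h'(0)=0$, $h''(x) = e^x - 2$ is strictly increasing, negative on $(0,\ln 2)$ and positive on $(\ln 2, \infty)$, so $h'$ first decreases below zero and then returns to $+\infty$, crossing zero exactly once at some $x_0 > \ln 2$. Consequently $h$ is strictly decreasing on $(0,x_0)$ and strictly increasing to $+\infty$ on $(x_0,\infty)$, giving a unique positive root $x^*$.

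The main obstacle is the final numerical step: showing $x^* \geq (7+\sqrt{13})/6$, equivalently $h\bigl((7+\sqrt{13})/6\bigr) \leq 0$, i.e.
\[
e^{(7+\sqrt{13})/6} \;\leq\; \tfrac{5(7+\sqrt{13})}{9},
\]
where I used the identity $1 + x_1 + x_1^2 = 5(7+\sqrt{13})/9$ at $x_1 = (7+\sqrt{13})/6$. This is a concrete numerical inequality which I would verify rigorously by truncating the Taylor series for the exponential with a Lagrange remainder (e.g., $e^{x_1} = \sum_{k=0}^N x_1^k/k! + R_N$ with $|R_N| \leq x_1^{N+1} e^{x_1}/(N+1)!$, using the crude a priori bound $e^{x_1} \leq e^2 \leq 8$), picking $N$ large enough that the certified upper bound on $e^{x_1}$ is below $5(7+\sqrt{13})/9$. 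The margin is small but strictly positive (roughly $5.86$ vs.~$5.89$), so a moderate $N$ suffices; combined with the continuity of $(1+x)^2/(1+x+x^2)$, this completes the proof.
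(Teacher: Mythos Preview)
Your argument is correct and takes a genuinely different route from the paper. The paper works directly with the sign of $g(x)=x^2+x+1-e^x$ on several hand-chosen subintervals (using a third-order Taylor bound on $(0,1.3)$, concavity of $g'$ on $[1.3,1.7932]$, etc.) to bracket the critical point in $[1.7932,1.7933]$, and then finishes with a Lipschitz bound for $f$ on that tiny interval. You instead exploit the critical-point equation $e^{x^*}=1+x^*+(x^*)^2$ to obtain the closed form $f(x^*)=(1+x^*)^2/(1+x^*+(x^*)^2)$, turn the target inequality into the quadratic condition $3(x^*)^2-7x^*+3\ge 0$, prove uniqueness of $x^*$ via the convexity structure of $h(x)=e^x-1-x-x^2$, and reduce everything to the single verification $h\bigl((7+\sqrt{13})/6\bigr)\le 0$. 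Your route is conceptually cleaner and explains \emph{why} the bound $1.3$ is essentially sharp (equality in the quadratic at $x=(7+\sqrt{13})/6$), whereas the paper's decomposition is more ad hoc. Both approaches ultimately need a rigorous Taylor-remainder estimate for $e^x$ near $x\approx 1.77$--$1.79$, so neither escapes the delicate numerical step; the margin you quote ($\approx 5.86$ versus $\approx 5.89$) is real and $N=8$ or $9$ in your remainder bound suffices.

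Two minor remarks. First, the closing phrase about ``continuity of $(1+x)^2/(1+x+x^2)$'' is unnecessary: once $x^*\ge (7+\sqrt{13})/6$, the inequality $f(x^*)\le 1.3$ is immediate from the quadratic. Second, for the final numerical check it is cleaner to pick a rational $x_1'\in\bigl[(7+\sqrt{13})/6,\,x^*\bigr)$, say $x_1'=1.77$ (using $\sqrt{13}<3.62$), and verify $h(1.77)<0$; since $h<0$ on $(0,x^*)$, this forces $x^*>1.77>(7+\sqrt{13})/6$ and avoids carrying $\sqrt{13}$ through the Taylor computation.
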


While the conclusion of Lemma~\ref{functional-bound} can be readily checked by any numerical solver, here we provide an analytical proof. 

\begin{proof}
    The derivative of $f(x)$ is \[f'(x) = \frac{e^{-x}(x^2+x-e^x+1)}{x^2}.\] We will study the sign of $f'(x)$ for which it suffices to study the sign of $g(x) := x^2+x-e^x+1$. Defining $x_L := 1.7932$ and $x_R := 1.7933$, a direct calculation shows that $g(x_L) > 0$ and $g(x_R) < 0$. We show that (1) $g(x) > 0$ for all $x \in (0,x_L]$ and (2) $g(x) < 0$ for all $x \in [x_R, \infty)$.

    To show (1), we will decompose the interval $(0, x_L] = (0, 1.3) \cup [1.3, x_L]$ and show the statement on each subinterval. Starting with $x \in (0,1.3)$, observe that a third-order Taylor expansion yields \[e^x = 1 + x + x^2/2 + x^3/6 + e^{\tilde{x}}x^4/24,\] for some $\tilde{x} \in [0,x]$. Thus, we obtain the bound, for $x \in (0,1.3)$ \[g(x) \geq x^2/2 - x^3/6 - \frac{e^{1.3}x^4}{24}.\] The right-hand-side is larger than zero for all $x \in (0,1.3)$ if and only if $12-4x-e^{1.3}x^2 > 0$ for all $x \in (0,1.3)$. A direct calculation shows that the left root of this quadratic is less than $0$ and the right root is above $1.3$, whereupon this inequality follows. For $x \in [1.3, x_L]$, we compute the derivative of $g(x)$: $g'(x) = 2x + 1-e^x$. This is a concave function, which at $x = 1.3$ is seen to be negative by a direct calculation. Since a direct calculation also shows that $g''(1.3) < 0$, this implies that also $g(x) > 0$ for $x \in [1.3,x_L]$ as $g(x_L) > 0$.

    Now, we show (2). Notice that $g'(x_R) < 0$. The inequality then follows from the concavity of $g'$ as well as the fact that $g(x_R) < 0$.
    
    The mean-value theorem then implies that $\sup_{x \in (0, x_L)}f(x) \leq f(x_L)$ and $\sup_{x \in (x_R, \infty)}f(x) \leq f(x_R)$ and hence $\sup_{x > 0}f(x) = \sup_{x \in [x_L, x_R]} f(x)$. A direct calculation yields the following upper-bound on the derivative of $f$ in $[x_L, x_R]$: \[\sup_{x \in [x_L, x_R]}f'(x) \leq \frac{e^{-x_L}(x_R^2+x_R-e^{x_L}+1)}{x_L^2} =: \lambda \leq 2.98 \cdot 10^{-5}.\] Hence, $f$ is $\lambda$-Lipschitz on this interval and hence we obtain the upper-bound \[\sup_{x \in [x_L, x_R]} f(x) \leq \max(f(x_L), f(x_R)) + \lambda (x_R-x_L) \leq 1.3,\] as was to be shown.
\end{proof}

\subsubsection{Approximate FDR control of relaxed self-consistency for standard deterministic multiple hypothesis testing}\label{appendix:rsc-det}
In this section, we show that the approximate FDR control of relaxed self-consistent multiple testing procedures continues to hold in the standard deterministic multiple hypothesis testing setting. In particular, for this section alone, we assume that we are in the standard deterministic multiple hypothesis testing setting and have $m$ hypotheses $H_1, \ldots, H_m$ as well as e-values $e_1, \ldots, e_m$ for these hypotheses. Because our hypotheses are now deterministic, $e_1, \ldots, e_m$ are e-values in the (standard) sense that $\bE[e_i] \leq 1$ for all $i \in \mathcal{H}$ where $\mathcal{H} \subseteq [m]$ is the set of true null hypotheses \citep{wang2022false}. The FDR of a rejection set $\mathcal{R}$, in this setting, is given by the standard definition: \[\fdr := \bE\left[\frac{|\mathcal{R} \cap \mathcal{H}|}{1 \vee |\mathcal{R}|}\right].\]

\begin{proposition}\label{appendix:rsc-prop-det}
    Let $e_1, \ldots, e_m$ be e-values in the sense that $\bE[e_i] \leq 1$ for all $i \in \mathcal{H}$. Suppose that $\bchi \in [0,1]^m$ satisfies relaxed self-consistency with respect to these e-values:
    \begin{align*}
       & \chi_i \leq \frac{\alpha e_i}{m}\sum_{j=1}^m\chi_i, \quad i = 1, \dots, m \\
    &0 \leq \chi_i \leq 1, \quad i = 1, \dots, m
    \end{align*}

    Then the FDR of the selection set $\mathcal{R} := \{i \in [m]: \xi_i = 1\}$ where $\xi_i \overset{\text{ind}}{\sim} \text{Bern}(\chi_i), i = 1,\ldots, m$ is at most $1.3\alpha$. 
\end{proposition}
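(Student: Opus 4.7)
The plan is to mirror the proof of Proposition~\ref{relaxed-fdr} (given in Appendix~\ref{appendix:rsc} as the proof of Proposition~\ref{appendix:rsc-prop}) essentially line-for-line, replacing the random null indicator $\mathds{1}\{Y_{n+i} \leq 0\}$ with the deterministic indicator $\mathds{1}\{i \in \mathcal{H}\}$. Since $\mathcal{H}$ is now a fixed (deterministic) subset of $[m]$, the manipulations become, if anything, slightly cleaner.

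I would begin by writing
\[
\fdr = \bE\!\left[\frac{\sum_{i \in \mathcal{H}} \xi_i}{1 \vee \sum_{i=1}^m \xi_i}\right],
\]
and, exactly as in the random setting, split this expectation according to whether $\sum_{i=1}^m \chi_i \leq \epsilon$ or $>\epsilon$ for some fixed $\epsilon > 0$. On the latter event, I would apply Lemma~\ref{lemma:stack-exchange} term-by-term to each $i \in \mathcal{H}$ (conditioning, if needed, on $\bchi$ and $\mathcal{H}$, both of which may be random but the latter is deterministic in this setup) to rewrite each summand as an integral of a product, and then bound the product using $1+x \leq e^x$ to obtain a closed-form integral that evaluates to $(1 + \chi_i/\!\sum_{j\neq i}\chi_j)(1 - e^{-\sum_{j\neq i}\chi_j})$ when $\sum_{j \neq i}\chi_j > 0$, and equals $1$ otherwise.

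Next, in each of these two sub-cases I would invoke the relaxed self-consistency constraint $\chi_i \leq \tfrac{\alpha e_i}{m}\sum_j \chi_j$ to replace the prefactor $\chi_i$ (resp.\ $\chi_i/\!\sum_j \chi_j$) by a bound involving $\tfrac{\alpha e_i}{m}$. This yields
\[
\fdr \leq \bE\!\left[\frac{\sum_{i \in \mathcal{H}}\xi_i}{1 \vee \sum_{i=1}^m\xi_i}\mathds{1}\!\left\{\sum_{i=1}^m \chi_i \leq \epsilon\right\}\right] + \bE\!\left[\sum_{i \in \mathcal{H}} \frac{\alpha e_i}{m}\,\zeta_i\right],
\]
where $\zeta_i$ is the same quantity defined in the proof of Proposition~\ref{relaxed-fdr}. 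Sending $\epsilon \downarrow 0$ kills the first term by dominated convergence (since $\{\sum_i \chi_i = 0\}$ forces all $\xi_i = 0$), and Lemma~\ref{functional-bound} gives $\zeta_i \leq 1.3$ almost surely.

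Finally, using that $e_i$ is an e-value in the standard sense, i.e., $\bE[e_i] \leq 1$ for every $i \in \mathcal{H}$, linearity of expectation yields
\[
\fdr \leq \frac{1.3\alpha}{m}\sum_{i \in \mathcal{H}} \bE[e_i] \leq \frac{1.3\alpha}{m} \cdot |\mathcal{H}| \leq 1.3\alpha.
\]
There is no real obstacle here beyond verifying that every step of the random-hypothesis proof still makes sense when the null indicator is deterministic; the only minor simplification is that $\mathds{1}\{i \in \mathcal{H}\}$ can be pulled out of the expectation freely, so the product-of-expectations decomposition used via Fubini/independence in Lemma~\ref{lemma:stack-exchange} is applied conditionally on $\bchi$ (treating the $\xi_i$'s as the only source of randomness to which the lemma is applied), not to the null indicator itself.
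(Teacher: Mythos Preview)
Your proposal is correct and matches the paper's own approach exactly: the paper simply states that the proof follows by repeating the argument for Proposition~\ref{appendix:rsc-prop} verbatim with $\mathds{1}\{Y_{n+i}\le 0\}$ replaced by $\mathds{1}\{i\in\mathcal H\}$, which is precisely what you outline.
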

The proof of Proposition~\ref{appendix:rsc-prop-det} follows by the exact same proof as Proposition~\ref{appendix:rsc-prop} upon replacing the indicator variables $\mathds{1}\{Y_{n+i} \leq 0\}$ with the indicator variables $\mathds{1}\{i \in \mathcal{H}\}$.

\subsubsection{Warm-starting and coupling}\label{appendix:warm-start}

We will now show how the optimal values in Section~\ref{warm-start} are related. To do so, we will make a few benign assumptions on the relaxed diversity metric $\varphi$.

\begin{assumption}\label{perm-invar}
    Let $\pi$ be any permutation of $[t]$ and let $\bchi \in [0,1]^{t}$ for any $t \in [n+m]$. Then $\varphi$ is permutation invariant in the sense that \[\varphi(\bchi; \Zsort) = \varphi(\bchi^{\pi}; \bZ^{(),\pi}),\] where $\bchi^{\pi} := (\chi_{\pi(1)}, \ldots, \chi_{\pi(t)})$ and $\bZ^{(),\pi} := (Z_{\pi\circ\pi_{\sort}(1)}, \ldots, Z_{\pi\circ\pi_{\sort}(t)})$.
\end{assumption}
Assumption~\ref{perm-invar} says that the extended diversity metric is invariant to permutations. This assumption is natural because it is true, by definition, for the non-extended diversity metric which is defined in terms of (unordered) sets of diversification variables. Assumption~\ref{perm-invar} applies for extensions of the Sharpe ratio and Markowitz objective. We make one more mild assumption:

\begin{assumption}\label{zero-doesnt-affect}
    Fix any $t \in [n+m]$. Let $\bchi \in [0,1]^{t}$ be such that $\chi_{i} = 0$ for all $i$ in some list $S$ containing distinct elements of $[t]$. Then the extended diversity metric satisfies \[ \varphi(\bchi; \Zsort) = \varphi(\bchi_{\neg S}; \Zsort_{\neg S}),\] where $\bchi_{\neg S}$ and $\Zsort_{\neg S}$ are the subvectors of $\bchi$ and $\Zsort$ with indices not in $S$. In other words, the diversity does not depend on diversification variables $i$ for which $\chi_i = 0$ (i.e., those which we do not select). 
\end{assumption}
Assumption~\ref{zero-doesnt-affect} also holds, by definition, for the original (i.e., non-extended) diversity metric $\varphi$ and simply says that the diversification variables corresponding to candidates which we are sure not to select do not contribute to the extended diversity measure. Again, Assumption~\ref{zero-doesnt-affect} is satisfied by both the extended Sharpe ratio and Markowitz objective.

The following proposition relates the optimization problems required to compute the relaxed optimal values at two consecutive time steps, namely, $\widehat{O}_{t}(\bb)$ and $\widehat{O}_{t+1}(\bb')$ for $\bb \in \mathcal{B}(t,s_{t+1}+1)$ and $\bb' \in \mathcal{B}(t+1, s_{t+1})$ for values $s_{t+1} \in \Omega_{t+1}, s_{t+1}+1 \in \Omega_t$. Recall that these two relaxed optimal values are defined as \[\widehat{O}_t(\bb) := \bE_{\xi_i \overset{\text{ind}}{\sim} \text{Bern}(\chi^{*,t,\bb}_i),\\ i \in [t]}\Big[\varphi (\boldsymbol{\xi}; \Zsort_{1:t})\Big]\text{ where } \bchi^{*,t,\bb} = \argmax_{\substack{\bchi \in [0,1]^t: \\ \bchi \text{ is RSC w.r.t.~} (\esort_i(\bb))_{i=1}^t}} \varphi (\bchi; \Zsort_{1:t} )\] and \[\widehat{O}_{t+1}(\bb) := \bE_{\xi_i \overset{\text{ind}}{\sim} \text{Bern}(\chi^{*,t+1,\bb'}_i),\\ i \in [t+1]}\Big[\varphi (\boldsymbol{\xi}; \Zsort_{1:t+1} )\Big]\text{ where } \bchi^{*,t+1,\bb'} = \argmax_{\substack{\bchi \in [0,1]^{t+1}: \\ \bchi \text{ is RSC w.r.t.~} (\varepsilon_i^{(t+1)}(\bb'))_{i=1}^{t+1}}} \varphi(\bchi; \Zsort_{1:t+1}).\] 
Here, we recall that $\varepsilon_i^{(t)}(\bb)$ is the $i$-th e-value at time $t$ when the membership in the calibration fold is given by the binary vector $\bb$, i.e., $\varepsilon_i^{(t)}(\bb) = \frac{(1-b_i)(n+1)}{1+b_1+\dots+b_t}$.

In the following, we demonstrate that the optimization programs defining $\bchi^{*,t,\bb}$ and $\bchi^{*,t+1,\bb'}$ are related.

\begin{proposition}\label{warm-start-prop}
Fix any $\bb \in \mathcal{B}(t,s_{t+1}+1), \bb' \in \mathcal{B}(t+1, s_{t+1})$. Define \[\bd := \left(i \in [t]: b'_i=1,b_i=0\right) \text{ and } \bd' := \left(i \in [t+1]: \begin{cases}
    b'_i=0, b_i=1 \text{ if } i \leq t\\
    b'_i = 0 \text{ if } i = t+1
\end{cases} \right) \] to be, respectively, the sorted lists of indices $i \in [t+1]$ on which $(b_i,b'_i) = (0,1)$ and $(b_i,b'_i) = (1,0)$ if we were to right-append a $1$ to $\bb$. The lists $\bd$ and $\bd'$ have the same length, which we denote as $l$. 

For any vector $\bv \in \mathbb{R}^{t+1}$, we define the vector $\bv^{\bd\leftrightarrow\bd'}$ by swapping the $d_j^{\text{th}}$ and $d'_j{}^{\text{th}}$ elements of $\bv$ for each $j \in [l]$; we will write $\bZ^{(),\bd\leftrightarrow\bd'}$ for the same transformation applied to the vector $\Zsort$. Then,
\begin{gather}
    \label{reordered-relaxed-go-def1}\bchi^{*,t,\bb} = \argmax_{\bchi \in [0,1]^t: \bchi \text{ is RSC w.r.t.~}\left( \varepsilon^{(t)}_i(\bb)\right)_{i \in [t]}}\varphi\left(\bchi; \Zsort_{1:t}\right),\\
    \label{reordered-relaxed-go-def2}\big(\bchi^{*,t+1,\bb'}\big)^{\bd\leftrightarrow\bd'}  = \argmax_{\bchi \in [0,1]^t: \bchi \text{ is RSC w.r.t.~}\left(\frac{n-s_{t+1}}{1+n-s_{t+1}}\cdot \varepsilon^{(t)}_i(\bb)\right)_{i \in [t]}}\varphi\Big(\bchi; \big(\bZ^{(),\bd\leftrightarrow\bd'}\big)_{1:t}\Big).
\end{gather}
\end{proposition}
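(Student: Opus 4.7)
The claim in~\eqref{reordered-relaxed-go-def1} is just the defining equation for $\bchi^{*,t,\bb}$, so all the content lies in~\eqref{reordered-relaxed-go-def2}. My plan is to exhibit an objective-preserving correspondence between the feasible regions of the two relaxed e-value programs, using Assumptions~\ref{perm-invar} and~\ref{zero-doesnt-affect} to transfer optimality from the $(t+1)$-dimensional program to the $t$-dimensional one. Throughout, I interpret $(\bchi^{*,t+1,\bb'})^{\bd\leftrightarrow\bd'}$ via its first $t$ coordinates (the claim that the $(t+1)$-th coordinate vanishes will be a key step).

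First, I would set up the structural facts about the swap. Since both $(\bb,1)$ and $\bb'$ have exactly $n-s_{t+1}$ many $1$'s (from $\bb\in\mathcal{B}(t,s_{t+1}+1)$ and $\bb'\in\mathcal{B}(t+1,s_{t+1})$), the positions where $(\bb,1)$ is $0$ but $\bb'$ is $1$ are in bijection with those where $(\bb,1)$ is $1$ but $\bb'$ is $0$, giving $|\bd|=|\bd'|=l$. The sets $\bd$ and $\bd'$ are disjoint by construction, so $\pi:=\bd\leftrightarrow\bd'$ is a product of disjoint transpositions (an involution) on $[t+1]$ with $((\bb,1))^\pi=\bb'$, equivalently $b'_{\pi(i)}=(\bb,1)_i$ for $i\in[t+1]$. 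A direct calculation from the definition $\varepsilon_i^{(t)}(\bb)=\frac{(1-b_i)(n+1)}{1+b_1+\cdots+b_t}$ then gives $\varepsilon_i^{(t+1)}(\bb')=\frac{(1-b'_i)(n+1)}{1+n-s_{t+1}}$ and $\frac{n-s_{t+1}}{1+n-s_{t+1}}\varepsilon_i^{(t)}(\bb)=\frac{(1-b_i)(n+1)}{1+n-s_{t+1}}$; these are linked through $\pi$ by the identity above.

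Next, I would show that $((\bchi^{*,t+1,\bb'})^\pi)_{t+1}=0$ by a short case split on $b'_{t+1}$. If $b'_{t+1}=1$, then $t+1\notin\bd'$, so $\pi(t+1)=t+1$ and RSC at time $t+1$ (which forces $\chi^{*,t+1,\bb'}_i=0$ whenever $\varepsilon_i^{(t+1)}(\bb')=0$) gives $\chi^{*,t+1,\bb'}_{t+1}=0$. If $b'_{t+1}=0$, then $\pi(t+1)=d_j\in[t]$ for the $j$ with $d'_j=t+1$; since $b'_{d_j}=1$ by definition of $\bd$, the same RSC constraint kills $\chi^{*,t+1,\bb'}_{d_j}$. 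Writing $\tilde\bchi:=((\bchi^{*,t+1,\bb'})^\pi)_{1:t}$, Assumption~\ref{zero-doesnt-affect} discards this zero coordinate and Assumption~\ref{perm-invar} applied to $\pi$ yields $\varphi(\bchi^{*,t+1,\bb'};\Zsort_{1:t+1})=\varphi(\tilde\bchi;\bZ^{(),\bd\leftrightarrow\bd'}_{1:t})$.

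Finally, I would verify feasibility of $\tilde\bchi$ for the program in~\eqref{reordered-relaxed-go-def2} and rule out any strict improver. Feasibility follows by reading the RSC constraint for $\bchi^{*,t+1,\bb'}$ at index $\pi(i)$ (for $i\in[t]$), using $\sum_{j=1}^{t+1}(\bchi^{*,t+1,\bb'})^\pi_j=\sum_{j=1}^t\tilde\chi_j$ together with $1-b'_{\pi(i)}=1-b_i$. For optimality, given any competing feasible $\bchi\in[0,1]^t$ for~\eqref{reordered-relaxed-go-def2}, I would lift it to $\bchi'\in[0,1]^{t+1}$ by $\chi'_{t+1}:=0$ and then apply $\pi$; the RSC constraints at $i\in[t]$ become exactly those satisfied by $\bchi$ (via the e-value identity), while the constraint at $i=t+1$ uses $1-b'_{t+1}=0\cdot(\cdots)$ or forces the already-zero $\chi'$-coordinate at $\pi(t+1)$, making $(\bchi')^\pi$ feasible for the $(t+1)$-program with matching objective value (again by Assumptions~\ref{perm-invar} and~\ref{zero-doesnt-affect}). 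Any strict improvement over $\tilde\bchi$ in~\eqref{reordered-relaxed-go-def2} would then contradict the optimality of $\bchi^{*,t+1,\bb'}$. The main obstacle is the careful bookkeeping of how the swap moves the asymmetric $(t+1)$-th slot into, and out of, the $t$-dimensional program---this is precisely where Assumption~\ref{zero-doesnt-affect} earns its keep.
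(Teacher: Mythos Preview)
Your proposal is correct and follows essentially the same route as the paper: both establish the e-value identity under the swap, show the $(t+1)$-th coordinate of the swapped optimizer vanishes, and invoke Assumptions~\ref{perm-invar} and~\ref{zero-doesnt-affect} to reduce to the $t$-dimensional program. The paper's presentation is slightly more compact---it avoids your case split on $b'_{t+1}$ by noting directly that $(\bb')^{\bd\leftrightarrow\bd'}_{t+1}=(\bb,1)_{t+1}=1$, and it handles optimality via a single change of variables in the $\argmax$ rather than your explicit feasibility-plus-no-improver bijection---but these are cosmetic differences.
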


The implication of Proposition~\ref{warm-start-prop} is that, if the two membership vectors $\bb$ and $\bb'$ agree in many of their coordinates, then the lists of disagreements $\bd, \bd'$ will be small, and the objectives in~\eqref{reordered-relaxed-go-def1} and \eqref{reordered-relaxed-go-def2} will be (nearly) the same. Additionally, the constraint set in~\eqref{reordered-relaxed-go-def2} is contained in that of~\eqref{reordered-relaxed-go-def1}.

\begin{proof}
Recall that, in Proposition~\ref{global-null-prop}, we defined $\mathcal{B}(t,s)=\{\bb\in \{0,1\}^t\colon b_1+\dots+b_t=n-s\}$, and for any $\bb\in \{0,1\}^t$, we define $\varepsilon_i^{(t)}(\bb) = \frac{(1-b_i)(n+1)}{1+b_1+\dots+b_t}$.  
Thus, $\bb\in \{0,1\}^{t}$ contains exactly $n-s_{t+1}-1$ many $1$'s, while $\bb'\in \{0,1\}^{t+1}$ contains exactly $n-s_{t+1}$ many $1$'s. 
    Let $\tilde{\bb}$ denote the length $(t+1)$ vector obtained by appending $1$ to $\bb$. Then $\bd$ counts the number of entries for which $\tilde{\bb}$ equals $0$ while $\bb'$ equals $1$ and $\bd'$ counts the number of entries for which $\tilde{\bb}$ equals $1$ and $\bb'$ equals $0$. Because, by construction, $\tilde{\bb}$ and $\bb'$ contain the same number of $0$'s and $1$'s, it then follows that $\bd$ and $\bd'$ have the same length.

    Now we establish the validity of equations~\eqref{reordered-relaxed-go-def1} and \eqref{reordered-relaxed-go-def2}. Equation~\eqref{reordered-relaxed-go-def1} is simply the definition of the optimal solution used to construct the relaxed optimal value $\widehat{O}_t(\bb)$ and hence all that must be shown is \eqref{reordered-relaxed-go-def2}. By the permutation-invariance of $\varphi$ guaranteed by Assumption~\ref{perm-invar}, we have that \[\varphi(\bchi; \bZ_{1:t+1}^{()}) = \varphi\left(\bchi^{\bd\leftrightarrow\bd'}; \big(\bZ^{(),\bd\leftrightarrow\bd'}\big)_{1:t+1}\right)\] for any $\bchi \in [0,1]^{t+1}$. Consequently, \begin{align*}\bchi^{*,t+1,\bb'} &= \argmax_{\bchi \in [0,1]^{t+1}: \, \bchi \text{ is RSC w.r.t.~} (\varepsilon_i^{(t+1)}(\bb'))_{i=1}^{t+1}} \varphi\left(\bchi^{\bd\leftrightarrow\bd'}; \big(\bZ^{(),\bd\leftrightarrow\bd'}\big)_{1:t+1}\right)
    \end{align*} which implies that 
    \begin{align}\label{swapped-argmax}
    \left(\bchi^{*,t+1,\bb'}\right)^{\bd\leftrightarrow\bd'} &= \argmax_{\bchi \in [0,1]^{t+1}: \, \bchi \text{ is RSC w.r.t.~} \left((\varepsilon_i^{(t+1)}(\bb'))_{i=1}^{t+1}\right)^{{\bd\leftrightarrow\bd'}}} \varphi\left(\bchi; \big(\bZ^{(),\bd\leftrightarrow\bd'}\big)_{1:t+1}\right).\end{align} 
    Now, since $\bd\leftrightarrow \bd'$ swaps pairs of coordinates with $(0,1)$ and $(1,0)$ values in $(\bb,\bb')$ to make the two vectors align in their first $t$ coordinates, we have that
    \begin{align*}
b_j = 1 ~\Leftrightarrow ~ (\bb')^{\bd\leftrightarrow\bd'}_j = 1 , \quad \text{for } j \in [t].
    \end{align*}
    
    This, in combination with the definition of the sorted e-values yields:
    \begin{align}\left((\varepsilon_i^{(t+1)}(\bb'))_{i=1}^{t+1}\right)^{{\bd\leftrightarrow\bd'}}_{j} &= \frac{n+1}{1+n-s_{t+1}} \cdot \left(1 - (\bb')^{\bd\leftrightarrow\bd'}_j\right) \\
    &= \frac{n+1}{1+n-s_{t+1}} \cdot \left(1 - b_j\right)\\
    &=\label{eq:equiv-evalues} \frac{n-s_{t+1}}{1+n-s_{t+1}} \cdot \varepsilon_j^{(t)}(\bb),\end{align} for any $j \in [t]$.
    
    In addition, we note that $(\bb')^{\bd\leftrightarrow\bd'}_{t+1}= 1$, and hence \begin{equation}\label{eval-conseq2}\left((\varepsilon_i^{(t+1)}(\bb'))_{i=1}^{t+1}\right)^{{\bd\leftrightarrow\bd'}}_{t+1} = 0.\end{equation} Equation~\eqref{eval-conseq2}, in particular, implies that $\chi_{t+1} = 0$ for any vector $\bchi$ that is RSC w.r.t.~$\left((\varepsilon_i^{(t+1)}(\bb'))_{i=1}^{t+1}\right)^{{\bd\leftrightarrow\bd'}}$, due to the self-consistency constraint. 
    Therefore, substituting~\eqref{eq:equiv-evalues} as well as~\eqref{eval-conseq2} into \eqref{swapped-argmax} then yields equation~\eqref{reordered-relaxed-go-def2} under Assumption~\ref{zero-doesnt-affect}, as was to be shown.
\end{proof} 

The permutation-invariance of the diversity metric guaranteed by Assumption~\ref{perm-invar} implies that \[\widehat{O}_{t+1}(\bb) := \bE_{\xi_i \overset{\text{ind}}{\sim} \text{Bern}\left(\left(\chi^{*,t+1,\bb'}\right)^{\bd \leftrightarrow \bd'}_i\right),\\ i \in [t+1]}\left[\varphi\left(\boldsymbol{\xi}; \left(\bZ^{(),\bd \leftrightarrow \bd'}\right)_{1:t+1}\right)\right]\] and hence to calculate the optimal value $\widehat{O}_{t+1}(\bb)$ it suffices to just obtain the solution $\left(\bchi^{*,t+1,\bb'}\right)^{\bd \leftrightarrow \bd'}$ in equation~\eqref{reordered-relaxed-go-def2}. In Proposition~\ref{warm-start-prop}, the objectives in the optimization programs \eqref{reordered-relaxed-go-def1} and \eqref{reordered-relaxed-go-def2} are similar if $\bd$ and $\bd'$ have short lengths, which occurs if $\bb$ and $\bb'$ agree on most of their first $t$ indices. Furthermore, the feasible set of $\eqref{reordered-relaxed-go-def2}$ is contained in that of \eqref{reordered-relaxed-go-def1}. This is because the e-values in the former are simply rescaled versions of the latter by the factor $\frac{n-s_t}{1+n-s_t} \leq 1$; it is harder to satisfy relaxed self-consistency with smaller e-values. 

\section{Implementation details for PGD solvers}\label{app:warm-start-implement}
In this section, we discuss implementation details for the PGD solvers for the Sharpe ratio and Markowitz objective. Section~\ref{app:feasibility-check-sec} describes a simple feasibility check we perform before running either solver and Section~\ref{app:pgd-implement} discusses implementation details of the solvers.

\subsection{Feasibility check}\label{app:feasibility-check-sec}
Before discussing specific details, we first mention that before we actually run either solver, we perform a feasibility check, which we now describe.

\begin{proposition}[Feasibility check for relaxed self-consistency]\label{feasibility-check-prop}
    Suppose we have e-values $\be := (e_1, \ldots, e_m)$ each of which takes on the value $0$ or some positive value $\beta$ (indeed this is the case for the family of e-values and sorted e-values considered in this paper). Then there exists a non-zero relaxed self-consistent vector $\bchi$ with respect to the e-values $e_1, \ldots e_m$ if and only if $\beta \geq \frac{m}{\alpha \|\be\|_0}$.
\end{proposition}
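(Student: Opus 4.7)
The key observation is that relaxed self-consistency forces $\chi_i = 0$ whenever $e_i = 0$, so the problem reduces to analyzing the coordinates in the support $S := \{i \in [m] : e_i = \beta\}$, where $|S| = \|\be\|_0$. I would prove the two directions separately, and both are short.

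For the forward direction (existence of a nonzero RSC vector implies $\beta \geq m/(\alpha \|\be\|_0)$), I would suppose $\bchi \neq \mathbf{0}$ is RSC. Since $\chi_i \leq (\alpha e_i / m)\sum_j \chi_j$ and $\sum_j \chi_j \geq 0$, any $i$ with $e_i = 0$ must have $\chi_i = 0$; hence the support of $\bchi$ is contained in $S$, and $T := \sum_{j=1}^m \chi_j = \sum_{j \in S} \chi_j > 0$. Summing the self-consistency inequality over $i \in S$ gives
\[
T \;=\; \sum_{i \in S} \chi_i \;\leq\; \sum_{i \in S} \frac{\alpha \beta}{m} T \;=\; \frac{\alpha \beta \, |S|}{m}\, T.
\]
Dividing by $T > 0$ yields $\beta \geq m/(\alpha |S|) = m/(\alpha \|\be\|_0)$.

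For the reverse direction, assume $\beta \geq m/(\alpha \|\be\|_0)$ and exhibit an explicit feasible point: set $\chi_i = 1$ for $i \in S$ and $\chi_i = 0$ otherwise. Then $\sum_j \chi_j = |S|$, and for each $i \in S$ we have $(\alpha e_i / m) \sum_j \chi_j = \alpha \beta |S| / m \geq 1 = \chi_i$, while for $i \notin S$ both sides are $0$. The box constraints $0 \leq \chi_i \leq 1$ hold by construction, and $\bchi \neq \mathbf{0}$ provided $\|\be\|_0 \geq 1$ (which is implicit; otherwise the bound $m/(\alpha \cdot 0)$ is vacuous).

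There is no real obstacle here -- the proof is essentially a one-line summation argument in each direction. The only subtlety worth flagging is the degenerate case $\|\be\|_0 = 0$, which I would mention briefly (all e-values are zero, so only $\bchi = \mathbf{0}$ is RSC, and the stated inequality is vacuously false, consistent with nonexistence).
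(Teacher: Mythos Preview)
Your proof is correct. The ``if'' direction is identical to the paper's: both exhibit the indicator vector $\bchi = (\mathds{1}\{e_i > 0\})_{i=1}^m$ and check self-consistency directly.

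For the ``only if'' direction, you take a genuinely more elementary route than the paper. The paper argues by symmetrization: starting from a non-zero RSC vector $\widetilde{\bchi}$, it averages over all permutations of the nonzero coordinates (permissible since the constraints are symmetric on $S$) to obtain a vector that is constant on its support, then rescales to the $\{0,1\}$ indicator of that support, which must itself be self-consistent; the desired inequality then follows from integer self-consistency of this indicator. Your argument bypasses all of this by simply summing the inequalities $\chi_i \leq (\alpha\beta/m)T$ over $i \in S$ and dividing by $T > 0$. Your approach is shorter and needs no convexity or symmetry; the paper's approach, on the other hand, carries the extra structural conclusion that relaxed feasibility is equivalent to integer (non-relaxed) feasibility, which is the message motivating the proposition as a feasibility check.
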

\begin{proof}
    The ``if'' part is immediate because the condition $\beta \geq \frac{m}{\alpha \|\be\|_0}$ implies that $\bchi = \left(\mathds{1}\{e_i > 0\}\right)_{i=1}^m$ is self-consistent and hence relaxed self-consistent. 
    
    For the ``only if'' direction, suppose, that there exists a non-zero relaxed-self-consistent vector $\widetilde{\bchi}$. Now, by symmetry, any permutation of the non-zero elements of $\widetilde{\bchi}$ is relaxed self-consistent. Because the relaxed self-consistency constraints are linear it follows that $\widetilde{\bchi}^{\text{avg}}$, defined to be the average of all such permutations, is also relaxed self-consistent. Because $\widetilde{\bchi}^{\text{avg}}$'s entries are all either equal to zero or equal to the same positive number, this implies that $\widehat{\bchi}^{\text{avg}} := \left(\mathds{1}\{\widetilde{\chi}^{\text{avg}}_i > 0\}\right)_{i=1}^m$ is a rescaling of $\widetilde{\bchi}^{\text{avg}}$ and hence it is also (relaxed) self-consistent. This implies that $\beta \geq \frac{m}{\alpha \|\widetilde{\bchi}^{\text{avg}}\|_0}$. The result then follows from the inequality $\frac{m}{\alpha \|\widetilde{\bchi}^{\text{avg}}\|_0} \geq \frac{m}{\alpha \|\be\|_0}$, which is implied by the fact that if $e_i = 0$ then $\bchi'_i = 0$ for any (relaxed) self-consistent $\bchi'$.
\end{proof}

Proposition~\ref{feasibility-check-prop} shows that checking (non-relaxed) self-consistency (which is simple and amounts to asking if the value of any non-zero e-value---which all take on the same value in our problem setting---is at least $m$ divided by the product of $\alpha$ and the number of non-zero e-values) suffices to check relaxed self-consistency feasibility. This is precisely what we do for the Markowitz objective. The same statement continues to hold true even if we add the condition that $\mathbf{1}^\top \bchi = 1$ (as is required in the quadratic program for the Sharpe ratio) and so we use the same simple feasibility check for the Sharpe ratio as well.

\subsection{Accelerated PGD with adaptive restarts}\label{app:pgd-implement}
In this section, we discuss how we implement the PGD solvers for the Sharpe ratio and Markowitz objective; we also discuss how to incorporate warm-starts. 

The first observation behind the procedure is that the indices whose corresponding e-values are zero can be dropped from the relaxed e-value optimization problems for the Sharpe ratio and Markowtiz objective (this is a consequence of the fact that both objectives satisfy Assumption~\ref{zero-doesnt-affect}). Consequently, the quadratic programs required to solve these problems are, for some $\kappa>0$ and positive definite $\bS \in \mathbb{R}^{d \times d}$, of the form

\begin{align}
     \label{imp-sharpe0}\text{minimize} \quad & \frac{1}{2}\bx^\top \bS\bx\\
     \label{imp-sharpe1}\text{s.t.} \quad & \bx \leq \kappa \mathbf{1}\\
      \label{imp-sharpe2}  & \bx^\top \mathbf{1} = 1 \\
     \label{imp-sharpe3}   & \mathbf{0} \leq \bx \leq \mathbf{1}
\end{align}
and 
\begin{align}
     \label{imp-mark0}\text{minimize} \quad & \frac{\gamma}{2}\bx^\top \bS\bx - \bx^\top \mathbf{1}\\
     \label{imp-mark1}\text{s.t.} \quad & \bx \leq \kappa \mathbf{1}\mathbf{1}^\top \bx \\
     \label{imp-mark2}   & \mathbf{0} \leq\bx \leq \mathbf{1}
\end{align} respectively, where inequalities between vectors are elementwise.

\subsubsection{Sharpe ratio}

The algorithm we use to solve the program~\eqref{imp-sharpe0}--\eqref{imp-sharpe3} is projected gradient descent with momentum using the adaptive warm-start heuristic of \cite{o2015adaptive}; we include a description here for the reader's convenience. Letting $\mathcal{C}_{\sharpe}$ denote the feasible set defined by~\eqref{imp-sharpe1}--\eqref{imp-sharpe3}, the procedure maintains three state variables $\bx^k, \by^k, \theta_k$ updated as follows: 
\begin{align}
    \label{x-assign}\bx^{k+1} &\gets \proj_{\mathcal{C}_{\sharpe}}\left(\by^k-t_k \bS\by^k\right),\\
    \label{theta-assign}\theta_{k+1} &\gets \frac{1 + \sqrt{1+4\theta_k^2}}{2},\\
    \label{beta-assign}\beta_{k+1} &\gets \frac{\theta_k-1}{\theta_{k+1}},\\
    \label{y-assign}\by^{k+1} &\gets \bx^{k+1} + \beta_{k+1}(\bx^{k+1}-\bx^{k}),
\end{align} where $\proj_{\mathcal{C}}(\bz)$ denotes the projection of $\bz$ onto the set $\mathcal{C}$.

The initializations are given by $\theta_0 = 1$ and $\by^0 \gets \bx^0$ where the initial value $\bx^0$ is an input to the program; our warm-starting procedure simply sets $\bx^0$ to be the solution to the previous optimization program. The step-size $t_k$ in equation~\eqref{x-assign} is chosen by a standard backtracking line search. The final ingredient is that we use the adaptive restart mechanism of \cite{o2015adaptive} which sets $\by^k \gets \bx^k$ and $\theta_k=1$ if \[(\by^k-\bx^{k+1})^\top(\bx^k-\bx^{k-1})>0.\] This procedure is essentially 
the exact same as Algorithm 6 in \cite{o2015adaptive} with adaptive restarts. It is well known that the projection $\proj_{\mathcal{C}}$ onto the feasible set defined by Sharpe ratio constraints can be computed efficiently; see, for example, \citet{kiwiel2008breakpoint} and \citet{helgason1980polynomially}. We use an approach similar to that given in \cite{held1974validation}.

\subsubsection{Markowitz objective and the general relaxed e-value consistency constraint set}
We use the same PGD algorithm described above for our Markowitz solver (though, of course, the gradient step in~\eqref{x-assign} is now given by $\by^k-t_k (\gamma\bS\by^k-\mathbf{1})$ and we now project onto the Markowitz feasible set given by~\eqref{imp-mark1}-\eqref{imp-mark2}). In principle this PGD algorithm can be used for any relaxed e-value optimization program so long as it is of the same form as the program~\eqref{relaxedsc0}--\eqref{relaxedsc2} in the main text (i.e., the original diversity metric $\varphi$ is concave and we do not need to introduce constraints and/or variables to make the problem convex) and gradients of the extended metric $\varphi$ can be computed efficiently. This is because the Markowitz constraint set given by~\eqref{imp-mark1}--\eqref{imp-mark2} is precisely the same as the relaxed self-consistency constraints~\eqref{relaxedsc1}--\eqref{relaxedsc2}. We shall refer to this constraint set as $\mathcal{C}_{\rsc}$, the RSC constraint set.

Projection onto the constraint set $\mathcal{C}_{\rsc}$ is more complicated than that for the Sharpe ratio (apart from some special cases of $\kappa$ that we describe later). For this reason, we introduce a variable $s = \mathbf{1}^\top \bx \in \RR$ and consider the constraint set $\mathcal{C}_s$ indexed by $s$ defined as
\begin{align}
     \label{timp-mark1}   & \mathbf{0} \leq\bx \leq \min(\kappa s,1)\mathbf{1}\\
     \label{timp-mark2}   & \mathbf{1}^\top \bx = s,
\end{align} so that $\mathcal{C}_{\rsc} = \bigcup_{s \in [0,d]} \mathcal{C}_s$. The constraint set $\mathcal{C}_s$ is of essentially the same form as that for the Sharpe ratio and \cite{kiwiel2008breakpoint} show that the projection of $\by$ onto $\mathcal{C}_s$ is given by \begin{equation}\label{opt-soln-s-mu}x^*_i(s) := \clip_0^{\min(\kappa s,1)}\left(y_i-\mu(s)\right)\; \forall i \in [d], \text{ where } \mu(s) \in \RR \text{ satisfies}  \sum_{i=1}^d \clip_0^{\min(\kappa s,1)}\left(y_i-\mu(s)\right) = s,\end{equation} and $\clip_a^b(c) := \min(\max(c,a),b)$. Projecting $\by$ onto the RSC constraint set then amounts to finding $\bx^*(s^*)$ where \begin{equation}\label{s-equation}s^* \in \argmin_{s \in [0,d]}\frac{1}{2}\|\bx^*(s)-\by\|^2.\end{equation} In the discussion that follows, we assume that $1/d \leq \kappa \leq 1$ since if $\kappa < 1/d$ the only feasible solution is $\boldsymbol{0}$ and if $\kappa \geq 1$, a direct calculation shows that the projection is simply given by $\clip_0^1(\by)$.

 Our approach will be to split the feasible region $[0,d]$ of $s$ into two sub-intervals $\mathcal{I}_\low := [0,1/\kappa], \mathcal{I}_{\high} := [1/\kappa, d]$, find optimal solutions $\bx^*(s^*_{\low}), \bx^*(s^*_{\high})$ in each (here $s^*_{\low}, s^*_{\high}$ are the solutions to equation~\eqref{s-equation} when the full interval $[0,d]$ over which the minimum is taken is replaced by $\mathcal{I}_\low, \mathcal{I}_{\high}$, respectively), and then return whichever is closer to $\by$. 
 Algorithm~\ref{projection-algo} gives pseudocode for our algorithm to find $\bx^*(s^*_\low)$ (a similar procedure can be used to find $\bx^*(s^*_\high)$). In it and the discussion that follows, weak inequalities involving variables equal to infinity should be interpreted as being strict.

\begin{algorithm}[!ht]
  
  \KwInput{$\kappa>0, \by \in \mathbb{R}^d$}
  Sort $y_{(1)} \leq \cdots \leq y_{(d)}$, breaking ties arbitrarily, and define $y_{(0)} := -\infty, y_{(d+1)} := \infty$\\
  Compute and store the values \begin{equation}\label{precomputed}\sum_{j=1}^i y_{(j)}, \sum_{j=1}^i y_{(j)}^2, \text{ for all } i \in [d]\end{equation}\\
  Initialize running minimum $\mathcal{V}^*_\running \gets \infty$ and corresponding variables $s^*_\running \gets \varnothing, \mu^*_\running \gets \varnothing$\\
  Initialize $r \gets 1$\\
 \For{$\ell=1,\ldots,d$}{
    \While{$y_{(r)} < y_{(\ell)}+1$ and $r < d$}{
    $r \gets r + 1$
    }
    \For{$k=\ell,\ldots,r$}{
        Using the precomputed quantities from~\eqref{precomputed}, compute the values $L_{\ell,k}, U_{\ell,k}$  where $[L_{\ell,k}, U_{\ell,k}]$ equals the interval of $s$ values specified by the constraints
         \begin{gather*}
                y_{(\ell-1)} \leq \frac{\sum_{j=\ell}^{k} y_{(j)} + \kappa s(d-k)-s}{k-\ell+1} \leq y_{(\ell)} \\
                y_{(k)} \leq \frac{\sum_{j=\ell}^{k} y_{(j)} + \kappa s(d-k)-s}{k-\ell+1} + \kappa s\leq y_{(k+1)}\\
                0 \leq s \leq 1/\kappa
    \end{gather*}\\
    \If{$L_{\ell,k} \leq U_{\ell,k}$ (i.e., the above constraints are feasible)}{
        Again using the precomputed values from~\eqref{precomputed}, calculate the (unconstrained) minimizer $\tilde{s}_{\ell,k}$ of the following quadratic function in $s$: \begin{equation*}
            Q_{\ell,k}(s) := \sum_{j=1}^{\ell-1}y_{(j)}^2 + \sum_{j=k+1}^{d}(\kappa s-y_{(j)})^2 + (k-\ell+1)\left(\frac{\sum_{j=\ell}^{k} y_{(j)} + \kappa s(d-k)-s}{k-\ell+1}\right)^2
        \end{equation*}\\
        Clip to obtain the constrained optimal solution: $s^*_{\ell,k} \gets \clip_{L_{\ell,k}}^{U_{\ell,k}}(\tilde{s}_{\ell, k})$\\
        Compute the value of the quadratic at the clipped solution: $\mathcal{V}_{\ell,k} \gets Q_{\ell,k}(s^*_{\ell,k})$\\
        \If{$\mathcal{V}_{\ell,k} < \mathcal{V}^*_\running$}{
        Compute $\mu(s^*_{\ell,k})$ using the precomputed quantities from~\eqref{precomputed}: \[\mu_{\ell,k} \gets \frac{\sum_{j=\ell}^{k} y_{(j)} + \kappa s^*_{\ell,k}(d-k)-s^*_{\ell,k}}{k-\ell+1}\]\\
        Set $\mathcal{V}^*_\running \gets \mathcal{V}_{\ell,k}$, $s^*_\running \gets s^*_{\ell,k}$, $\mu^*_\running \gets \mu_{\ell,k}$
        }
        }
    }
 }
 Set $x^*(s^*_\running)_i \gets \clip_0^{\kappa s^*_\running}(y_i-\mu^*_\running)$ for all $i=1, \ldots, d'$\\
  \KwOutput{ $\bx^*(s^*_\running)$} 

\caption{Projection onto $\mathcal{I}_\low$}
\label{projection-algo}
\end{algorithm}

The intuition behind Algorithm~\ref{projection-algo} is that it searches over all possible intervals---created by the sorted $\by$ vector---in which $\mu(s^*_\low)$ and $\mu(s^*_\low) + \kappa s^*_\low$ can lie. For any given pair of such intervals, finding the optimal $s$ for which $\mu(s)$ and $\mu(s) + \kappa s$ lies in these intervals can be obtained by minimizing a quadratic formula and then clipping the solution to an appropriate interval. A running minimum is kept as the algorithm iterates over the intervals. We now give a proof of correctness of Algorithm~\ref{projection-algo}.
\begin{proof}[Proof of correctness of Algorithm~\ref{projection-algo}]
    Any solution $s^*_\low$ satisfies, by equation~\eqref{opt-soln-s-mu}, \begin{equation}\label{restated}\sum_{i=1}^d \clip_0^{\kappa s^*_\low}\left(y_i-\mu(s^*_\low)\right) = s^*_\low.\end{equation} Now, for each $\ell \in [d]$, define $r_\ell$ to be the least index $r \leq d$ such that $y_{(r)} \geq y_{(\ell)}+1$ if such an index exists or $d$ if no such index exists, where $y_{(r)}$ is the $r$-th order statistic of $\{y_i\}$; $r_\ell$ is then precisely the value of $r$ just before line $8$ begins in Algorithm~\ref{projection-algo}. Then, for any $\ell \in [d]$, we claim that there exists $k \in [\ell:r_\ell]$ such that \begin{equation}\label{winequalities}y_{(\ell-1)} \leq \mu(s^*_\low) \leq y_{(\ell)} \text{ and } y_{(k)} \leq \mu(s^*_\low) + \kappa s^*_\low \leq y_{(k+1)}\end{equation} for some choice $\mu(s^*_\low)\in \RR$ satisfying~\eqref{restated} and some optimal $s^*_\low$. The only way in which~\eqref{winequalities} could be potentially violated is if, for some $\ell \in [d]$, we have \begin{equation}\label{weird-case}y_{(\ell-1)} \leq \mu(s^*_\low) \leq y_{(\ell)} \text{ and } y_{(\ell-1)} \leq \mu(s^*_\low) + \kappa s^*_\low \leq y_{(\ell)},\end{equation} (for $r_\ell < d$, it is impossible for $\mu(s^*_\low) + \kappa s^*_\low > y_{(r_\ell+1)}$, since $\mu(s^*_\low) \leq y_{(\ell)}$ and $\kappa s^*_\low \leq 1$ imply that $\mu(s^*_\low) + \kappa s^*_\low \leq y_{(\ell)} + 1$ which $y_{(r_\ell+1)}$ must be larger than). 
    
    If~\eqref{weird-case} occurs, then~\eqref{restated} implies that one of the following happens: (1)~$y_{(\ell-1)} = y_{(\ell)}$, (2)~$s^*_\low = 0$, or (3)~$\kappa\cdot (d-\ell+1) = 1, y_{(\ell-1)} \neq y_{(\ell)}, s^*_\low \neq 0$. This is because if neither of the first two cases happen, then we must have $\mu(s^*_\low) < y_{(\ell)}$ and $y_{(\ell-1)} \leq \mu(s^*_\low) + \kappa s^*_\low \leq y_{(\ell)}$ in which case the left hand side of~\eqref{restated} becomes $\kappa s^*_\low(d-\ell+1)$. We handle these three cases now:
    \begin{enumerate}
        \item[(1)] If $y_{(\ell-1)} = y_{(\ell)}$, then~\eqref{weird-case} implies \[y_{(\ell-1)} \leq \mu(s^*_\low) \leq y_{(\ell)} \text{ and } y_{(\ell)} \leq \mu(s^*_\low) + \kappa s^*_\low \leq y_{(\ell+1)}\] which is covered by the search among the inequalities~\eqref{winequalities}.
        \item[(2)] If $s^*_\low = 0$, then one valid solution that satisfies~\eqref{restated} is to have $\mu(s^*_\low) = y_{(\ell)}$ in which case \[y_{(\ell-1)} \leq \mu(s^*_\low) \leq y_{(\ell)} \text{ and } y_{(\ell)} \leq \mu(s^*_\low) + \kappa s^*_\low \leq y_{(\ell+1)}\] which is covered in~\eqref{winequalities}.
        \item[(3)] If $\kappa(d-\ell+1) = 1$ and neither of the previous two cases occur, then define $\tilde{\mu}(s^*_\low) := y_{(\ell)}-\kappa s^*_\low$. Then \begin{equation}\label{winequalities-covered}y_{(\ell-1)} \leq \tilde{\mu}(s^*_\low) < y_{(\ell)} \text{ and } y_{(\ell)} \leq \tilde{\mu}(s^*_\low) + \kappa s^*_\low \leq y_{(\ell+1)}\end{equation} and also \[\sum_{i=1}^d \clip_0^{\kappa s^*_\low}\left(y_i-\tilde{\mu}(s^*_\low)\right) =  \kappa(d-\ell+1)s^*_\low = s^*_\low\] where the last equality is because we are in the case that $\kappa(d-\ell+1) = 1$. The solution vector $\bx^*$ implied by $(\tilde{\mu}(s^*_\low),s^*_\low)$ is equal to $\bx^*(s^*_\low)$ (which is the solution implied by $(\mu(s^*_\low),s^*_\low)$) and therefore it is also optimal. The inequalities~\eqref{winequalities-covered} are again covered by~\eqref{winequalities}. Consequently, the inequalities~\eqref{winequalities} must be satisfied for \emph{some} solution pair $(s^*_\low,\tilde{\mu}(s^*_\low))$.
    \end{enumerate}

    The remainder of the proof is straightforward. If the pair $(s^*_\low, \mu(s^*_\low))$ satisfies~\eqref{winequalities} for indices $(\ell^*, k^*)$, then~\eqref{restated} becomes 
    \begin{equation}\label{restated'}
        \sum_{j=\ell^*}^{k^*}(y_{(j)}-\mu(s^*_\low)) + \kappa(d-k^*)s^*_\low = s^*_\low
    \end{equation} which implies that $\mu(s^*_\low)$ must be given by the linear equation $\mu_{\ell^*,k^*}$ in line 15. The inequalities~\eqref{winequalities} on $\mu^*_{\low}(s^*_\low)$ and $s^*_\low$ (as well as the fact that $0 \leq s^*_\low \leq 1/\kappa$) are equivalent to $s^*_\low \in [L_{\ell^*,k^*}, U_{\ell^*,k^*}]$. Because of the fact that $\mu(s^*_\low)$ is linear in $s^*_\low$, the objective $\frac{1}{2}\|\bx^*(s^*_\low)-\by\|^2$ is quadratic in $s^*_\low$, given by the quadratic function $Q_{\ell^*,k^*}$, and the clipped solution $s^*_{\ell^*,k^*}$ found in line 12 is precisely the minimizing value in $[L_{\ell^*,k^*}, U_{\ell^*,k^*}]$ and therefore must coincide with (or at least have the same objective value as) $s^*_\low$. Because $s^*_{\low}$ is an optimal solution of $s$, the value of $\frac{1}{2}\|\bx^*(s^*_{\ell,k})-\by\|^2$ corresponding to $s^*_{\ell,k}$ for any other $\ell,k$ cannot be less than that corresponding to $s^*_{\ell^*,k^*}$ and hence the algorithm indeed returns $\bx^*(s^*_{\low})$ for some optimal $s^*_\low$, as desired.
\end{proof}

\paragraph{Computation} The running time of Algorithm~\ref{projection-algo} is $O(d^2)$. This is because all of the following can be accomplished, for each $(\ell,k)$ pair, in constant time:
\begin{itemize}
    \item Determining the bounds $L_{\ell,k}, U_{\ell,k}$ in line 9 as the computation involves only simple algebraic manipulations using the precomputed quantities in~\eqref{precomputed}.
    \item Finding the unconstrained optimal solution $\tilde{s}_{\ell,k}$ to $Q_{\ell,k}$ in line 11 as it is easily expressed in terms of the coefficients of $Q_{\ell,k}$ which again we have constant-time access to due to the precomputation in line 2.
    \item Computing $\mu_{\ell,k}$, since the calculation again only involves the precomputed quantities from~\eqref{precomputed}.
\end{itemize}

\section{Additional experimental results}\label{app:app-exper-results}
\subsection{Job hiring dataset}\label{app:exper-job-hire}

\subsubsection{Diversity objective}

\begin{figure}
    \centering
    \includegraphics[scale=0.4]{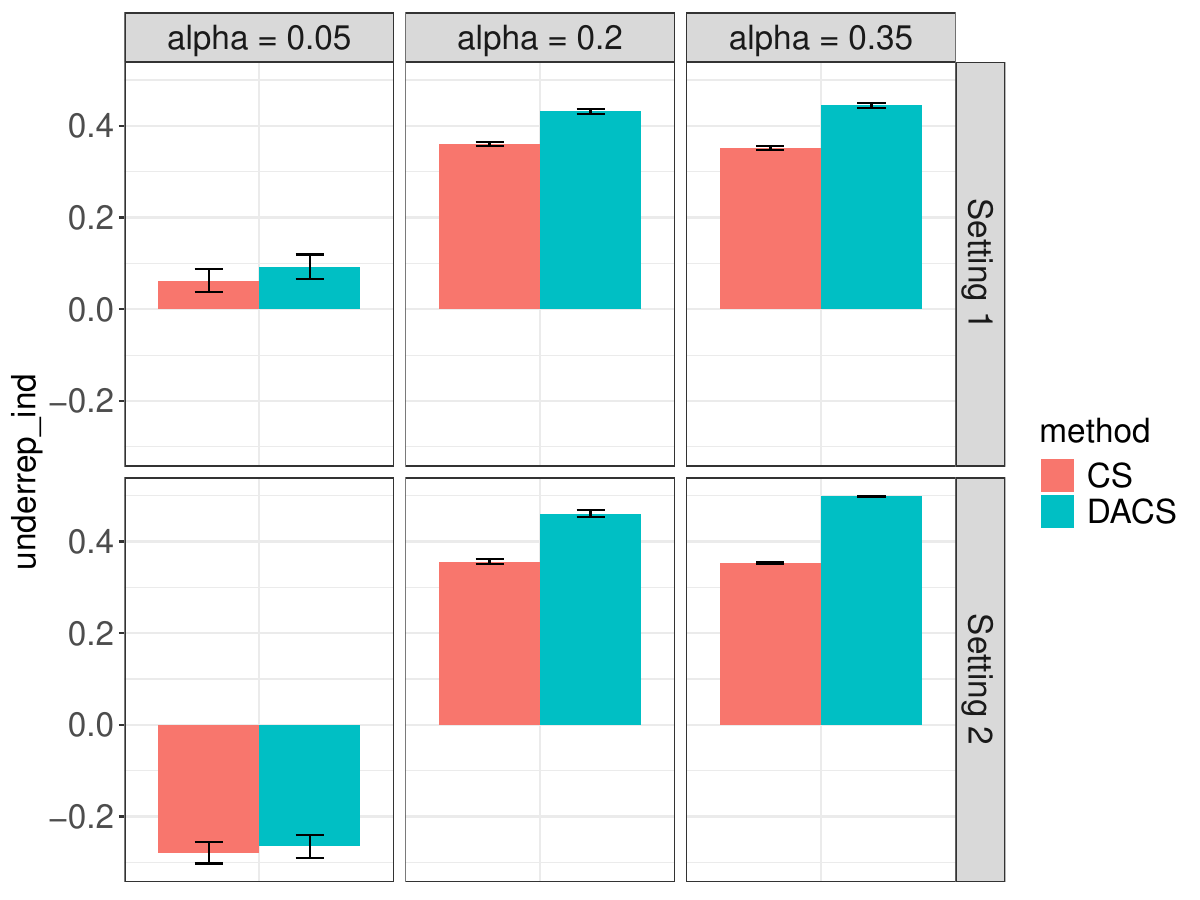}
    \caption{Underrepresentation index of DACS (blue) and CS (red) for various values of $\alpha$ and settings of $n$ and $m$ on job hiring dataset.}
    \label{fig:hiring-div-obj}
\end{figure}

Figure~\ref{fig:hiring-div-obj} shows the average diversity objective for DACS compared to CS as measured by the underrepresentation index. Our method produces selection sets with greater diversity than CS, as judged by the underrepresentation index.

\subsubsection{FDR, power, and number of selections}
\begin{figure}
    \centering
    \includegraphics[scale=0.4]{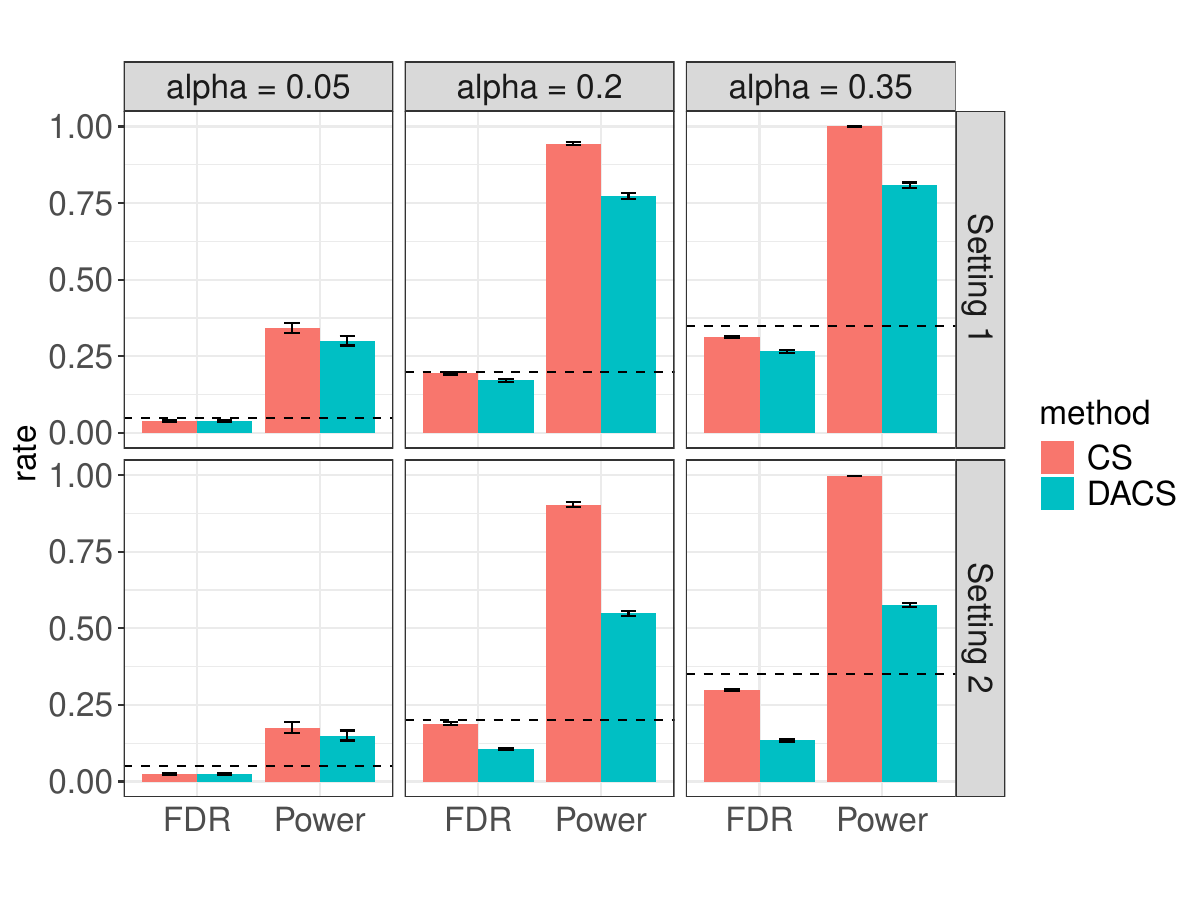}
    \caption{FDR and power of DACS (blue) and CS (red) for various values of $\alpha$ and settings of $n$ and $m$ on job hiring dataset.}
    \label{fig:hiring-fdr-power}
\end{figure}

Figure~\ref{fig:hiring-fdr-power} shows the FDR and power of both CS and DACS using the underrepresentation index. Our method controls FDR below the nominal level. Although it has lower power (since its selection set is a subset of CS'), it produces more diverse selection sets as seen in Figure~\ref{hiring:main-result}. The lower power of DACS is also reflected by the fact that---as we expect---it makes fewer selections as shown in Figure~\ref{fig:hiring-numr}

\begin{figure}
    \centering
    \includegraphics[scale=0.4]{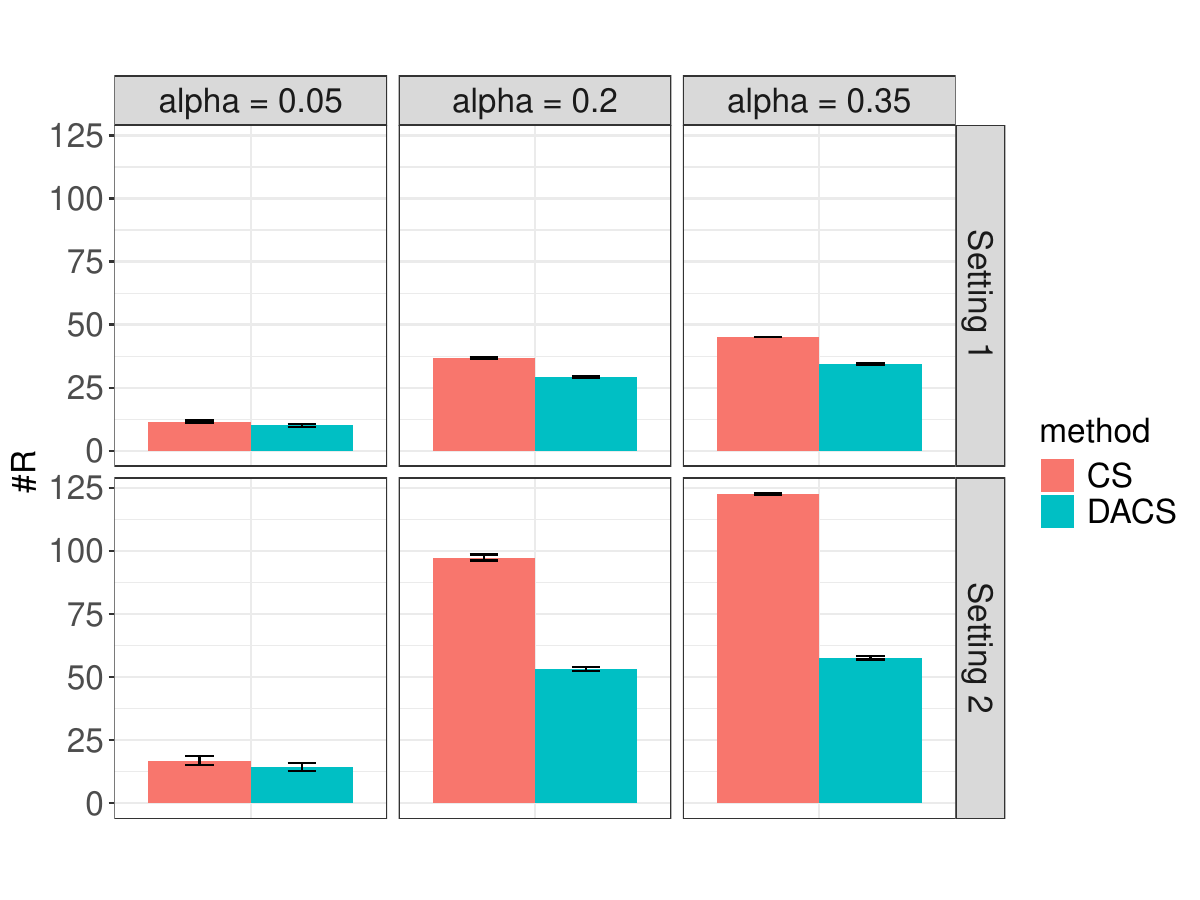}
    \caption{Number of selections made by DACS (blue) and CS (red) for various values of $\alpha$ and settings of $n$ and $m$ on job hiring dataset.}
    \label{fig:hiring-numr}
\end{figure}

\subsubsection{Average time}
\begin{figure}
    \centering
    \includegraphics[scale=0.4]{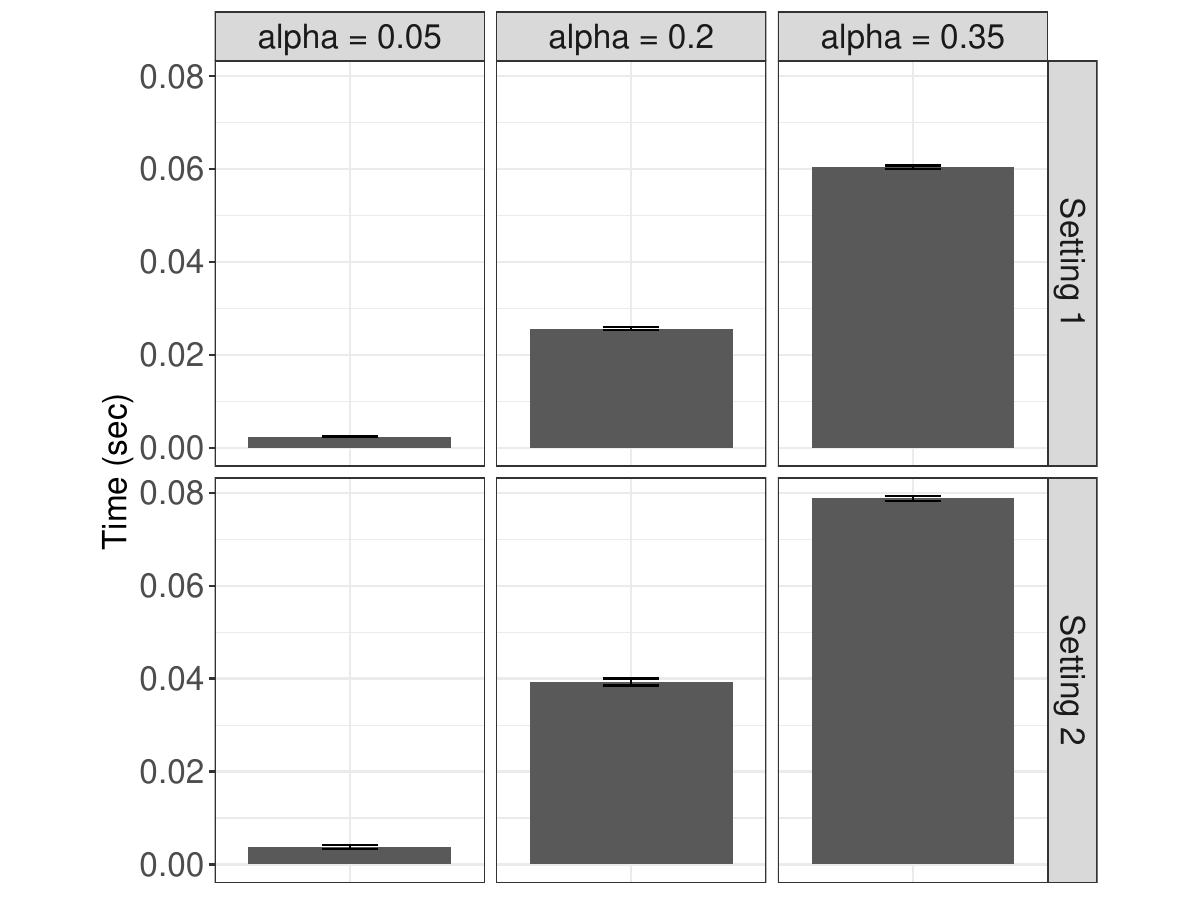}
    \caption{Average time taken by DACS for various values of $\alpha$ and settings of $n$ and $m$ on job hiring dataset.}
    \label{fig:hiring-time}
\end{figure}

Figure~\ref{fig:hiring-time} shows the average time taken by DACS using the underrepresentation index for various values of $\alpha$ and the two simulation settings considered in Section~\ref{expers:hiring}. In all cases, our method takes no longer than $0.08$ seconds, on average. These timing calculations were performed on a 2022 MacBook Air (M2 chip) with 8GB of RAM.

\subsubsection{Further details about the data}
In this section we simply report the null proportions and conditional null proportions given the diversification variables. In particular, the average proportion of candidates in the dataset that were not hired is $0.31$. The average proportion of men not hired is $0.28$ and the average proportion of women not hired is $0.37$.

\subsection{Drug discovery dataset}\label{app:exper-drug-discover}

\subsubsection{FDR, power, and number of selections}
\begin{figure}
    \centering
    \includegraphics[scale=0.4]{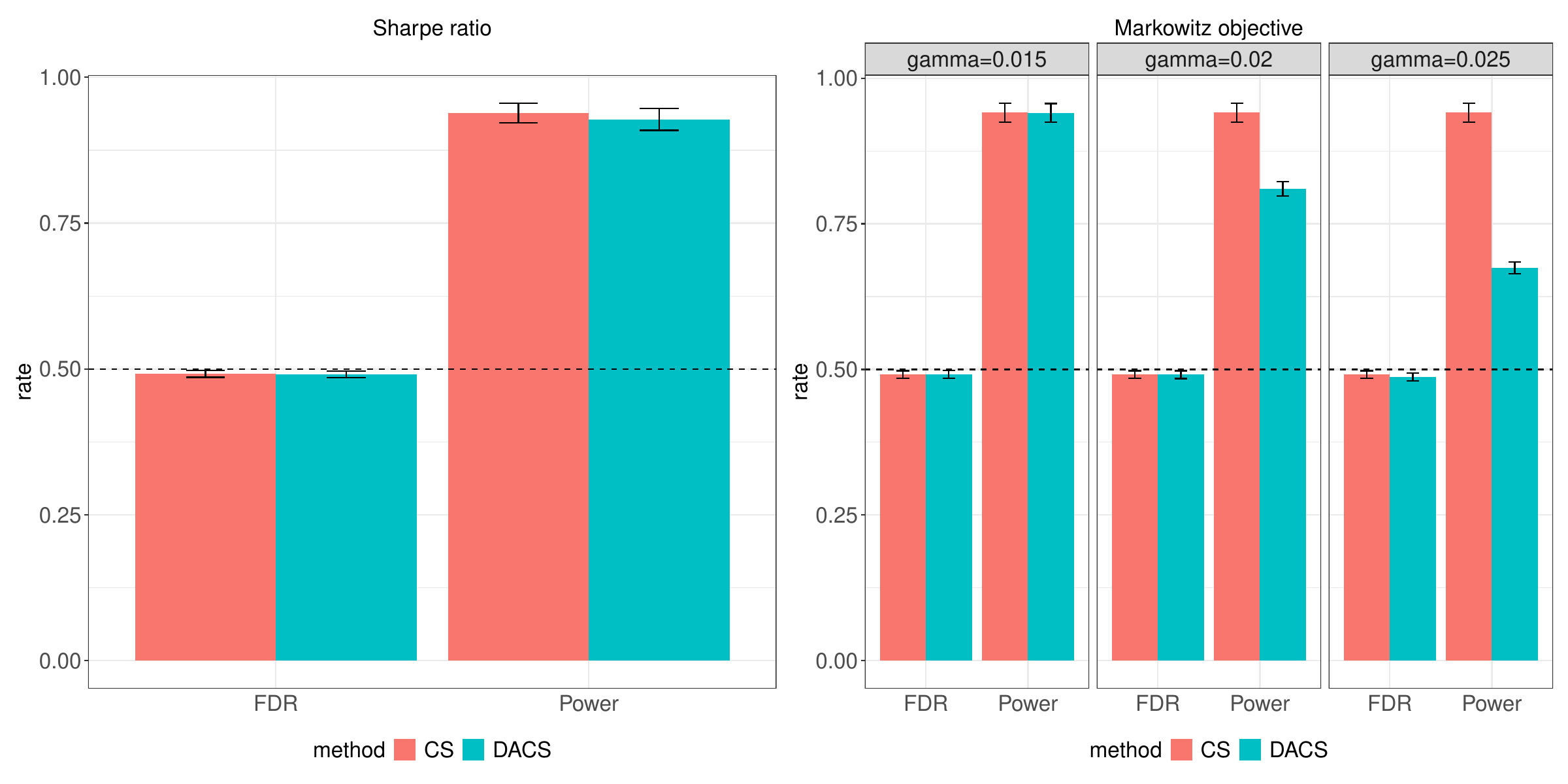}
    \caption{FDR and power under Sharpe ratio and Markowitz objective for various values of $\gamma$ (note that CS does not depend on $\gamma$, hence why its results for different $\gamma$ values are the same) for DACS (blue) and CS (red). Dashed horizontal line denotes nominal level $\alpha$.}
    \label{fig:drugexper-fdr-power}
\end{figure}

\begin{figure}
    \centering
    \includegraphics[scale=0.4]{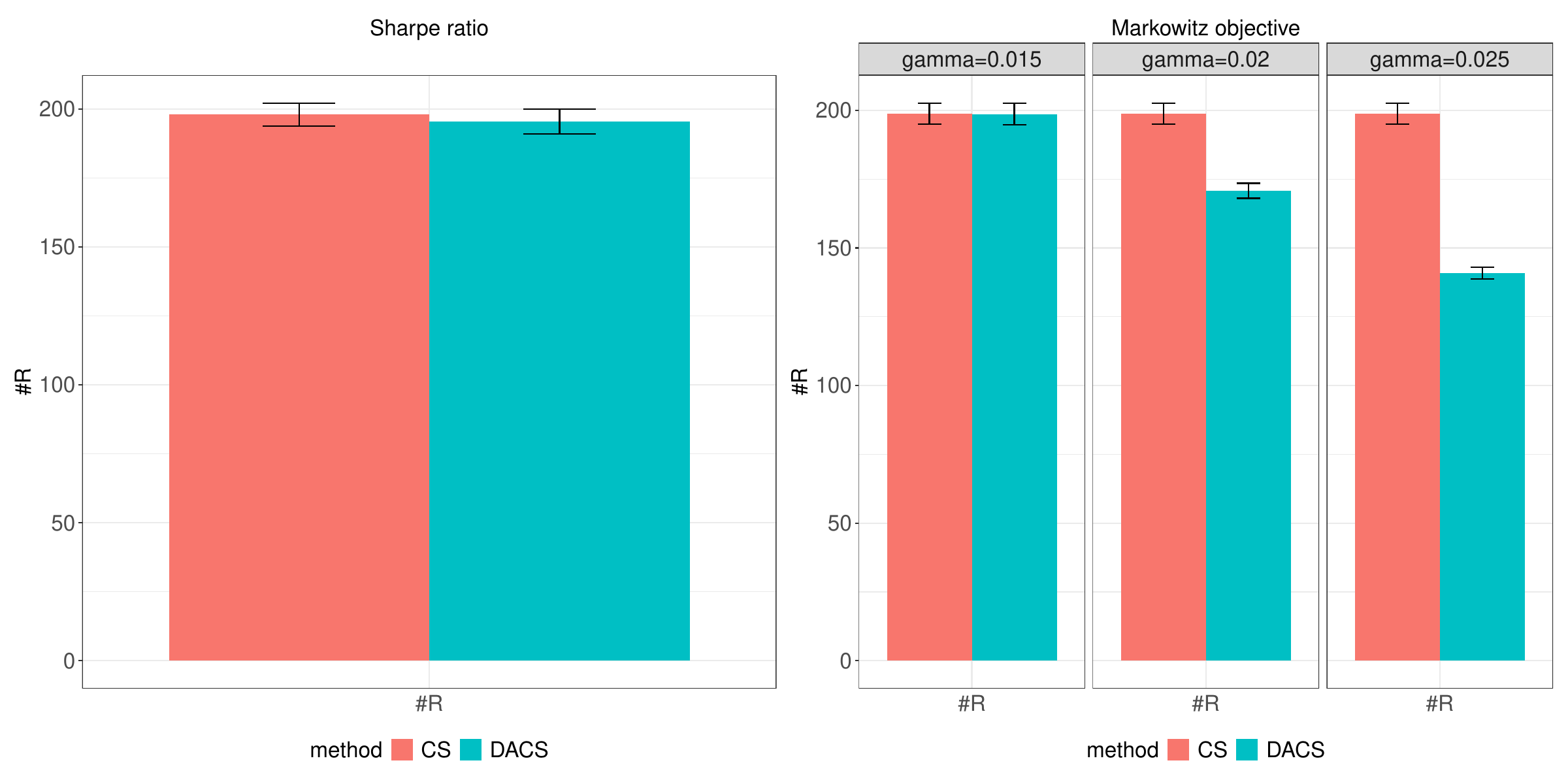}
    \caption{Number of selections under Sharpe ratio and Markowitz objective for various values of $\gamma$ (note that CS does not depend on $\gamma$, hence why its results for different $\gamma$ values are the same) for DACS (blue) and CS (red).}
    \label{fig:drugexper-numr}
\end{figure}

Figures~\ref{fig:drugexper-fdr-power} and \ref{fig:drugexper-numr} show the FDR, power, and number of selections made by DACS and CS for the Sharpe ratio and Markowitz objective for various values of $\gamma$. Under the Sharpe ratio, DACS and CS perform extremely similarly, obtaining nearly the same FDR, power, and number of selections. For the Markowitz objective, the results mirror those in Appendix~\ref{app:add-sim-results-sharpe-markowitz-fdr-power}: as $\gamma$ decreases, the power and number of selections also increase because the objective favors making more selections. Finally, just as in Appendix~\ref{app:add-sim-results-sharpe-markowitz-fdr-power}, the FDR is controlled below the nominal level $\alpha$ despite the fact that Proposition~\ref{relaxed-fdr} guarantees its control only below $1.3\alpha$.

\subsubsection{Average time}

\begin{figure}
    \centering
    \includegraphics[scale=0.4]{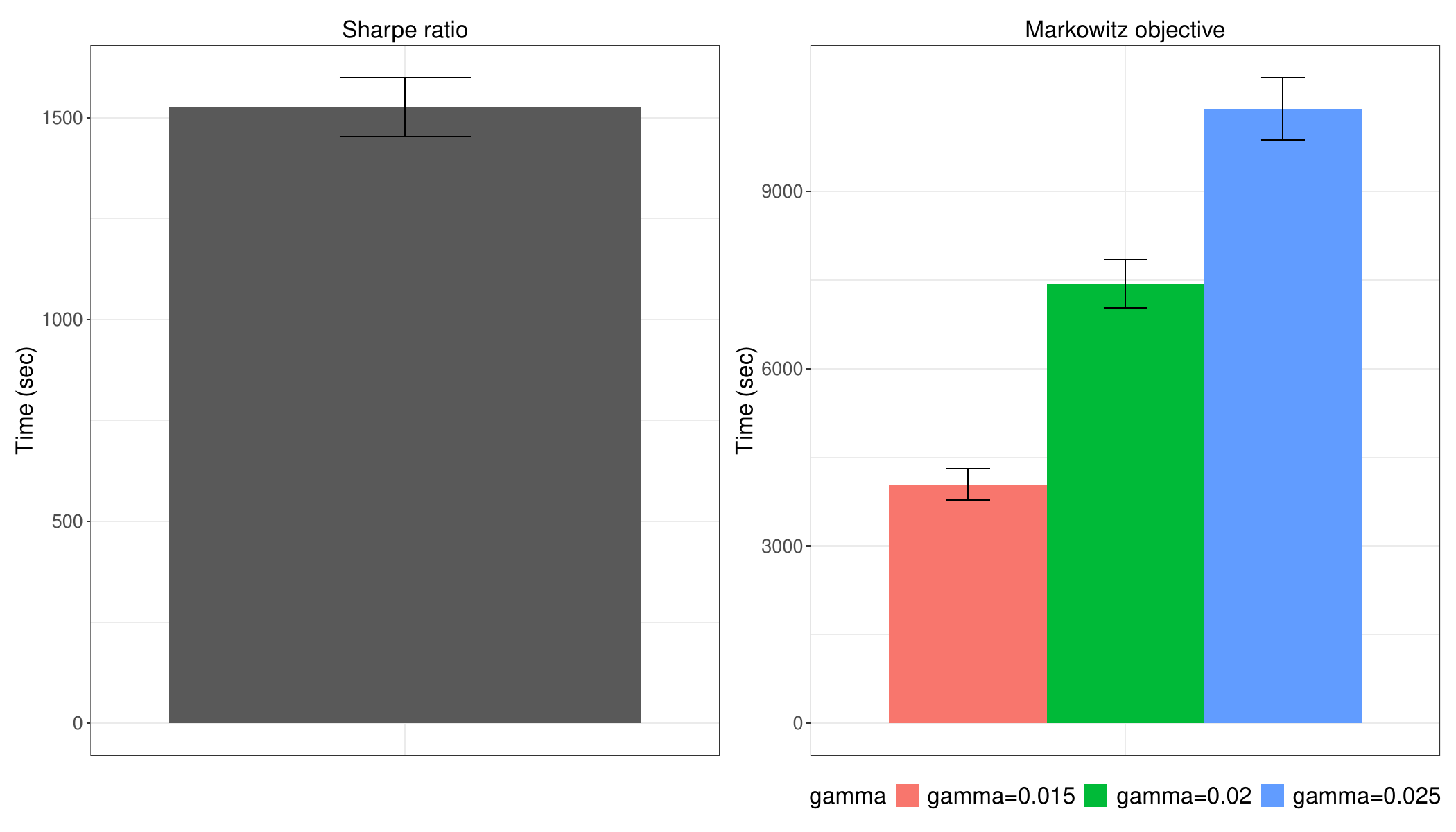}
    \caption{Average time under Sharpe ratio and Markowitz objective for various values of $\gamma$ for DACS.}
    \label{fig:drugexper-time}
\end{figure}

Figure~\ref{fig:drugexper-time} shows the average time taken by our method for the Sharpe ratio and the Markowitz objective for various values of $\gamma$ when run on Stanford's Sherlock cluster using a single core and $10$GB of memory. For the Sharpe ratio it takes, on average, just over 25 minutes. For the Markowitz objective, the computation time takes longer as $\gamma$ increases, taking just under $2.9$ hours for $\gamma=0.025$. We note that reward computation---which constitutes the bulk of the overall runtime---is embarrassingly parallel over the $500$ Monte Carlo samples used for approximation. Consequently, substantial speedups can be achieved by distributing this computation across multiple compute nodes.

\section{Further simulation details}\label{app:add-sim-detail}

\subsection{Underrepresentation index}\label{app:add-sim-detail-underrep}
In this section, we provide detailed descriptions of the $6$ simulation environments considered in Section~\ref{sims:underrep}. In all settings, $X_i$ are generated via a hierarchical Gaussian mixture model with $Z_i$ equaling the cluster index of the higher level. More specifically, in each of these settings, there are $C$ total components and $C_{\sub}$ sub-clusters per component, where both of these numbers are setting-dependent. The (observed) categorical diversification variable $Z_i$ is sampled according to a $\text{Multinomial}\left(1, \left(\pi_1, \ldots, \pi_C\right)\right)$ distribution and the (unobserved) sub-cluster index $Z^{\sub}_{i}$ is sampled, conditional on $Z_i$, according to a $\text{Multinomial}\left(1, \left(\pi^{(Z_i)}_{1}, \ldots, \pi^{(Z_i)}_{C_{\sub}}\right)\right)$ distribution. Finally, $X_i$ is sampled, conditional on $Z^{\sub}_{i}$, from the Normal distribution $\mathcal{N}(\mu_{Z^{\sub}_i}, 1)$ and $Y_i = \mu(X_i) + \epsilon_i$ where $\epsilon_i \overset{iid}{\sim} \mathcal{N}(0, 1)$. The probabilities $\left(\pi_1, \ldots, \pi_C\right),$ $\left(\pi^{(c)}_{1}, \ldots, \pi^{(c)}_{C_{\sub}}\right), c = 1, \ldots, C$ as well as the means $\mu_j$ and true regression function $\mu(\cdot)$ are setting-dependent and we provide their values (along with the values of $C$ and $C_{\sub}$) for each simulation setting in Table~\ref{fig:underrep-table}.

\begin{table}[h]
\centering
\begin{tabularx}{\textwidth}{lX}
\toprule
\textbf{Setting} & \textbf{Parameters} \\
\midrule
Setting 1 & 
$C=3$, $C_{\sub}=3$, 
$\boldsymbol{\pi} = (1/3, 1/3, 1/3)$\newline
$\boldsymbol{\pi}^{(1)} = (0.8, 0.05, 0.15)$, 
$\boldsymbol{\pi}^{(2)} = (0.2, 0.6, 0.2)$, 
$\boldsymbol{\pi}^{(3)} = (0.2, 0.2, 0.6)$\newline
$\boldsymbol{\mu} = (-0.5, 1.5, 2)$, 
$\mu(x) = x$ \\
\midrule
Setting 2 & 
$C=2$, $C_{\sub}=3$, 
$\boldsymbol{\pi} = (0.5, 0.5)$\newline
$\boldsymbol{\pi}^{(1)} = (0.8, 0.05, 0.15)$, 
$\boldsymbol{\pi}^{(2)} = (0.15, 0.75, 0.1)$\newline
$\boldsymbol{\mu} = (0, -2, 1.5)$, 
$\mu(x) = x^2 - 1$ \\
\midrule
Setting 3 & 
$C=2$, $C_{\sub}=3$, 
$\boldsymbol{\pi} = (0.5, 0.5)$\newline
$\boldsymbol{\pi}^{(1)} = (0.05, 0.85, 0.1)$, 
$\boldsymbol{\pi}^{(2)} = (0.4, 0.2, 0.4)$\newline
$\boldsymbol{\mu} = (0, -\pi, 0.7)$, 
$\mu(x) = 2\cos(x)$ \\
\midrule
Setting 4 & 
$C=4$, $C_{\sub}=5$, 
$\boldsymbol{\pi} = (0.25, 0.25, 0.25, 0.25)$\newline
$\boldsymbol{\pi}^{(1)} = (0.2, 0, 0, 0, 0.8)$\newline
$\boldsymbol{\pi}^{(2)} = (0, 0.22, 0.35, 0.43, 0)$\newline
$\boldsymbol{\pi}^{(3)} = (0.15, 0.35, 0.15, 0.1, 0.25)$\newline
$\boldsymbol{\pi}^{(4)} = (0.2, 0.05, 0.05, 0.05, 0.65)$\newline
$\boldsymbol{\mu} = (-2, -1, 0, 1.5, 3)$, 
$\mu(x) = 3\mathds{1}\{x > 0\} - 1.5$ \\
\midrule
Setting 5 & 
$C=2$, $C_{\sub}=3$, 
$\boldsymbol{\pi} = (0.7, 0.3)$\newline
$\boldsymbol{\pi}^{(1)} = (0.8, 0.05, 0.15)$, 
$\boldsymbol{\pi}^{(2)} = (0.15, 0.75, 0.1)$\newline
$\boldsymbol{\mu} = (0, -2, 1.5)$, 
$\mu(x) = x^2 - 1$ \\
\midrule
Setting 6 & 
$C=3$, $C_{\sub}=3$, 
$\boldsymbol{\pi} = (1/3, 1/3, 1/3)$\newline
$\boldsymbol{\pi}^{(1)} = (1, 0, 0)$, 
$\boldsymbol{\pi}^{(2)} = (0.2, 0.2, 0.6)$, 
$\boldsymbol{\pi}^{(3)} = (0.2, 0.6, 0.2)$\newline
$\boldsymbol{\mu} = (-0.75, 0.5, 1.2)$, 
$\mu(x) = x^3 + x$ \\
\bottomrule
\end{tabularx}
\caption{Setting-dependent parameters used in underrepresentation index simulations.}
\label{fig:underrep-table}
\end{table}

\subsection{Sharpe ratio and Markowitz objective}\label{app:add-sim-detail-sharpe-markowitz}
In this section, we describe further simulation details for Section~\ref{sims:sharpe-markowitz} regarding the Sharpe ratio and Markowitz objective. We consider four settings, which we describe below.

\paragraph{Setting 1} In the first setting, the data are generated from a mixture over Gaussian distributions with means \[\begin{bmatrix}
    1\\
    -1\\
    1
\end{bmatrix}, \begin{bmatrix}
    3/4\\
    4\\
    2
\end{bmatrix}, \begin{bmatrix}
    -2\\
    -3/2\\
    1
\end{bmatrix}, \begin{bmatrix}
    3/2\\
    2\\
    3/2
\end{bmatrix}, \begin{bmatrix}
    -5\\
    3\\
    2
\end{bmatrix}\] and common covariance $I_{3 \times 3}/4$. The response is distributed as $Y \mid X \sim \mathcal{N}(X_1X_2 + X_3, 1)$.

\paragraph{Setting 2} In this setting, we sample $X \sim \mathcal{N}(0, I_{5 \times 5})$ and $Y \mid X \sim \mathcal{N}\left(\|X\|_2^2 - \frac{7}{2}, 1\right)$.

\paragraph{Setting 3} Here, we sample $X \sim \mathcal{N}(0, I_{5 \times 5})$ and take $Y \mid X \sim \mathcal{N}\left(2\cos(X_1), 1\right)$.

\paragraph{Setting 4} In this setting, we take $X \sim \mathcal{N}(0, I_{5 \times 5})$ and $Y \mid X \sim \mathcal{N}\left(\frac{7}{2}-\|X\|_2^2, 1\right)$.

As mentioned in Section~\ref{lp-relax}, the relaxed optimal values in equation~\eqref{relaxed-go-def} involve expectations of the diversity over the Bernoulli random vector $\bchi$. For the Markowitz objective, these are easily computed because, for any positive definite matrix $\Lambda \in \mathbb{R}^{p \times p}$, one simply has
\begin{align*}
    \bE_{\xi_i \overset{\text{ind}}{\sim} \text{Bern}(\chi_i), i \in [p]}\left[\bxi^\top \mathbf{1}- \frac{\gamma}{2}\bxi^\top \Lambda \bxi\right] = \bchi^\top \mathbf{1} - \frac{\gamma}{2}\left(\bchi^\top \Lambda \bchi + \text{diag}\left(\Lambda\right)^\top \bchi - \text{diag}\left(\Lambda\right)^\top [\bchi \odot \bchi]\right)
\end{align*}
for any $\bchi \in [0,1]^p$, where $\text{diag}\left(\Lambda\right)$ denotes the vector comprising the diagonal entries of $\Lambda$ and $\bchi \odot \bchi$ denotes the elementwise product of $\bchi$ with itself. On the other hand, as discussed in Section~\ref{lp-relax}, we approximate this expectation using MC for the Sharpe ratio; in particular, in our simulations we use $50$ MC samples.

\section{Additional simulation results}\label{app:app-sim-results}
This section provides additional simulation results for Section~\ref{sims}. All simulations in this section were performed on Stanford's Sherlock cluster using a single core and $10$GB of memory.
\subsection{Underrepresentation index}\label{app:add-sim-results-underrep}
In this section, we provide further simulation results for the underrepresentation index simulations presented in Section~\ref{sims:underrep} of the main text.

\subsubsection{Average proportions and number of selections for OLS and SVM}\label{app:avg-cluster-underrep}
\begin{figure}
\centering
    \includegraphics[scale=0.6]{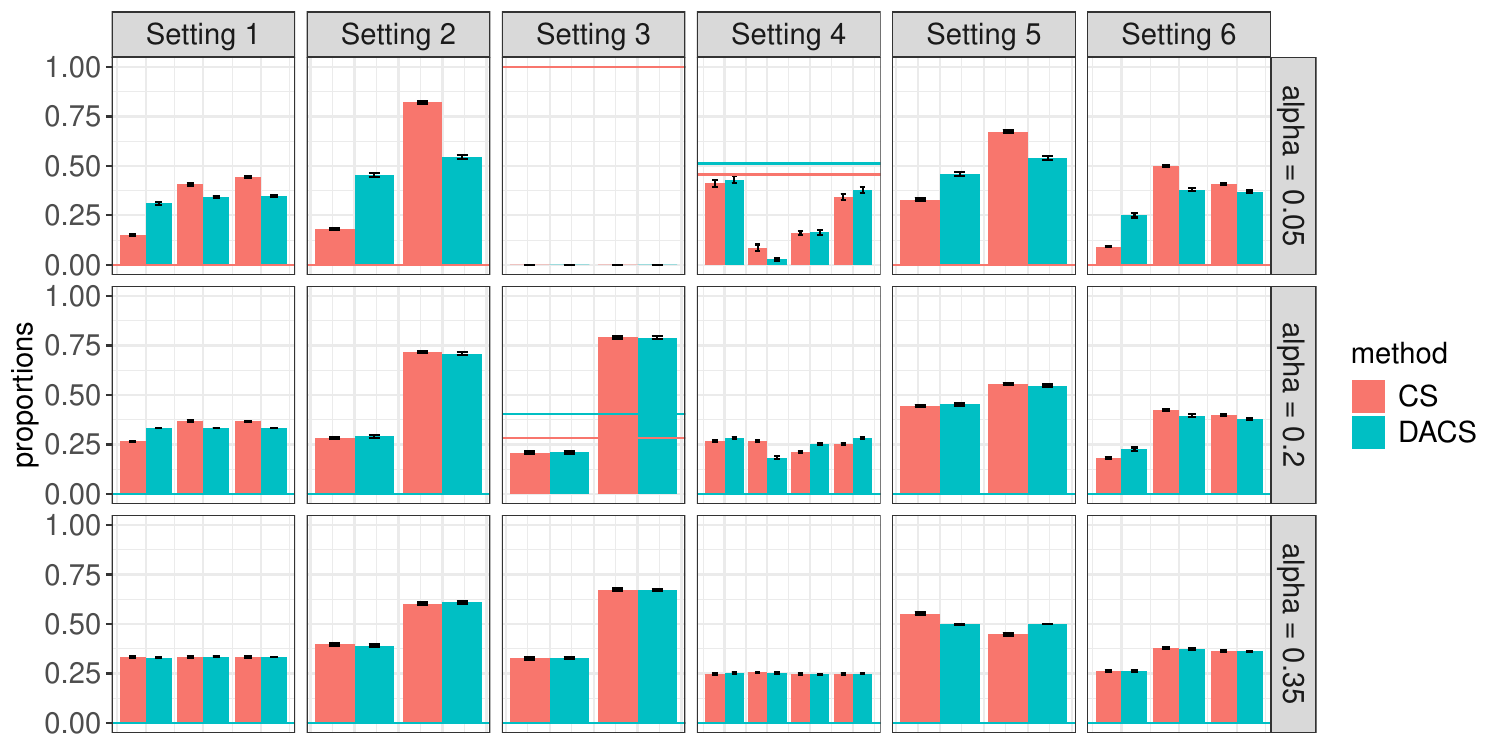}
    \caption{Average cluster proportions conditional on being non-empty and probability of being empty for DACS selection set compared to CS selection set for various simulation settings and nominal levels $\alpha$; $\hat{\mu}$ is a fitted OLS. Within any given cell, bars which are closer to uniform denote a more diverse selection set. Horizontal lines, when visible, denote the average proportion of sets which are empty. Both bars are at $1$ for Setting 3, $\alpha = 0.05$ because the nominal level is so small that both methods fail to make any selections in all $250$ simulation replications.}
    \label{fig:underrep-result-ols-cluster}
\end{figure}

\begin{figure}
\centering
    \includegraphics[scale=0.6]{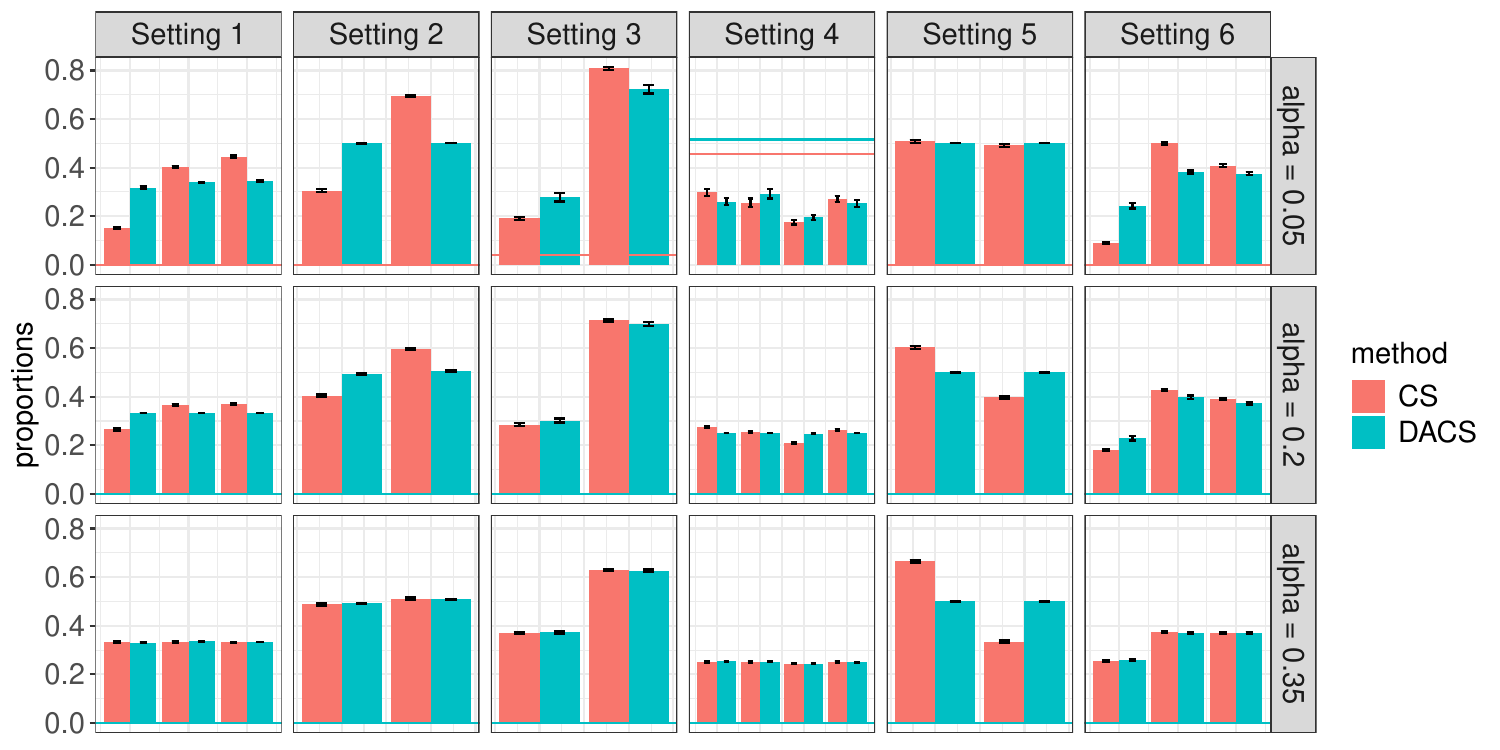}
    \caption{Average cluster proportions conditional on being non-empty and probability of being empty for DACS selection set compared to CS selection set for various simulation settings and nominal levels $\alpha$; $\hat{\mu}$ is a fitted SVM. Within any given cell, bars which are closer to uniform denote a more diverse selection set. Horizontal lines, when visible, denote the average proportion of sets which are empty.}
    \label{fig:underrep-result-svm-cluster}
\end{figure}

Figures~\ref{fig:underrep-result-ols-cluster} and \ref{fig:underrep-result-svm-cluster} give the average cluster proportions and average number of times that the selection set is empty for OLS- and SVM-fitted score functions for various simulation settings and nominal levels $\alpha$. In Setting 4, using the OLS-fitted $\hat{\mu}$, DACS does sometimes produce less diverse selection sets than CS (also, for Setting 3, $\alpha=0.2$, DACS does have lower underrepresentation index value than CS as shown in Figure~\ref{fig:diversity-result-ols-cluster}). This is most likely because, in this setting, the linear model is (very) misspecified (as per Table~\ref{fig:underrep-table}, the true regression function is a scaled and shifted sign function) which, in combination with the small FDR level constraint, forces the BH stopping time $\tau_{\bh}$ to be quite small, disallowing DACS much room for error and, as evidenced by the plot, causing it to err. Apart from this cell, these results essentially mirror those for the MLP-fitted score function given in Figure~\ref{fig:underrep-result1} in the main text. Comparing results for different nominal levels $\alpha$---in combination with Figures~\ref{fig:numr-result-ols-cluster} and \ref{fig:numr-result-svm-cluster}---we obtain the same conclusion as in Section~\ref{sims:underrep} of the main text: if one is willing to tolerate a larger nominal level for the sake of diversity, then DACS at a more liberal nominal level is able to construct more diverse selection sets of the same or greater size than CS at a more conservative nominal level, especially in settings 3, 4, and 6.

\begin{figure}
\centering
    \includegraphics[scale=0.6]{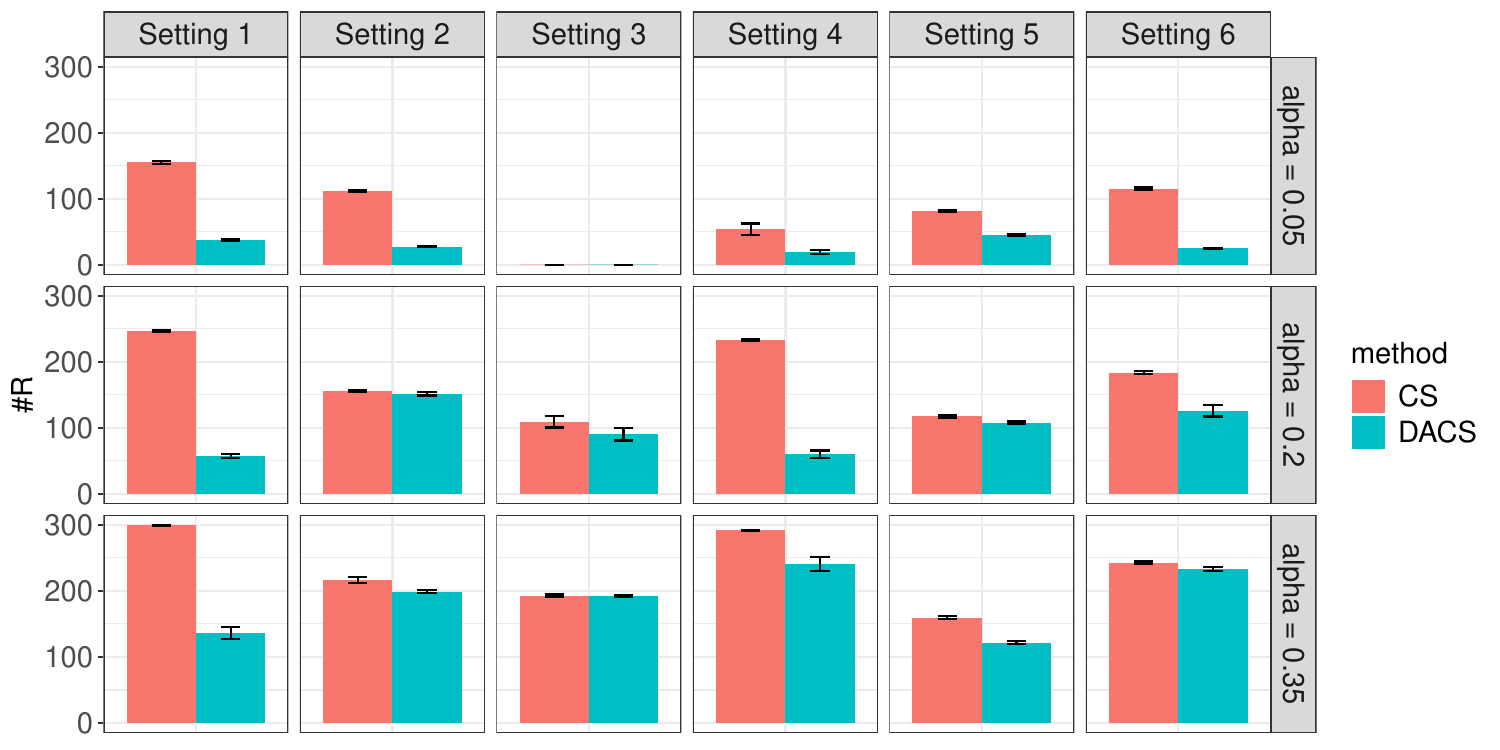}
    \caption{Average numbers of selections made by DACS selection set compared to CS selection set for various simulation settings and nominal levels $\alpha$; $\hat{\mu}$ is a fitted OLS.}
    \label{fig:numr-result-ols-cluster}
\end{figure}

\begin{figure}
\centering
    \includegraphics[scale=0.6]{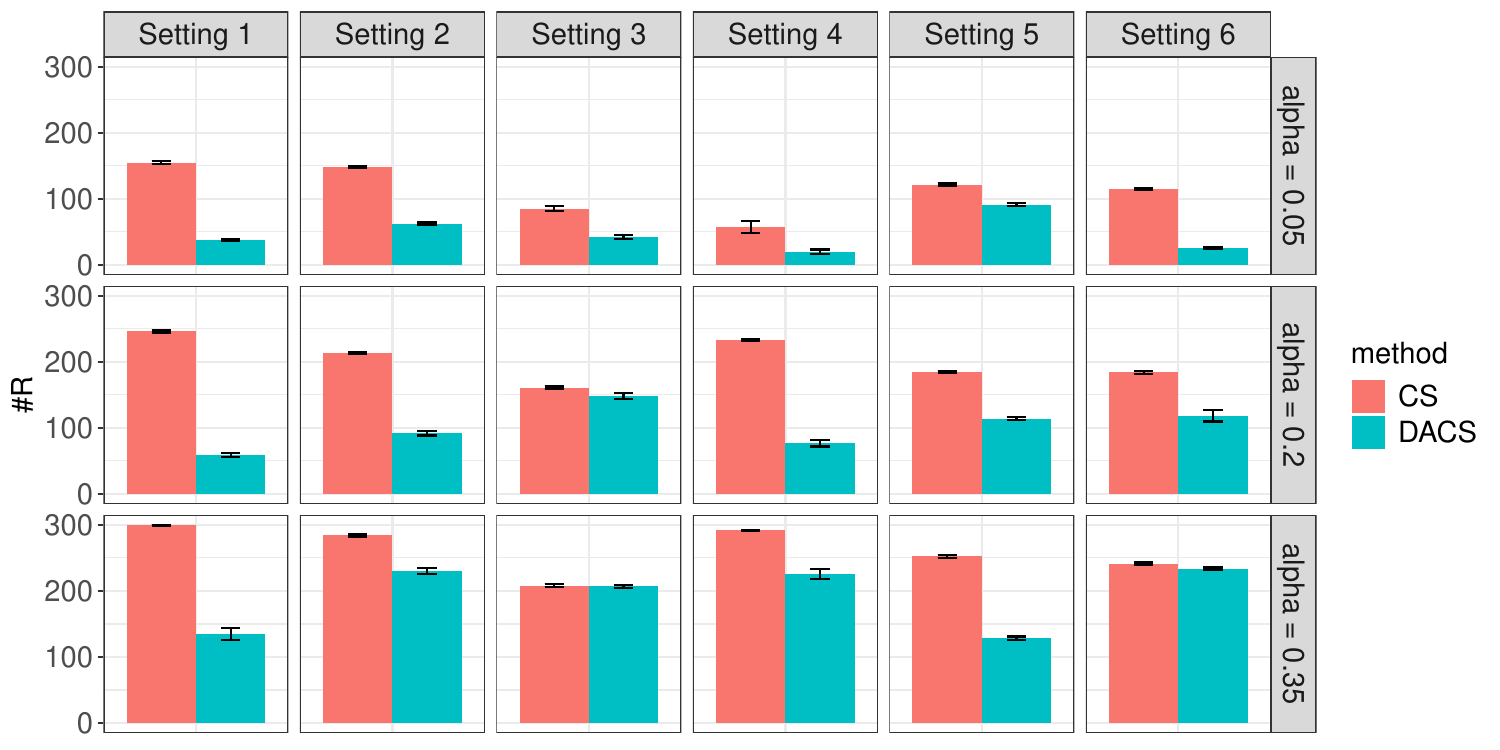}
    \caption{Average numbers of selections made by DACS selection set compared to CS selection set for various simulation settings and nominal levels $\alpha$; $\hat{\mu}$ is a fitted SVM.}
    \label{fig:numr-result-svm-cluster}
\end{figure}

\subsubsection{Average diversities for OLS, MLP, and SVM}\label{app:avg-div-underrep}
\begin{figure}
\centering
    \includegraphics[scale=0.6]{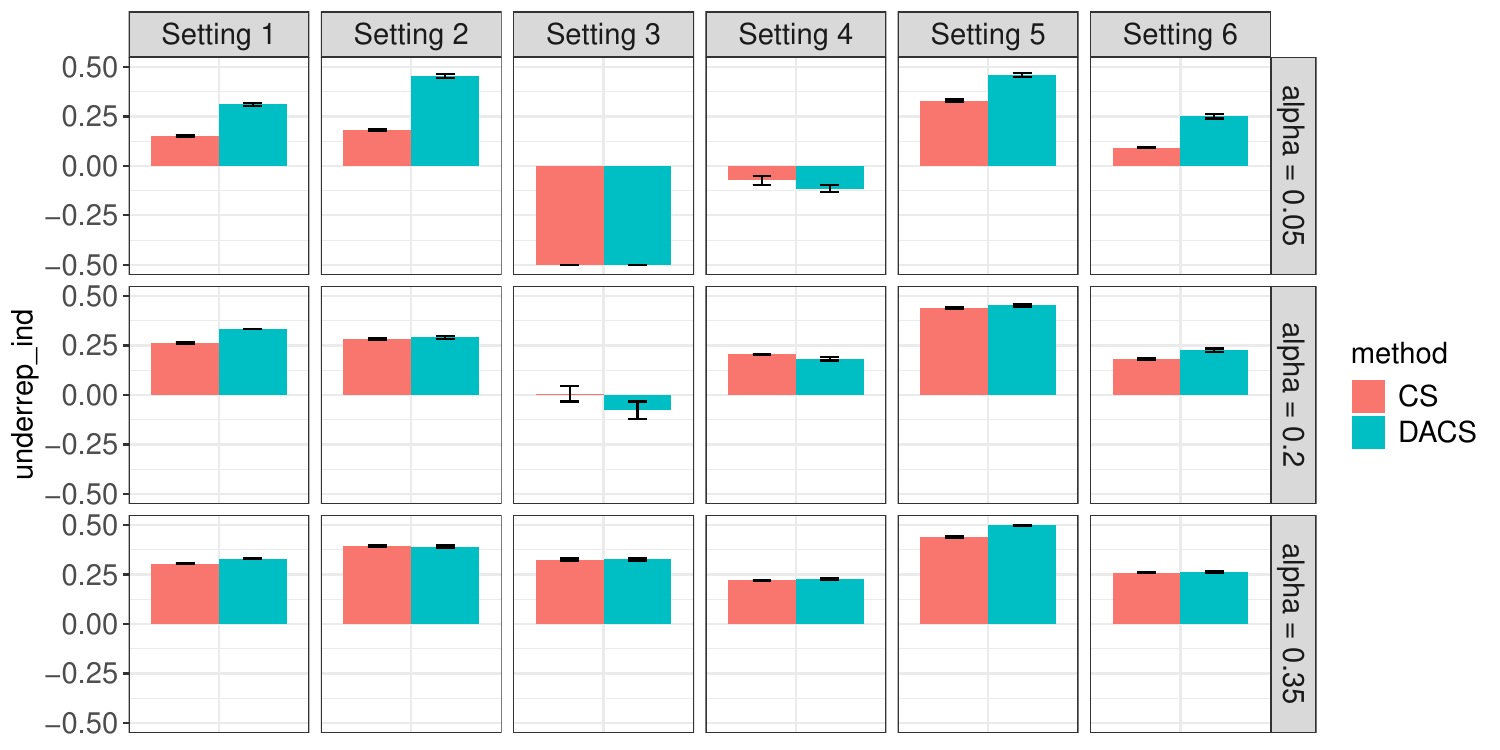}
    \caption{Average underrepresentation index value of DACS selection set compared to CS selection set for various simulation settings and nominal levels $\alpha$; $\hat{\mu}$ is a fitted OLS. }
    \label{fig:diversity-result-ols-cluster}
\end{figure}

\begin{figure}
\centering
    \includegraphics[scale=0.6]{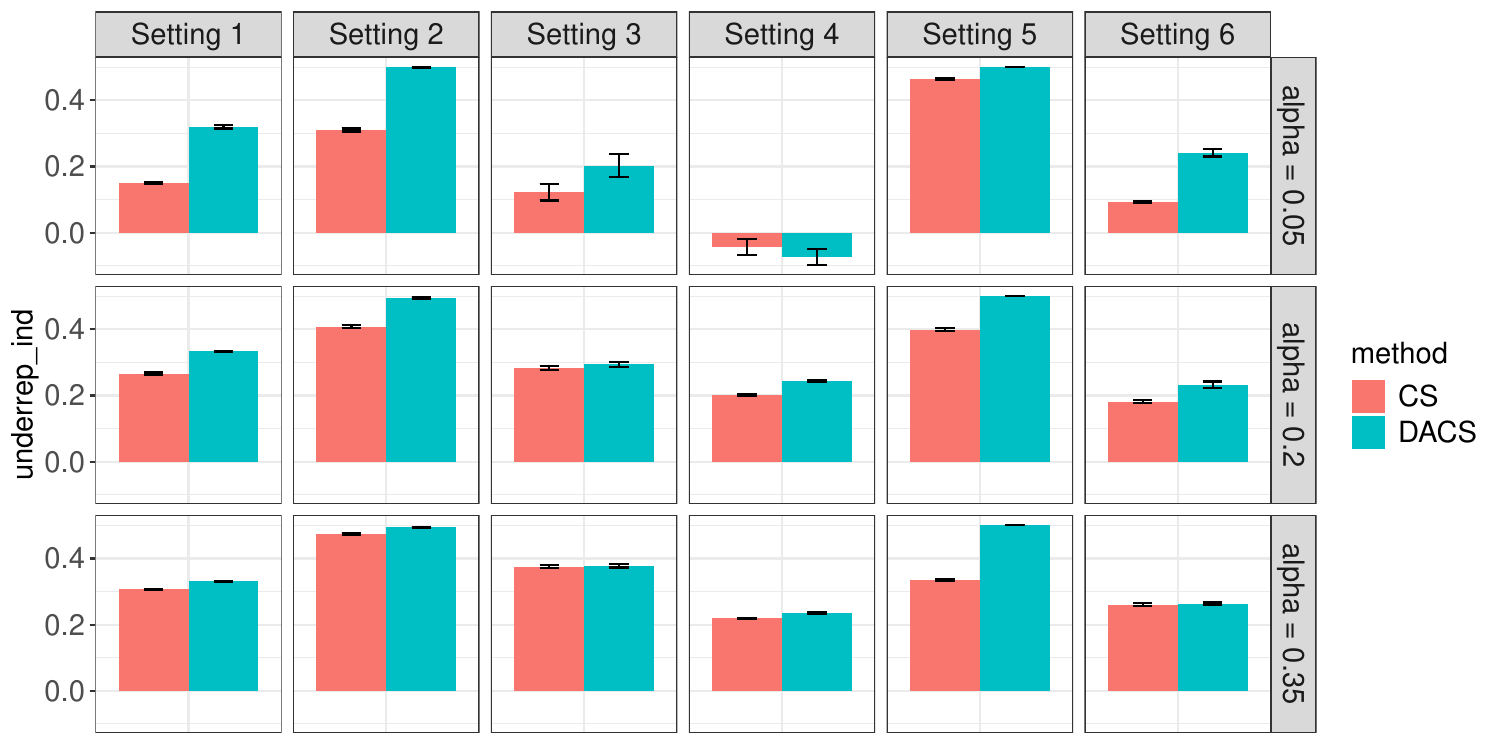}
    \caption{Average underrepresentation index value of DACS selection set compared to CS selection set for various simulation settings and nominal levels $\alpha$; $\hat{\mu}$ is a fitted MLP.}
    \label{fig:diversity-result-mlp-cluster}
\end{figure}

\begin{figure}
\centering
    \includegraphics[scale=0.6]{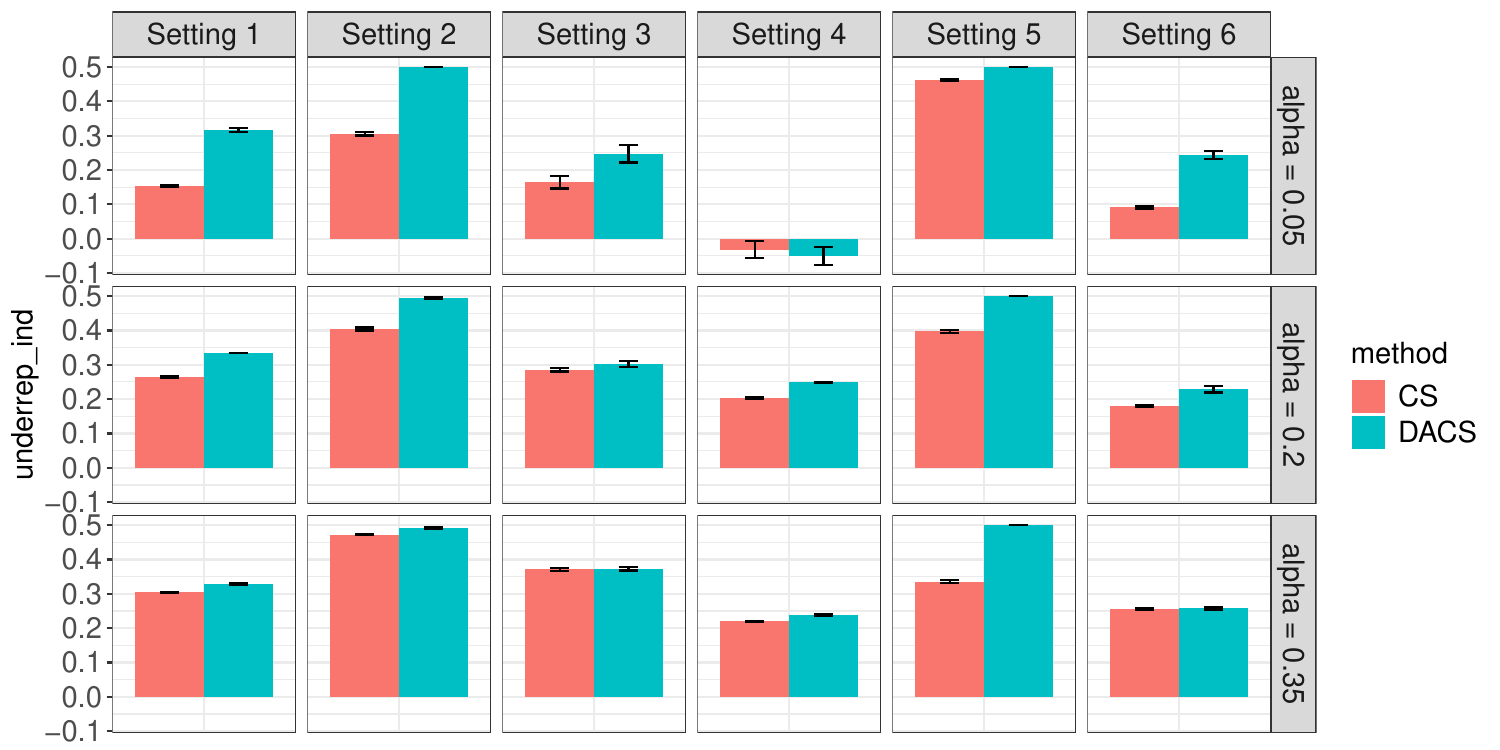}
    \caption{Average underrepresentation index value of DACS selection set compared to CS selection set for various simulation settings and nominal levels $\alpha$; $\hat{\mu}$ is a fitted SVM.}
    \label{fig:diversity-result-svm-cluster}
\end{figure}

Figures~\ref{fig:diversity-result-ols-cluster}, \ref{fig:diversity-result-mlp-cluster}, and \ref{fig:diversity-result-svm-cluster} plot the average underrepresentation index for the OLS-, MLP-, and SVM-fitted $\hat{\mu}$ models in the various simulation settings and nominal levels $\alpha$. In correspondence with the results of Figure~\ref{fig:underrep-result-ols-cluster}, the only cells in which DACS produces, on average, less diverse selection sets than CS at the same nominal level is when $\alpha =  0.05$ in Setting 4 for all three regressors and $\alpha=0.2$ in Settings 3 and 4 for OLS, and the same explanation as given in Section~\ref{sims:underrep} applies here as well: the small size of the CS selection set in these settings makes the diversification task more challenging, causing DACS to err.

\subsubsection{FDR and power for OLS, MLP, and SVM}\label{app:avg-mt-metric-underrep}
\begin{figure}
\centering
    \includegraphics[scale=0.6]{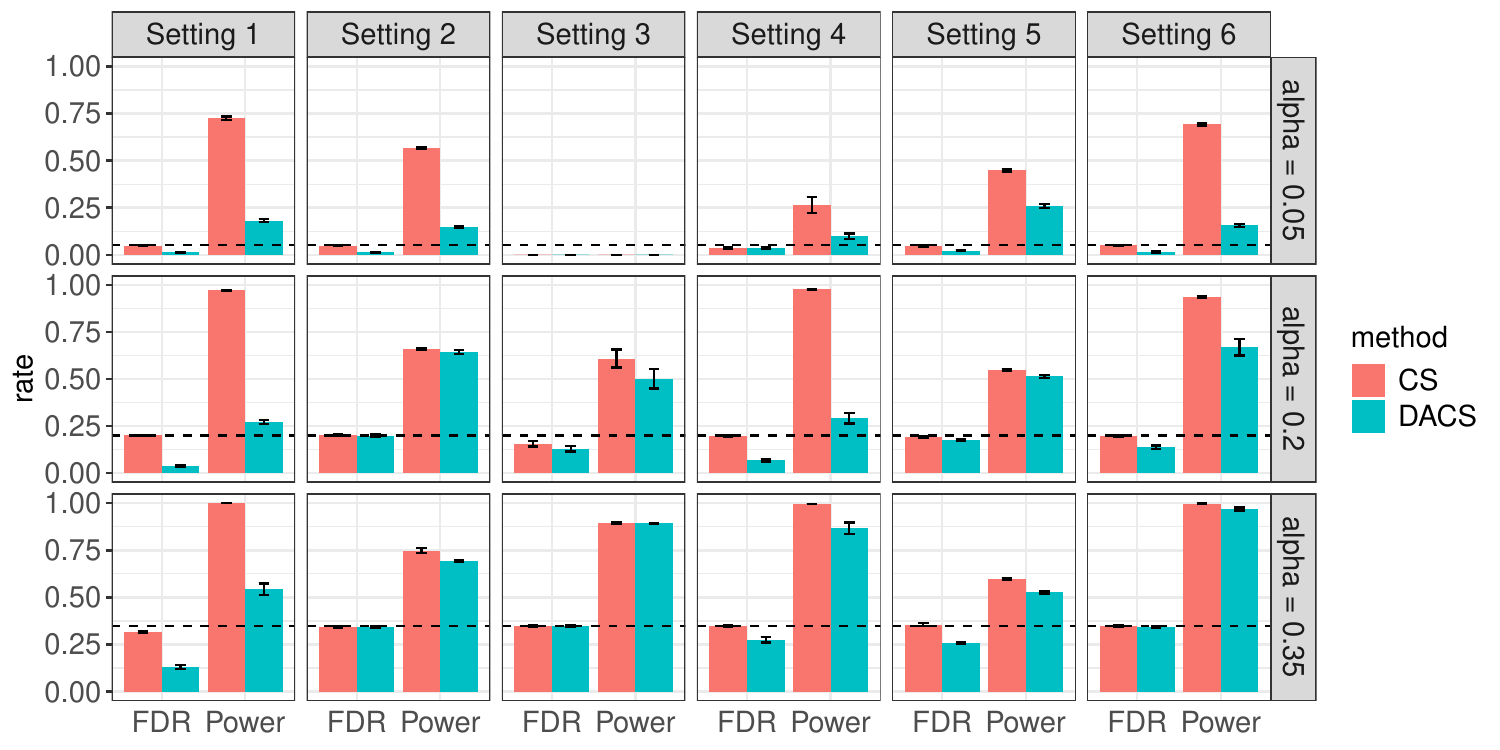}
    \caption{Average FDR and power of DACS and CS selection sets for various simulation settings and nominal levels $\alpha$ (dashed line denotes nominal level); $\hat{\mu}$ is a fitted OLS.}
    \label{fig:fdrpower-result-ols-cluster}
\end{figure}

\begin{figure}
\centering
    \includegraphics[scale=0.6]{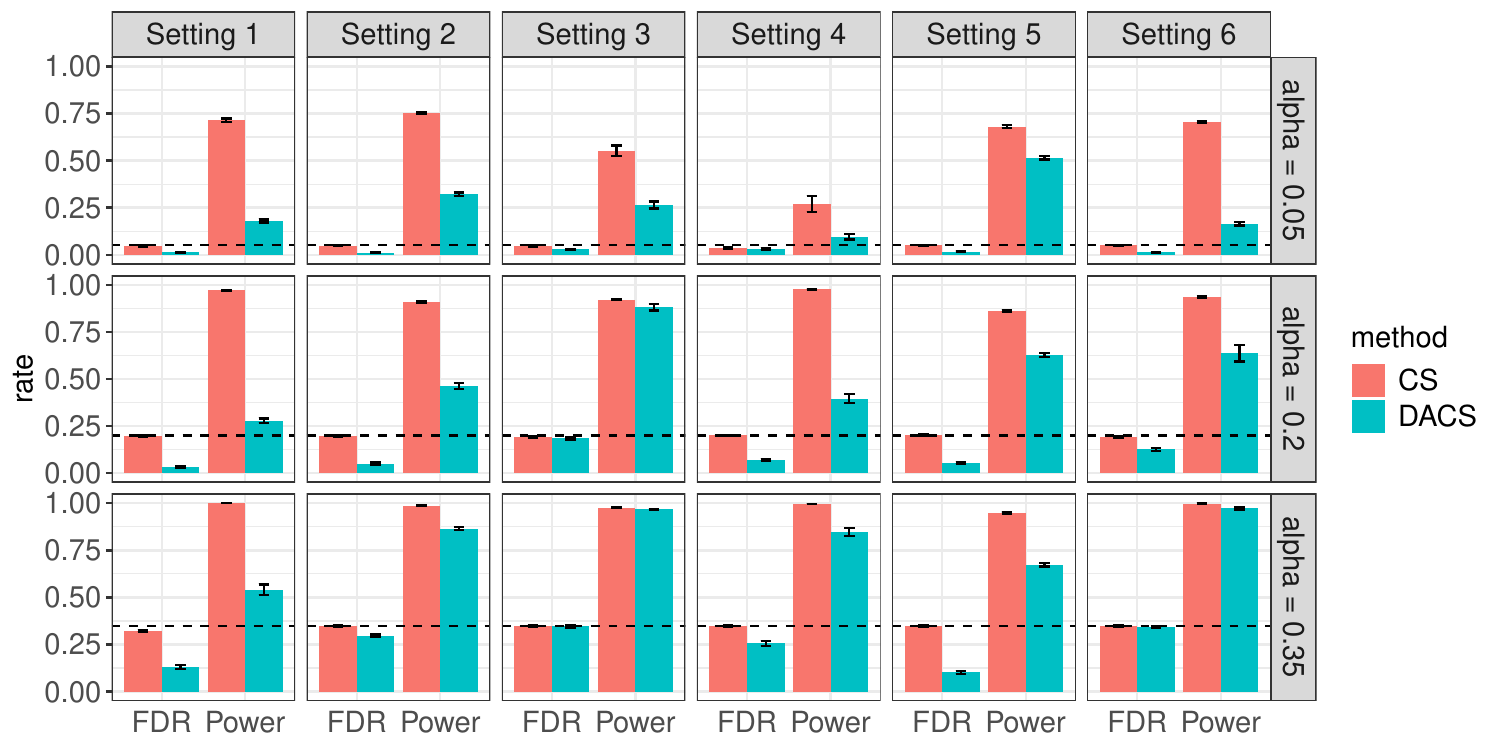}
    \caption{Average FDR and power of DACS and CS selection sets for various simulation settings and nominal levels $\alpha$ (dashed line denotes nominal level); $\hat{\mu}$ is a fitted OLS.}
    \label{fig:fdrpower-result-mlp-cluster}
\end{figure}

\begin{figure}
\centering
    \includegraphics[scale=0.6]{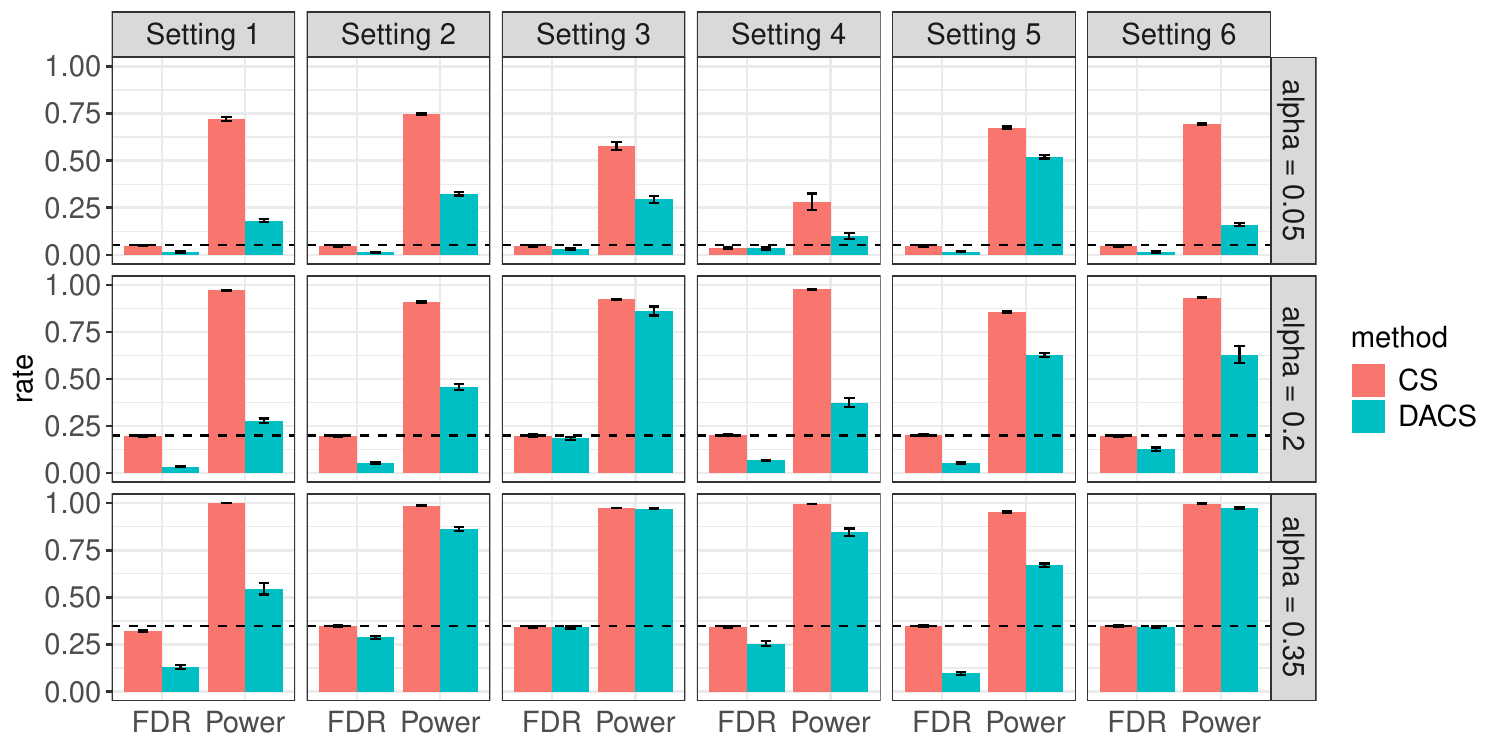}
    \caption{Average FDR and power of DACS and CS selection sets for various simulation settings and nominal levels $\alpha$ (dashed line denotes nominal level); $\hat{\mu}$ is a fitted OLS.}
    \label{fig:fdrpower-result-svm-cluster}
\end{figure}

Figures~\ref{fig:fdrpower-result-ols-cluster}, \ref{fig:fdrpower-result-mlp-cluster}, and \ref{fig:fdrpower-result-svm-cluster} plot the FDR and power of DACS and CS for various simulation settings and nominal levels. These figures demonstrate that DACS controls FDR at level $\alpha$---thereby validating the theoretical guarantee of Corollary~\ref{corollary:dacs-validity}. In view of the fact that DACS makes fewer selections than CS at the same nominal level, it has lower power. We remind the reader that power, however, is not our focus, and instead we are interested in optimizing for diversity.

\subsubsection{Average computation time for OLS, MLP, and SVM}
\begin{figure}
\centering
    \includegraphics[scale=0.6]{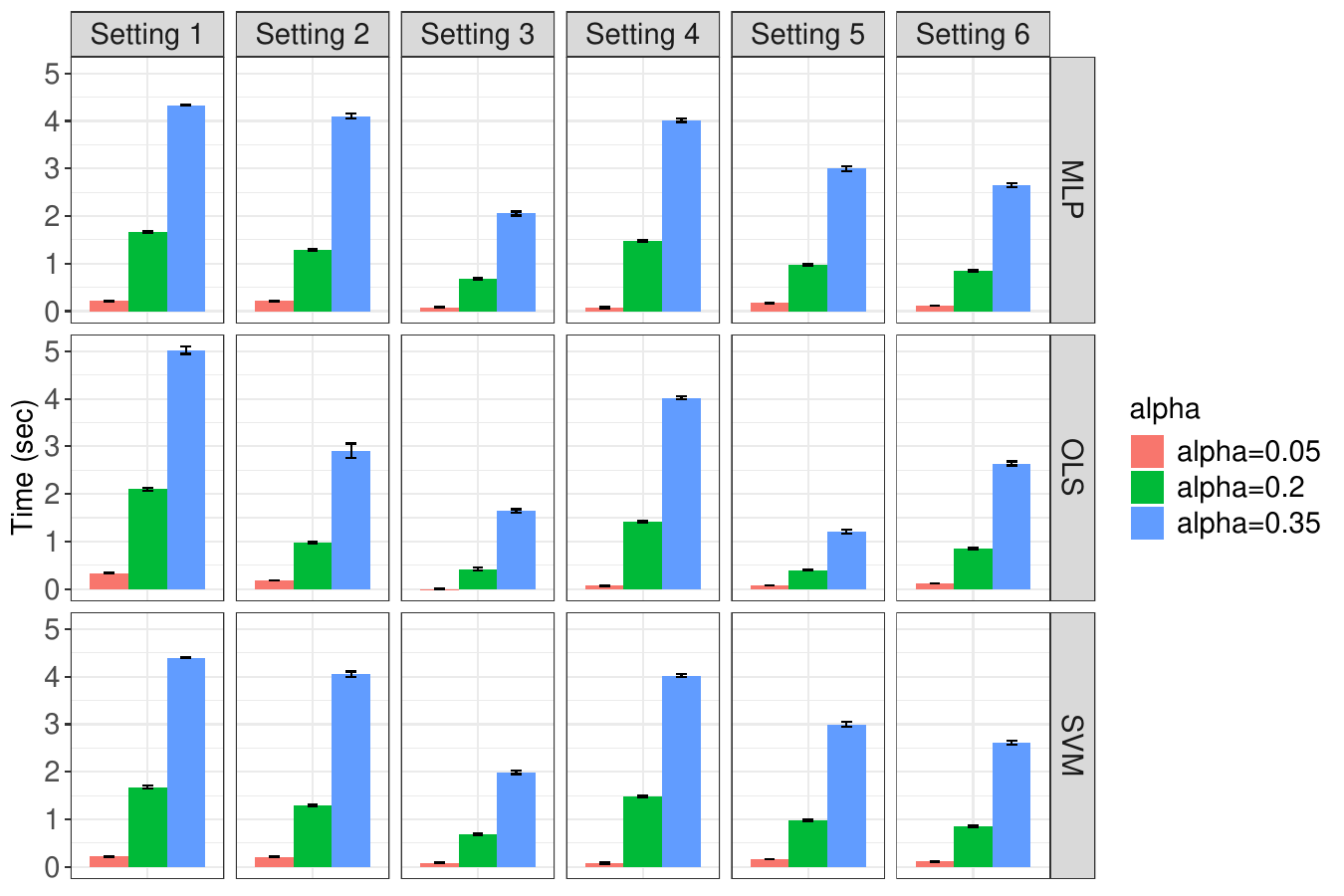}
    \caption{Average computation time of DACS for various simulation settings, $\hat{\mu}$'s, and nominal levels $\alpha$.}
    \label{fig:underrep-result5}
\end{figure}

Figure~\ref{fig:underrep-result5} shows the average computation required by DACS for the underrepresentation index. In all settings---even for $\alpha = 0.35$---DACS never takes longer than approximately $5$ seconds, on average.

\subsubsection{Simulation environment details}
\begin{figure}
\centering
    \includegraphics[scale=0.6]{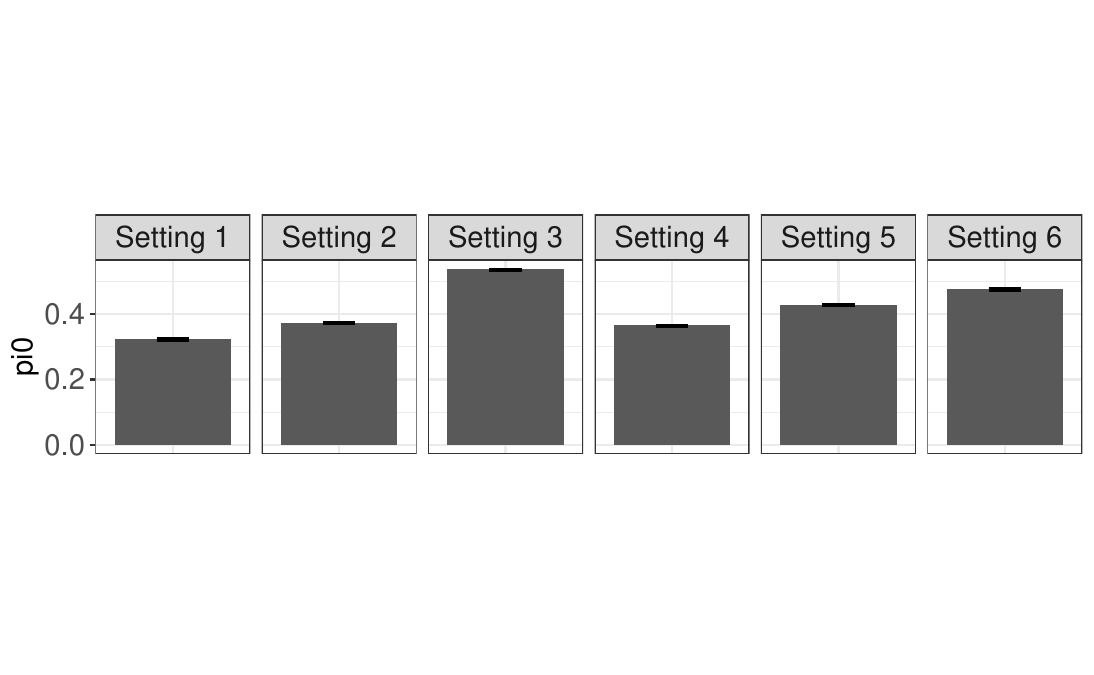}
    \caption{Values of $\pi_0$ for various simulation settings.}
    \label{fig:underrep-result6}
\end{figure}

\begin{figure}
\centering
    \includegraphics[scale=0.6]{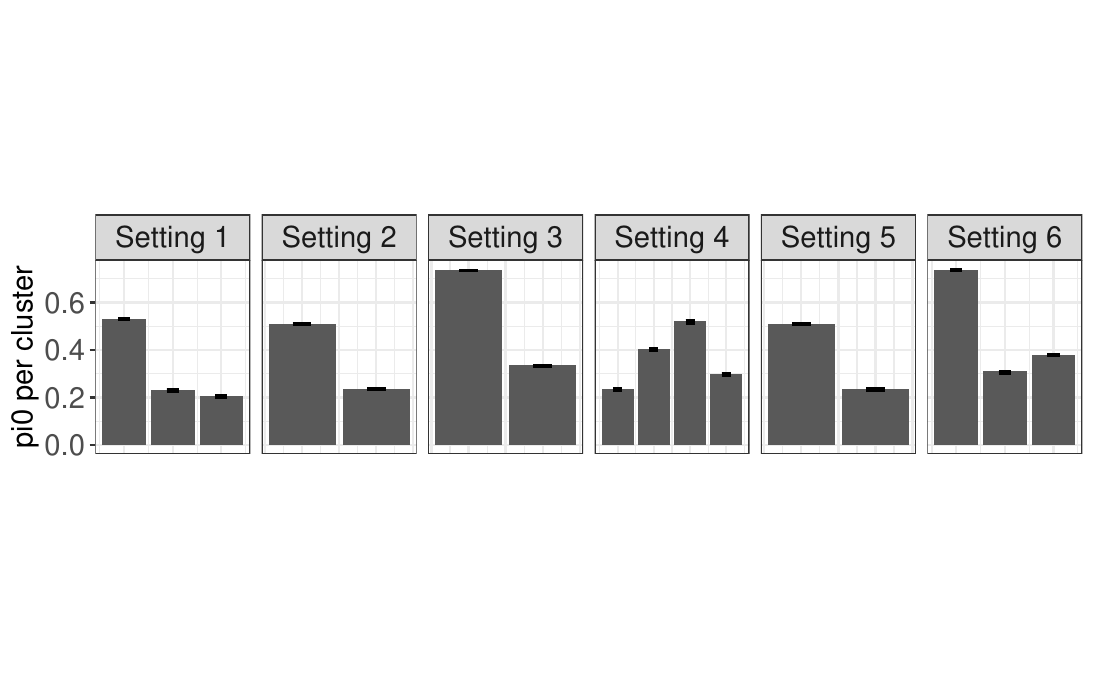}
    \caption{Values of cluster-conditional $\pi_0$'s for various simulation settings.}
    \label{fig:underrep-result7}
\end{figure}

Figure~\ref{fig:underrep-result6} plots the values of $\pi_0 = \bP(Y \leq 0)$ for the various simulation settings. Figure~\ref{fig:underrep-result7} plots cluster-conditional $\pi_0$'s. More specifically it reports the values $\bP(Y \leq 0 \mid Z = c)$ for each $c \in [C]$.

\subsubsection{Further simulation results for $m=100$}\label{app:further-underrep-100}
In this section we report additional simulation results for the same simulation settings discussed in Section~\ref{sims:underrep} except with $m=100$ test samples. Figures~\ref{fig:underrep-result100_1}--\ref{fig:underrep-result100_3} are the analogous plots to Figures~\ref{fig:underrep-result-ols-cluster}, \ref{fig:underrep-result1}, and \ref{fig:underrep-result-svm-cluster} respectively, showing the average cluster proportions made by DACS and CS. Figures~\ref{fig:underrep-result100_4}--\ref{fig:underrep-result100_6} are the analogous plots to Figures~\ref{fig:diversity-result-ols-cluster}, \ref{fig:diversity-result-mlp-cluster}, and \ref{fig:diversity-result-svm-cluster} which show the average underrepresentation index value for each method. Overall, DACS produces slightly more diverse selection sets than CS, but not significantly so. As discussed in Section~\ref{sims:underrep}, this is likely because with fewer test samples, the CS selection set's size is reduced, making it harder to prune the selection set without removing underrepresented clusters. Indeed comparing the number of selections made by each method with $m=100$ (Figures~\ref{fig:underrep-result100_7}--\ref{fig:underrep-result100_9}) to when $m=300$ (Figures~\ref{fig:numr-result-ols-cluster}, \ref{fig:underrep-result2}, \ref{fig:numr-result-svm-cluster}, respectively), we indeed find that CS makes significantly more selections when $m$ is larger, making the diversification problem easier.
\begin{figure}
\centering
    \includegraphics[scale=0.6]{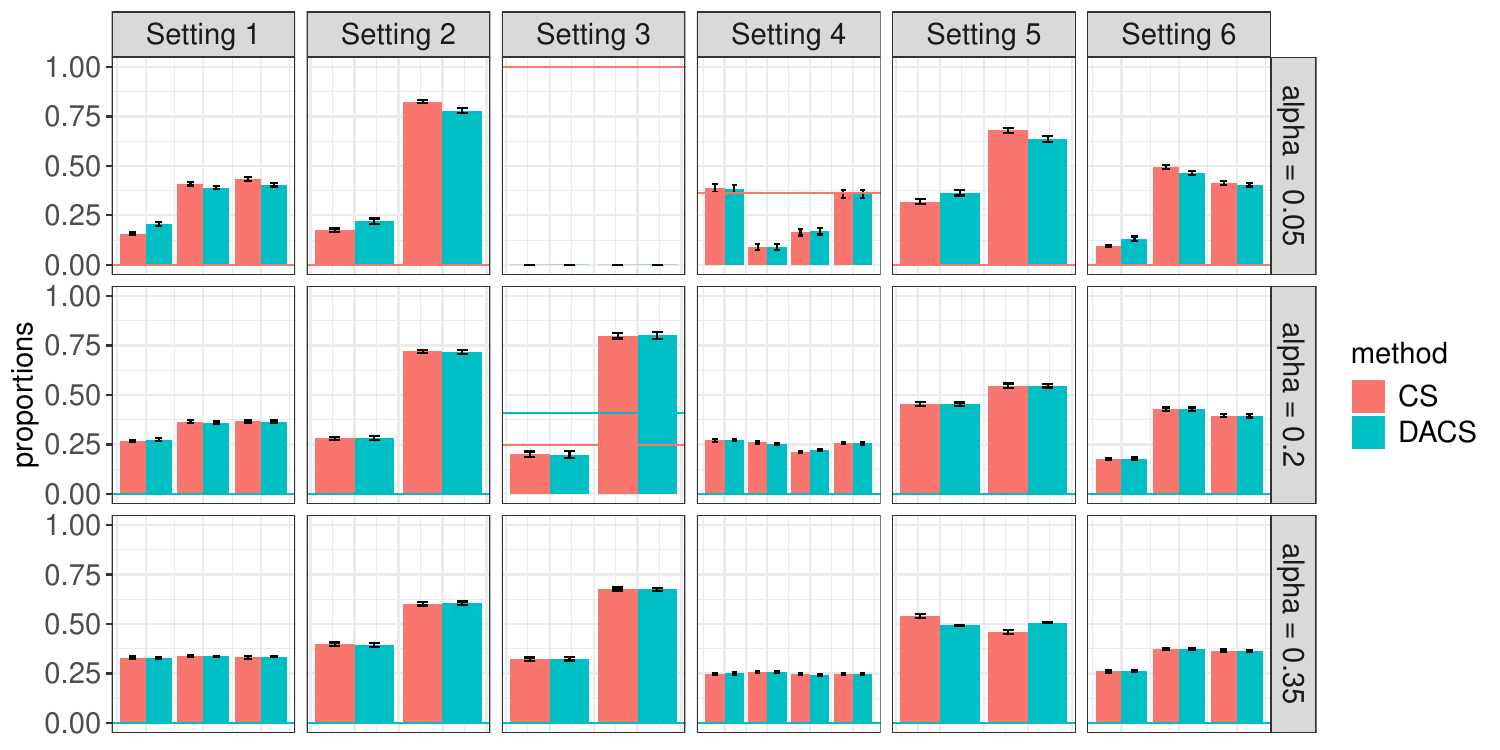}
    \caption{Analogous cluster proportion plot to Figure~\ref{fig:underrep-result-ols-cluster} with $m=100$ test samples.}
    \label{fig:underrep-result100_1}
\end{figure}

\begin{figure}
\centering
    \includegraphics[scale=0.6]{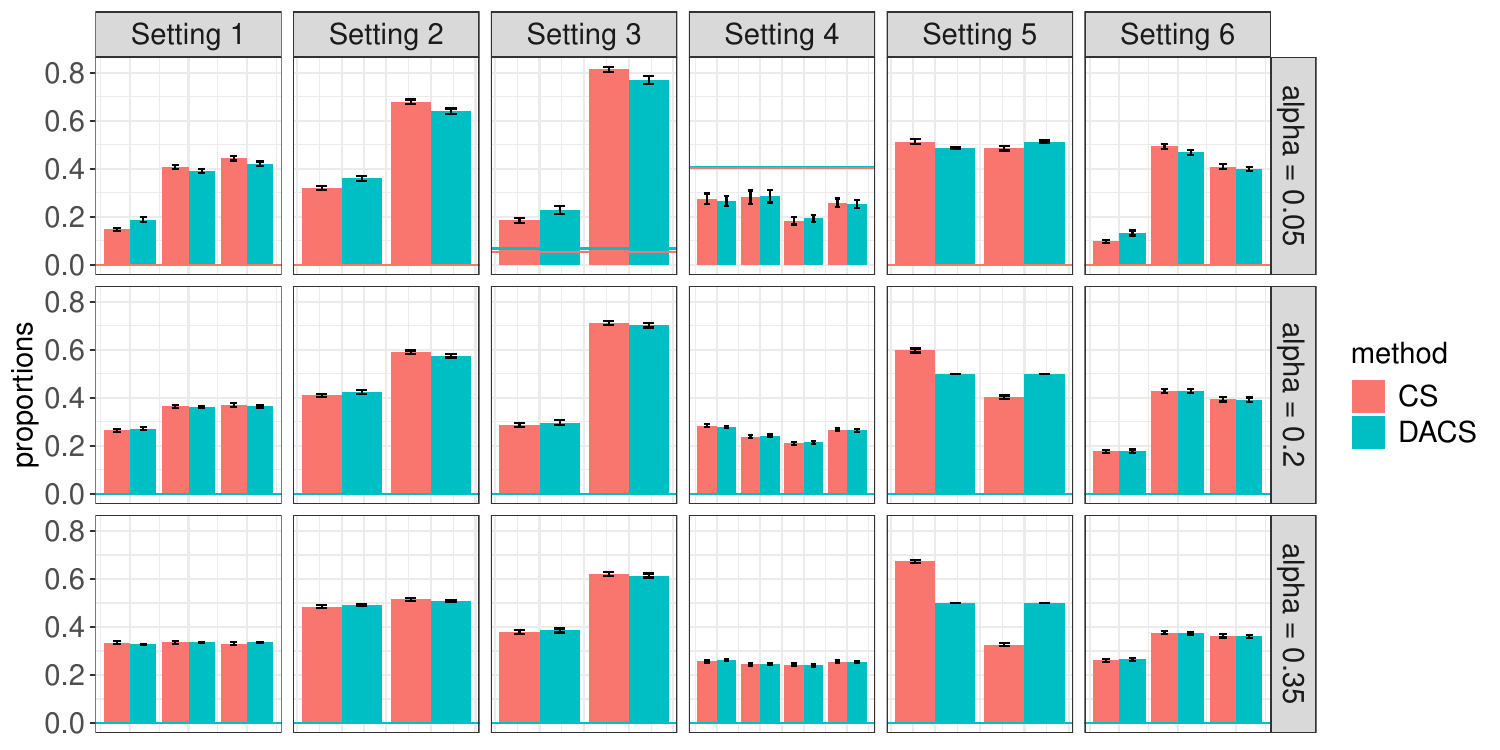}
    \caption{Analogous cluster proportion plot to Figure~\ref{fig:underrep-result1} with $m=100$ test samples.}
    \label{fig:underrep-result100_2}
\end{figure}

\begin{figure}
\centering
    \includegraphics[scale=0.6]{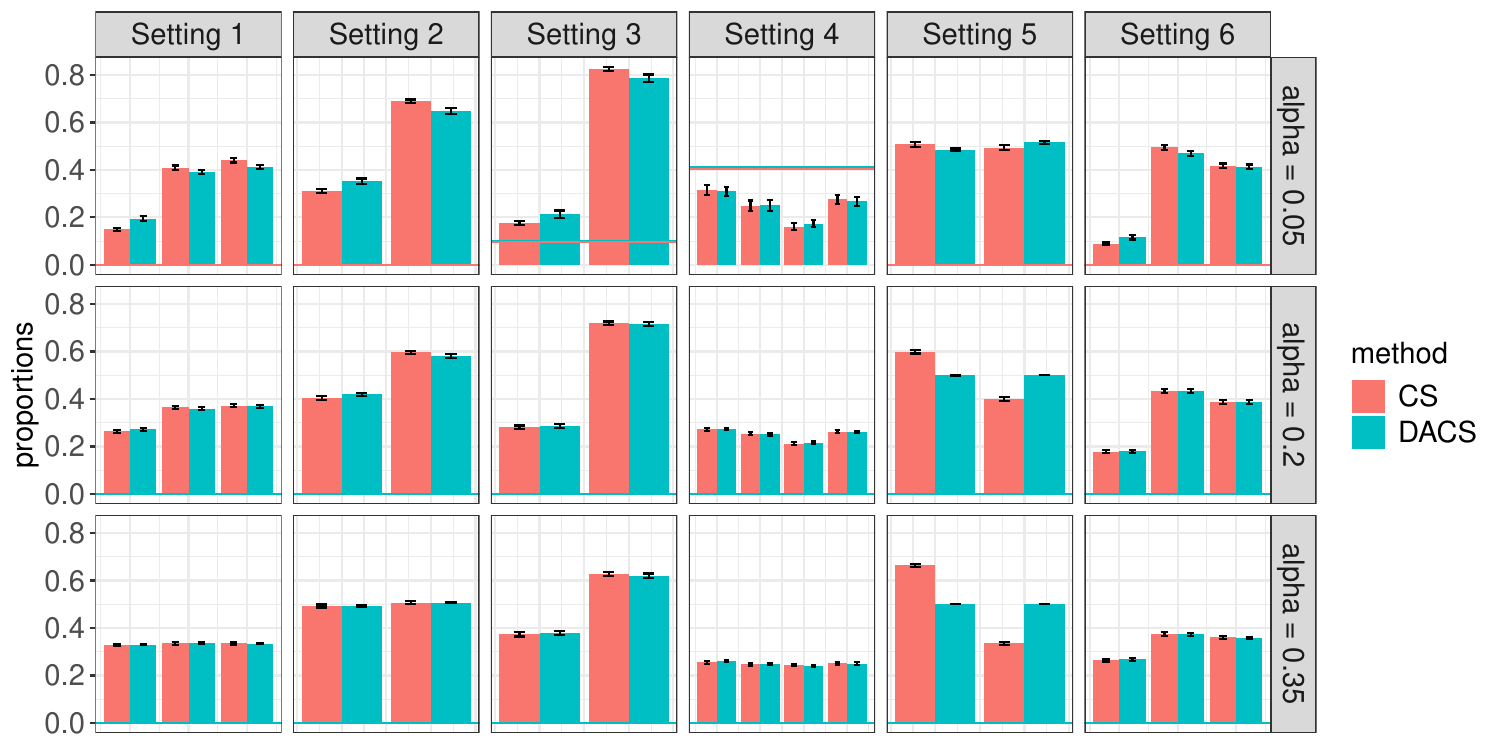}
    \caption{Analogous cluster proportion plot to Figure~\ref{fig:underrep-result-svm-cluster} with $m=100$ test samples.}
    \label{fig:underrep-result100_3}
\end{figure}

\begin{figure}
\centering
    \includegraphics[scale=0.6]{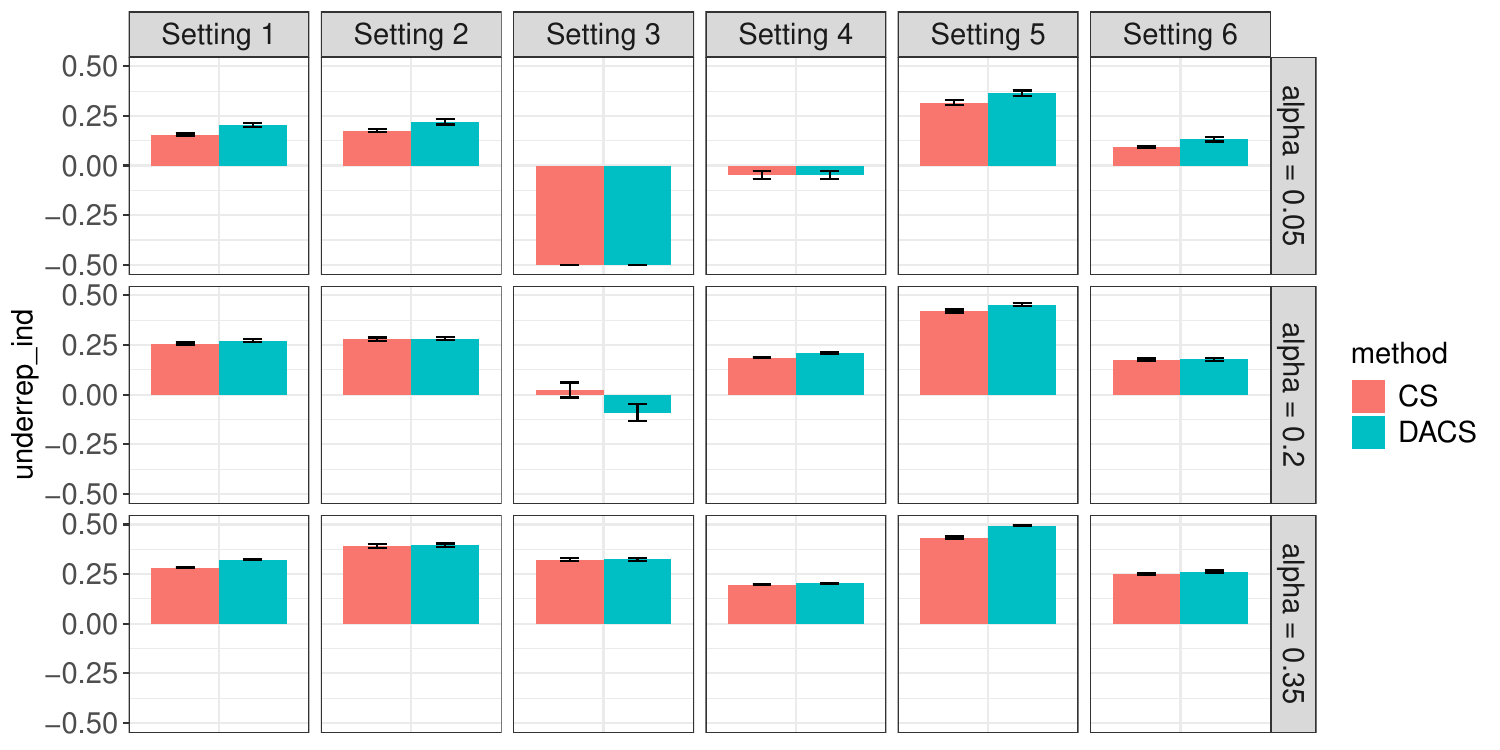}
    \caption{Analogous underrepresentation index diversity plot to Figure~\ref{fig:diversity-result-ols-cluster} with $m=100$ test samples.}
    \label{fig:underrep-result100_4}
\end{figure}

\begin{figure}
\centering
    \includegraphics[scale=0.6]{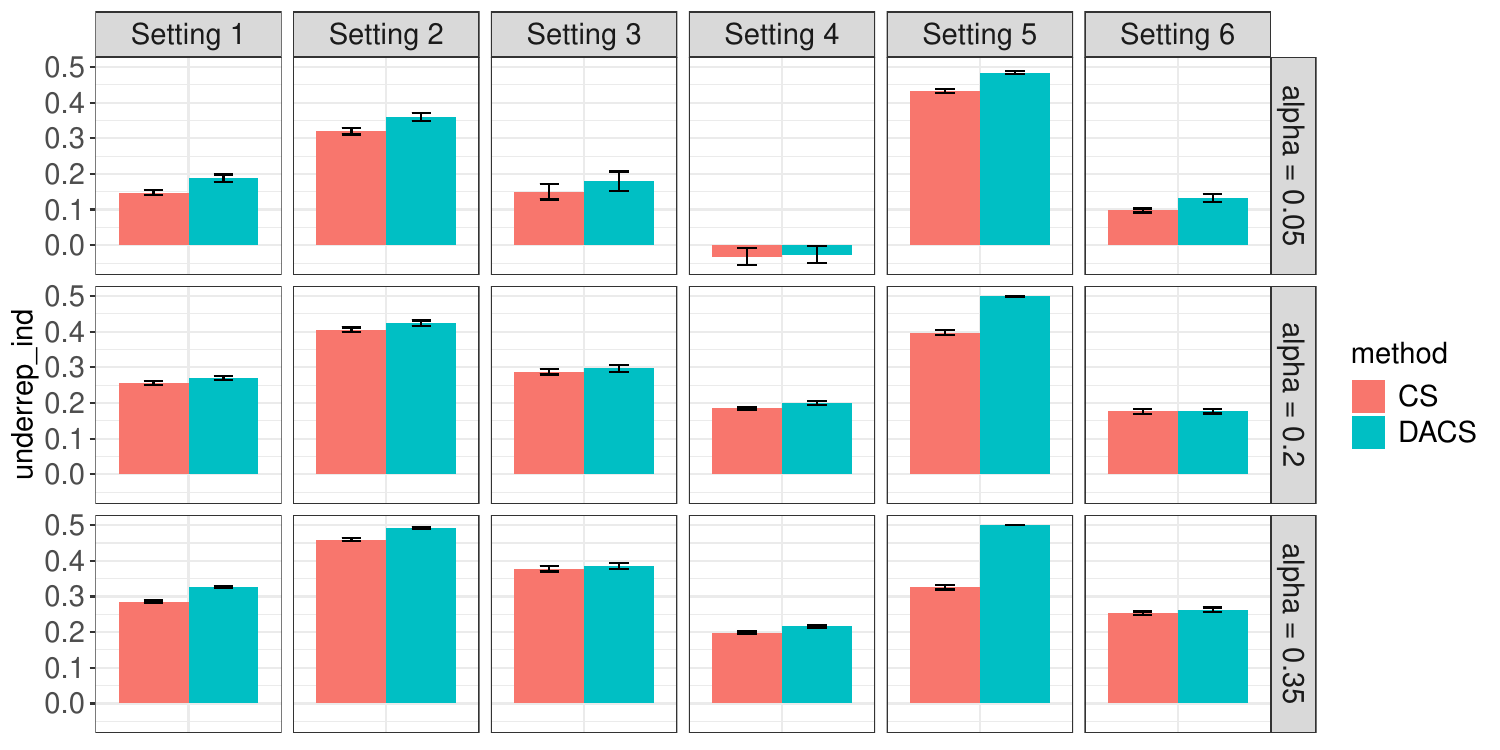}
    \caption{Analogous underrepresentation index diversity plot to Figure~\ref{fig:diversity-result-mlp-cluster} with $m=100$ test samples.}
    \label{fig:underrep-result100_5}
\end{figure}

\begin{figure}
\centering
    \includegraphics[scale=0.6]{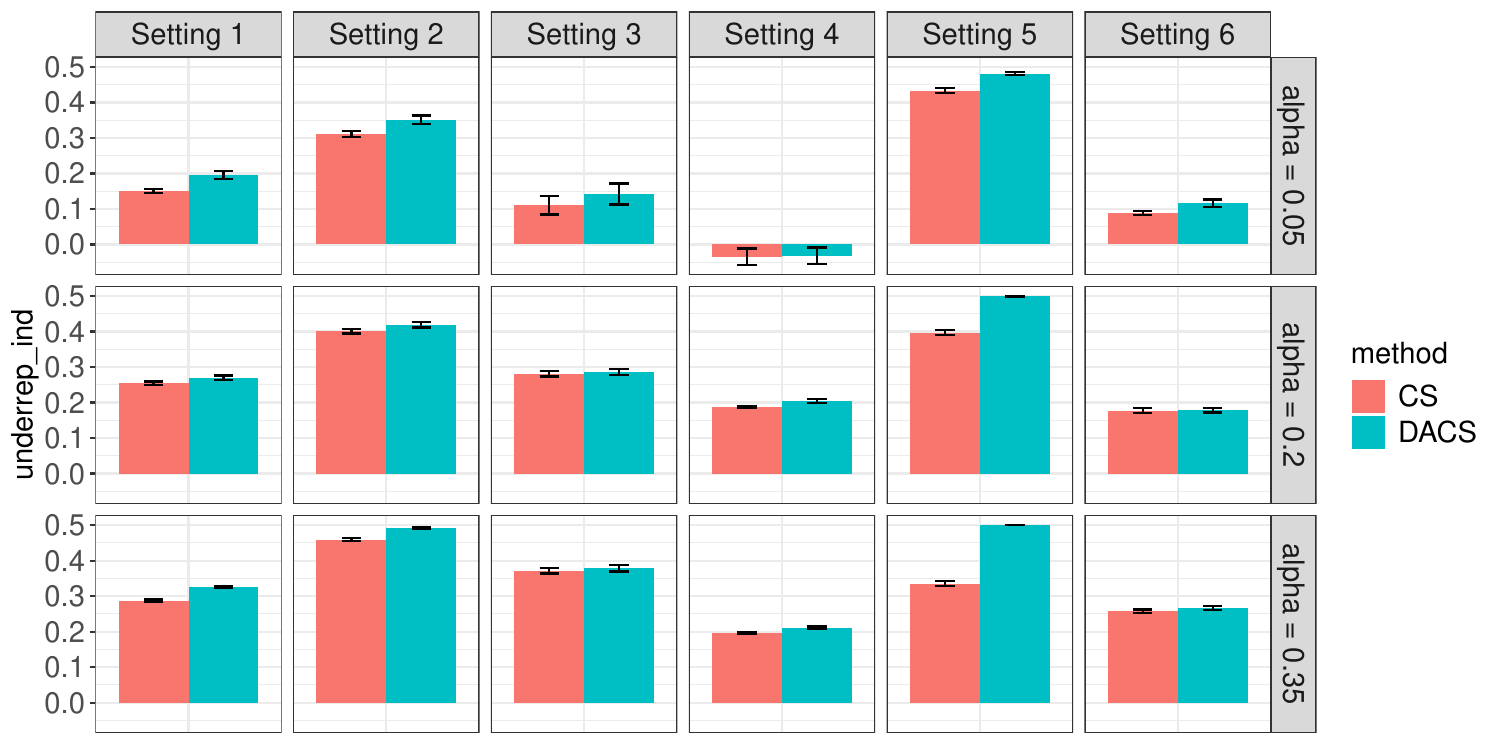}
    \caption{Analogous underrepresentation index diversity plot to Figure~\ref{fig:diversity-result-svm-cluster} with $m=100$ test samples.}
    \label{fig:underrep-result100_6}
\end{figure}

\begin{figure}
\centering
    \includegraphics[scale=0.6]{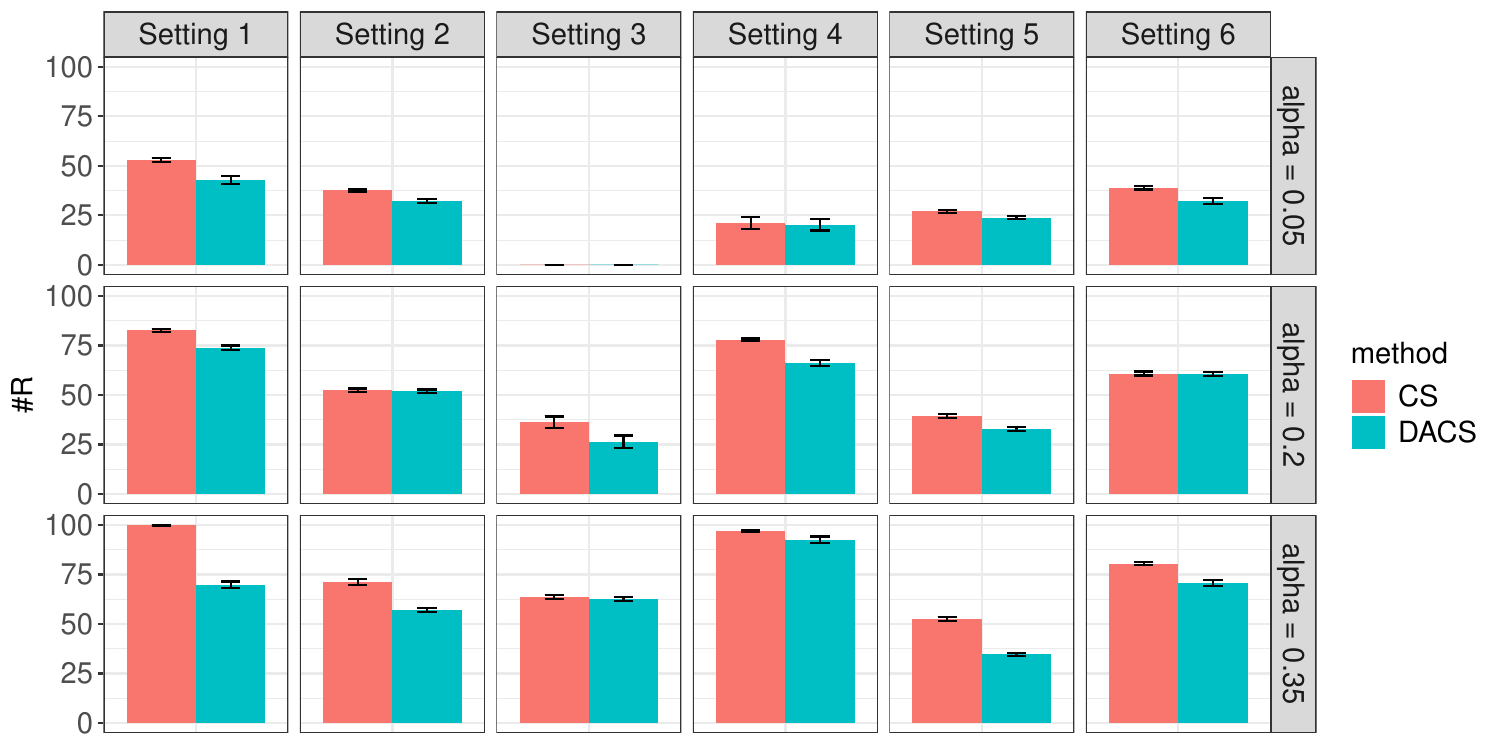}
    \caption{Analogous number of selections plot to Figure~\ref{fig:numr-result-ols-cluster} with $m=100$ test samples.}
    \label{fig:underrep-result100_7}
\end{figure}

\begin{figure}
\centering
    \includegraphics[scale=0.6]{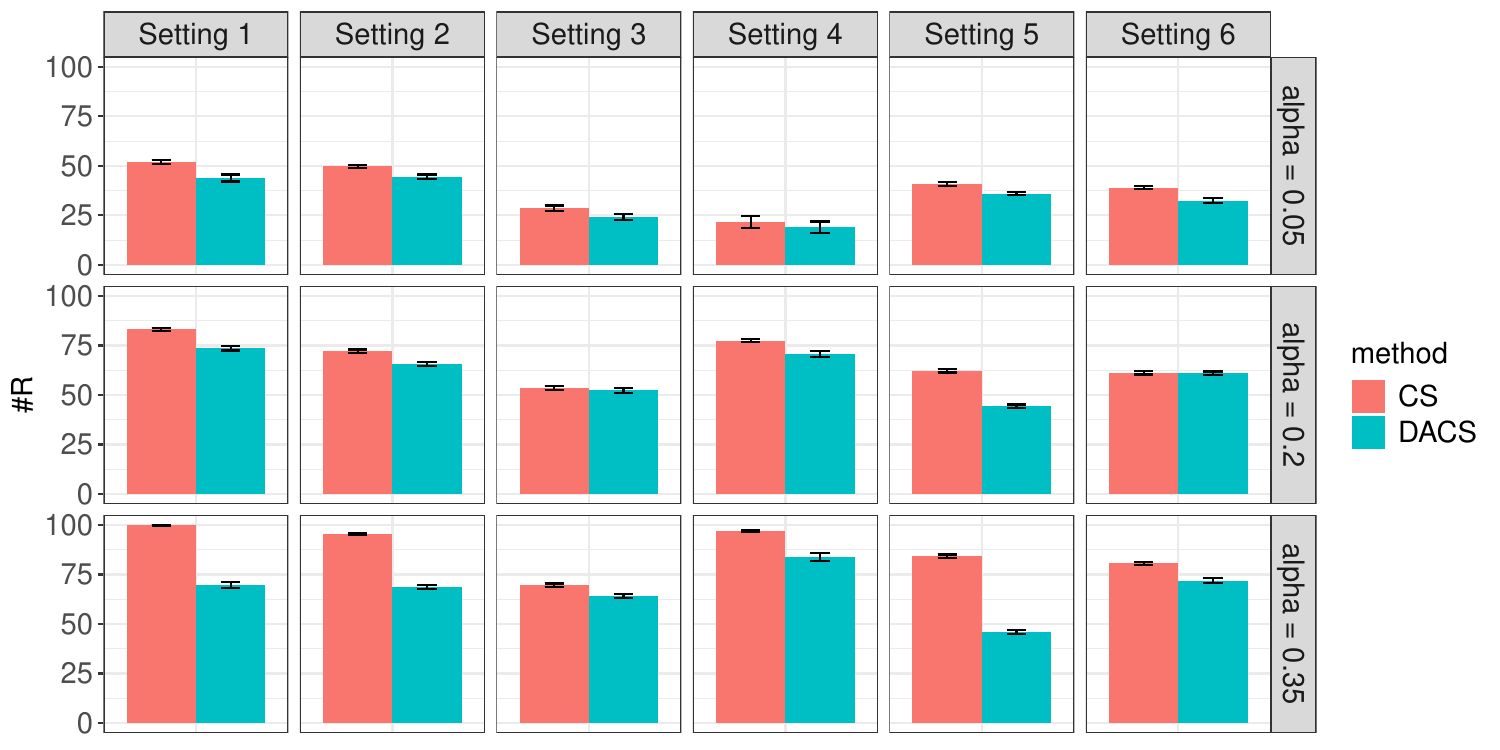}
    \caption{Analogous number of selections plot to Figure~\ref{fig:underrep-result2} with $m=100$ test samples.}
    \label{fig:underrep-result100_8}
\end{figure}

\begin{figure}
\centering
    \includegraphics[scale=0.6]{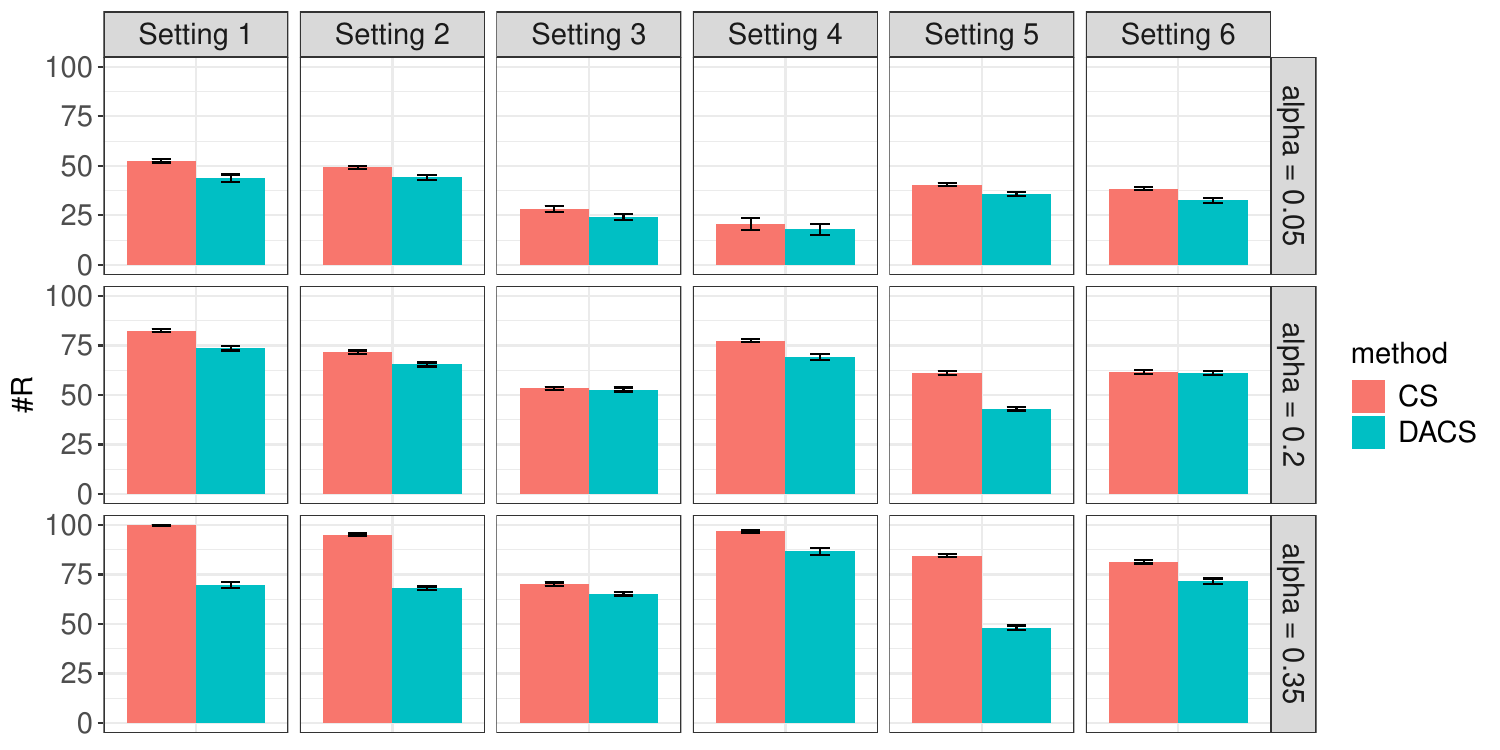}
    \caption{Analogous number of selections plot to Figure~\ref{fig:numr-result-svm-cluster} with $m=100$ test samples.}
    \label{fig:underrep-result100_9}
\end{figure}

\subsubsection{Average count of the most represented category}\label{app:most-rep}

\begin{figure}
\centering
    \includegraphics[scale=0.6]{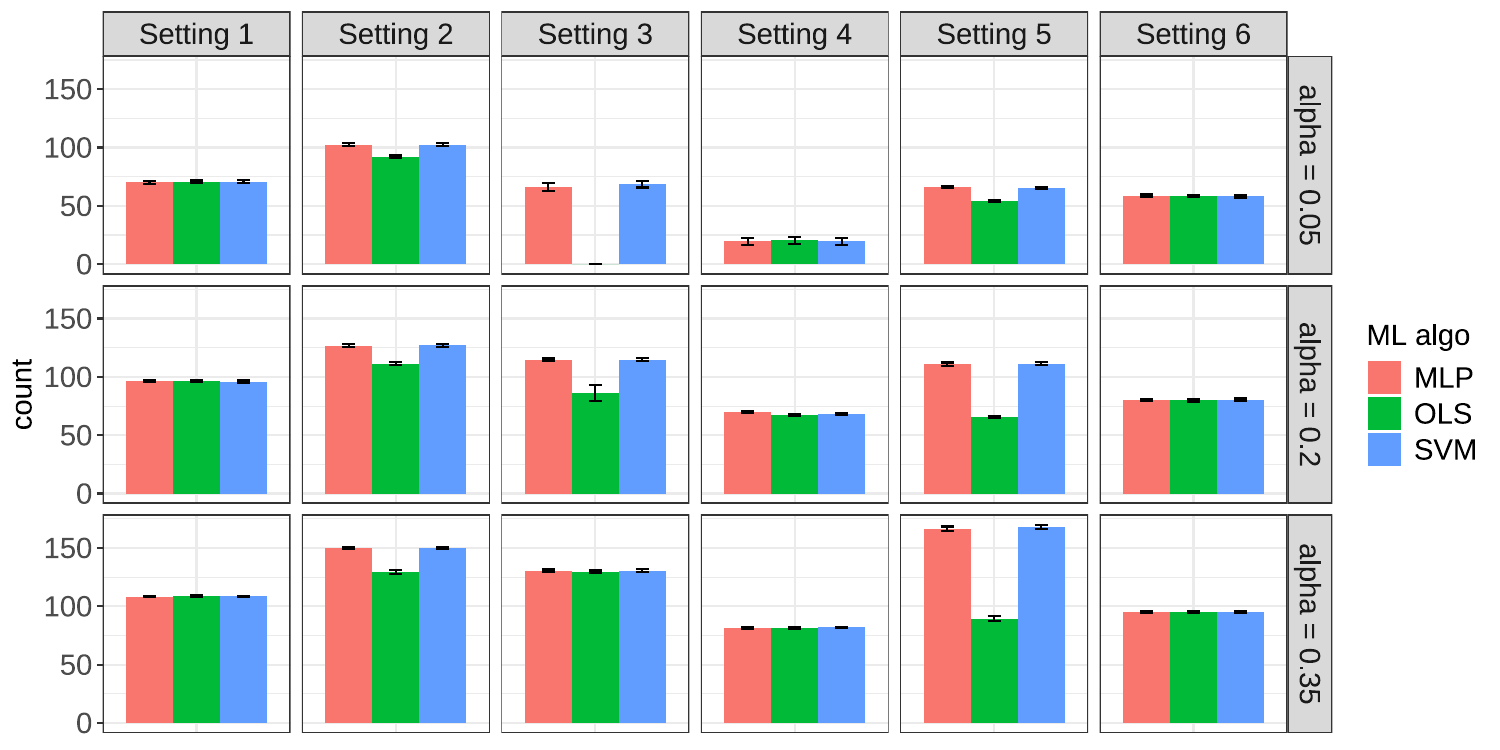}
    \caption{Average number of candidates belonging to the most-represented cluster of the CS selection set.}
    \label{fig:underrep-result-most_rep}
\end{figure}

Figure~\ref{fig:underrep-result-most_rep} shows (MC averages over simulation replicates) of $\bE\left[\max_{c \in [C]} N_c(\mathcal{R}_{\text{CS}}) \mid \mathcal{R}_{\text{CS}} \neq \emptyset\right]$, the average count of the most-represented cluster in the CS selection set $\mathcal{R}_{\text{CS}}$.

\subsection{Sharpe ratio and Markowitz objective}\label{app:add-sim-results-sharpe-markowitz}
In this section, we report additional simulation results for Settings 1 and 2 in our Sharpe ratio and Markowitz objective simulations performed in Section~\ref{sims:sharpe-markowitz} of the main text.

\subsubsection{FDR, power, and selection set size}\label{app:add-sim-results-sharpe-markowitz-fdr-power}
\begin{figure}
    \centering
    \includegraphics[scale=0.4]{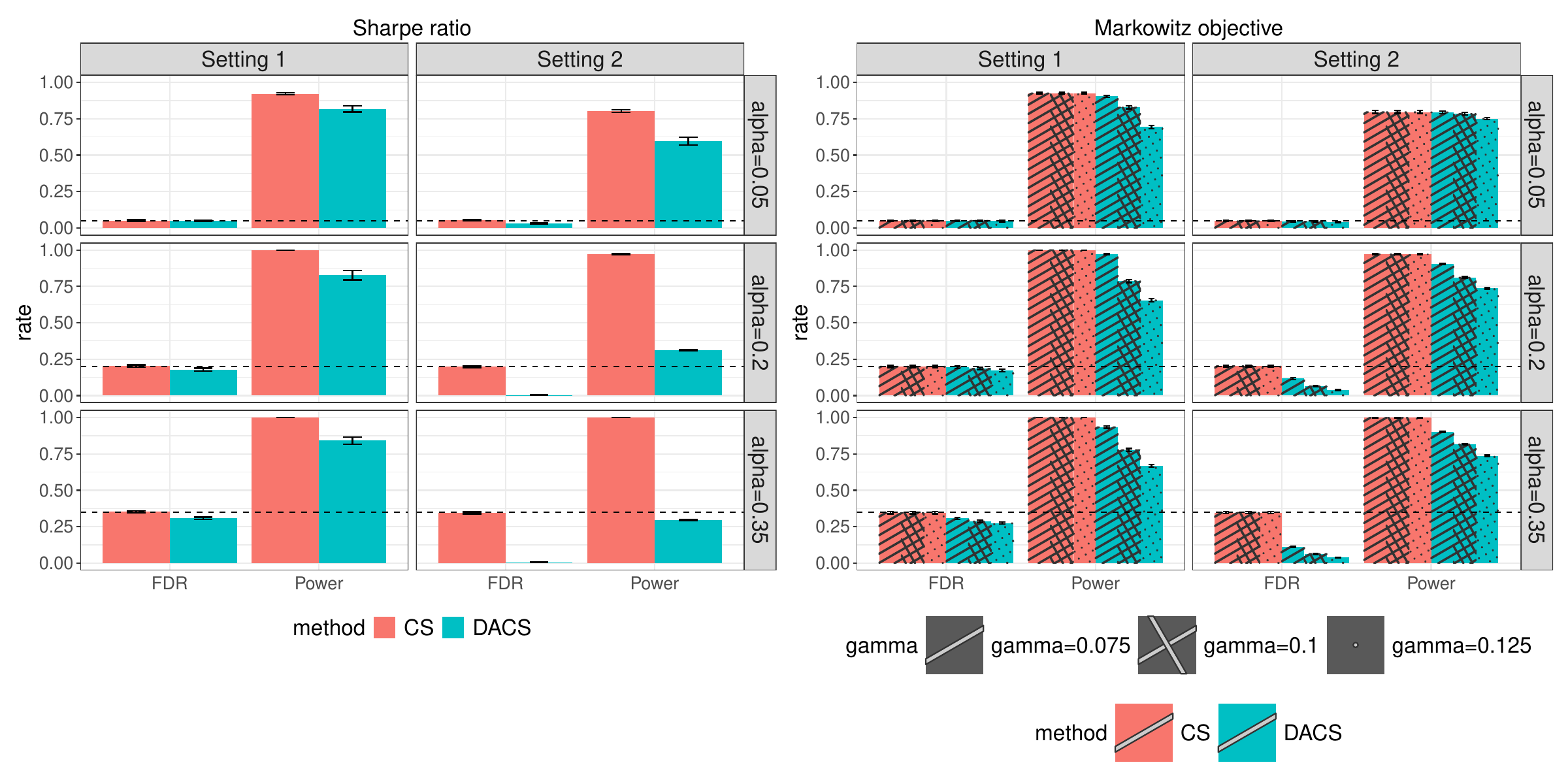}
    \caption{FDR and power under Sharpe ratio and Markowitz objective for various values of $\gamma$ (note that CS does not depend on $\gamma$, hence why its results for different $\gamma$ values are the same) for DACS (blue) and CS (red). Dashed horizontal line denotes nominal level $\alpha$.}
    \label{fig:sharpe-markowitz-fdr-power}
\end{figure}

Figure~\ref{fig:sharpe-markowitz-fdr-power} reports the FDR and power for the Sharpe ratio and Markowitz objective. Despite the fact that Proposition~\ref{relaxed-fdr} guarantees only that our method controls FDR at level $1.3\alpha$, we see here that the FDR is actually controlled below $\alpha$ for both simulation results. Unsurprisingly, the power of our method decays as $\gamma$ increases (since larger $\gamma$ indicates more preference for diversity than number of selections). In terms of power, the Sharpe ratio is more setting- and $\alpha$-dependent: in Setting 1 with $\alpha = 0.05$, it makes a comparable number of selections to CS while in Setting 2 with $\alpha =0.35$ it makes much fewer. These same remarks about power apply analogously to the number of selections of each method as shown in Figure~\ref{fig:sharpe-markowitz-numr}.

\begin{figure}
    \centering
    \includegraphics[scale=0.4]{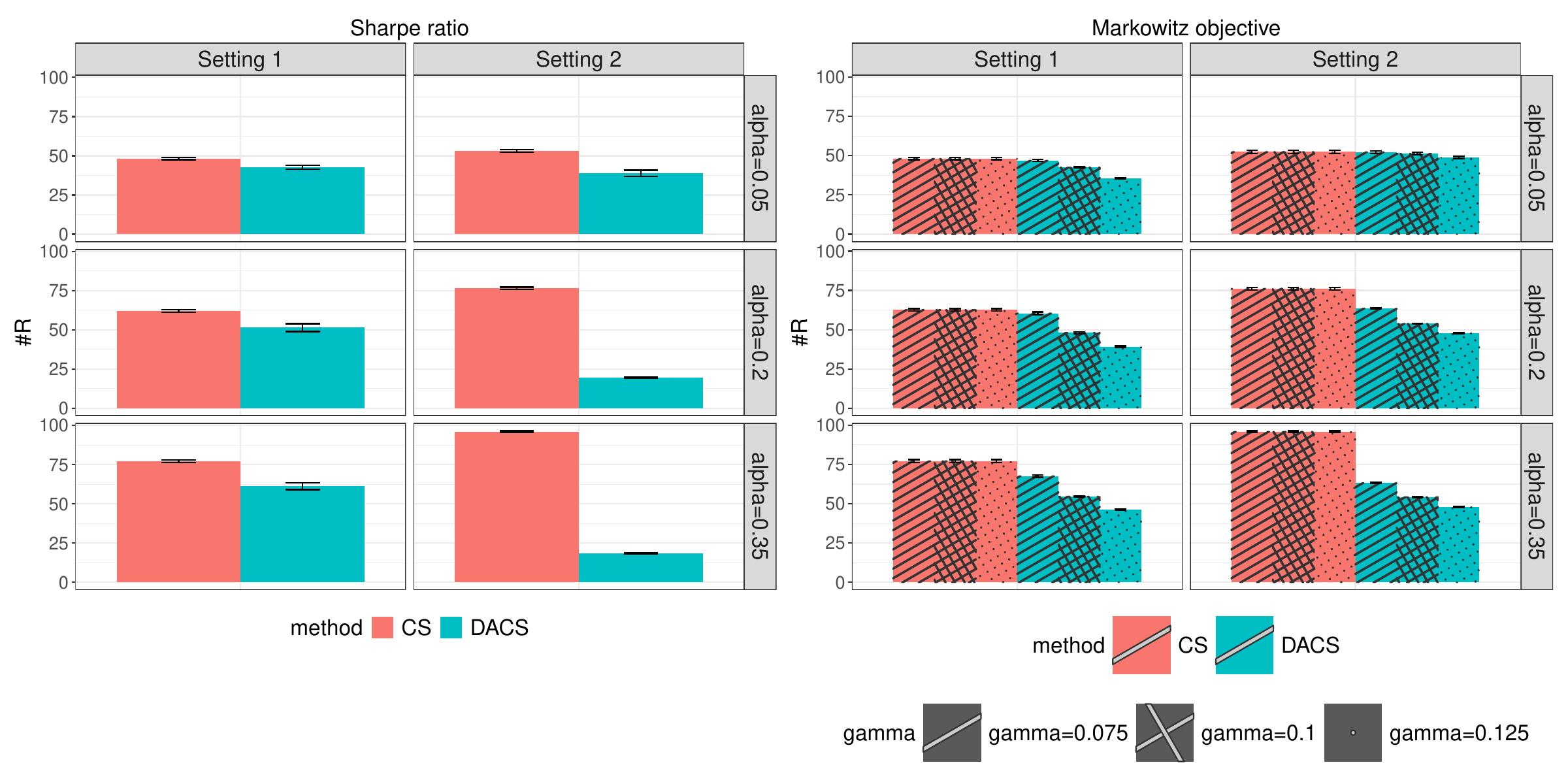}
    \caption{Number of selections made under Sharpe ratio and Markowitz objective for various values of $\gamma$ (note that CS does not depend on $\gamma$, hence why its results for different $\gamma$ values are the same) for DACS (blue) and CS (red).}
    \label{fig:sharpe-markowitz-numr}
\end{figure}

\subsubsection{Coupling/warm-starting ablation test}\label{app:warm-start-ablation}
In this section, we study the impact of the coupling/warm-starting scheme described in Section~\ref{warm-start} on both the computation time and statistical quality of the returned selection sets. We also compare to the baseline of using the MOSEK solver without coupled MC sampling or warm-starts. In all timing results, we report both the time taken just to solve the e-value optimization programs for the reward computation as well as the total time of our method (along with $\pm 1.96$ standard errors).

\begin{table}
\centering
\caption{Solver and total times for Sharpe ratio in Setting 1}
\label{tab:sharpe_times_setting1}
\begin{tabular}{lccc}
\toprule
$\alpha$ & Method & Solver times & Total times \\
\midrule
\multirow{4}{*}{$0.05$} & Warm-started PGD with coupling & $2.88 \pm 0.12$ & $10.72 \pm 0.40$ \\
 & PGD with coupling only & $3.48 \pm 0.15$ & $11.35 \pm 0.44$ \\
 & PGD without coupling or warm-starting & $3.47 \pm 0.15$ & $11.68 \pm 0.43$ \\
 & MOSEK (no coupling) & $802.03 \pm 35.21$ & $811.67 \pm 35.63$ \\
\hline
\multirow{4}{*}{$0.2$} & Warm-started PGD with coupling & $12.46 \pm 0.46$ & $45.04 \pm 1.65$ \\
 & PGD with coupling only & $18.83 \pm 0.74$ & $51.69 \pm 1.93$ \\
 & PGD without coupling or warm-starting & $18.91 \pm 0.74$ & $50.13 \pm 1.85$ \\
 & MOSEK (no coupling) & $3712.39 \pm 159.07$ & $3751.08 \pm 160.85$ \\
\hline
\multirow{4}{*}{$0.35$} & Warm-started PGD with coupling & $26.39 \pm 0.93$ & $89.60 \pm 2.98$ \\
 & PGD with coupling only & $43.06 \pm 1.61$ & $107.02 \pm 3.68$ \\
 & PGD without coupling or warm-starting & $43.25 \pm 1.61$ & $103.55 \pm 3.54$ \\
 & MOSEK (no coupling) & $6815.74 \pm 288.54$ & $6889.39 \pm 291.86$ \\
\bottomrule
\end{tabular}
\end{table}

\begin{table}
\centering
\caption{Solver and total times for Sharpe ratio in Setting 2}
\label{tab:sharpe_times_setting2}
\begin{tabular}{lccc}
\toprule
$\alpha$ & Method & Solver times & Total times \\
\midrule
\multirow{4}{*}{$0.05$} & Warm-started PGD with coupling & $2.25 \pm 0.11$ & $12.37 \pm 0.54$ \\
 & PGD with coupling only & $2.76 \pm 0.14$ & $12.91 \pm 0.57$ \\
 & PGD without coupling or warm-starting & $2.74 \pm 0.14$ & $13.08 \pm 0.55$ \\
 & MOSEK (no coupling) & $963.20 \pm 48.74$ & $975.69 \pm 49.37$ \\
\hline
\multirow{4}{*}{$0.2$} & Warm-started PGD with coupling & $12.39 \pm 0.35$ & $62.53 \pm 1.78$ \\
 & PGD with coupling only & $18.22 \pm 0.52$ & $68.53 \pm 1.95$ \\
 & PGD without coupling or warm-starting & $18.27 \pm 0.53$ & $65.93 \pm 1.87$ \\
 & MOSEK (no coupling) & $4979.44 \pm 202.97$ & $5036.46 \pm 205.41$ \\
\hline
\multirow{4}{*}{$0.35$} & Warm-started PGD with coupling & $25.04 \pm 0.63$ & $121.52 \pm 3.11$ \\
 & PGD with coupling only & $38.73 \pm 0.94$ & $135.60 \pm 3.46$ \\
 & PGD without coupling or warm-starting & $38.75 \pm 0.95$ & $129.90 \pm 3.29$ \\
 & MOSEK (no coupling) & $9124.01 \pm 363.34$ & $9233.11 \pm 367.83$ \\
\bottomrule
\end{tabular}
\end{table}

\begin{table}
\centering
\caption{Solver and total times for Markowitz objective in Setting 1}
\label{tab:markowitz_times_setting1}
\begin{tabular}{llccc}
\toprule
$\alpha$ & $\gamma$ & Method & Solver times & Total times \\
\midrule
\multirow{4}{*}{$0.05$} & \multirow{4}{*}{$0.075$} & Warm-started PGD with coupling & $15.12 \pm 1.70$ & $19.92 \pm 1.83$ \\
 &  & PGD with coupling only & $20.10 \pm 2.30$ & $24.85 \pm 2.41$ \\
 &  & PGD without coupling or warm-starting & $20.02 \pm 2.29$ & $25.19 \pm 2.39$ \\
 &  & MOSEK (no coupling) & $978.97 \pm 41.87$ & $986.87 \pm 42.17$ \\
\hdashline
\multirow{4}{*}{$0.05$} & \multirow{4}{*}{$0.1$} & Warm-started PGD with coupling & $25.68 \pm 1.68$ & $30.42 \pm 1.80$ \\
 &  & PGD with coupling only & $32.88 \pm 2.02$ & $37.61 \pm 2.15$ \\
 &  & PGD without coupling or warm-starting & $32.72 \pm 2.01$ & $37.83 \pm 2.12$ \\
 &  & MOSEK (no coupling) & $928.11 \pm 41.53$ & $935.56 \pm 41.83$ \\
\hdashline
\multirow{4}{*}{$0.05$} & \multirow{4}{*}{$0.125$} & Warm-started PGD with coupling & $29.81 \pm 1.78$ & $34.84 \pm 1.93$ \\
 &  & PGD with coupling only & $36.64 \pm 2.11$ & $41.66 \pm 2.25$ \\
 &  & PGD without coupling or warm-starting & $36.47 \pm 2.10$ & $41.81 \pm 2.23$ \\
 &  & MOSEK (no coupling) & $924.85 \pm 39.81$ & $932.26 \pm 40.11$ \\
\hline
\multirow{4}{*}{$0.2$} & \multirow{4}{*}{$0.075$} & Warm-started PGD with coupling & $169.94 \pm 8.92$ & $187.43 \pm 9.44$ \\
 &  & PGD with coupling only & $212.19 \pm 11.26$ & $229.79 \pm 11.76$ \\
 &  & PGD without coupling or warm-starting & $211.96 \pm 11.23$ & $227.81 \pm 11.65$ \\
 &  & MOSEK (no coupling) & $4774.57 \pm 190.83$ & $4803.10 \pm 191.87$ \\
\hdashline
\multirow{4}{*}{$0.2$} & \multirow{4}{*}{$0.1$} & Warm-started PGD with coupling & $190.09 \pm 9.19$ & $207.57 \pm 9.70$ \\
 &  & PGD with coupling only & $239.48 \pm 11.87$ & $257.05 \pm 12.36$ \\
 &  & PGD without coupling or warm-starting & $238.89 \pm 11.82$ & $254.66 \pm 12.23$ \\
 &  & MOSEK (no coupling) & $4560.04 \pm 208.74$ & $4585.80 \pm 209.86$ \\
\hdashline
\multirow{4}{*}{$0.2$} & \multirow{4}{*}{$0.125$} & Warm-started PGD with coupling & $196.14 \pm 8.84$ & $214.40 \pm 9.41$ \\
 &  & PGD with coupling only & $249.08 \pm 11.51$ & $267.43 \pm 12.06$ \\
 &  & PGD without coupling or warm-starting & $248.44 \pm 11.48$ & $264.89 \pm 11.95$ \\
 &  & MOSEK (no coupling) & $4529.18 \pm 187.22$ & $4554.73 \pm 188.26$ \\
\hline
\multirow{4}{*}{$0.35$} & \multirow{4}{*}{$0.075$} & Warm-started PGD with coupling & $446.91 \pm 20.06$ & $477.22 \pm 20.91$ \\
 &  & PGD with coupling only & $579.45 \pm 26.03$ & $609.95 \pm 26.87$ \\
 &  & PGD without coupling or warm-starting & $578.89 \pm 26.00$ & $605.34 \pm 26.70$ \\
 &  & MOSEK (no coupling) & $8554.19 \pm 330.97$ & $8598.73 \pm 332.59$ \\
\hdashline
 \multirow{4}{*}{$0.35$}& \multirow{4}{*}{$0.1$} & Warm-started PGD with coupling & $455.70 \pm 20.95$ & $485.68 \pm 21.96$ \\
 &  & PGD with coupling only & $585.91 \pm 27.22$ & $616.11 \pm 28.23$ \\
 &  & PGD without coupling or warm-starting & $585.73 \pm 27.25$ & $611.96 \pm 28.10$ \\
 &  & MOSEK (no coupling) & $8363.83 \pm 322.83$ & $8407.32 \pm 324.55$ \\
\hdashline
 \multirow{4}{*}{$0.35$}& \multirow{4}{*}{$0.125$} & Warm-started PGD with coupling & $468.07 \pm 19.89$ & $499.64 \pm 20.79$ \\
 &  & PGD with coupling only & $604.76 \pm 25.90$ & $636.57 \pm 26.78$ \\
 &  & PGD without coupling or warm-starting & $603.68 \pm 25.82$ & $631.24 \pm 26.56$ \\
 &  & MOSEK (no coupling) & $8243.90 \pm 331.17$ & $8286.28 \pm 332.82$ \\
\bottomrule
\end{tabular}
\end{table}

\begin{table}
\centering
\caption{Solver and total times for Markowitz objective in Setting 2}
\label{tab:markowitz_times_setting2}
\begin{tabular}{llccc}
\toprule
$\alpha$ & $\gamma$ & Method & Solver times & Total times \\
\midrule
\multirow{4}{*}{$0.05$} & \multirow{4}{*}{$0.075$} & Warm-started PGD with coupling & $3.75 \pm 0.56$ & $9.19 \pm 0.72$ \\
 &  & PGD with coupling only & $4.37 \pm 0.76$ & $9.81 \pm 0.91$ \\
 &  & PGD without coupling or warm-starting & $4.32 \pm 0.75$ & $10.11 \pm 0.90$ \\
 &  & MOSEK (no coupling) & $1110.82 \pm 46.82$ & $1119.43 \pm 47.14$ \\
\hdashline
\multirow{4}{*}{$0.05$} & \multirow{4}{*}{$0.1$} & Warm-started PGD with coupling & $11.02 \pm 1.31$ & $16.37 \pm 1.50$ \\
 &  & PGD with coupling only & $13.94 \pm 1.74$ & $19.28 \pm 1.92$ \\
 &  & PGD without coupling or warm-starting & $13.85 \pm 1.73$ & $19.56 \pm 1.89$ \\
 &  & MOSEK (no coupling) & $1081.79 \pm 46.52$ & $1090.77 \pm 46.89$ \\
\hdashline
 \multirow{4}{*}{$0.05$}& \multirow{4}{*}{$0.125$} & Warm-started PGD with coupling & $17.66 \pm 1.60$ & $23.40 \pm 1.79$ \\
 &  & PGD with coupling only & $21.97 \pm 2.01$ & $27.70 \pm 2.19$ \\
 &  & PGD without coupling or warm-starting & $21.80 \pm 1.99$ & $27.87 \pm 2.15$ \\
 &  & MOSEK (no coupling) & $1074.27 \pm 50.94$ & $1082.36 \pm 51.29$ \\
\hline
\multirow{4}{*}{$0.2$} & \multirow{4}{*}{$0.075$} & Warm-started PGD with coupling & $211.68 \pm 10.30$ & $234.21 \pm 10.78$ \\
 &  & PGD with coupling only & $278.87 \pm 13.09$ & $301.54 \pm 13.57$ \\
 &  & PGD without coupling or warm-starting & $278.54 \pm 13.05$ & $298.24 \pm 13.43$ \\
 &  & MOSEK (no coupling) & $6406.80 \pm 227.62$ & $6440.99 \pm 228.69$ \\
\hdashline
 \multirow{4}{*}{$0.2$}& \multirow{4}{*}{$0.1$} & Warm-started PGD with coupling & $232.85 \pm 9.62$ & $255.13 \pm 10.22$ \\
 &  & PGD with coupling only & $301.42 \pm 11.92$ & $323.80 \pm 12.53$ \\
 &  & PGD without coupling or warm-starting & $301.07 \pm 11.88$ & $320.56 \pm 12.38$ \\
 &  & MOSEK (no coupling) & $6117.86 \pm 236.64$ & $6150.26 \pm 237.80$ \\
\hdashline
\multirow{4}{*}{$0.2$} & \multirow{4}{*}{$0.125$} & Warm-started PGD with coupling & $232.77 \pm 9.22$ & $255.51 \pm 9.84$ \\
 &  & PGD with coupling only & $298.24 \pm 11.34$ & $321.08 \pm 11.96$ \\
 &  & PGD without coupling or warm-starting & $297.88 \pm 11.29$ & $317.75 \pm 11.80$ \\
 &  & MOSEK (no coupling) & $6101.47 \pm 236.81$ & $6132.71 \pm 238.07$ \\
\hline
\multirow{4}{*}{$0.35$} & \multirow{4}{*}{$0.075$} & Warm-started PGD with coupling & $713.57 \pm 18.02$ & $754.00 \pm 19.05$ \\
 &  & PGD with coupling only & $892.87 \pm 21.83$ & $933.61 \pm 22.86$ \\
 &  & PGD without coupling or warm-starting & $892.73 \pm 21.82$ & $927.10 \pm 22.66$ \\
 &  & MOSEK (no coupling) & $12571.45 \pm 422.80$ & $12630.86 \pm 424.72$ \\
\hdashline
 \multirow{4}{*}{$0.35$}& \multirow{4}{*}{$0.1$} & Warm-started PGD with coupling & $687.72 \pm 17.15$ & $726.97 \pm 18.18$ \\
 &  & PGD with coupling only & $859.72 \pm 21.07$ & $899.25 \pm 22.10$ \\
 &  & PGD without coupling or warm-starting & $859.51 \pm 21.05$ & $892.96 \pm 21.90$ \\
 &  & MOSEK (no coupling) & $11974.46 \pm 410.04$ & $12029.83 \pm 411.98$ \\
\hdashline
\multirow{4}{*}{$0.35$} & \multirow{4}{*}{$0.125$} & Warm-started PGD with coupling & $680.23 \pm 17.63$ & $721.35 \pm 18.80$ \\
 &  & PGD with coupling only & $854.03 \pm 21.50$ & $895.45 \pm 22.67$ \\
 &  & PGD without coupling or warm-starting & $851.64 \pm 21.32$ & $886.58 \pm 22.29$ \\
 &  & MOSEK (no coupling) & $11875.24 \pm 414.60$ & $11930.95 \pm 416.49$ \\
\bottomrule
\end{tabular}
\end{table}

Tables~\ref{tab:sharpe_times_setting1} and \ref{tab:sharpe_times_setting2} report times for Settings 1 and 2 in our Sharpe ratio simulations and Tables~\ref{tab:markowitz_times_setting1} and \ref{tab:markowitz_times_setting2} do the same for each setting considered in our simulations for the Markowitz objective. Across both simulation settings and for both diversity metrics, the PGD solver on its own (i.e., without coupling or warm-starting) is significantly faster than MOSEK: for both settings in our Sharpe ratio simulations, the speedup is always (approximately) between $67$--$76\times$ across all values of $\alpha$ while for the Markowitz objective simulations, the speedup ranges from $13\times$ to $111\times$ and seems to be diminished for larger values of $\alpha$ and $\gamma$. The warm-start heuristic in both cases sometimes yields a further reduction in computation time; we now study these reductions in comparison to the PGD solver without warm-starts or coupling (in most cases, the PGD solver with coupled MC sampling but no warm-starting is either essentially the same speed as or slower than the vanilla PGD solver which uses neither, most likely due to the overhead of constructing the coupled samples). For Setting 1 of the Sharpe ratio, the reduction in computation time is by between $8$--$13\%$ for all values of $\alpha$; in Setting 2 even though the warm-starting does reduce the total solver time, this does not translate into a significant reduction in overall computation time compared to the PGD solver with no warm-starting or coupling (though, there is still a small reduction), likely because the overhead of coupling the Monte Carlo samples cancels out the gains from warm-starting. For the Markowitz objective the warm-start further reduces computation, across all values of $\alpha$ and $\gamma$, by approximately $17$--$21\%$ in Setting 1 and $9$--$21\%$ in Setting 2.

\begin{figure}
    \centering
    \includegraphics[scale=0.4]{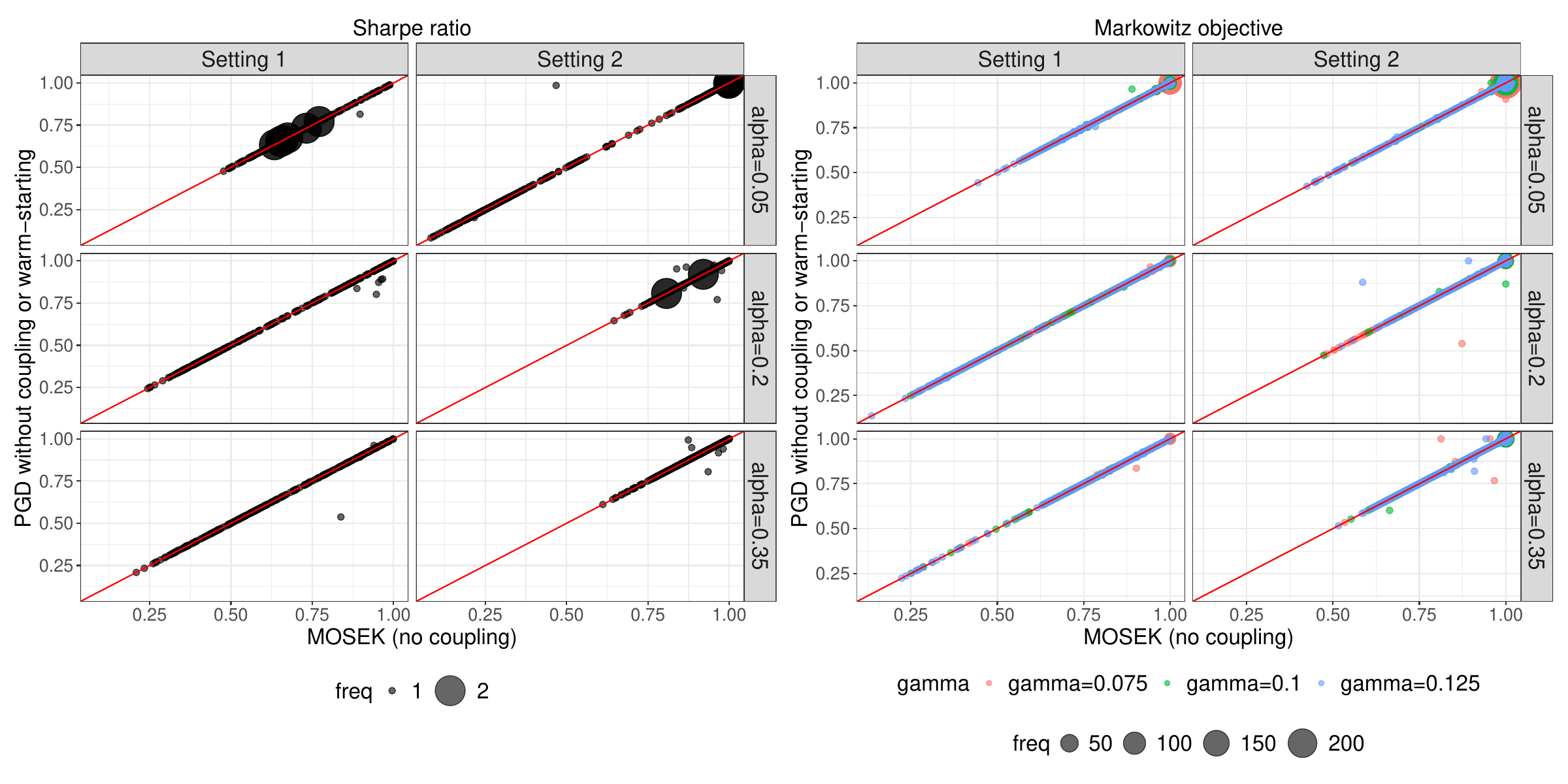}
    \caption{DACS' normalized diversity under $\hat{F}^{\varphi}_{\text{baseline}}$ for both the Sharpe ratio and Markowitz objective for various values of $\gamma$ ($x$-coordinate is the value using the MOSEK solver without coupling or warm-starts while the $y$-coordinate is the value obtained using the custom PGD solver without coupling or warm-starts). Each scatter point represents a different simulation replicate; results reported for two simulation settings and three nominal levels $\alpha$. Red line is $y=x$.}
    \label{sharpe-markowitz:comp2}
\end{figure}

\begin{figure}
    \centering
    \includegraphics[scale=0.4]{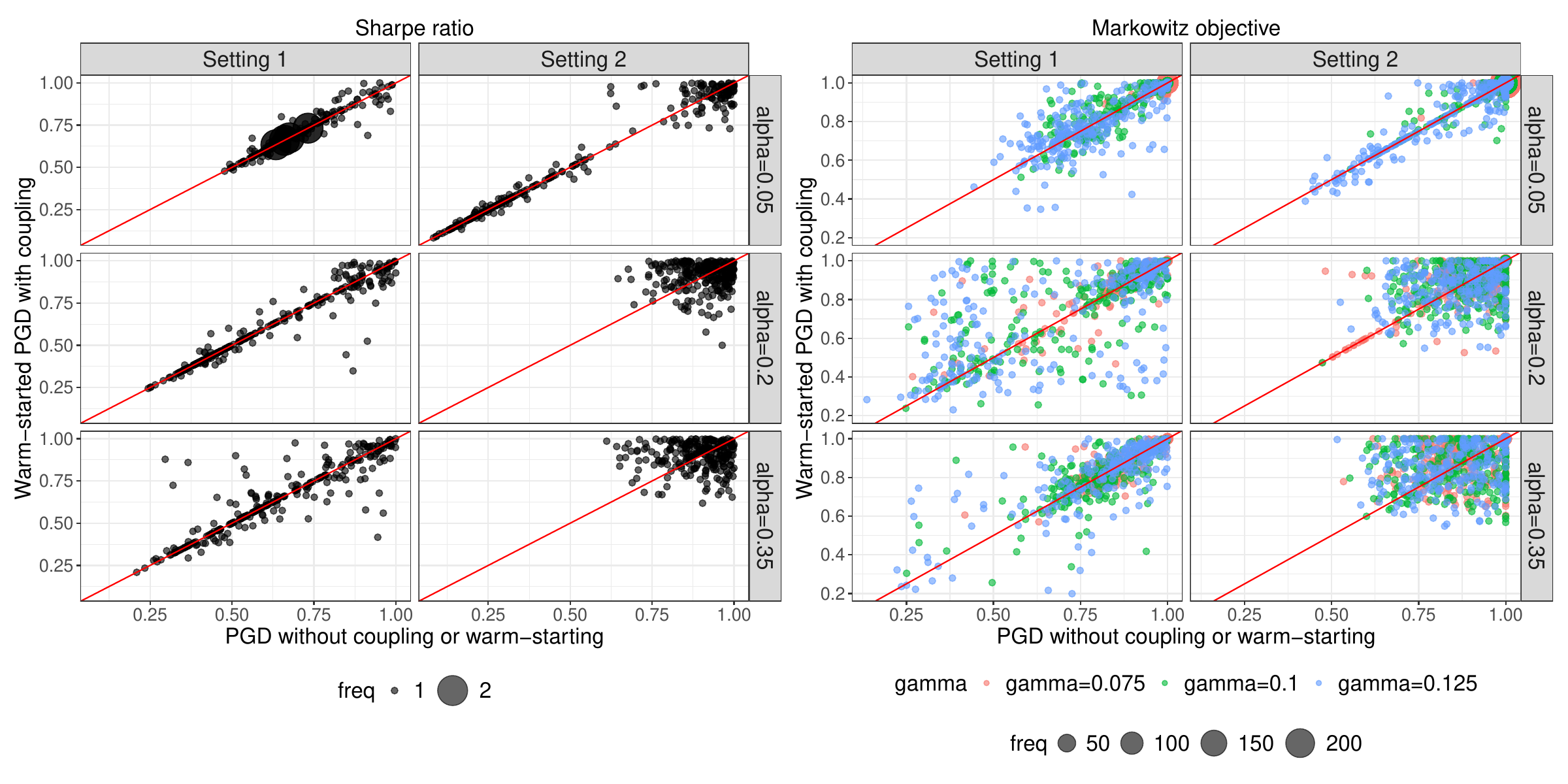}
    \caption{DACS' normalized diversity under $\hat{F}^{\varphi}_{\text{baseline}}$ for both the Sharpe ratio and Markowitz objective for various values of $\gamma$ ($x$-coordinate is the value using the custom PGD solvers without coupling or warm-starts while the $y$-coordinate is the value obtained using the custom PGD solver with coupling and warm-starts). Each scatter point represents a different simulation replicate; results reported for two simulation settings and three nominal levels $\alpha$. Red line is $y=x$.}
    \label{sharpe-markowitz:comp1}
\end{figure}

Figures~\ref{sharpe-markowitz:comp2} and \ref{sharpe-markowitz:comp1} study how the statistical quality (i.e., diversity) of the selection sets change depending on the solver that is used. Figure~\ref{sharpe-markowitz:comp2} compares the normalized diversity values when using the MOSEK solver versus our PGD solver without warm-starts or coupling. It shows that neither solver yields results that are systematically significantly better than the other. Figure~\ref{sharpe-markowitz:comp1} shows that, while coupled MC sampling can result in different results than sampling without coupling in any given simulation replicate, it does not seem to make a significant difference in statistical quality on aggregate.

\end{document}